\documentclass[11pt]{elsarticle}
\usepackage{amsmath,amssymb,amsthm}
\usepackage{mathtools}
\usepackage{cases}
\usepackage{hyperref}
\usepackage{mathrsfs}%$\mathscr$
\usepackage{graphicx}
\usepackage{tikz}
\usepackage[all,pdf]{xy}
\usepackage{enumerate}
\usepackage{float}
\usepackage{caption}
\usepackage[all]{xy}
\theoremstyle{plain}
\newtheorem{theorem}{Theorem}[section]
\newtheorem{lemma}[theorem]{Lemma}
\newtheorem{proposition}[theorem]{Proposition}
\theoremstyle{definition}
\newtheorem{definition}[theorem]{Definition}

\newtheorem{remark}[theorem]{Remark}
\newtheorem{fact}[theorem]{Fact}
\newtheorem{corollary}[theorem]{Corollary}
\newtheorem{example}[theorem]{Example}

\newcommand{\tr}[1]{#1}
\newcommand{\supp}{\mathrm{supp}\,}
\newcommand{\dda}{\mathord{\mbox{\makebox[0pt][l]{\raisebox{-.4ex}{$\downarrow$}}$\downarrow$}}}% 下 waybelow
\newcommand{\dua}{\mathord{\mbox{\makebox[0pt][l]{\raisebox{.4ex}{$\uparrow$}}$\uparrow$}}}

\newcommand{\da}{\downarrow\!\!}
\newcommand{\ra}{\rightarrow\!}
\newcommand{\ua}{\uparrow\!\!}%

\newcommand{\casef}{\overset{\triangleright}{\longrightarrow}}

\makeatletter
\def\ps@pprintTitle{%
  \let\@oddhead\@empty
  \let\@evenhead\@empty
  \def\@oddfoot{\reset@font\hfil\thepage\hfil}
  \let\@evenfoot\@oddfoot
}\makeatother

\begin{document}
\begin{frontmatter}

\title{Monad Structures on Topological Spaces Comprising Mislove's Random Variables}
\tnotetext[t1]{This work is supported by the National Natural Science Foundation of China (Grant No. 12231007).}

\author[1]{Chengyu Zhou}
\ead{zhouchy126@126.com}

\author[1]{Qingguo Li\corref{cor1}}
\cortext[cor1]{Corresponding author.}
\address[1]{School of Mathematics, Hunan University, Changsha, Hunan, 410082, China}
\ead{liqingguoli@aliyun.com}

\begin{abstract}
  Mislove, Goubault and Varacca investigated how to define random variables in Domain theory to form monads over the category of bounded complete domains.
   They intended to model probabilistic programming languages with their random variables monads.
   In this paper, we focus on the random variables defined by Mislove from a topological perspective.
   We provide a topology for $\surd$-max continuous random variables on a $T_0$ space, we construct a new $T_0$ space, where $\surd$-max property is essential for the monad structures.
   We show that the spaces of normalized $\surd$-max simple random variables form a monad over the category of $T_0$ spaces and that the spaces of normalized $\surd$-max continuous random variables give a monad over the category of d-spaces.
   In addition, on a sober space, the space of normalized $\surd$-max continuous random variables is the sobrification of the space of normalized $\surd$-max simple random variables.
\end{abstract}
\begin{keyword}
 Domain \sep $T_0$ space \sep Random variable \sep Monad

 \vspace*{0.6cm}

 {\it{MSC(2020):}}\ 06B35; 54D10; 54A35; 18C20
\end{keyword}

\end{frontmatter}
\section{Introduction}
%In Domain theory, scholars mainly research valuations for modeling probability distributions.
% Unlike Borel measures, valuations only take values on open subsets, rather than on all Borel sets.
% Under certain conditions, valuations and Borel measures can be converted into one another \cite{Keimel2005}.
% Numerous results about valuations have emerged over the past forty years.
% C. Jones  \cite{Jones1990} proved that the continuous valuations on a domain $L$ either is a domain and gives the free dcpo cones over $L$.
% R. Heckmann \cite{Heckmann1996} showed that on a $T_0$ space, the space of point continuous valuations with the weak topology is the sobrification of the space simple valuations.
% J. Goubault and X. Jia \cite{Goubault2019,Goubault2024} studied the algebras of valuation power space monad.
% X. Jia, M. Mislove and etc. \cite{Jia2022,Jia2021} investigated the $K$-ification of the space of simple valuations.

%In the first decade of the 20th century, some scholars began to model probability distributions using random variables.
In the first half of the 20th century, some scholars studied random variables in sample spaces.
 Recall that a random variable $f:(X,S_X,\mu)\ra(Y,S_Y)$ on the measurable space $(Y,S_Y)$ is a measurable map from a probability space $(X,S_X,\mu)$ to $(Y,S_Y)$, where $S_X,S_Y$ are $\sigma$-algebra on $X,Y$ and $\mu:S_X\ra[0,1]$ is a probability measure on $X$.

In the first decade of the 21th century, some scholars began to research random variables in Domain theory.
 The first glance at random variables in Domain theory is the paper \cite{Mislove2005}.
 Mislove and Varacca regarded a partial map $f:\mathbb{R}_+\ra L$ from non-negative real numbers $\mathbb{R}_+$ to a domain $L$, whose $dom(f)$ is finite, as a finite random variable on $L$.
 The authors provided an abstract basis by the finite random variables endowed with an approximate relation.
 Hence, they model the discrete random variable on a domain by the rounded ideal completion of the abstract basis.

In \cite{Goubault2011}, Goubault and Varacca discussed their continuous random variables on bounded complete domains.
 A valuation on Cantor tree $\mathbb{C}$ is called thin if the mass of the valuation concentrates on the maximal elements of the support of the valuation.
 On a bounded complete domain $L$, Goubault and Varacca regarded a pair $(\mu,f)$ as a continuous random variable on $L$, where $\mu$ is a normalized thin valuation on the Cantor tree $\mathbb{C}$ and $f$ is a partially continuous map from the support of $\mu$ to $L$.
 Then, with a given order, the continuous random variables on $L$ form a bounded complete domain.
 However, the continuous random variables defined by them failed to form a monad \cite{Mislove2013,Mislove2014}.

In Barker's Ph.D dissertation \cite{Barker2016}, he investigated the partially continuous maps from some anti-chains of Cantor tree $\mathbb{C}$ to bounded complete domains, and showed that these partial maps with a given order form a monad over the category of bounded complete domains.
 But the results only concern about non-determinism.

Mislove \cite{Mislove2017} modified Cantor tree.
 He added some maximal finite elements to Cantor tree $\mathbb{C}$, and say that a continuous random variable on a bounded complete domain $L$ is a pair $(\mu,f)$, where $\mu$ is a valuation on Mislove's Cantor tree $\mathbb{M}$ and $f$ a partially continuous map from the support of $\mu$ to $L$.
 One main goal of \cite{Mislove2017} is to provide a monad of continuous random variables over the category of bounded complete domains.

Gianantonio and Edalat \cite{Pietro2024} introduced their PER-domains.
 A PER-domain is a countably based bounded complete domain equipped with a partial equivalence relation that the relation is closed under the sups of directed subsets.
 The PER-domains form a cartesian closed category with given morphisms.
 The authors constructed a monad of their random variables over the category of PER-domains, in which a Scott continuous map from their fixed probability spaces, including Cantor space with the standard uniform product measure to a PER-domain is recognized as a continuous random variable on the PER-domain.

The series of these works establish models of stochastic processes and probabilistic automatons (see \cite{Mislove2013} for a motivating example, see \cite{Mislove2020} for a deep connection with stochastic process theory).

In Domain theory, there is interest in developing probabilistic models on non-Hausdorff spaces \cite{Heckmann1996,Goubault2019,Goubault2024,Jia2022,Jia2021}.
% R. Heckmann \cite{Heckmann1996} showed that on a $T_0$ space, the space of point continuous valuations with the weak topology is the sobrification of the space simple valuations.
% J. Goubault and X. Jia \cite{Goubault2019,Goubault2024} studied the algebras of valuation power space monad.
% X. Jia, M. Mislove and etc. \cite{Jia2022,Jia2021} investigated the $K$-ification of the space of simple valuations.
 Especially, $T_0$ spaces, d-spaces and sober spaces are welcomed, they are considered as the extensions of posets, dcpos and domains.
 The goal of this paper is to find a non-Hausdorff topological structure for random variables in Domain theory, thereby to form monad structures.
 In line with Mislove's definition, a continuous random variable on a $T_0$ space $X$ is a pair $(\mu,f)$ consisting of a continuous valuation $\mu$ on Mislove's Cantor tree $\mathbb{M}$ and a continuous map from the support of $\mu$ to the space $X$.
 To avoid unintended consequences, we additionally require that continuous random variables are $\surd$-max.
 We find a topology for $\surd$-max continuous random variables on $X$ such that the resulting space is $T_0$.
 The topology is generated by a subbase that each member of the subbase has the form of $\langle\ua z,r,V\rangle$, where the first two parameters indicate the weak open subset $\langle\ua z>r\rangle$ of $\mathcal{V}_{\leq 1}\Sigma\mathbb{M}$, and the third parameter indicates that the partial maps of random variables in this subset have to map $z$ into an open subset $V\in\mathcal{O}X$.
 One also defines simple random variables, which are continuous random variables whose valuations are simple valuations with finite support.
 We show that the spaces of normalized $\surd$-max simple random variable give a monad over $T_0$ spaces in Section \ref{SecMonad}.
 We verify that \tr{every normalized $\surd$-max continuous random variable is approximated by normalized $\surd$-max simple random variables, and that} the space of normalized $\surd$-max continuous random variables on a $T_0$ space is the sobrification of the space of normalized $\surd$-max simple random variables in Section \ref{SecSob}, and that the spaces of normalized $\surd$-max continuous random variables form a monad over d-spaces in Section \ref{SecMonad2}.
 \tr{Finally we reveal that our monads are not commutative in Section \ref{SecStrength}.}

%\tr{We mainly compare our results with \cite{Pietro2024}.
%%First of all, in essences, we treat our random variables as a
%Our results are more abundant, the construction in \cite{Pietro2024} requires the sample space to be restricted to only four cases, and our sample spaces can be Misloves' Cantor tree with any continuous valuations removed from the Cantor tree; in addition, we consider bounded complete domain, and various spaces as the probability space, \cite{Pietro2024} need
%In addition, we develop the simple random variables and show that every continuous random variable is approximated by simple random variables, which is fit the Domain-theoretical model.
%The essential difference is attitudes for random variables, Pietro treat them like their push forward -- valuation on the probability space given by , but we treat them like a stochastic process manipulated by a battery-operated tossing machine.}

\tr{Within the existing literature, few studies have investigated monads constructed based on random variable.
 Accordingly, we mainly compare our work with \cite{Pietro2024}.
 Our research deviates from \cite{Pietro2024} in both the core understanding of random variables and targets.
 The work of \cite{Pietro2024} dedicates to provide a solution to the famous Jung-Tix problem, where they reagrd random variables as corresponding push forwards.
 In contrast, our study targets the resolution of Mislove’s question in \cite{Mislove2013}, and interprets random variables as processes manipulated by a coin-flipping machine.
 Further distinctions lie in the settings of the sample space and probability space.
 In \cite{Pietro2024}, they fixed four-point sample spaces, and their probability spaces equip with countable bases.
 Differing from \cite{Pietro2024}, our research adopts Mislove’s Cantor tree as the sample space, whose associated valuations cover all measures defined on the Cantor space, and our probability space is not limited to a countable base.
 Besides, the work in \cite{Pietro2024} only considers continuous random variables.
 By comparison, our definition and construction of random variables contain simple random variables, which approximates continuous random variables.}
% This inclusive design enables our framework to accommodate both discrete and continuous scenarios, achieving a more comprehensive and rigorous characterization of random variables compared to the limited continuous-only framework in \cite{Pietro2024}.}

\tr{
Contrasted with those classical investigations on valuations over $T_0$ and sober spaces, whose core aim coincides with that of \cite{Pietro2024}, namely modelling distributions on spaces, our goal in this paper differs.}

\section{Preliminaries}

The standard domain theory is introduced in \cite{Gierz2003,Goubault2013}.

\subsection{Domains}

Let $P$ be a poset.
 For every subset $A$, $\da A$ denotes $\{y:\exists x\in A,y\leq x\}$ and $\ua A$ indicates $\{y:\exists x\in A,y\geq x\}$.
 In general, $\da\{x\},\ua\{x\}$ are abbreviated as $\da x,\ua x$.
 A lower subset $A$ (resp., an upper subset $A$) is a subset of $P$ with $A=\da A$ (resp., $A=\ua A$).
 The element $y\in P$ is an upper bound of a subset $A$ (resp., a lower bound of $A$) if \tr{$\forall x\in A, x\leq y$ (resp., $\forall x\in A, y\leq x$)}.
 The subset of all upper bounds of $A\subseteq P$ is written as $A^u$.%; \tr{likewise, $A^l$ stands for the subset consisting of all lower bounds of $A$.}
 A directed subset $D\subseteq^\ua P$ is a non-empty subset of $P$ that every finite subset of $D$ has an upper bound in $D$.
 \tr{A subset of a poset is a filtered if it is directed in the dual order.}
 The least upper bound of a subset $A$ (also called the supremum of $A$) is denoted as \tr{$\sup(A)$}; especially, the least upper bound of a directed subset $D$ is denoted by $\bigsqcup D$.
 \tr{The greatest lower bound of a subset $A$ (the so-called infimum of $A$) is written as $\inf (A)$}
 The poset $P$ is called bounded complete if $A^u\neq\emptyset$ implies \tr{$\sup(A)$} exists for every $A\subseteq P$.
 \tr{And it is well-known that a poset is bounded complete if and only if every non-empty subset of it has an inf (since $\sup(F)=\inf (F^u)$).}
 We say that $P$ is a dcpo (directed complete poest) if every directed subset of $P$ has a least upper bound.
 For every $x,y\in P$, we write $x\ll y$ and say that $x$ is way-below $y$ if
 \begin{equation*}
   \forall D\subseteq^\ua P,\bigsqcup D\geq y\Rightarrow \exists d\in D,d\geq x.
 \end{equation*}
 The notation $\dda y$ denotes the subset $\{x\in P:x\ll y\}$, and $\dua y$ denotes $\{x\in P:y\ll x\}$.
 \tr{A dcpo $Q$ is a domain if for every $y\in Q$, $\dda y$ is a directed subset and $\bigsqcup\dda y=y$.}
 A basis $B$ of a \tr{poset} $Q$ is a subset such that for every $y\in Q$, $\dda y\cap B$ is directed and $\bigsqcup\dda y\cap B=y$.
 A dcpo is a domain iff it has a basis.
 An element $x\in P$ is said compact if $x\ll x$.
 The set of all compact elements of $P$ is denoted by $\mathcal{K}(P)$.
 An algebraic domain $Q'$ is a dcpo such that $\mathcal{K}(Q')$ is a basis.
 A subset $U\subseteq P$ is a Scott open subset if for every directed subset $U$, \tr{$\bigsqcup D\in U$} implies $U\cap D\neq\emptyset$.
 All of Scott open subsets form a topology on $P$, which is called the Scott topology.
 A poset $P$ equipped with its Scott topology is denoted by $\Sigma P$.
 A Scott closed subset of a domain $Q$ with the relative order is a domain; moreover, if $Q$ is a bounded complete domain, so is the Scott closed subset.
 For each $y$ of a domain $Q$, $\dua y$ is Scott open and $\{\dua y:y\in Q\}$ is a base $\Sigma Q$; especially, if $B$ is a basis of $Q$, then $\{\dua y:y\in B\}$ is a base of $\Sigma Q$.
 A map $f:P\ra P'$ between the posets is order-preserving (equivalently, $f$ preserves the order) if $x\leq y$ implies $f(x)\leq f(y)$.
 A Scott continuous map $f':P\ra P'$ between the posets is a map preserving the sups of directed subsets.
 Obviously, every Scott continuous map is order-preserving.
 A map $g:P\ra P'$ between the posets is Scott continuous iff $g$ is continuous with respect to the Scott topology.

\subsection{Spaces}

For a topological space $X$, $\mathcal{O}X$ denotes the open subsets with the inclusion order, and $\Gamma X$ denotes the closed subsets with the inclusion order.
 A closure of a subset $A$ of $X$ is denoted as $\overline{A}$.
 The specialization order $\leq$ on a space $X$ is given by $x\leq y$ if
\begin{equation*}
  \forall U\in\mathcal{O}X,x\in U\Rightarrow y\in U.
\end{equation*}
 For every poset $P$, the specialization order generated by $P$ with the Scott topology agrees with the order of $P$.
 An upper subspace (resp., a lower subspace) $A$ of $X$ is an upper subset (resp., a lower subset) $A$ with respect to the specialization order of $X$ endowed with the relative topology.
 By the Scott topology on a space, we mean the one generated by the specialization order.
 It is easy to see that for an $x\in X$ of a space $X$, $\da x$ is equal to $\tr{\overline{\{x\}}}$.

A d-space $X$ is a $T_0$ space such that $X$ with the specialization order forms a dcpo and every open subset $U\in\mathcal{O}X$ is Scott open.
 Obviously, every directed subset of \tr{a d-space} converges to its sup.
\begin{definition}
  For a topological space $X$, a subset $F\subseteq X$ is said irreducible if for every $U,V\in\mathcal{O}X$, $U\cap F\neq\emptyset$ and $V\cap F\neq\emptyset$ imply $U\cap V\cap F\neq\emptyset$; and we call a subset irreducible closed subset if it is both irreducible and closed.
   A $T_0$ space $Y$ is called sober if every irreducible closed subset is of the form \tr{$\overline{\{y\}}$} for some $y\in Y$.
\end{definition}
 It is straightforward to verify that the closure of an irreducible subset is \tr{an irreducible closed subset}.
 The following remark says that a $T_0$ space is sober if and only if every irreducible subset has a sup and converges to its sup.
 This result is well-known and trivial, but few papers directly stated it.
  \begin{remark}\label{SoberEquiva}
    \tr{Suppose that $Y$ is sober and $F\subseteq Y$ is irreducible.
     Then we have $\overline{F}=\overline{\{y\}}=\da y$ for some $y\in Y$.
     If $x\in F^u$, then $F\subseteq \da x=\overline{\{x\}}$.
     Hence $\overline{F}=\da y\subseteq\da x$.
     It follows that $y$ is the sup of $F$, and obviously $F$ converges to $y$.
     Conversely, Assume that $X$ is a space such that every irreducible subset of it has a sup and converges to the sup.
     So every irreducible closed subset has a sup, and it is easy to see the irreducible closed subset is the closure of the sup.}
  \end{remark}
 \tr{In addition}, every domain endowed with its Scott topology is sober \cite[Corollary II-1.13]{Gierz2003}, and every sober space is a d-space \tr{since directed subsets are irreducible}.

\begin{definition}
  Let $f:X\rightharpoonup Y$ be a partial map between the topological spaces.
   We say that $f$ is partially continuous if for every $V\in\mathcal{O}Y$, $f^{-1}(V)$ is a relative open subset of $dom(f)$.
\end{definition}
 Equivalently, $f:X\rightharpoonup Y$ is partially continuous if $f:dom(f)\ra Y$ is continuous.
 \begin{remark}
   A partially continuous map $f:X\rightharpoonup Y$ preserves the specialization order of $dom(f)$, and $dom(f)$ ordered by its specialization order is a subposet of $X$.
 \end{remark}
 We denote $\{f(x):x\in S\cap dom(f)\}$ by $f(S)$ for every subset $S\subseteq X$.
 \begin{definition}
   The poset of all partially continuous maps between the topological spaces $X,Y$ is denoted as $[X\rightharpoonup Y]$, \tr{and it is ordered by $f\leq g$ if $dom(f)\subseteq dom(g)$ and $f(x)\leq g(x)$ for every $x\in dom(f)$.}
 \end{definition}

\subsection{\tr{Continuous} valuations}

\tr{Let $[0,1]$ be the unite interval with the usual order.
 A continuous valuation on a topological space $X$ is a Scott continuous map $\mu:\mathcal{O}X\ra [0,1]$ satisfying:}
\begin{enumerate}[(i)]
  \item $\mu(\emptyset)=0$;
  \item if $U\subseteq V$ in $\mathcal{O}X$, then $\mu(U)\leq\mu(V)$;
  \item $\mu(U)+\mu(V)=\mu(U\cap V)+\mu(U\cup V)$.
\end{enumerate}
 \tr{A continuous valuation $\mu$ is normalized if $\mu(X)=1$.}
 The set of continuous valuation endowed with the point wise order (that is, $\mu\leq\upsilon$ iff $\mu(U)\leq\upsilon(U)$ for all $U\in\mathcal{O}X$) is denoted by $\mathcal{V}_{\leq 1}X$.
 The Dirac valuation $\delta_x$ of an $x\in X$ is given by
 \begin{equation*}
   \delta_x(U)=\begin{cases}
                 1, & \mbox{if $x\in U$}; \\
                 0, & \mbox{otherwise}.
               \end{cases}
 \end{equation*}
 A simple valuation $\Sigma_{x\in F}r_x\delta_x$ is a finitely \tr{subconvex combination (that is, $\Sigma_{x\in F}r_x\leq 1$)} of some Dirac valuations.
 Obviously, every Dirac valuation is a simple one, and every simple valuation is continuous \tr{\cite[Remark under Definition IV-9.9]{Gierz2003}}.
% For a simple valuation $\Sigma_{x\in F}r_x\delta_x$, we say that $r_x$ is the mass of $x$.
 A finitely valued valuation $\mu$ is a continuous valuation such that $\{\mu(U):U\in\mathcal{O}X\}$ is a finite set.

 \begin{proposition}(\cite[\tr{Proposition 2.2}]{Jia2022})\label{FiniteAreSimple}
   For a $T_0$ space $X$, finitely valued valuations are exactly the simple valuations iff $X$ is sober.
 \end{proposition}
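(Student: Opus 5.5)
Since the values of a simple valuation $\Sigma_{x\in F}r_x\delta_x$ are the finitely many sums $\Sigma_{x\in F'}r_x$ with $F'\subseteq F$, simple valuations are always finitely valued. So the statement reduces to: every finitely valued valuation on $X$ is simple iff $X$ is sober. I would prove the two directions separately.

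\emph{Necessity.} Suppose $X$ is not sober, and fix an irreducible closed set $C$ that is not of the form $\overline{\{y\}}$. Define $\mu(U)=1$ if $U\cap C\neq\emptyset$ and $\mu(U)=0$ otherwise.
\begin{enumerate}[(i)]
  \item Monotonicity and $\mu(\emptyset)=0$ are clear.
  \item Scott continuity holds because a directed union meets $C$ iff some member does.
  \item Modularity is a case check. The only nontrivial case is when $U$ and $V$ both meet $C$; then irreducibility gives $U\cap V\cap C\neq\emptyset$, so both sides equal $2$.
\end{enumerate}
Thus $\mu$ is a finitely valued continuous valuation. Suppose $\mu=\Sigma_{x\in F}r_x\delta_x$ with all $r_x>0$. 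Each $x\in F$ lies in $C$, since otherwise the open set $X\setminus C$ would get positive mass. Fix such an $x$. If some open $U$ meets $C$ but misses $x$, then $\mu(X\setminus\overline{\{x\}})$ cannot equal $1$. Working this out, take $x\in C$ with $\mu(X\setminus\da x)=0$; this holds because $X\setminus\da x$ misses $x$ and all mass sits in $C$. Then the open set $X\setminus\da x$ does not meet $C$, so $C\subseteq\da x=\overline{\{x\}}\subseteq C$, which is a contradiction. (Here I would choose $x\in F$ maximal so that no other mass point lies outside $\da x$; if no single $x$ works, I would use irreducibility to merge two candidates.)

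\emph{Sufficiency.} Assume $X$ is sober and $\mu$ is finitely valued. I would argue by induction on the number of values of $\mu$.

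First, for every open $W$ define $W^*=\bigcup\{V\in\mathcal{O}X:\mu(V\cup W)=\mu(W)\}$. Modularity shows this family is closed under binary unions: the intersection of $V_1\cup W$ and $V_2\cup W$ contains $W$, so its measure is pinned to $\mu(W)$. Hence the family is directed, and Scott continuity gives $\mu(W^*)=\mu(W)$. So $W^*$ is the largest open set with the same measure as, and containing, $W$.

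Next, among the closed sets $C=X\setminus W$ with $W=W^*\neq X$, i.e.\ $\mu(W)<\mu(X)$, I take one that is minimal. Such a minimal one exists because only finitely many values occur and, by the saturation above, chains can be handled. I then show $C$ is irreducible. If opens $U,V$ meet $C$ but $U\cap V\cap C=\emptyset$, then maximality of $W$ gives $\mu(U\cup W),\mu(V\cup W)>\mu(W)$. Applying modularity to $U\cup W$ and $V\cup W$, whose intersection is $(U\cap V)\cup W$, and using $\mu((U\cap V)\cup W)=\mu(W)$, forces one of them to violate minimality of $C$. By sobriety, $C=\overline{\{x\}}$ for some point $x$. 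Put $r=\mu(X)-\mu(W)$.

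Finally, I check that $\mu-r\delta_x$ is still a continuous valuation. This requires $\mu(U)\geq r$ whenever $x\in U$, which again follows from modularity applied to $U$ and $W$. The valuation $\mu-r\delta_x$ has strictly fewer values, or a strictly smaller total mass drawn from the finite value set, so the induction closes. The main obstacle is this step together with the irreducibility argument: getting the choice of minimal $C$ right so that both irreducibility and the lower bound $\mu(U)\geq r$ for $x\in U$ hold. I would try first to define $C$ as the complement of $W^*$ where $W$ is chosen to maximise $\mu(W)$ among opens with $\mu(W)<\mu(X)$.
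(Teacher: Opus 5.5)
The paper gives no proof here; it only imports the result from the cited reference, so I judged your argument on its own. Your strategy is the right one in both directions, but each direction has a step that fails as written.

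\textbf{Necessity.} The claim ``$\mu(X\setminus\da x)=0$ because $X\setminus\da x$ misses $x$ and all mass sits in $C$'' is a non sequitur, since other points of $F$ may lie outside $\da x$. The fallback (``choose $x$ maximal'', ``merge candidates by irreducibility'') is not an argument. What you are missing is that $\mu$ is two-valued. Take $C\neq\emptyset$ (otherwise $\mu=0$, which is simple). Then $\Sigma_{x\in F}r_x=\mu(X)=1$ with all $r_x>0$. So for every open $U$ the sum $\Sigma_{x\in F\cap U}r_x$ lies in $\{0,1\}$, i.e.\ $U$ contains either all of $F$ or none of it. Hence all points of $F$ have the same neighbourhoods, and by $T_0$ we get $F=\{x\}$ and $\mu=\delta_x$. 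Now $\mu(X\setminus\da x)=0$ gives $C\subseteq\da x\subseteq C$, the desired contradiction.

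\textbf{Sufficiency: monotonicity.} The condition you verify for $\mu-r\delta_x$, namely $\mu(U)\geq r$ whenever $x\in U$, gives only nonnegativity. Monotonicity requires $\mu(V)-\mu(U)\geq r$ whenever $U\subseteq V$ and $x\in V\setminus U$, and that does not follow from it.

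\textbf{Sufficiency: termination.} Your claim of ``strictly fewer values, or strictly smaller total mass'' is unsupported. A priori the numbers $\mu(U)-r$ need not be values of $\mu$ at all, so you have no well-founded quantity to induct on.

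\textbf{One identity fixes both.} For your maximal saturated $W$ (either construction works), every open $V\not\subseteq W$ satisfies $\mu(V\cup W)=\mu(X)$; this is what your irreducibility argument really uses.
\begin{enumerate}[(i)]
  \item Once $X\setminus W=\da x$, an open $V$ contains $x$ iff $V\not\subseteq W$.
  \item So for $x\in V$, modularity applied to $V$ and $W$ gives $\mu(V)=\mu(V\cap W)+\mu(X)-\mu(W)=\mu(V\cap W)+r$. For $x\notin V$, trivially $\mu(V)=\mu(V\cap W)$.
  \item Thus $\mu-r\delta_x=\mu|_W$, which is automatically a continuous valuation. This settles monotonicity and Scott continuity.
  \item The values $\mu(V\cap W)$ of $\mu|_W$ are values of $\mu$ bounded by $\mu(W)<\mu(X)$. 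So the number of values strictly decreases, and induction on it (base case $\mu=0$) closes the proof.
\end{enumerate}
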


For a space $X$, we denote by $\mathcal{V}_1X$ the subposet of normalized continuous valuations.

\begin{definition}
  \tr{For every space $X$, the weak topology on $\mathcal{V}_{\leq 1}X$ is generated by the subbase $\{\langle U>r\rangle:U\in\mathcal{O}X,r\geq 0\}$, where}
\begin{equation*}
  \tr{\langle U>r\rangle:=\{\mu\in\mathcal{V}X:\mu(U)>r\}.}
\end{equation*}
\end{definition}
% Obviously, $\mathcal{V}_{\leq 1}X$ is a weak closed subset of $\mathcal{V}X$.

\begin{theorem}\label{BigConsequenceofV}
  \tr{For every domain $L$, the following statements hold:
  \begin{enumerate}[(i)]
    \item $\mathcal{V}_{\leq 1}\Sigma L$ is a domain \cite[Corollary 5.4]{Jones1990};
    \item the set of simple valuations on $X$ forms a basis of $\mathcal{V}_{\leq 1}\Sigma L$, where $\Sigma_{x\in F}r_x\delta_x\ll \mu$ iff for every $K\subseteq F$, $\mu(\bigcup_{x\in K}\dua x)>\Sigma_{x\in K}r_x$;\label{WBRelationOfVL}
    \item the weak topology agrees with the Scott topology on $\mathcal{V}_{\leq 1}\Sigma L$;\label{WeakAgreeScott}
    \item if $L$ is a bounded complete domain, then every filtered subset of $\mathcal{V}_{\leq 1}\Sigma L$ has an inf. \label{VC*FiteredHasInf}
  \end{enumerate}}
\end{theorem}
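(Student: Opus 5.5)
Items (i)--(iii) are classical, and I will simply invoke the cited sources. Jones \cite[Corollary 5.4]{Jones1990} gives (i). Her description of the way-below relation through the splitting lemma gives (ii). The agreement of the weak and Scott topologies on $\mathcal{V}_{\leq 1}\Sigma L$ for a domain $L$ gives (iii). The only item needing an argument is (iv).

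For (iv), let $\mathcal{F}\subseteq\mathcal{V}_{\leq 1}\Sigma L$ be filtered and put $\nu(U)=\inf_{\mu\in\mathcal{F}}\mu(U)$.

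\textbf{Step 1: $\nu$ is strict, monotone and modular.} Strictness and monotonicity are clear. For modularity the key point is that $\mathcal{F}$ is filtered. Given $\mu_1,\mu_2\in\mathcal{F}$, there is $\mu\in\mathcal{F}$ below both, so
\[\inf_{\mu}\bigl(\mu(U)+\mu(V)\bigr)=\nu(U)+\nu(V),\]
and likewise $\inf_{\mu}\bigl(\mu(U\cap V)+\mu(U\cup V)\bigr)=\nu(U\cap V)+\nu(U\cup V)$. Modularity of each $\mu$ then transfers to $\nu$. However, $\nu$ need not be Scott continuous, so it is not yet a candidate for the infimum.

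\textbf{Step 2: regularize $\nu$.} Since $L$ is a domain, $\mathcal{O}\Sigma L$ is a continuous lattice. Define
\[\nu^*(U)=\sup\{\nu(V):V\ll U\}.\]
Then $\nu^*\leq\nu$ and $\nu^*$ is monotone and strict. Scott continuity of $\nu^*$ follows from interpolation: if $V\ll\bigcup^{\uparrow}U_i$, then $V\ll V'\ll U_i$ for some $i$ and some $V'$, so $\nu(V)\leq\nu^*(U_i)$. Moreover, $\nu^*$ is the largest Scott continuous function below $\nu$. Indeed, any continuous valuation $\lambda\leq\mu$ for all $\mu\in\mathcal{F}$ satisfies $\lambda\leq\nu$, hence
\[\lambda(U)=\sup_{V\ll U}\lambda(V)\leq\nu^*(U).\]
Consequently, once $\nu^*$ is shown to be a continuous valuation, it is the infimum of $\mathcal{F}$.

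\textbf{Step 3: modularity of $\nu^*$ (the main obstacle).} I expect this to be the hardest step. The inequality $\nu^*(U)+\nu^*(U')\leq\nu^*(U\cap U')+\nu^*(U\cup U')$ is immediate. Take $V\ll U$ and $V'\ll U'$; then $V\cap V'\ll U\cap U'$ and $V\cup V'\ll U\cup U'$, and modularity of $\nu$ gives $\nu(V)+\nu(V')=\nu(V\cap V')+\nu(V\cup V')$.

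For the reverse inequality, take $W\ll U\cap U'$ and $W'\ll U\cup U'$. I need $V\ll U$ and $V'\ll U'$ with $W\subseteq V\cap V'$ and $W'\subseteq V\cup V'$. Then
\[\nu(W)+\nu(W')\leq\nu(V\cap V')+\nu(V\cup V')=\nu(V)+\nu(V')\leq\nu^*(U)+\nu^*(U').\]
To construct them, use the base $\{\dua x\}$ of $\Sigma L$. Since $W'\ll U\cup U'$, compactness gives finitely many $x_1,\dots,x_n$ with $W'\subseteq\bigcup_k\dua x_k$, where each $x_k\in U$ or $x_k\in U'$. If $x_k\in U$ then $\ua x_k\subseteq U$, and so $\dua x_k\ll U$ in $\mathcal{O}\Sigma L$; the same holds for $U'$. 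Set
\[V=W\cup\bigcup\{\dua x_k:x_k\in U\},\qquad V'=W\cup\bigcup\{\dua x_k:x_k\in U'\}.\]
These are finite unions of elements way below $U$ (respectively $U'$), so $V\ll U$ and $V'\ll U'$. They also clearly satisfy $V\cap V'\supseteq W$ and $V\cup V'\supseteq W'$. Taking suprema yields modularity.

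Hence $\nu^*\in\mathcal{V}_{\leq 1}\Sigma L$ is the infimum of $\mathcal{F}$. This argument uses only that $L$ is a domain, so bounded completeness is more than is needed here.
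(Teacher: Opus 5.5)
Your Step 3 has a genuine gap, and it is in the inequality you call immediate, not the one you flag as the main obstacle. The claim that $V\ll U$ and $V'\ll U'$ imply $V\cap V'\ll U\cap U'$ in $\mathcal{O}\Sigma L$ is multiplicativity of the way-below relation, i.e.\ coherence of $\Sigma L$. It fails for general domains, so your closing remark that bounded completeness is unnecessary is false.

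Here is a counterexample. Take $L=\{\bot,a,b\}\cup\{c_n:n\in\mathbb{N}\}$ with $\bot<a,b<c_n$ and $c_{n+1}<c_n$ for all $n$.
\begin{enumerate}[(i)]
  \item $L$ has no infinite ascending chains, so it is an algebraic domain whose Scott topology consists of all upper sets.
  \item $\uparrow a$ and $\uparrow b$ are compact open, hence way below themselves. But $\uparrow a\cap\uparrow b=C:=\{c_n:n\in\mathbb{N}\}=\bigcup_k\uparrow c_k$ is not compact, so $C\not\ll C$.
  \item The chain $(\delta_{c_n})_n$ is filtered and has lower bounds $\delta_a$ and $\delta_b$.
  \item Any lower bound $\lambda$ satisfies $\lambda(\uparrow c_k)\leq\delta_{c_{k+1}}(\uparrow c_k)=0$ for every $k$, hence $\lambda(C)=0$ by Scott continuity.
  \item If moreover $\lambda\geq\delta_a,\delta_b$, then $\lambda(\uparrow a)=\lambda(\uparrow b)=1$, and modularity forces $\lambda(C)\geq 1$.
\end{enumerate}
So this filtered set has no infimum in $\mathcal{V}_{\leq 1}\Sigma L$. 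Correspondingly, your $\nu^*$ gives $\nu^*(\uparrow a)+\nu^*(\uparrow b)=2$, whereas $\nu^*(C)+\nu^*(\uparrow a\cup\uparrow b)=0+1$. Note that $L$ is not bounded complete, since $\{a,b\}$ has upper bounds but no least one.

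The repair uses the hypothesis of (iv). For a domain $L$, $V\ll U$ in $\mathcal{O}\Sigma L$ means $V\subseteq\uparrow F\subseteq U$ for some finite $F$. If $L$ is bounded complete, then
\[\uparrow F\cap\uparrow G=\uparrow\{x\vee y:x\in F,\ y\in G,\ \{x,y\}^u\neq\emptyset\}\]
is again finitely generated, hence compact. Therefore $V\cap V'\subseteq\uparrow F\cap\uparrow G\subseteq U\cap U'$ gives $V\cap V'\ll U\cap U'$.

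With this line added, your argument is correct. Steps 1--2 and the direction you called hard are valid in any domain. The result is a genuinely different route from the paper's. The paper obtains (iv) by citing that $\mathcal{V}_{\leq 1}\Sigma L$ is coherent \cite{Lyu2018}, hence Lawson compact \cite{Xi2017}, hence has filtered infima \cite[Lemma III-5.2]{Gierz2003}. Your version constructs the infimum explicitly as the Scott-continuous regularization of the pointwise infimum, and uses coherence only at the level of $\Sigma L$ rather than of $\mathcal{V}_{\leq 1}\Sigma L$. For (i)--(iii) you rely on citation, as the paper does.
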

\begin{proof}
  \tr{Since $\mathcal{V}_{\leq 1}\Sigma L$ is a Scott closed subset of the extended probabilistic powerdomain over $L$ introduced in \cite[Section IV-9]{Gierz2003}, (ii) follows from \cite[Theorem IV-9.16]{Gierz2003}.}

  \tr{Every Scott open neighborhood $U$ of an $x\in L$ contains a $y$ such that $x\in\dua y\subseteq U$.
   Hence, $\Sigma L$ is a c-space satisfying the assumption of \cite[Theorem 3.6]{Lyu2018}, and (iii) directly follows from the cited theorem.}

  \tr{Likewise, if $L$ is bounded complete, then $L$ satisfies requirements of \cite[Corollary 3.4]{Lyu2018}; thus $\mathcal{V}_{\leq 1}\Sigma L$ is coherent, which induced that $\mathcal{V}_{\leq 1}\Sigma L$ is Lawson compact by \cite[Theorem 4.2 (2)]{Xi2017}, where well-filteredness is provided by \cite[Proposition 8.3.5]{Goubault2013} and compactness is provided by the constant zero valuation $\underline{0}:\mathcal{O}\Sigma L\ra [0,1]$.
  Finally, filtered subsets of $\mathcal{V}_{\leq 1}\Sigma L$ have infs by \cite[Lemma III-5.2]{Gierz2003}.}
\end{proof}

\tr{
%The structure $\mathcal{V}_{\leq 1}\Sigma$ forms a monad over the category of dcpos and Scott continuous maps.
%In this paper, we denote $f[-]$ instead of $\mathcal{V}$
For every Scott continuous $f:L\ra M$ between the dcpos, $f$ gives a Scott continuous map $f[-]:\mathcal{V}_{\leq 1}\Sigma L\ra\mathcal{V}_{\leq1}\Sigma M$ by $f[\mu](U)=\mu(f^{-1}(U))$.}
 \begin{remark}\label{p[-]properties}
 \tr{$\mathcal{V}_{\leq 1}\Sigma$ forms a monad over the categories of dcpos and Scott continuous maps \cite[Theorem 4.5]{Jones1990}.
  In \cite{Jones1990}, $f[-]$ was introduced as $\mathcal{V}_{\leq 1}(f)$.
  So there are some equalities induced from monadic properties: $(g\circ f)[-]=g[f[-]]$, $id_L[-]=id_{\mathcal{V}_{\leq 1}\Sigma L}$ and $f[\delta_x]=\delta_{f(x)}$.
  Furthermore, $f[-]$ preserves subconvex combinations, since for every open subset $U\in\mathcal{O}\Sigma L$ and a subconvex combination $\Sigma_{i\in I}r_i\mu_i$,
  $$f[\Sigma_{i\in I}r_i\mu_i](U)=\Sigma_{i\in I}r_i\mu_i(f^{-1}(U))=\Sigma_{i\in I}r_if[\mu_i](U).$$}
   %In addition, if $\{f_i\}_{i\in I}$ is a directed subset of a function space, then $(\bigsqcup_{i\in I}f)[-]=\bigsqcup_{i\in I} f_i[-]$.}
 \end{remark}

\subsection{Mislove's Cantor tree}
If $x,y$ are strings, then we write $xy$ for the concatenation of $y$ after $x$.

The Cantor tree $\mathbb{C}$ consists of all countable-length \tr{strings} built from the tokens $0,1$ and \tr{endowed} with the prefix order.
 Then, $\mathbb{C}$ is a bounded complete algebraic domain, the compact elements are exactly finite strings.
 The empty string $\epsilon$ is the bottom element of $\mathbb{C}$, and the infinite-length strings are the maximal elements.

Mislove's Cantor tree and some results are introduced in \cite{Mislove2017}.
 \tr{The notation $\mathbb{C}^\surd$ stands for the set of all finite strings satisfying the following two condition:
  \begin{enumerate}[(i)]
    \item built from the tokens $0,1$;
    \item ending with an extra token $\surd$.
  \end{enumerate}}
 Equivalently, $\mathbb{C}^\surd=\{x\surd:x\in\mathcal{K}(\mathbb{C})\}$.
 Mislove's Cantor tree is the set $\mathbb{M}:=\mathbb{C}\cup\mathbb{C}^\surd$ endowed with the \emph{prefix order} \tr{(see Figure \ref{PicOfMisloveTree})}.
 We write $x\surd\in F$ for some $x\in\mathbb{C}$ such that $x\surd\in F$, where $F\subseteq\mathbb{M}$.
 \tr{The prefix order on $\mathbb{M}$ can also be defined by:
 \begin{enumerate}[(i)]
   \item $x\leq y$ if $x,y\in\mathbb{C}$ and $x\leq y$ in $\mathbb{C}$;
   \item for every $x\in\mathbb{C}$, $x\leq x\surd$;
   \item $x\surd,y\in\mathbb{M}$, $x\surd\leq y$ if $x\surd=y$.
 \end{enumerate}}
 \begin{figure}
   \centering
   \begin{tikzpicture}
       \node (bot)at(0,0){$\epsilon$};
       \node (surd)at(-1.5,0.5){$\surd$};
       \node (0)at(-1,1){$0$};
       \node (1)at(1,1){$1$};
       \node (0surd)at(-2.5,1.5){$0\surd$};
       \node (00)at(-2,2){$00$};
       \node (01)at(-0.5,2){$01$};
       \node (1surd)at(0,1.5){$1\surd$};
       \node (10)at(0.5,2){$10$};
       \node (11)at(2,2){$11$};
       \draw [-](node cs: name=bot)--(surd);
       \draw [-](node cs: name=bot)--(0);
       \draw [-](node cs: name=bot)--(1);
       \draw [-](node cs: name=0)--(0surd);
       \draw [-](node cs: name=0)--(00);
       \draw [-](node cs: name=0)--(01);
       \draw [-](node cs: name=1)--(1surd);
       \draw [-](node cs: name=1)--(10);
       \draw [-](node cs: name=1)--(11);
       \node at(-2,2.5){.};
       \node at(-2,2.7){.};
       \node at(-2,2.9){.};
       \node at(-0.5,2.5){.};
       \node at(-0.5,2.7){.};
       \node at(-0.5,2.9){.};
       \node at(0.5,2.5){.};
       \node at(0.5,2.7){.};
       \node at(0.5,2.9){.};
       \node at(2,2.5){.};
       \node at(2,2.7){.};
       \node at(2,2.9){.};
%       \draw (-3,3.2) rectangle (3,3.7);
       \tikzstyle{esquare}=[rectangle,rounded corners,minimum width=8cm,minimum height=0.5cm,draw=black];
       \coordinate (locinf) at (0,3.4);
       \node (infs) [esquare] at(locinf){$\cdots\cdots\cdots\cdots\cdots\cdots\cdots\cdots\cdots\cdots\cdots\cdots$};
       \node at (6,3.4){infinite strings};
       \coordinate (locfin) at (0,1.3);
       \node (fins) [esquare,minimum height=3.4cm] at(locfin){};
       \node at (6,1.3){finite strings};
   \end{tikzpicture}
   \caption{Mislove's Cantor tree $\mathbb{M}$}
   \label{PicOfMisloveTree}
 \end{figure}
 \begin{remark}
   The maximal elements of $\mathbb{M}$ consist of $\mathbb{C}^\surd$ and the infinite strings of $\mathbb{C}$.
    The compact elements $\mathcal{K}(\mathbb{M})$ of $\mathbb{M}$ are finite strings, \tr{which also form a tree.
    Since every infinite string is a sup of the directed subset consisting of its finite prefixes, $\mathbb{M}$ is a countably based bounded complete algebraic domain.
    The concatenation is not an order-preserving operation in $\mathbb{M}$.
    Especially, $\mathbb{C}^\surd$ is Scott open and $\mathbb{C}$ is Scott closed in $\mathbb{M}$.}
 \end{remark}

\begin{remark}
  \tr{Every domain is meet-continuous \cite[Theorem III-2.11]{Gierz2003}, which ensures that $\ua(U\cap F)$ is Scott open for every lower subset $F$ and Scott open subset $U$ \cite[Proposition III-2.3]{Gierz2003}.
   Thus, for every Scott open subset $U\in\mathcal{O}\Sigma\mathbb{M}$ and Scott closed subset $F\in\Gamma\Sigma\mathbb{M}$, the subset $\ua(U\cap F)$ is Scott open in $\mathbb{M}$.}
\end{remark}

\subsection{Valuations on Mislove's Cantor tree}

\tr{Firstly, as $\mathbb{M}$ is countably based bounded complete, an application of \cite[Propostion 2.5]{Mislove2020} to $\Sigma\mathbb{M}$ yields as the following:}
\begin{proposition}\label{BasisOfVC*}
  $\mathcal{V}_{\leq 1}\Sigma\mathbb{M}$ has a basis $B_{\mathbb{M}}:=\{\Sigma_{x\in F}r_x\delta_x\in\mathcal{V}_{\leq 1}\Sigma\mathbb{M}:F\subseteq^{fin}\mathcal{K}(\mathbb{M})\}$.
\end{proposition}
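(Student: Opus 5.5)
The plan is to prove directly that $B_{\mathbb{M}}$ is a basis of $\mathcal{V}_{\leq 1}\Sigma\mathbb{M}$, rather than only citing \cite{Mislove2020}. We start from Theorem \ref{BigConsequenceofV}: $\mathbb{M}$ is a bounded complete algebraic domain, so $\mathcal{V}_{\leq 1}\Sigma\mathbb{M}$ is a domain whose simple valuations form a basis. Item (\ref{WBRelationOfVL}) describes the way-below relation, namely $\Sigma_{x\in F}r_x\delta_x\ll\mu$ iff $\mu(\bigcup_{x\in K}\dua x)>\Sigma_{x\in K}r_x$ for all $K\subseteq F$. Since $\mathbb{M}$ is algebraic, for a compact element $c$ the set $\dua c$ can be replaced by $\ua c$, because $\ua c=\dua c$ is then Scott open. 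The goal is to show that for every $\mu$, the set $\dda\mu\cap B_{\mathbb{M}}$ is directed with supremum $\mu$.

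\textbf{Step 1: the supremum.} Take any simple $\sigma=\Sigma_{x\in F}r_x\delta_x\ll\mu$. For each $x\in F$ pick a compact $c_x\ll x$. If $x$ is itself compact we may take $c_x=x$; otherwise we take $c_x$ to be a sufficiently long finite prefix of $x$. Put $\sigma'=\Sigma_{x\in F}r_x\delta_{c_x}$. Then $\sigma'\leq\sigma$ and $\sigma'\in B_{\mathbb{M}}$. Moreover, $\sigma'\ll\mu$ because $\sigma'\leq\sigma\ll\mu$. As $\sigma$ ranges over a directed set with supremum $\mu$, I would show that the elements $\sigma'\in\dda\mu\cap B_{\mathbb{M}}$ can be chosen cofinally approaching every such $\sigma$. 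To do this, fix an open $U$. The value $\sigma(U)$ only involves points $x\in U$, and each such $x$ has a compact $c_x\leq x$ lying in $U$ (Scott openness plus algebraicity). Hence $\sigma(U)\leq\sup\{\sigma'(U)\}$ over the admissible choices of the $c_x$. Taking suprema pointwise on opens, which computes directed sups in $\mathcal{V}_{\leq 1}$, gives $\sup(\dda\mu\cap B_{\mathbb{M}})=\mu$.

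\textbf{Step 2: directedness.} Given $\sigma_1,\sigma_2\in\dda\mu\cap B_{\mathbb{M}}$, interpolation in the domain $\mathcal{V}_{\leq 1}\Sigma\mathbb{M}$ gives a simple $\tau$ with $\sigma_1,\sigma_2\ll\tau\ll\mu$. Next I replace $\tau$ by some $\tau'\in B_{\mathbb{M}}$ with $\sigma_1,\sigma_2\leq\tau'\ll\mu$. The natural candidate uses Step 1: the $\tau'$ approximating $\tau$ from below form a directed family with supremum $\tau$. Because $\sigma_i\ll\tau$, one of these $\tau'$ already lies above both $\sigma_i$, and it satisfies $\tau'\leq\tau\ll\mu$. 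This requires the family of approximants $\tau'$ of $\tau$ to be directed. I would check this via the splitting lemma (Jones), comparing $\Sigma r_x\delta_{c_x}$ and $\Sigma r_x\delta_{d_x}$ through the common refinement $c_x\vee d_x$. Such a join exists because $c_x$ and $d_x$ are both prefixes of $x$, so they are comparable in the tree.

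\textbf{Main obstacle.} The delicate point is the directedness in Step 2, in particular exhibiting compact-supported simple valuations between $\sigma_i$ and $\tau$. I would handle it concretely in the tree. Any finite set of compact elements of $\mathbb{M}$, together with the points of $\tau$'s support, determines a finite subtree, and on it the inequalities of Theorem \ref{BigConsequenceofV}(\ref{WBRelationOfVL}) reduce to finitely many strict inequalities. These inequalities survive passing to sufficiently long finite prefixes, by Scott continuity of $\mu$ along the directed set of prefixes of an infinite string.
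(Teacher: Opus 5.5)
Your argument is correct, but it is not the paper's route. The paper gives no argument of its own: it applies \cite[Proposition 2.5]{Mislove2020} to the countably based bounded complete domain $\mathbb{M}$.

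You instead derive the result directly from Theorem \ref{BigConsequenceofV}(\ref{WBRelationOfVL}) (simple valuations form a basis) by a substitution trick. Replacing each support point $x$ of a simple $\tau=\Sigma_{x\in G}s_x\delta_x$ by a compact prefix $c_x\leq x$ gives $\Sigma_{x\in G}s_x\delta_{c_x}\leq\tau$. Choosing the prefixes long enough recovers $\tau(U)$ on any given Scott open $U$. The family of such substitutes is directed via $c_x\vee d_x$, which exists because $\da x$ is a chain in the tree. Together with interpolation, this yields both directedness of $\dda\mu\cap B_{\mathbb{M}}$ and $\bigsqcup(\dda\mu\cap B_{\mathbb{M}})=\mu$. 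What your route buys is a self-contained proof that uses neither countability nor bounded completeness. It works verbatim for any domain $L$ with basis $B$: take $c_x\in\dda x\cap B$ and use directedness of $\dda x\cap B$ in place of joins. The citation, by contrast, is shorter but hides the mechanism.

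Two small remarks on presentation. First, directedness of the substitutes is immediate termwise from $\delta_{c_x}\leq\delta_{c_x\vee d_x}$. So neither the splitting lemma nor the strict-inequality analysis in your ``main obstacle'' paragraph is needed; the obstacle you anticipate does not arise. Second, in Step 1 you take pointwise suprema before directedness has been established. Either do Step 2 first, or argue directly from upper bounds. Any upper bound $\nu$ of $\dda\mu\cap B_{\mathbb{M}}$ satisfies $\nu(U)\geq\sigma(U)$ for every simple $\sigma\ll\mu$ and every open $U$. Hence $\nu\geq\mu$, so $\mu$ is the least upper bound without appeal to pointwise computation.
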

%\begin{proposition}
%  For every Scott open subset $U\in\mathcal{O}\Sigma\mathbb{M}$ and Scott closed subset $F\in\Gamma\Sigma\mathbb{M}$, the subset $\ua(U\cap F)$ is Scott open.
%\end{proposition}
%\begin{proof}
%  Suppose that $C$ is a chain of $\mathbb{M}$ with $\bigsqcup C\in\ua(U\cap F)$.
%   In the case $\bigsqcup C\in U\cap F$, there is a $c\in C\cap U$.
%   Since $c\leq\bigsqcup C\in F$ and $F$ is a lower subset and $c\in F$.
%   Thus, $c\in U\cap F$.
%   In the case that $\bigsqcup C\notin U\cap F$, there is an $a\in U\cap F$ such that $\bigsqcup C>a$.
%   It is easy to verify that every element strictly less than another element of $\mathbb{M}$ has to be compact; hence, $a\in\mathcal{K}(\mathbb{M})$.
%   It follows that $a$ is lower than some $c'\in C$.
%   Obviously, $c'\in C\cap\ua (U\cap F)$.
%\end{proof}

\tr{
For every $T_0$ space, there is a Scott continuous map $\supp_X:\mathcal{V}_{\leq 1}X\ra\Gamma X$ given by $\supp_X(\mu)=X\setminus\bigcup\{U\in\mathcal{O}X:\mu(U)=0\}$ (see the proof of Lemma 3 of \cite{Mislove2017}, it carries over to $T_0$ spaces).
 In \cite{Fritz2021}, $\supp$ is thought of as a monad transformation.
 Since the underlying space for distributions is fixed to be $\Sigma\mathbb{M}$ throughout this paper, we abbreviate $\supp_{\Sigma\mathbb{M}}$ as $\supp$.
 The Scott closed subset $\supp(\mu)$ is the minimum Scott closed subset that $\mu$ concentrates on.
 So it is clearly to understand the following.}

%Define a map $\tr{\supp}:\mathcal{V}_{\leq 1}\Sigma\mathbb{M}\ra\Gamma\Sigma\mathbb{M}$ by $\tr{\supp}(\mu)=\mathbb{M}\setminus\bigcup\{U\in\mathcal{O}\Sigma\mathbb{M}:\mu(U)=0\}$.
% The subset $\tr{\supp}(\mu)$ indicates the smallest Scott closed subset $A$ such that $\mu(\mathbb{M}\setminus A)=0$.
% We generally call $\tr{\supp}(\mu)$ the support of $\mu$.
% The map $\tr{\supp}$ is Scott continuous \cite{Goubault2011}.

\begin{proposition}\label{SuppDetermineValue}
  For every \tr{Scott} open subset $U\in\mathcal{O}\Sigma\mathbb{M}$ and $\mu\in\mathcal{V}_{\leq 1}\Sigma\mathbb{M}$, $\mu(U)=\mu(\ua(U\cap \tr{\supp}(\mu)))$.
\end{proposition}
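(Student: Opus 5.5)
Write $S=\supp(\mu)$ and $W=\mathbb{M}\setminus S$. By the definition of the support, $W=\bigcup\{V\in\mathcal{O}\Sigma\mathbb{M}:\mu(V)=0\}$. The Remark on meet-continuity shows that $\ua(U\cap S)$ is Scott open, because $S$ is Scott closed. So $\mu(\ua(U\cap S))$ is defined. Since $\ua(U\cap S)\subseteq\ua U=U$, monotonicity gives $\mu(\ua(U\cap S))\leq\mu(U)$. The work is in the reverse inequality.

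\textbf{Step 1: $\mu(W)=0$.} The family of open sets $V$ with $\mu(V)=0$ is closed under finite unions. Indeed, if $\mu(V_1)=\mu(V_2)=0$, the modular law gives $\mu(V_1\cup V_2)=\mu(V_1)+\mu(V_2)-\mu(V_1\cap V_2)=0$. So this family is a directed family of open sets whose union is $W$. Scott continuity of $\mu$ on $\mathcal{O}\Sigma\mathbb{M}$ then gives $\mu(W)=\sup_V\mu(V)=0$.

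\textbf{Step 2: $U\subseteq\ua(U\cap S)\cup W$.} Take $x\in U$. If $x\in S$, then $x\in U\cap S$. Otherwise $x\notin S$, so $x\in W$. Hence $U\subseteq (U\cap S)\cup W\subseteq\ua(U\cap S)\cup W$. The set on the right is open, since both pieces are open (the first by the meet-continuity remark).

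\textbf{Step 3: conclude.} Monotonicity and the modular law give
$$\mu(U)\leq\mu(\ua(U\cap S)\cup W)\leq\mu(\ua(U\cap S))+\mu(W)=\mu(\ua(U\cap S)).$$
Here the middle inequality is modularity, $\mu(A\cup B)=\mu(A)+\mu(B)-\mu(A\cap B)\leq\mu(A)+\mu(B)$, and the final equality uses Step 1. Combined with the easy inequality, this yields equality.

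The only real obstacle is Step 1, showing that the union of all $\mu$-null open sets is $\mu$-null. This needs directedness via modularity followed by Scott continuity of $\mu$; everything else is routine bookkeeping.
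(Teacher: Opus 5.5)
Your proof is correct and follows the paper's argument: the same cover $U\subseteq\ua(U\cap\supp(\mu))\cup(\mathbb{M}\setminus\supp(\mu))$, combined with modularity and the fact that $\mathbb{M}\setminus\supp(\mu)$ is $\mu$-null. The only difference is that you explicitly justify $\mu(\mathbb{M}\setminus\supp(\mu))=0$ by showing the null open sets form a directed family and applying Scott continuity, which the paper uses without comment.
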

\begin{proof}
  \tr{Obviously $\mu(U)\geq\mu(\ua(U\cap\supp(\mu)))$; and conversely,
  \begin{align*}
   \mu(U) & \leq \mu(\ua(U\cap\supp(\mu))\cup(\mathbb{M}\setminus\supp(\mu))) \\
     & = \mu(\ua(U\cap\supp(\mu)))+\mu(\mathbb{M}\setminus\supp(\mu))-\mu(\ua(U\cap\supp(\mu))\cap(\mathbb{M}\setminus\supp(\mu))) \\
     & = \mu(\ua(U\cap\supp(\mu)))+0-0 \\
     & = \mu(\ua(U\cap\supp(\mu))).
  \end{align*}}
\end{proof}
%\begin{remark}\label{TopOfLowerSet}
%  Let $S$ be a lower subset of $\mathbb{M}$.
%   Then, $S$ is an algebraic poset with the relative order, and the Scott topology of $S$ agrees the relative Scott topology.
%   For every Scott open subset $U$ of $S$, $\ua U=\{y\in\mathbb{M}:\exists x\in S\text{ s.t. }x\leq y\}$ is a Scott open subset of $\mathbb{M}$
%\end{remark}

 \begin{corollary}\label{ValuationsOnFiniteTree}
   Let $\mu\in\mathcal{V}_{\leq 1}\Sigma\mathbb{M}$ be a continuous valuation that $\supp(\mu)$ is finite.
    Then, $\mu$ is a simple valuation.
 \end{corollary}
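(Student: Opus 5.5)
Let $S:=\supp(\mu)$, a finite Scott closed subset of $\mathbb{M}$. By Proposition \ref{SuppDetermineValue}, $\mu(U)=\mu(\ua(U\cap S))$ for every Scott open $U$. So $\mu$ is determined by its values on the finitely many Scott open sets of the form $\ua A$ with $A\subseteq S$; note that $\ua(U\cap S)=\ua A$ for $A=U\cap S$, and these sets are Scott open by the remark on meet-continuity. In particular $\mu$ takes only finitely many values, i.e.\ it is finitely valued. Since $\Sigma\mathbb{M}$ is a domain with its Scott topology, it is sober. Proposition \ref{FiniteAreSimple} then shows that $\mu$ is a simple valuation. This is the short route I would take.

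If one wants an explicit description (and to avoid relying on the full strength of Proposition \ref{FiniteAreSimple}), I would instead build the simple valuation by Möbius inversion over the finite poset $S$. Every element of a finite lower set $S$ is compact, since an element strictly below another in $\mathbb{M}$ is a finite string, and finite maximal elements $x\surd$ are compact. Hence $\ua x$ is Scott open for each $x\in S$. Define
\[
r_x:=\mu(\ua x)-\mu\Bigl(\bigcup_{y\in S,\,y>x}\ua y\Bigr)\qquad (x\in S).
\]
Since $\mathbb{M}$ is a tree, the elements of $S$ strictly above $x$ have pairwise disjoint upsets unless they are comparable. So $\bigcup_{y>x}\ua y$ is the disjoint union of $\ua y$ over the minimal elements $y$ of $S\cap\dua x$ (more precisely, over the elements strictly above $x$ and minimal there). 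Modularity of $\mu$ on disjoint open sets then gives additivity. By induction from the maximal elements of $S$ downwards, one obtains $\mu(\ua x)=\sum_{y\in S,\,y\geq x}r_y$, and each $r_x\geq 0$ by monotonicity.

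Next I would check that $\nu:=\sum_{x\in S}r_x\delta_x$ agrees with $\mu$ on every Scott open $U$. Write $U\cap S$ as an upper set of $S$ and let $m_1,\dots,m_k$ be its minimal elements. Because $\mathbb{M}$ is a tree, the sets $\ua m_i$ are pairwise disjoint, so
\[
\mu(U)=\mu(\ua(U\cap S))=\sum_i\mu(\ua m_i)=\sum_i\sum_{y\geq m_i}r_y=\sum_{y\in U\cap S}r_y=\nu(U).
\]
The condition $\sum r_x=\nu(\mathbb{M})=\mu(\mathbb{M})\leq 1$ holds automatically.

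The main obstacle is the disjointness and additivity step: one must verify that distinct incomparable elements of $\mathbb{M}$ have disjoint upsets, which holds because $\mathbb{M}$ is a tree under the prefix order (the $\surd$-elements are maximal). One must also use finite additivity of $\mu$ on disjoint Scott open sets, which follows from the modular law together with $\mu(\emptyset)=0$.
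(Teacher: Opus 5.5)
Your first paragraph is exactly the paper's proof. Finiteness of $\supp(\mu)$ together with Proposition \ref{SuppDetermineValue} makes $\mu$ finitely valued, and sobriety of $\Sigma\mathbb{M}$ plus Proposition \ref{FiniteAreSimple} then gives that $\mu$ is simple.

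Your optional M\"obius-inversion construction is also correct, and it has a bonus: it shows directly that the masses sit on compact elements, so $\mu\in B_{\mathbb{M}}$. It needs two small fixes. First, $S\cap\dua x$ should read $S\cap(\ua x\setminus\{x\})$, since $\dua x=\ua x$ for compact $x$. Second, the points of $S$ are compact simply because a finite lower set cannot contain an infinite string.
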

 \begin{proof}
   The support $\supp(\mu)$ is finite; hence, the family $\{\ua(U\cap\supp(\mu)):U\in\mathcal{O}\Sigma\mathbb{M}\}$ is finite.
    Since $\mu(U)=\mu(\ua(U\cap\supp(\mu)))$ for every $U\in\mathcal{O}\Sigma\mathbb{M}$, $\mu$ is a finitely valued valuation.
    Notice that $\Sigma\mathbb{M}$ is sober, $\mu$ is a simple valuation by Proposition \ref{FiniteAreSimple}.
 \end{proof}
 \begin{proposition}\label{BMEnjoyMeet}
   The basis $B_{\mathbb{M}}$ enjoys meet, that is, for every $\Sigma_{x\in F}r_x\delta_x$ and $\Sigma_{y\in G}s_y\delta_y$ of the basis $B_{\mathbb{M}}$, $\Sigma_{x\in F}r_x\delta_x\wedge\Sigma_{y\in G}s_y\delta_y$ exists in $\mathcal{V}_{\leq 1}\Sigma\mathbb{M}$, and it is also included in $B_{\mathbb{M}}$.
 \end{proposition}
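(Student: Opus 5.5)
We will construct the meet of two simple valuations with finite compact supports explicitly. The first move is to the finite tree: set $T:=\da F\cup\da G$. It is a finite lower subset of $\mathcal{K}(\mathbb{M})$, because every compact element of $\mathbb{M}$ has only finitely many prefixes. Hence $T$ is a finite rooted tree under the prefix order.

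For each $z\in T$ we define $\nu:=\Sigma_{x\in F}r_x\delta_x$ and $\nu':=\Sigma_{y\in G}s_y\delta_y$. Their values on $\ua z$ are $\nu(\ua z)=\Sigma_{x\in F,x\geq z}r_x$ and $\nu'(\ua z)=\Sigma_{y\in G,y\geq z}s_y$. Put $m(z):=\min(\nu(\ua z),\nu'(\ua z))$, and we aim to show that $m$ comes from a simple valuation supported on $T$. Define masses recursively from the leaves: $t_z:=m(z)-\Sigma_{w\in \mathrm{ch}(z)}m(w)$, where $\mathrm{ch}(z)$ is the set of immediate successors of $z$ in $T$. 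The key inequality is $t_z\geq 0$. It holds because $\Sigma_{w}m(w)\leq\Sigma_w\nu(\ua w)\leq\nu(\ua z)$, and the same bound holds for $\nu'$. Here we use that the cones $\ua w$ for distinct children $w$ of $z$ are pairwise disjoint, since $\mathbb{M}$ is a tree. The candidate meet is then $\lambda:=\Sigma_{z\in T}t_z\delta_z$. We have $\Sigma_z t_z=m(\epsilon)\leq 1$, and by telescoping $\lambda(\ua z)=m(z)$ for every $z\in T$. Since $T$ is a finite set of compact elements, $\lambda\in B_{\mathbb{M}}$.

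It remains to show that $\lambda$ is the greatest lower bound.

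\textbf{Lower bound.} For a Scott open $U$, the set $U\cap T$ is a finite upper subset of $T$, and $\lambda(U)=\lambda(\ua(U\cap T))$ because $\lambda$ lives on $T$. Let $M$ be the set of minimal elements of $U\cap T$. Then $\ua(U\cap T)$ is the disjoint union of the cones $\ua z$ with $z\in M$, so $\lambda(U)=\Sigma_{z\in M}m(z)\leq\Sigma_{z\in M}\nu(\ua z)=\nu(\ua(U\cap T))\leq\nu(U)$. The same argument gives $\lambda\leq\nu'$.

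\textbf{Greatest.} Let $\mu\leq\nu,\nu'$. Then $\mu(\mathbb{M}\setminus\da T)\leq\nu(\mathbb{M}\setminus\da T)=0$, because $F\subseteq T$ and $T$ is a lower set. So $\supp(\mu)\subseteq T$, and Proposition~\ref{SuppDetermineValue} gives $\mu(U)=\mu(\ua(U\cap T))=\Sigma_{z\in M}\mu(\ua z)$. Each term satisfies $\mu(\ua z)\leq m(z)$, so $\mu(U)\leq\lambda(U)$. (Corollary~\ref{ValuationsOnFiniteTree} would also tell us that $\mu$ is simple, but we do not need this.)

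The main obstacle is the nonnegativity $t_z\geq 0$ together with the decomposition of $\ua(U\cap T)$ into disjoint cones. Both rely on the tree structure of $\mathbb{M}$: any two elements are either comparable or have disjoint upper sets. This is exactly where the incomparability of $x\surd$ with the strings $x0,x1$ matters. I would verify this disjointness carefully first, as a lemma about $\mathbb{M}$, and then run the computations above.
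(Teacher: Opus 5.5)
Your proof is correct, but it produces the candidate meet differently from the paper.

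\textbf{The paper's route.} It uses the same quantity $h_z=\min(\{\Sigma_{x\in\ua z\cap F}r_x,\Sigma_{y\in\ua z\cap G}s_y\})$ (your $m(z)$), but defines the candidate directly on opens by $W\mapsto\Sigma_{z\in\min(\da F\cap\da G\cap W)}h_z$. It must then check the valuation axioms by hand:
\begin{enumerate}[(i)]
  \item modularity, asserted as an identity about minimal elements of upper sets in a finite tree;
  \item monotonicity, via the claim $h_z\geq\Sigma_{z'\in H}h_{z'}$ for antichains $H$ above $z$;
  \item Scott continuity.
\end{enumerate}
Finally it invokes Corollary \ref{ValuationsOnFiniteTree}, and hence the external Proposition \ref{FiniteAreSimple} (finitely valued valuations on sober spaces are simple), to place the result in $B_{\mathbb{M}}$. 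The paper deliberately avoids computing point masses, calling that complex.

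\textbf{What your route buys.} Your telescoping $t_z=m(z)-\Sigma_{w\in\mathrm{ch}(z)}m(w)$ shows the point masses are easy to compute. Your nonnegativity argument is exactly the case $H=\mathrm{ch}(z)$ of the paper's antichain claim. Because $\lambda$ is an explicit nonnegative combination of Dirac masses, it is a simple valuation by construction. This bypasses all the axiom checks and the appeal to the external result. Your lower-bound computation also recovers the paper's closed formula, $\lambda(U)=\Sigma_{z\in M}m(z)$, as a by-product. You further verify explicitly that $\lambda$ lies below both valuations, a step the paper leaves implicit. The maximality argument is essentially the same in both proofs.

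\textbf{Minor points.} Working over $T=\da F\cup\da G$ instead of $\da F\cap\da G$ is harmless, since $m$, and hence $t$, vanish off the intersection. Two small steps are worth stating explicitly. First, $\mathbb{M}\setminus T$ is Scott open because a finite lower set is Scott closed. Second, $\mu(U)=\mu(\ua(U\cap T))$ follows from Proposition \ref{SuppDetermineValue}, together with $\supp(\mu)\subseteq T$ and monotonicity.
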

 \begin{proof}
 \tr{Note that the meet $\Sigma_{x\in F}r_x\delta_x\wedge\Sigma_{y\in G}s_y\delta_y$ should be a simple valuation by Corollary 2.14 if it exists.
    Since it is complex to determine the mass of each point of $\supp(\Sigma_{x\in F}r_x\delta_x\wedge\Sigma_{y\in G}s_y\delta_y)$, we directly identify the meet by its values on open subsets}.

   For every $z\in\da F\cap\da G$, let $h_z=\tr{\min}(\{\Sigma_{x\in\ua z\cap F}r_x,\Sigma_{y\in\ua z\cap G}s_y\})$.
    \tr{We firstly need a claim that if $z\in\da F\cap\da G$ and $H$ is an arbitrary anti-chain of $\ua z\cap\da F\cap\da G$, then $h_z\geq \Sigma_{z'\in H}h_{z'}$.
    The claim is proved as follws:
    \begin{align*}
      h_z & = \min(\{\Sigma_{x\in\ua z\cap F}r_x,\Sigma_{y\in\ua z\cap G}s_y\}) \\
       & \geq \min(\{\Sigma_{x\in\ua H\cap F}r_x,\Sigma_{y\in\ua H\cap G}s_y\}) \\
       & = \min(\{\Sigma_{z'\in H}\Sigma_{x\in\ua z'\cap F},\Sigma_{z'\in H}\Sigma_{y\in\ua z'\cap G}s_y\}) \\
       & \geq \Sigma_{z'\in H}\min(\{\Sigma_{x\in\ua z'\cap F}r_x,\Sigma_{y\in\ua z'\cap G}s_y\})\\
       & = \Sigma_{z'\in H}h_{z'}.
    \end{align*}
    This claim will be used shortly afterwards.}

   Define $\mu:\mathcal{O}\Sigma\mathbb{M}\ra[0,1]$ by
   \begin{equation*}
     \mu(W)=\begin{cases}
              \Sigma_{z\in \min(\da F\cap\da G\cap W)}h_z, & \mbox{if $\da F\cap\da G\cap W\neq\emptyset$}; \\
              0, & \mbox{otherwise}.
            \end{cases}
   \end{equation*}
    \tr{Obviously $\mu$ is a finitely valued map by its formula.
    Now we show that $\mu$ is the meet.
    And the first step, we check that $\mu$ is a valuation.}
    \tr{It is easy to see $\mu(\emptyset)=0$.}
    Suppose $U,V$ are Scott open subsets of $\mathbb{M}$.
    We directly have
    \begin{align*}
      \mu(U)+\mu(V) & = \Sigma_{z\in \min(\da F\cap\da G\cap U)}h_z+\Sigma_{z'\in \min(\da F\cap\da G\cap V)}h_{z'} \\
       & = \Sigma_{z\in \min(\da F\cap\da G\cap U\cap V)}h_z+\Sigma_{z'\in \min(\da F\cap\da G\cap(U\cup V))}h_{z'} \\
       & = \mu(U\cap V)+\mu(U\cup V).
    \end{align*}
    If $U\subseteq V$ and $\da F\cap\da G\cap U\neq\emptyset$, then, for every $z'\in \min(\da F\cap\da G\cap U)$, there is a $z\in \min(\da F\cap G\cap V)$ lower than it; hence,
    \begin{align*}
      \mu(V) & = \Sigma_{z\in \min(\da F\cap\da G\cap V)}h_z \\
       & \geq \Sigma_{z\in \min(\da F\cap\da G\cap V)}\Sigma\{h_{z'}:z'\in\ua z\cap \min(\da F\cap\da G\cap U)\} \tag{\tr{by our claim}}\\
       & \geq \Sigma_{z'\in \min(\da F\cap\da G\cap U)}h_{z'} \\
       & = \mu(U).
    \end{align*}
    \tr{Since $\mu$ is finitely valued, it is easy to see the Scott continuity of $\mu$ from its monotonic.}
    By Corollary \ref{ValuationsOnFiniteTree}, $\mu$ is a simple valuation of $\mathcal{V}_{\leq 1}\Sigma\mathbb{M}$.
    Obviously, $\supp(\mu)\subseteq\da F\cap\da G\subseteq^{fin}\mathcal{K}(\mathbb{M})$, \tr{$\mu$ belongs to $B_{\mathbb{M}}$}.

   Assume \tr{$\upsilon\in\mathcal{V}_{\leq 1}\Sigma\mathbb{M}$ is less than $\Sigma_{x\in F}r_x\delta_x$ and $\Sigma_{y\in G}s_y\delta_y$ and $W\in\mathcal{O}\Sigma\mathbb{M}$}.
    Then, $\supp(\upsilon)\subseteq\da F\cap\da G$ is finite, and
    \begin{align*}
      \upsilon(W) & = \tr{\upsilon(\ua(W\cap\supp(\upsilon)))} \tag{\tr{by Proposition \ref{SuppDetermineValue}}} \\
       & \leq \tr{\upsilon(\ua(W\cap\da F\cap\da G))}\\
       & = \tr{\upsilon(\bigcup\{\ua z:z\in \min(\da F\cap\da G\cap W)\})} \\
       & = \Sigma_{z\in \min(\da F\cap\da G\cap W)}\upsilon(\ua z) \tag{\tr{due to the fact that $\da F\cap\da G\cap W\subseteq \da F\cap\da G$ is finite}}\\
       & \leq \tr{\Sigma_{z\in \min(\da F\cap\da G\cap W)}\min(\{\Sigma_{x\in\ua z\cap F}r_x,\Sigma_{y\in\ua z\cap G}s_y\})} \\
       & = \Sigma_{z\in \min(\da F\cap\da G\cap W)}h_z \\
       & = \mu(W).
    \end{align*}
 \end{proof}
 \begin{corollary}
   $\mathcal{V}_{\leq 1}\Sigma\mathbb{M}$ is a bounded complete domain.
 \end{corollary}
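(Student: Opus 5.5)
We will use the standard characterization that a domain is bounded complete if and only if it has a basis in which every pair of elements that is bounded above has a supremum in the domain. Equivalently, a domain is bounded complete if and only if every non-empty subset has an infimum. The ingredients are already available. By Theorem \ref{BigConsequenceofV} (i), $\mathcal{V}_{\leq 1}\Sigma\mathbb{M}$ is a domain. By Proposition \ref{BasisOfVC*}, $B_{\mathbb{M}}$ is a basis. By Proposition \ref{BMEnjoyMeet}, $B_{\mathbb{M}}$ is closed under binary meets computed in $\mathcal{V}_{\leq 1}\Sigma\mathbb{M}$. By Theorem \ref{BigConsequenceofV} (iv), applicable because $\mathbb{M}$ is a bounded complete domain, every filtered subset has an infimum.

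The plan is to show that every non-empty subset $A\subseteq\mathcal{V}_{\leq 1}\Sigma\mathbb{M}$ has an infimum, and then use the remark in the Preliminaries that this is equivalent to bounded completeness.

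\textbf{Binary meets.} Take $\mu,\nu\in\mathcal{V}_{\leq 1}\Sigma\mathbb{M}$. Let $D_\mu=\dda\mu\cap B_{\mathbb{M}}$ and $D_\nu=\dda\nu\cap B_{\mathbb{M}}$; both are directed, with suprema $\mu$ and $\nu$. The set
$$E=\{b\wedge c: b\in D_\mu, c\in D_\nu\}$$
lies in $B_{\mathbb{M}}$ by Proposition \ref{BMEnjoyMeet}, and it is directed because $\wedge$ is monotone in each argument. Put $m=\bigsqcup E$. Clearly $m\leq\mu$ and $m\leq\nu$.

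For maximality, let $\lambda$ be a lower bound of $\mu$ and $\nu$, and take $a\in\dda\lambda\cap B_{\mathbb{M}}$. Then $a\ll\mu$, so some $b\in D_\mu$ satisfies $a\leq b$; this uses interpolation and the fact that $\mu=\bigsqcup D_\mu$. Similarly some $c\in D_\nu$ satisfies $a\leq c$. Hence $a\leq b\wedge c\leq m$. Taking the supremum over such $a$ gives $\lambda\leq m$, so $m=\mu\wedge\nu$.

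\textbf{Arbitrary infima.} Induction gives infima of all non-empty finite subsets of $A$. The family of these finite meets is filtered, so by Theorem \ref{BigConsequenceofV} (iv) it has an infimum, and this is exactly $\inf A$.

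The only delicate point is the maximality step for binary meets. There it matters that the meet in Proposition \ref{BMEnjoyMeet} is the genuine meet in the whole of $\mathcal{V}_{\leq 1}\Sigma\mathbb{M}$, not merely within $B_{\mathbb{M}}$. This is exactly what that proposition states, since the lower bound $\upsilon$ in its proof is arbitrary, so I expect no real obstacle.
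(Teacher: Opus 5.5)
Your proposal is correct and is essentially the paper's proof: you obtain $\mu\wedge\upsilon$ as the directed supremum of the basis meets $\mu'\wedge\upsilon'$ with $\mu'\in\dda\mu\cap B_{\mathbb{M}}$ and $\upsilon'\in\dda\upsilon\cap B_{\mathbb{M}}$ (via Proposition \ref{BMEnjoyMeet}), and then obtain arbitrary non-empty infima from the filtered family of finite meets via Theorem \ref{BigConsequenceofV} (\ref{VC*FiteredHasInf}). One small point: interpolation is not needed in the maximality step, since $a\ll\lambda\leq\mu$ gives $a\ll\mu$, and the definition of $\ll$ applied to the directed set $\dda\mu\cap B_{\mathbb{M}}$ (whose supremum is $\mu$) directly yields the required $b$.
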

 \begin{proof}
   \tr{We only need to verify that every non-empty subset has an inf.}
   For every $\mu,\upsilon\in\mathcal{V}_{\leq 1}\Sigma\mathbb{M}$, denote
   \begin{align*}
     \tr{\varpi} & = \bigsqcup\{\mu'\wedge\upsilon':\mu'\in\dda\mu\cap B_{\mathbb{M}},\upsilon'\in\dda\upsilon\cap B_{\mathbb{M}}\}.
   \end{align*}
   \tr{Assume that $\pi\leq\mu,\upsilon$.
   For every $\pi'\in\dda\pi\cap B_\mathbb{M}$, we have $\pi'\ll \pi\leq\mu,\upsilon$.
   Hence, $\pi'$ is less than a $\mu'\in\dda\mu\cap B_{\mathbb{M}}$ and a $\upsilon'\in\dda\upsilon\cap B_{\mathbb{M}}$ at the same time; furthermore, $\pi'$ is less than $\mu'\wedge\upsilon'$.
   Therefore, we obtain $\pi'\leq \varpi$.
   It follows that $\pi\leq\varpi$, and $\varpi$ is exactly the inf of $\mu$ and $\upsilon$.}

   \tr{So for every non-empty subset $\mathcal{F}$ of $\mathcal{V}_{\leq 1}\Sigma\mathbb{M}$, its inf is equal to $\inf (\{\wedge F:F\subseteq^{fin}\mathcal{F}\})$: $\{\wedge F:F\subseteq^{fin}\mathcal{F}\}$ obviously is a filtered subset, and the existence of the inf follows from Theorem \ref{BigConsequenceofV} (\ref{VC*FiteredHasInf}).
   Consequently, $\mathcal{V}_{\leq 1}\Sigma\mathbb{M}$ is bounded complete.}
 \end{proof}
 Finally, we conclude that $\mathcal{V}_{\leq 1}\Sigma\mathbb{M}$ is a bounded complete domain and the weak topology coincides with the Scott topology.

\tr{Then, we provide a simple characterization of continuous valuations on $\mathbb{M}$.}

\tr{
Let $X$ be a $T_0$ space, $\mu\in\mathcal{V}_{\leq 1}X$, $U\in\mathcal{O}X$ and $F\in\Gamma X$.
 The restriction $\mu|_U:\mathcal{O}X\ra[0,1]$ is defined by $\mu|_U(V)=\mu(U\cap V)$, and the co-restriction $\mu|^F:\mathcal{O}X\ra[0,1]$ is defined to be $\mu|^F(V)=\mu(V)-\mu(V\setminus F)$, and according to \cite[Proposition 3.3]{Heckmann1996}, we have the following:}
\begin{proposition}
   \tr{$\mu|_U,\mu|^F$ are continuous valuation on $X$.
   Especially, if $U=X\setminus F$, then $\mu=\mu|_U+\mu|^F$.}
\end{proposition}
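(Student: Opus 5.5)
We will check the valuation axioms directly from the definitions, using only the modular law and monotonicity of $\mu$ together with elementary set identities. For the restriction $\mu|_U(V)=\mu(U\cap V)$ there is little to do. Strictness, $\mu|_U(\emptyset)=\mu(\emptyset)=0$, and monotonicity are immediate. Modularity follows from $U\cap(V\cap W)=(U\cap V)\cap(U\cap W)$ and $U\cap(V\cup W)=(U\cap V)\cup(U\cap W)$, by applying modularity of $\mu$ to the open sets $U\cap V$ and $U\cap W$. Scott continuity holds because $V\mapsto U\cap V$ preserves directed unions in $\mathcal{O}X$, and $\mu$ is Scott continuous.

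For the co-restriction, first note that $V\setminus F=V\cap(X\setminus F)$ is open because $F$ is closed. So $\mu|^F(V)=\mu(V)-\mu(V\cap U_0)$ with $U_0:=X\setminus F$, i.e.\ $\mu|^F=\mu-\mu|_{U_0}$ pointwise. The values lie in $[0,1]$ since $\mu(V\cap U_0)\le\mu(V)\le1$. Strictness and modularity are inherited from $\mu$ and $\mu|_{U_0}$, because a difference of two modular functions is modular.

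The main point is monotonicity of $\mu|^F$. Here I will use modularity of $\mu$ on $V$ and $W\cap U_0$ for $V\subseteq W$. Since $V\cap(W\cap U_0)=V\cap U_0$ and $V\cup(W\cap U_0)\subseteq W$, we get
\begin{equation*}
\mu(V)+\mu(W\cap U_0)=\mu(V\cap U_0)+\mu(V\cup(W\cap U_0))\le\mu(V\cap U_0)+\mu(W).
\end{equation*}
This is exactly $\mu|^F(V)\le\mu|^F(W)$.

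Scott continuity of $\mu|^F$ follows for a directed family $\{V_i\}$. Both $\mu(\bigcup V_i)=\sup\mu(V_i)$ and $\mu(\bigcup V_i\cap U_0)=\sup\mu(V_i\cap U_0)$ are suprema of monotone nets indexed by the same directed set, i.e.\ limits in $[0,1]$. So the limit of the differences is the difference of the limits, and by monotonicity this limit is the supremum of $\mu|^F(V_i)$.

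Finally, if $U=X\setminus F$, then $\mu|_U(V)+\mu|^F(V)=\mu(V\cap U)+\mu(V)-\mu(V\setminus F)=\mu(V)$, since $V\setminus F=V\cap U$. This gives $\mu=\mu|_U+\mu|^F$. I expect the only nontrivial step to be the monotonicity inequality for $\mu|^F$; everything else is routine bookkeeping.
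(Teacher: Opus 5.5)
Your verification is correct and complete. The paper gives no proof of its own here; it imports the statement from Heckmann's Proposition~3.3. Your argument is therefore a self-contained replacement for a citation, not a variant of an argument in the text.

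What your route buys is transparency. It shows that monotonicity of $\mu|^F$ is the only step that is not purely formal. You obtain it correctly from modularity applied to $V$ and $W\cap(X\setminus F)$ for $V\subseteq W$, using $V\cap W\cap(X\setminus F)=V\setminus F$ and $V\cup(W\setminus F)\subseteq W$.

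It also shows that every subtraction you perform is legitimate only because $\mu$ is $[0,1]$-valued, as it is for $\mu\in\mathcal{V}_{\leq 1}X$. This covers the definition of $\mu|^F$, modularity of a difference of modular maps, and passing to the limit of a difference. For $[0,\infty]$-valued valuations one would need $\mu(V\setminus F)$ to be finite.

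Your Scott-continuity step can be shortened slightly. Given $\varepsilon>0$, pick $i$ with $\mu(V_i)>\mu(\bigcup_j V_j)-\varepsilon$. Then $\mu|^F(V_i)>\mu|^F(\bigcup_j V_j)-\varepsilon$, because $\mu(V_i\setminus F)\leq\mu((\bigcup_j V_j)\setminus F)$. Monotonicity supplies the reverse inequality.

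The $T_0$ hypothesis plays no role in the argument.
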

 \begin{corollary}
   \tr{For every $\mu\in\mathcal{V}_{\leq 1}\Sigma\mathbb{M}$, $\mu|_{\mathbb{C}^\surd}=\Sigma_{x\surd\in\supp(\mu)}\mu(\{x\surd\})\delta_{x\surd}$.}
 \end{corollary}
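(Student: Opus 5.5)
The plan is to compute both sides on an arbitrary Scott open $V\in\mathcal{O}\Sigma\mathbb{M}$ and show they agree. By definition $\mu|_{\mathbb{C}^\surd}(V)=\mu(V\cap\mathbb{C}^\surd)$, which makes sense because $\mathbb{C}^\surd$ is Scott open. The set $V\cap\mathbb{C}^\surd$ consists only of maximal elements $x\surd$. Each singleton $\{x\surd\}=\ua x\surd$ is Scott open, since $x\surd$ is compact. Hence $V\cap\mathbb{C}^\surd=\bigcup_{x\surd\in V}\{x\surd\}$ is a union of pairwise disjoint open singletons, and $\mu(\{x\surd\})$ is well defined.

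The first step is additivity over disjoint open sets. For finitely many distinct $x_1\surd,\dots,x_n\surd$, the modular law with $\mu(\emptyset)=0$ gives, by induction, $\mu(\bigcup_{i\le n}\{x_i\surd\})=\sum_{i\le n}\mu(\{x_i\surd\})$. The set $\mathbb{C}^\surd$ is countable, so $V\cap\mathbb{C}^\surd$ is the directed union of its finite subsets, each of which is open. Scott continuity of $\mu$ then yields
\begin{equation*}
\mu(V\cap\mathbb{C}^\surd)=\sup_{G\subseteq^{fin}V\cap\mathbb{C}^\surd}\Sigma_{x\surd\in G}\mu(\{x\surd\})=\Sigma_{x\surd\in V\cap\mathbb{C}^\surd}\mu(\{x\surd\}),
\end{equation*}
where the right-hand side is a countable sum of nonnegative terms bounded by $1$.

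The second step removes the points outside the support. If $x\surd\notin\supp(\mu)$, then $\{x\surd\}\subseteq\mathbb{M}\setminus\supp(\mu)$, and that complement has measure $0$. So $\mu(\{x\surd\})=0$. Conversely, if $x\surd\in\supp(\mu)$, then $\mu(\{x\surd\})>0$, because otherwise the open set $\{x\surd\}$ would be one of the null opens removed in forming the support. The sum therefore ranges exactly over $x\surd\in\supp(\mu)\cap V$. Since $\delta_{x\surd}(V)=1$ iff $x\surd\in V$, this equals $\big(\Sigma_{x\surd\in\supp(\mu)}\mu(\{x\surd\})\delta_{x\surd}\big)(V)$, with the right-hand side read as a possibly countable sum of Dirac valuations evaluated pointwise.

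The main obstacle is interpretive rather than computational. When $\supp(\mu)\cap\mathbb{C}^\surd$ is infinite, the right-hand side is not a finite simple valuation. I would address this by noting that the formal sum is evaluated pointwise, as computed above, and is the directed supremum of its finite partial sums. Each partial sum is a simple valuation, and each is below $\mu|_{\mathbb{C}^\surd}$ by the first step. So the expression is a legitimate element of $\mathcal{V}_{\leq 1}\Sigma\mathbb{M}$, and it agrees with $\mu|_{\mathbb{C}^\surd}$ on every open set.
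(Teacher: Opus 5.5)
Your proposal is correct and follows essentially the same route as the paper. Both arguments use finite additivity on the disjoint open singletons $\{x\surd\}=\ua x\surd$, then Scott continuity of $\mu$ along the directed family of finite subsets of $V\cap\mathbb{C}^\surd$, and then discard the zero-mass terms outside $\supp(\mu)$. Your extra remarks are correct but not needed for the identity itself: that points of $\supp(\mu)\cap\mathbb{C}^\surd$ carry positive mass, and that the infinite sum is a directed supremum of simple valuations.
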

 \begin{proof}
   \tr{Suppose $U$ is Scott open.
    It is easy to see that for every $F\subseteq^{fin}U\cap\mathbb{C}^\surd$, $F$ is Scott open in and $\mu(F)=\Sigma_{x\surd\in F}\mu(\{x\surd\})$.
    Then, $U\cap\mathbb{C}^\surd=\bigsqcup\{F\subseteq^{fin}U\}$.
    Since $U\cap\mathbb{C}^\surd$ is countable and $\mu(U)\leq 1$,
    \begin{align*}
      \mu|_{\mathbb{C}^\surd}(U) & =\mu(U\cap\mathbb{C}^\surd) \\
       & = \mu(\bigsqcup\{F\subseteq^{fin}U\}) \\
       & = \sup(\{\mu(F):F\subseteq^{fin}U\cap\mathbb{C}^\surd\})\\
       & = \sup(\{\Sigma_{x\surd\in F}\mu(\{x\surd\}):F\subseteq^{fin}U\cap\mathbb{C}^\surd\}) \\
       & = \Sigma_{x\surd\in U\cap\mathbb{C}^\surd}\mu(\{x\surd\})\\
       & = \Sigma_{x\surd\in\mathbb{M}}\mu(\{x\surd\})\delta_{x\surd}(U).
    \end{align*}
    Because for every $x\surd\in\mathbb{C}^\surd\setminus\supp(\mu)$, $\mu|_{\mathbb{C}^\surd}(\{x\surd\})=\mu(\{x\surd\})=0$, we have $$\Sigma_{x\surd\in\mathbb{M}}\mu(\{x\surd\})\delta_{x\surd}=\Sigma_{x\surd\in\supp(\mu)}\mu(\{x\surd\})\delta_{x\surd}.$$}
 \end{proof}
 \begin{corollary}
   \tr{Every $\mu\in\mathcal{V}_{\leq 1}\Sigma\mathbb{M}$ is equal to the sum of a continuous valuation $\mu|^{\mathbb{C}}$ and discrete distribution $\mu|_{\mathbb{C}^\surd}$.}
 \end{corollary}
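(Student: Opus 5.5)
This follows by taking $F=\mathbb{C}$ in the preceding Proposition on restrictions and co-restrictions and then identifying the restricted part through the preceding Corollary. First I will check that $\mathbb{C}$ is Scott closed in $\mathbb{M}$ and that its complement is exactly $\mathbb{C}^\surd$. Scott closedness is stated in the Remark on $\mathbb{M}$. The complement statement holds because $\mathbb{M}=\mathbb{C}\cup\mathbb{C}^\surd$ is a disjoint union by construction.

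Consequently $U:=\mathbb{M}\setminus\mathbb{C}=\mathbb{C}^\surd$ is Scott open. The Proposition following \cite[Proposition 3.3]{Heckmann1996} then says that both $\mu|_{\mathbb{C}^\surd}$ and $\mu|^{\mathbb{C}}$ are continuous valuations on $\Sigma\mathbb{M}$. In the special case $U=X\setminus F$ it also gives the decomposition $\mu=\mu|_{\mathbb{C}^\surd}+\mu|^{\mathbb{C}}$. Explicitly, for every Scott open $V$ we have $\mu|^{\mathbb{C}}(V)=\mu(V)-\mu(V\cap\mathbb{C}^\surd)$. This uses $V\setminus\mathbb{C}=V\cap\mathbb{C}^\surd$, which is open, so the formula makes sense.

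It remains to see that $\mu|_{\mathbb{C}^\surd}$ is a discrete distribution. The preceding Corollary gives this directly: it writes $\mu|_{\mathbb{C}^\surd}$ as the countable sum $\Sigma_{x\surd\in\supp(\mu)}\mu(\{x\surd\})\delta_{x\surd}$ of Dirac valuations at points of $\mathbb{C}^\surd$, with total mass $\mu(\mathbb{C}^\surd)\leq 1$.

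There is no real obstacle here. The only point needing care is that the singletons $\{x\surd\}$ are Scott open, since $x\surd$ is maximal and compact. Hence $\mu(\{x\surd\})$ is defined, and the weights are nonnegative and sum to at most $1$. I would also note, though it is not required, that $\mu|^{\mathbb{C}}$ is concentrated on $\mathbb{C}$: it vanishes on $\mathbb{C}^\surd$, because $\mu|^{\mathbb{C}}(\mathbb{C}^\surd)=\mu(\mathbb{C}^\surd)-\mu(\mathbb{C}^\surd)=0$.
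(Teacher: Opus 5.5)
Your proposal is correct and follows the route the paper intends. The paper states this corollary without proof, as an immediate consequence of the restriction/co-restriction Proposition applied with $F=\mathbb{C}$ and $U=\mathbb{M}\setminus\mathbb{C}=\mathbb{C}^\surd$, combined with the preceding Corollary $\mu|_{\mathbb{C}^\surd}=\Sigma_{x\surd\in\supp(\mu)}\mu(\{x\surd\})\delta_{x\surd}$. Your checks that $\mathbb{C}$ is Scott closed and that each $\{x\surd\}$ is Scott open are exactly the details the paper leaves implicit.
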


\tr{
The inclusion $inc_{\mathbb{C}}:\mathbb{C}\subseteq\mathbb{M}$ obviously is a Scott continuous embedding.
 So for every $\mu\in\mathcal{V}_{\leq 1}\Sigma\mathbb{C}$, $inc_{\mathbb{C}}[\mu]\in\mathcal{V}_{\leq 1}\Sigma\mathbb{M}$.
 Observe that for every $x\surd\in\mathbb{M}$, $inc_{\mathbb{C}}[\mu](\ua x\surd)=inc_{\mathbb{C}}[\mu](\{x\surd\})=\mu(\emptyset)=0$.
 Thus, $\mu$ can be regard as a valuation on $\mathbb{M}$ with no mass on these $x\surd$.}
\begin{theorem}(\cite[Theorem 5.3]{Keimel2005})\label{ValuationExtension}
  \tr{Every continuous valuation on a locally compact sober space uniquely extends to a Borel measure.}
\end{theorem}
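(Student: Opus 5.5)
The plan is to prove uniqueness by a $\pi$–$\lambda$ argument and existence in three stages: a finitely additive extension, a proof of $\sigma$-additivity, and Carathéodory's extension theorem.

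\emph{Uniqueness.} $\mathcal{O}X$ is closed under finite intersections and generates the Borel $\sigma$-algebra. Two finite Borel measures that agree with $\mu$ on $\mathcal{O}X$ (including on $X$ itself) therefore agree on every Borel set, by Dynkin's $\pi$–$\lambda$ theorem. This needs neither sobriety nor local compactness.

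\emph{Finitely additive extension.} $\mathcal{O}X$ is a distributive lattice, and $\mu$ is a bounded modular function on it with $\mu(\emptyset)=0$. By the Smiley--Horn--Tarski theorem, $\mu$ extends uniquely to a finitely additive $\nu$ on the Boolean ring $\mathcal{A}$ generated by $\mathcal{O}X$. Every member of $\mathcal{A}$ is a finite disjoint union of crescents $U\setminus V$ with $U,V$ open, and $\nu(U\setminus V)=\mu(U)-\mu(U\cap V)$. Since $\nu$ is additive on a ring and $\mu$ is monotone, $\nu\geq 0$ on crescents and hence on all of $\mathcal{A}$.

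\emph{$\sigma$-additivity.} This is the main obstacle; once it holds, Carathéodory's extension theorem yields a measure on $\sigma(\mathcal{A})$, which is the Borel $\sigma$-algebra. I will show $\nu$ is inner regular with respect to a compact class and then apply Marczewski's theorem.
\begin{enumerate}[(a)]
  \item \emph{Regularity on opens.} Local compactness writes each open $U$ as the directed union of the open sets $V$ for which there is a compact saturated $Q$ with $V\subseteq Q\subseteq U$. Scott continuity of $\mu$ then gives $\mu(U)=\sup\{\mu(V): V\subseteq Q\subseteq U\}$.
  \item \emph{Regularity on crescents.} From (a), every crescent $U\setminus W$ can be approximated from inside, in $\nu$-measure, by sets $Q\setminus W$ with $Q$ compact saturated, $Q\subseteq U$, $W$ open. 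Such a set is the intersection of a compact saturated set and a closed set. Finite disjoint unions extend this to all of $\mathcal{A}$. The approximating sets need not lie in $\mathcal{A}$, so I will either work with the compact class $\{Q\cap F\}$ directly or use $\nu$-inner approximants $V\setminus W$ with $V\subseteq Q$.
  \item \emph{The compact class.} The class of sets $Q\cap F$ ($Q$ compact saturated, $F$ closed), and its finite unions, must be a compact class: if countably many members have empty intersection, some finite subfamily already does. Two routes:
  \begin{itemize}
    \item In a locally compact sober space the compact saturated sets are closed in the patch topology, and the patch topology restricted to each compact saturated $Q$ is compact Hausdorff. Hence each $Q\cap F$ is patch-compact, and patch-compact sets form a compact class.
    \item Alternatively, argue directly with well-filteredness of sober spaces, i.e.\ the Hofmann--Mislove theorem: a filtered intersection of compact saturated sets contained in an open set has a finite sub-intersection already inside it. This handles the closed parts.
  \end{itemize}
  I would try the patch-topology route first.
\end{enumerate}
Inner regularity with respect to a compact class, together with finite additivity, gives continuity of $\nu$ at $\emptyset$ on $\mathcal{A}$, which is $\sigma$-additivity.

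\emph{Conclusion.} Carathéodory's theorem gives a Borel measure extending $\nu$, and therefore $\mu$. Uniqueness was established in the first step.
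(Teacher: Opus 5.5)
The paper gives no proof of this statement; it imports it from Keimel--Lawson. Your uniqueness argument is sound, and so are the Smiley--Horn--Tarski extension to the ring $\mathcal{A}$ with $\nu\geq 0$ and the inner approximations in (a) and (b). The gap is step (c), which is where all the difficulty of $\sigma$-additivity lies. If a locally compact sober space is not coherent, then neither the compact saturated sets nor the sets $Q\cap F$ form a compact class, so Marczewski's theorem cannot be applied with your class.

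Here is a concrete counterexample. Let $X=\{c_n:n\geq 1\}\cup\{x_N:N\geq 1\}$, partially ordered so that the only strict relations are $c_n<x_N$ for $n\leq N$. Two distinct $x_N$ have no common upper bound, and two distinct $c_n$ have none among the $c$'s, so every directed subset has a largest element. Hence $X$ is an algebraic domain whose Scott topology is the Alexandrov topology, and so $X$ is locally compact sober. The principal filters $\uparrow c_n=\{c_n\}\cup\{x_N:N\geq n\}$ are compact saturated, and $x_N\in\uparrow c_1\cap\cdots\cap\uparrow c_N$, yet $\bigcap_n\uparrow c_n=\emptyset$. Both of your routes fail here:
\begin{itemize}
\item The patch topology is discrete, since $\{x_N\}=\uparrow x_N$ is open and $\{c_n\}=\uparrow c_n\setminus\uparrow\{c_{n+1},x_n\}$. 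So the infinite compact saturated set $\uparrow c_1$ is not patch-compact, and your claim that compact saturated sets are patch-compact is false without coherence.
\item The Hofmann--Mislove route needs the finite intersections $\uparrow c_1\cap\cdots\cap\uparrow c_N=\{x_M:M\geq N\}$ (for $N\geq 2$) to be compact, so that they form a filtered family of compact saturated sets. They are not compact: they are covered by the open singletons $\{x_M\}$ with no finite subcover.
\end{itemize}

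Your argument does settle the coherent case, where finite intersections of compact saturated sets are compact, i.e.\ $X$ is stably locally compact. Suppose $\bigcap_n(Q_n\cap F_n)=\emptyset$. Then $P_N=Q_1\cap\cdots\cap Q_N$ is a decreasing chain of compact saturated sets with $\bigcap_N P_N\subseteq\bigcup_n(X\setminus F_n)$. Well-filteredness puts some $P_N$ inside this open set, and compactness of $P_N$ then yields finitely many $X\setminus F_n$ covering it. Combined with the standard fact that finite unions of a compact class form a compact class, your outline then goes through. This covers the only application the paper makes of the theorem, the Cantor space, which is compact Hausdorff. It does not cover the statement as given: for general locally compact sober spaces you need a further idea that obtains $\sigma$-additivity without coherence, and the proposal does not supply one.
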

\tr{It is well-known that on a countably based space $X$, every Borel subprobability measure is Scott continuous from $\mathcal{O}X$ to $[0,1]$.
 Hence, if $X$ is locally compact sober, its continuous valuations and Borel subprobability measures are coincident.}

\tr{The Cantor space $\max(\mathbb{C})$ is the subspace of $\Sigma\mathbb{C}$ that consists of maximal elements, hence it is a subspace of $\Sigma\mathbb{M}$.
 In addition, since the Cantor space is compact Hausdorff, and every compact Hausdorff space is locally compact sober \cite[Proposition 8.2.12]{Goubault2013}, the Cantor space $\max(\mathbb{C})$ is countably based and locally compact sober.
 By Theorem \ref{ValuationExtension}, Borel subprobablity measures on the Cantor space are exactly $\mathcal{V}_{\leq 1}\max(\mathbb{C})$.
 For every $\mu\in\mathcal{V}_{\leq 1}\max(\mathbb{C})$, the push forward $(inc_{\mathbb{C}}\circ inc_{\max})[\mu]$ restricted to the Cantor space $\max(\mathbb{C})$ agrees with $\mu$, where $inc_{\max}:\max(\mathbb{C})\subseteq\mathbb{C}$ is inclusion.
 And by Lemma \ref{SupPn=Id}, $(inc_{\mathbb{C}}\circ inc_{\max})[\mu]$ is approximated by $(p_{\mathbb{M}_n}\circ inc_{\mathbb{C}}\circ inc_{\max})[\mu]$.}

\section{Spaces of random variables}\label{SecSpace}

\begin{definition}
  A lower subset $F\subseteq\mathbb{M}$ is called $\surd$-max if for every $x\surd\in F$, $x$ is maximal in $F\cap\mathbb{C}$.
\end{definition}
 %If $F$ is a $\surd$-max lower subset, then $F\cap\mathbb{C}$ is a lower subset.

\begin{example}
\begin{enumerate}[(i)]
 \item \tr{Every lower subset of $\mathbb{C}$ is $\surd$-max.}
  \item \tr{For every $\mu\in\mathcal{V}_{\leq 1}\Sigma\mathbb{C}$, $\supp(inc_\mathbb{C}[\mu])\subseteq\mathbb{C}$ is $\surd$-max.}
  \item \tr{If $C$ is an anti-chain of $\mathbb{C}$, then for every $S\subseteq C$, $\{x\surd:x\in S\}\cup C$ is $\surd$-max.}
  \item \tr{Not every lower subset generated by an anti-chain of $\mathbb{M}$ is $\surd$-max.
  As an example, take $\da 1\surd\cup\da \surd=\{\epsilon,\surd,1,1\surd\}$: since for $\surd=\epsilon\surd$, we have $\epsilon\leq 1$, where $1\in(\da 1\surd\cup\da \surd)\cap\mathbb{C}=\{\epsilon, 1\}$.}
\end{enumerate}
\end{example}

\begin{definition}
  For a $T_0$ space $X$, $(\mu,f)\in\mathcal{V}_{\leq 1}\Sigma\mathbb{M}\times[\Sigma\mathbb{M}\rightharpoonup X]$ is called a continuous random variable on $X$ if $f$ is partially continuous with $dom(f)=\supp(\mu)$.
   The pair $(\Sigma_{y\in G}s_y\delta_y,g)$ is called a simple random variable on $X$ if $(\Sigma_{y\in G}s_y\delta_y,g)$ is a continuous random variable on $X$ and $\Sigma_{y\in G}s_y\delta_y\in B_{\mathbb{M}}$.
%   Let $\mathcal{V}_{\leq 1}^\surd\Sigma\mathbb{M}\subseteq\mathcal{V}_{\leq 1}\Sigma\mathbb{M}$ be the subset that each $\upsilon\in \mathcal{V}_{\leq 1}^\surd\Sigma\mathbb{M}$ has a $\surd$-max support (that is, $\supp(\upsilon)$ is $\surd$-max), $\mathcal{V}_{\leq 1}^\surd\Sigma\mathbb{M}\subseteq\mathcal{V}_{\leq 1}\Sigma\mathbb{M}$ be defined analogous.
   The continuous random variable $(\mu,f)$ is called a $\surd$-max continuous random variable on $X$ if $\supp(\mu)$ is $\surd$-max.
\end{definition}

\begin{remark}
  For every $\Sigma_{y\in G}s_y\delta_y\in B_{\mathbb{M}}$, $\supp(\Sigma_{y\in G}s_y\delta_y)=\da G$ is a finite subset, and a partial map $g:\mathbb{M}\rightharpoonup X$ with $dom(g)=\da G$ is partially continuous iff $g$ preserves the order.
\end{remark}

\begin{definition}
  \tr{Let $X$ be a $T_0$ space.
   For every $z\in\mathcal{K}(\mathbb{M}),r\in[0,1)$ and $V\in\mathcal{O}X$, define $\langle \ua z,r,V\rangle$ as the set of all $\surd$-max continuous random variables $(\mu,f)$ on $X$ that $\mu(\ua z)>r$ and $f(z)\in V$.
   Equivalently,
   \begin{equation*}
     \langle \ua z,r,V\rangle=\{(\mu,f):(\mu,f)\text{ is a }\surd\text{-max continuous random variable on }X,\mu(\ua z)> r,f(z)\in V\}.
   \end{equation*}
   Hence, define $\mathcal{CV}^\surd X$ as the set of all $\surd$-max continuous random variables on $X$ endowed with the topology generated by the subbase $\{\langle \ua z,r, V\rangle:z\in\mathcal{K}(\mathbb{M}),r\in[0,1),V\in\mathcal{O} X\}$.
% \begin{equation*}
%   \langle \ua z,r,V\rangle:=\{(\mu,f):\mu(\ua z)> r,f(z)\in V\}.
% \end{equation*}
   Also let $\mathcal{SV}^\surd X$ be the subspace of $\mathcal{CV}^\surd X$ which consists of all $\surd$-max simple random variables on $X$.}
\end{definition}

\begin{proposition}
  On $\mathcal{V}_{\leq 1}\Sigma\mathbb{M}$, the topology generated by the subbase $\{\langle\ua z>r\rangle:z\in\mathcal{K}(\mathbb{M}),r\in[0,1)\}$ coincides with the Scott topology.
\end{proposition}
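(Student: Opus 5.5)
By Theorem \ref{BigConsequenceofV} (\ref{WeakAgreeScott}), applied to the domain $\mathbb{M}$, the Scott topology on $\mathcal{V}_{\leq 1}\Sigma\mathbb{M}$ is the weak topology. The weak topology is generated by all $\langle U>r\rangle$ with $U\in\mathcal{O}\Sigma\mathbb{M}$ and $r\geq 0$. So we compare the topology $\tau$ generated by the smaller family $\{\langle\ua z>r\rangle\}$ with the weak topology. One inclusion is immediate: for $z\in\mathcal{K}(\mathbb{M})$ the set $\ua z$ is Scott open, so each $\langle \ua z>r\rangle$ is weakly open and $\tau$ is coarser than the weak topology.

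For the converse, fix $U\in\mathcal{O}\Sigma\mathbb{M}$, $r\geq 0$ and $\mu$ with $\mu(U)>r$. We seek a $\tau$-open set $\mathcal{W}$ with $\mu\in\mathcal{W}\subseteq\langle U>r\rangle$. (If $r\geq 1$ the set $\langle U>r\rangle$ is empty, so assume $r<1$.)

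Since $\mathbb{M}$ is algebraic, $U=\bigcup\{\ua z:z\in U\cap\mathcal{K}(\mathbb{M})\}$. This is a directed union of the finite unions $\bigcup_{z\in K}\ua z$ with $K\subseteq^{fin}U\cap\mathcal{K}(\mathbb{M})$. By Scott continuity of $\mu$ there is a finite $K$ with $\mu(\bigcup_{z\in K}\ua z)>r$. Because $\mathbb{M}$ is a tree (prefix order), any two compact elements are either comparable or have disjoint upper sets. Replacing $K$ by its minimal elements, we may therefore assume $K$ is an antichain and the sets $\ua z$, $z\in K$, are pairwise disjoint. Modularity of valuations then gives $\mu(\bigcup_{z\in K}\ua z)=\Sigma_{z\in K}\mu(\ua z)$.

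We then choose reals $r_z<\mu(\ua z)$, with $r_z\geq 0$ (and $r_z<1$, automatic since $\mu(\ua z)\leq 1$), such that $\Sigma_{z\in K}r_z\geq r$. This is possible because the strict inequality $\Sigma_{z\in K}\mu(\ua z)>r$ leaves room to lower each term slightly. Put $\mathcal{W}=\bigcap_{z\in K}\langle\ua z>r_z\rangle$, which is $\tau$-open and contains $\mu$. For any $\nu\in\mathcal{W}$, disjointness and additivity give
$$\nu(U)\geq\nu\Bigl(\bigcup_{z\in K}\ua z\Bigr)=\Sigma_{z\in K}\nu(\ua z)>\Sigma_{z\in K} r_z\geq r .$$
Hence $\nu\in\langle U>r\rangle$.

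The main point needing care is the disjointness step. It relies on the tree structure of $\mathbb{M}$: incomparable finite strings, including those ending in $\surd$, have disjoint upper sets, because $x\surd$ is maximal and prefixes are unique. Without this one would need inclusion–exclusion, and lower bounds on each $\nu(\ua z)$ would not control the union. One could also avoid disjointness by invoking the way-below characterization in Theorem \ref{BigConsequenceofV} (\ref{WBRelationOfVL}) with the basis $B_{\mathbb{M}}$, but the direct tree argument is simpler.
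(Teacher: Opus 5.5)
Your argument is correct, but it takes a different route from the paper. The paper never compares the subbase with the weak topology. Instead it takes a basis element $b=\Sigma_{x\in F}r_x\delta_x\in B_{\mathbb{M}}$ and uses the way-below characterization in Theorem \ref{BigConsequenceofV}(ii) to show directly that $\dua b=\bigcap_{z\in F}\langle\ua z>\Sigma_{x\in\ua z\cap F}r_x\rangle$. These sets form a base of the Scott topology, and each subbasic set is weakly (hence Scott) open, so the two topologies coincide.

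That route needs the basis $B_{\mathbb{M}}$ and the way-below machinery. It uses only the elementary inclusion ``weak $\subseteq$ Scott'', so it actually re-derives weak $=$ Scott on $\mathcal{V}_{\leq 1}\Sigma\mathbb{M}$, and it gives an explicit description of the basic Scott-open sets. Your route uses only the algebraicity of $\mathbb{M}$, Scott continuity and modularity of $\mu$, and the tree structure, and it proves Corollary \ref{STopCoincideUWeakTopOnVC*} directly. The price is that your passage to the Scott topology rests on the non-trivial direction of Theorem \ref{BigConsequenceofV}(iii).

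Two small corrections. First, before choosing $0\leq r_z<\mu(\ua z)$, you must discard any $z\in K$ with $\mu(\ua z)=0$, since no admissible $r_z$ exists for them. Discarding them changes neither the sum nor the containment in $U$, and the remaining set is nonempty because the sum exceeds $r\geq 0$. Second, your closing remark is off: the way-below route does not avoid disjointness. The paper needs exactly the same tree fact to write $\upsilon(\ua G)=\Sigma_{z\in\min(G)}\upsilon(\ua z)$, because lower bounds on the individual $\upsilon(\ua z)$ can control $\upsilon$ of a union only through that fact.
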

\begin{proof}
  Every $\langle\ua z>r\rangle$ is weak open, then it is Scott open.
   Suppose $\Sigma_{x\in F}r_x\delta_x\in B_{\mathbb{M}}$.
   Consider the Scott open subset $U:=\bigcap_{z\in F}\langle \ua z>\Sigma_{x\in\ua z\cap F}r_x\rangle$.
   For every $\mu\in\dua\Sigma_{x\in F}r_x\delta_x$ and $z\in F$, by Theorem \ref{WBRelationOfVL}, we have
   \begin{equation*}
     \mu(\ua z)=\mu(\ua(\ua z\cap F))>\Sigma_{x\in\ua z\cap F}r_x.
   \end{equation*}
   It implies that $\mu\in U$, and hence, $\dua\Sigma_{x\in F}r_x\delta_x\subseteq U$.
   Conversely, if $\upsilon\in U$ and $G\subseteq F$, then
   \begin{equation*}
     \upsilon(\ua G)=\Sigma_{z\in \min(G)}\upsilon(\ua z)>\Sigma_{z\in \min(G)}\Sigma_{x\in\ua z\cap F}r_x\geq\Sigma_{x\in\ua G\cap F}r_x\geq\Sigma_{x\in G}r_x.
   \end{equation*}
   Thus, $\Sigma_{x\in F}r_x\delta_x\ll\upsilon$.
   We conclude $\dua\Sigma_{x\in F}r_x\delta_x=U$.
\end{proof}
\begin{corollary}\label{STopCoincideUWeakTopOnVC*}
  $\{\langle\ua z>r\rangle:z\in\mathcal{K}(\mathbb{M}),r\in[0,1)\}$ is a subbase of the weak topology on $\mathcal{V}_{\leq 1}\Sigma\mathbb{M}$
\end{corollary}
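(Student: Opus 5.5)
Corollary \ref{STopCoincideUWeakTopOnVC*} asks us to show that the family $\{\langle\ua z>r\rangle:z\in\mathcal{K}(\mathbb{M}),r\in[0,1)\}$ generates the weak topology on $\mathcal{V}_{\leq 1}\Sigma\mathbb{M}$. I would reduce this to the preceding proposition together with Theorem \ref{BigConsequenceofV} (\ref{WeakAgreeScott}).

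The first step is to check that $\mathbb{M}$ is a domain. The remark on Mislove's Cantor tree says it is a countably based bounded complete algebraic domain, so Theorem \ref{BigConsequenceofV} (\ref{WeakAgreeScott}) applies with $L=\mathbb{M}$. Hence the weak topology on $\mathcal{V}_{\leq 1}\Sigma\mathbb{M}$ equals its Scott topology.

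The second step uses the preceding proposition: the topology generated by the subbase $\{\langle\ua z>r\rangle:z\in\mathcal{K}(\mathbb{M}),r\in[0,1)\}$ coincides with the Scott topology on $\mathcal{V}_{\leq 1}\Sigma\mathbb{M}$. Combining the two equalities, this family generates the weak topology, which is exactly the claim.

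Two small points need checking. First, each $\ua z$ with $z\in\mathcal{K}(\mathbb{M})$ is Scott open in $\mathbb{M}$, because $z$ is compact and so $\ua z=\dua z$. Each $\langle\ua z>r\rangle$ is therefore a genuine member of the standard weak subbase, so the family really consists of weak open sets. Second, the standard weak subbase allows all $r\geq 0$, while here $r\in[0,1)$. This causes no difficulty: $\mu(\ua z)\leq 1$ for every $\mu\in\mathcal{V}_{\leq 1}\Sigma\mathbb{M}$, so $\langle U>r\rangle=\emptyset$ whenever $r\geq 1$, and the empty set is generated anyway. The only real content is the identification already carried out in the proposition, so I expect no substantive obstacle.
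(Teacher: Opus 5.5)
Your proposal is correct and follows the paper's own route: the corollary comes from combining the preceding proposition (the family generates the Scott topology on $\mathcal{V}_{\leq 1}\Sigma\mathbb{M}$) with Theorem \ref{BigConsequenceofV} (\ref{WeakAgreeScott}) (weak topology equals Scott topology, since $\mathbb{M}$ is a domain). Your two side checks ($\ua z=\dua z$ for compact $z$, and the restriction $r\in[0,1)$) are valid but not needed, because the proposition already identifies the generated topology outright.
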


\begin{remark}\label{SpecOrderBySubbase}
  For every topological space $X$ with a subbase $B$, $x\leq y$ with respect to the specialization order iff for every $U\in B$, $x\in U$ implies $y\in U$.
\end{remark}
\begin{proposition}
  Let $X$ be a $T_0$ space.
   Then, $\mathcal{CV}^\surd X$ is a $T_0$ space, and $(\mu,f)\leq (\upsilon,g)$ iff $\mu\leq\upsilon$ in $\mathcal{V}_{\leq 1}\Sigma\mathbb{M}$ and \tr{$f\leq g$ in $[\Sigma\mathbb{M}\rightharpoonup X]$.}
\end{proposition}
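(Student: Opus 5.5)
We will characterize the specialization order via the subbase, using Remark \ref{SpecOrderBySubbase}. Once the order is shown to be exactly the componentwise order ($\mu\leq\upsilon$ and $f\leq g$), $T_0$-ness follows at once. Indeed, if $(\mu,f)\leq(\upsilon,g)\leq(\mu,f)$, then $\mu=\upsilon$, $dom(f)=dom(g)$, and $f(x)\leq g(x)\leq f(x)$ for every $x$. Since $X$ is $T_0$, this gives $f=g$.

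\emph{Sufficiency.} Assume $\mu\leq\upsilon$ and $f\leq g$, and let $(\mu,f)\in\langle\ua z,r,V\rangle$. Then $\upsilon(\ua z)\geq\mu(\ua z)>r$. Also $z\in dom(f)\subseteq dom(g)$, and $f(z)\leq g(z)$ in the specialization order of $X$. Since $V$ is open, hence an upper set, we get $g(z)\in V$. So $(\upsilon,g)$ lies in every subbasic set containing $(\mu,f)$.

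\emph{Necessity.} Assume every subbasic set containing $(\mu,f)$ contains $(\upsilon,g)$.

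First we show $\mu\leq\upsilon$. Take $V=X$ in the subbasic sets. Then $\mu(\ua z)>r$ implies $\upsilon(\ua z)>r$ for all $z\in\mathcal{K}(\mathbb{M})$ and $r\in[0,1)$. Hence $\mu\in\langle\ua z>r\rangle$ implies $\upsilon\in\langle\ua z>r\rangle$, so by Corollary \ref{STopCoincideUWeakTopOnVC*} and Remark \ref{SpecOrderBySubbase} we get $\mu\leq\upsilon$ in the Scott (specialization) order. There is an edge case at $r$: the sets only use $r<1$, and $\mu(\ua z)\leq 1$, so every $z$ with $\mu(\ua z)>0$ is captured.

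Next we show $dom(f)\subseteq dom(g)$, i.e.\ $\supp(\mu)\subseteq\supp(\upsilon)$. This follows from $\mu\leq\upsilon$, since $\supp$ is monotone: $\upsilon(U)=0$ implies $\mu(U)=0$.

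Finally we compare values. For compact $z\in\supp(\mu)$, $\ua z$ is Scott open and meets $\supp(\mu)$, so $\mu(\ua z)>0$. Choose $r$ with $0\leq r<\mu(\ua z)$. For any open $V\ni f(z)$ we have $(\mu,f)\in\langle\ua z,r,V\rangle$, hence $g(z)\in V$. Thus $f(z)\leq g(z)$ for all compact $z\in dom(f)$.

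\emph{Main obstacle: non-compact points.} The remaining case is an infinite string $x\in\supp(\mu)\subseteq\mathbb{C}$. Then $x=\bigsqcup D$ with $D=\{z\in\mathcal{K}(\mathbb{M}):z\leq x\}$, and $D\subseteq\supp(\mu)$ because the support is a lower set. The sets $dom(f)=\supp(\mu)$ and $dom(g)$ are Scott closed in the domain $\mathbb{M}$. With the relative topology they carry their Scott topology, since closed subsets of a domain satisfy this (as noted in the preliminaries). Hence $f$ and $g$, being continuous on them, are Scott continuous on these dcpos, and $f(x)$ lies in the closure of $f(D)$ in the sense that every open neighbourhood of $f(x)$ contains some $f(z)$ with $z\in D$.

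So let $V\ni f(x)$ be open. Pick $z\in D$ with $f(z)\in V$. Then $g(z)\in V$ by the compact case, and $g(z)\leq g(x)$ by monotonicity, so $g(x)\in V$. Hence $f(x)\leq g(x)$ and $f\leq g$.

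The delicate point is justifying that continuity of $f$ on the subspace $\supp(\mu)$ forces the neighbourhood $f^{-1}(V)$ of $x$ to contain a compact approximant of $x$. This holds because $f^{-1}(V)=W\cap\supp(\mu)$ for some Scott open $W\ni x$, and $W$ contains some $z\in D$ since $\mathbb{M}$ is algebraic; this $z$ also lies in $\supp(\mu)$. The $\surd$-max hypothesis is not needed in this argument.
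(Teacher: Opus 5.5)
Your proof is correct. It follows the paper's subbase argument for the order characterization, but departs from it in two ways.

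\textbf{How you get $T_0$.} The paper proves $T_0$ directly, in two cases.
\begin{enumerate}
\item If $\mu\neq\upsilon$, it lifts a weak-open set to $\bigcap_i\langle\ua z_i,r_i,X\rangle$.
\item If $\mu=\upsilon$ but $f(x)\neq g(x)$, it uses $\langle\ua x,0,V_x\rangle$.
\end{enumerate}
It then characterizes the order separately. You characterize the specialization preorder first and read off $T_0$ from antisymmetry. This is shorter and avoids a second case analysis.

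\textbf{Non-compact points of the support.} This is the more substantive difference. The paper proves $f(z)\leq g(z)$, and also does its $T_0$ separation, by using $\langle\ua z,0,V_z\rangle$ for an \emph{arbitrary} $z\in\supp(\mu)$. That set is a subbasic open only when $z\in\mathcal{K}(\mathbb{M})$, and $\mu(\ua z)>0$ is only guaranteed for compact $z$. For an infinite string $z$ in the support, $\ua z=\{z\}$ is not Scott open, and $\mu$ may put no mass there. So the paper's argument, as written, does not reach these points.

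Your approximation step closes exactly this gap. You write $f^{-1}(V)=W\cap\supp(\mu)$ with $W$ Scott open, pick a finite prefix $z\in W\cap\supp(\mu)$ of $x$, and then use the compact case together with monotonicity of $g$ on $dom(g)$. The same idea also repairs the paper's $T_0$ separation at an infinite $x$. Take $V$ with $f(x)\in V$ and $g(x)\notin V$, and a prefix $z$ of $x$ with $f(z)\in V$. Then $g(z)\notin V$, because $g(z)\leq g(x)$, so $\langle\ua z,0,V\rangle$ separates the two variables.

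You are also right that the $\surd$-max hypothesis plays no role in this proposition.

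\textbf{Two small points.}
\begin{itemize}
\item Setting $V=X$ relies on $\mu(\ua z)>r\geq 0$ forcing $z\in\supp(\mu)$. This holds because the support carries full mass and is a lower set. It is what makes $\langle\ua z,r,X\rangle=[-]_1^{-1}(\langle\ua z>r\rangle)$; it is worth stating.
\item The preliminaries only say that a Scott closed subset of a domain is a domain in the relative order. They do not say its relative topology is its Scott topology. That claim is true, but your final paragraph gives a direct argument that does not need it.
\end{itemize}
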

\begin{proof}
  ($T_0$ separation axiom)
   Suppose that there are $\surd$-max continuous random variables $(\mu,f)$ and $(\upsilon,g)$ on $X$ with $(\mu,f)\neq(\upsilon,g)$.
   In the case that $\mu\neq\upsilon$, without loss of generality, there is a weak open subset $\bigcap_{i\in I}\langle\ua z_i>r_i\rangle$ that contains $\mu$ and excludes $\upsilon$.
   Then, $\bigcap_{i\in I}\langle\ua z_i,r_i,X\rangle$ is an open subset of $\mathcal{CV}^\surd X$ that contains $(\mu,f)$ and excludes $(\upsilon,g)$.
   In the case that $\mu=\upsilon$ and $f\neq g$, there is an $x\in \supp(\mu)$ such that $f(x)\neq g(x)$.
   As for the $T_0$ separation axiom of $X$, we assume that $V_x$ is a open subset of $X$ such that $f(x)\in V_x$ and $g(x)\notin V_x$.
   Hence, the open subset $\langle \ua x,0,V_x\rangle$ contains $(\mu,f)$ and excludes $(\upsilon,g)$.

  (The specialization order)
   Suppose that $(\mu,f)\in\langle \ua z,r,V\rangle$ always implies $(\upsilon,g)\in\langle \ua z,r,V\rangle$.
   Then, $\mu\in\langle\ua z>r\rangle$ implies $\upsilon\in\langle\ua z>r\rangle$.
   By Corollary \ref{STopCoincideUWeakTopOnVC*} and Remark \ref{SpecOrderBySubbase}, $\mu$ is lower than $\upsilon$ in $\mathcal{V}_{\leq 1}\Sigma\mathbb{M}$.
   Fix a $z\in \supp(\mu)$ and pick an arbitrary open subset $V_z\in\mathcal{O}X$ with $f(z)\in V_z$, we have $(\mu,f)\in\langle\ua z,0,V_z\rangle$, moreover, $(\upsilon,g)\in\langle \ua z,0,V_z\rangle$.
   Therefore, we obtain $g(z)\in V_z$.
   We conclude that $f(z)\leq g(z)$ with respect to the specialization order of $X$.
   The converse direction is obvious.
\end{proof}

For every $T_0$ space $X$, we denote by $[-]_1$ the projection from $\mathcal{CV}^\surd X$ to $\mathcal{V}_{\leq 1}\Sigma\mathbb{M}$, and denote by $[-]_2$ the projection from $\mathcal{CV}^\surd X$ to $[\Sigma\mathbb{M}\rightharpoonup X]$ (that is, $[(\mu,f)]_1=\mu,[(\mu,f)]_2=f$).
\begin{proposition}
  The projection $[-]_1:\mathcal{CV}^\surd X\ra\mathcal{V}_{\leq 1}\Sigma\mathbb{M}$ is continuous.
\end{proposition}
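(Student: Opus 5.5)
To show that $[-]_1$ is continuous, we check that it pulls back a subbase of $\mathcal{V}_{\leq 1}\Sigma\mathbb{M}$ to open sets of $\mathcal{CV}^\surd X$.

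The section opening Proposition above and Corollary \ref{STopCoincideUWeakTopOnVC*} show that the weak topology on $\mathcal{V}_{\leq 1}\Sigma\mathbb{M}$ coincides with its Scott topology. They also show that $\{\langle\ua z>r\rangle:z\in\mathcal{K}(\mathbb{M}),r\in[0,1)\}$ is a subbase of this topology. Since preimages commute with finite intersections and arbitrary unions, it suffices to show that $[-]_1^{-1}(\langle\ua z>r\rangle)$ is open in $\mathcal{CV}^\surd X$ for every $z\in\mathcal{K}(\mathbb{M})$ and $r\in[0,1)$.

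The natural candidate for this preimage is the subbasic set $\langle\ua z,r,X\rangle$. Note that $f(z)\in X$ holds whenever $z\in dom(f)=\supp(\mu)$. The one subtle point is the case $z\notin\supp(\mu)$, where the condition "$f(z)\in X$" is vacuous or undefined depending on how one reads the definition. I will argue that this case never arises for points of the preimage. Suppose $\mu(\ua z)>r\geq 0$. If $z\notin\supp(\mu)$, then $\ua z\cap\supp(\mu)=\emptyset$, because $\supp(\mu)$ is a lower set. Proposition \ref{SuppDetermineValue} then gives $\mu(\ua z)=\mu(\ua(\ua z\cap\supp(\mu)))=\mu(\emptyset)=0$, a contradiction. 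Hence $z\in\supp(\mu)=dom(f)$, so $f(z)$ is defined and lies in $X$.

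It follows that $[-]_1^{-1}(\langle\ua z>r\rangle)=\{(\mu,f)\in\mathcal{CV}^\surd X:\mu(\ua z)>r\}=\langle\ua z,r,X\rangle$. This set is subbasic, hence open in $\mathcal{CV}^\surd X$. Therefore $[-]_1$ is continuous.

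The only step needing care is the definedness of $f(z)$, which the support argument above handles. Everything else is a direct unfolding of the definitions.
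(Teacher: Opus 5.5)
Your proposal is correct and follows the paper's proof. Both use Corollary \ref{STopCoincideUWeakTopOnVC*} to reduce to the subbase and identify $[-]_1^{-1}(\langle\ua z>r\rangle)$ with $\langle\ua z,r,X\rangle$. Your added check, that $\mu(\ua z)>r\geq 0$ forces $z\in\supp(\mu)$ via Proposition \ref{SuppDetermineValue} so that $f(z)\in X$ is meaningful, is a detail the paper covers with ``obviously.''
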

\begin{proof}
   By Corollary \ref{STopCoincideUWeakTopOnVC*}, we only need to verify that $[-]_1^{-1}(\langle \ua z>r\rangle)$ is open for every $z\in\mathcal{K}(\mathbb{M})$ and $r\in[0,1)$.
   Obviously, $[-]_1^{-1}(\langle \ua z>r\rangle)=\langle\ua z,r,X\rangle$.
\end{proof}

%\section{Some preparations}
%
%\begin{proposition}
%  Let $x\in\mathcal{K}(\mathbb{C})$ and $\mu\in\mathcal{V}_{\leq 1}\mathbb{M}$.
%   Then $\supp(\mathbf{x}[\mu])$ is equal to the disjoint union $\da x\cup\mathbf{x}(\supp(\mu)\setminus\{\epsilon\})$.
%\end{proposition}
%\begin{proof}
%  We start from $\Sigma_{y\in G}s_y\delta_y\in B_{\mathbb{M}}$.
%   Directly, $\supp(\mathbf{x}[\Sigma_{y\in G}s_y\delta_y])=\da\{xy:y\in G\}$.
%   By the tree structure,
%   \begin{equation*}
%     \da\{xy:y\in G\}=\da x\cup\{xy:y\in\da G\}=\da x\cup\{xy:y\in\da G\setminus\{bot\}\}=\da x\cup(\mathbf{x}[\supp(\Sigma_{y\in G}s_y\delta_y)]\setminus\{bot\}).
%   \end{equation*}
%   Hence
%  \begin{align*}
%    \supp(\mathbf{x}[\mu]) & = \supp(\mathbf{x}[\bigsqcup(\da\mu\cap B_{\mathbb{M}})]) \\
%     & = \bigsqcup\supp(\mathbf{x}[\da\mu\cap B_{\mathbb{M}}]) \\
%     & = \bigsqcup\{\supp(\mathbf{x}[\mu']):\mu'\in\da\mu\cap B_{\mathbb{M}}\} \\
%     & = \bigsqcup\mathbf{x}[\{\supp(\Sigma_{y\in G}s_y\delta_y)\in\da\mu\cap B_{\mathbb{M}}\}]
%  \end{align*}
%\end{proof}
%
%
%
%\tr{We denote $[\mathcal{CV}^\surd X]_1$ is the subset of all continuous valuations with $\surd$-max supports.}

\begin{definition}
  \tr{Let $X$ be a $T_0$ space.
   A continuous random variable $(\mu,f)$ on $X$ is said to be normalized whenever $\mu$ is normalized.
   Denote $\mathcal{CV}^\surd_1X$ as the subspace of $\mathcal{CV}^\surd X$ consisted of all normalized $\surd$-max continuous random variables.
   The subspace $\mathcal{SV}^\surd_1X\subseteq\mathcal{SV}^\surd X$ is defined analogously.
   In addition, set $\langle\ua z,r,V\rangle_1=\langle\ua z,r,V\rangle\cap\mathcal{CV}^\surd_1X$ and $\langle\ua z,r,V\rangle'_1=\langle\ua z,r,V\rangle\cap\mathcal{SV}^\surd_1X$.}
\end{definition}

\section{A monad structure of some simple random variables}\label{SecMonad}

%
% \tr{
% For every $x\in\mathcal{K}(\mathbb{M})$ and $\mu\in\mathcal{V}_{\leq 1}\Sigma\mathbb{M}$, we define
%  \begin{equation*}
%   \delta_x\ast\mu=\begin{cases}
%                        \delta_x, & \mbox{if}~x\in\mathbb{C};\\
%                        \mathbf{x}'[\mu], & \mbox{if}~x=x'\surd.
%                   \end{cases}
% \end{equation*}
% \begin{remark}
%   Especially, for every $\Sigma_{y\in G}s_y\delta_y\in B_{\mathbb{M}}$,
%   \begin{equation*}
%   \delta_x\ast\Sigma_{y\in G}s_y\delta_y=\begin{cases}
%                      \delta_x, & \mbox{if}~x\in\mathbb{C};\\
%                      \Sigma_{y\in G}s_y\delta_{x'y}, & \mbox{if}~x=x'\surd.
%                 \end{cases}
% \end{equation*}
% \end{remark}
% We remind readers that $\ast$ is not an operation, and we list several simple derivations to familiarize readers with $\ast$.
% \begin{enumerate}[(i)]
%   \item $\delta_\surd\ast\Sigma_{y\in G}s_y\delta_y=\delta_{\epsilon\surd}\ast\Sigma_{y\in G}s_y\delta_y=\Sigma_{y\in G}s_y\delta_{\epsilon y}=\Sigma_{y\in G}s_y\delta_y$;
%   \item $\delta_{010\surd}\ast\Sigma_{y\in G}s_y\delta_y=\Sigma_{y\in G}s_y\delta_{010y}=\Sigma_{y\in G,z=010y}s_y\delta_z$;
%   \item $\delta_\epsilon\ast\Sigma_{y\in G}s_y\delta_y=\delta_\epsilon$ and $\delta_{1011}\ast\Sigma_{y\in G}s_y\delta_y=\delta_{1011}$;
%   \item $\delta_{11}\ast\delta_\surd=\delta_{11},\delta_{11\surd}\ast\delta_\surd=\delta_{11\surd}$.
% \end{enumerate}}
\subsection{\tr{Intuitions of constructions.}}
\tr{Before introducing our monad structures, we need to reveal the intuition of our constructions.}

The sequential domain $\mathbb{M}$ is adopted to model sequences of coin flips during probabilistic computations.
 The intuition is that choosing elements from semantics domain following the outcome of coin-flips.

Now imagine a firm that manufactures a line of battery-equipped random coin-flipping machines.

A process manipulated by one such machine consists of two components: the machine flips coins randomly to produce either 0 or 1, records each outcome onto a paper tape, and updates its internal state based on the resulting sequence after every flip -- this operation naturally consumes battery power.

When the machine completes its flipping routine normally, it prints a tick mark $\surd$ at the very end of the tape’s recorded sequence.
 If the machine terminates abnormally, for instance due to a drained battery, the tape will not end with the tick $\surd$.

Accordingly, such a process is formally represented as a pair $(\mu,f)$.
 The left component $\mu$ is a distribution over $M$ of all sequences generated by random coin flips, while the right component $f$ is a Scott-continuous map that manipulates and drives state evolution.

Suppose we have two machine-manipulated processes $P=(\mu,f)$ and $Q=(\upsilon,g)$.
 Their sequential composition is defined as follows:

First, machine of $P$ runs and outputs a tape together with a final internal state.
 Machine of $Q$ then takes over this tape as its input.
 Machine of $Q$ reads the entire sequence on the tape from start to finish without modifying its own state during reading.
 If machine of $Q$ reaches the tick $\surd$ at the tape’s end, it erases this tick and commences its own execution: it flips coins, logs the new sequence of outcomes, and drives state transitions accordingly.
 If machine of $Q$ does not encounter a trailing tick $\surd$ by the end of the tape, it performs no operations whatsoever.

\subsection{\tr{Constructing a monad}}

\begin{definition}
  \tr{For every $x\in\mathcal{K}(\mathbb{C})$, the Scott continuous embedding $\mathbf{x}:\mathbb{M}\ra\mathbb{M}$ is defined by $\mathbf{x}(y)=xy$.}
\end{definition}
 \tr{Hence, there is a Scott continuous map $\mathbf{x}[-]:\mathcal{V}_{\leq 1}\Sigma\mathbb{M}\ra \mathcal{V}_{\leq 1}\Sigma\mathbb{M}$.}
\begin{remark}
  \tr{For a Dirac valuation $\delta_y$ and $x\in\mathcal{K}(\mathbb{C})$,
  \begin{equation*}
    \mathbf{x}[\delta_y](U)=\delta_y(\{y':xy'\in U\})=\begin{cases}
                                                        1, & \mbox{if } xy\in U \\
                                                        0, & \mbox{otherwise}.
                                                      \end{cases}=\delta_{xy}.
  \end{equation*}
   Hence, for every $\Sigma_{y\in G}s_y\delta_y\in\mathcal{V}_{\leq 1}\Sigma\mathbb{M}$, $\mathbf{x}[\Sigma_{y\in G}s_y\delta_y]=\Sigma_{y\in G}s_y\delta_{xy}$ by Remark \ref{p[-]properties}.}
\end{remark}

\tr{Thus, roughly speaking, the map $\mathbf{x}[-]$ simply transports each $\mu\in\mathcal{V}_{\leq 1}\mathbb{M}$ onto the subtree $\mathbf{x}(\mathbb{M})=\{xy:y\in\mathbb{M}\}=\ua x$.
 Notice that $x\leq y$ can not implies $\mathbf{x}[-]\leq\mathbf{y}[-]$ (a simple example is taking $\delta_\surd$, $\mathbf{x}[\delta_\surd]=\delta_{x\surd}\nleq\delta_{y\surd}=\mathbf{y}[\delta_\surd]$).}

In this paper, we use Kleisli triples, Manes' equivalent descriptions of monads.
\begin{definition}\cite{Manes1976}
  We say that $(\mathcal{T},e,\tr{\star})$ is a Kleisli triple over a category $\mathbf{C}$ if $\mathcal{T}:Obj(\mathbf{C})\ra Obj(\mathbf{C})$, $e_A:A\ra\mathcal{T}A$, and $\tr{\star}:Mor_\mathbf{C}(A,\mathcal{T}B)\ra Mor_\mathbf{C}(\mathcal{T}A,\mathcal{T}B)$ satisfy
  \begin{enumerate}[(i)]
    \item for every $A\in Obj(\mathbf{C})$ and $f\in Mor_\mathbf{C}(A,\mathcal{T}B)$, $f^{\tr{\star}}\circ e_A=f$;
    \item for every $A\in Obj(\mathbf{C})$, $e_A^{\tr{\star}}=id_{\mathcal{T}A}$;
    \item for every $f\in Mor_\mathbf{C}(A,\mathcal{T}B),g\in Mor_\mathbf{C}(B,\mathcal{T}C)$, $g^{\tr{\star}}\circ f^{\tr{\star}}=(g^{\tr{\star}}\circ f)^{\tr{\star}}$.
  \end{enumerate}
  \tr{We usually call $\star$ the Kleisli lift of $(\mathcal{T},e,\tr{\star})$.}
\end{definition}

\tr{We will introduce two monad structures, and their Kleisli lifts are denoted by $\dagger$ and $\ddagger$.
Firstly, we construct $\dagger$ for $\mathcal{SV}^\surd_1$.}

\tr{Let $X,Y$ be $T_0$ spaces and $h:X\ra\mathcal{SV}^\surd_1Y$ a continuous map.
 We use $(\Sigma_{y\in G_z}s_y\delta_y,g_z)$ to represent the image $h(z)$ for every $z\in X$ (that is, \tr{$\Sigma_{y\in G_z}s_y\delta_y=[h(z)]_1$, $g_z=[h(z)]_2$}).
 Define $h^\dagger:\mathcal{SV}^\surd_1X\ra\mathcal{SV}^\surd_1Y$ as follows: for every $(\Sigma_{x\in F}r_x\delta_x,f)\in\mathcal{SV}^\surd_1X$,
 \begin{equation*}
   h^\dagger(\Sigma_{x\in F}r_x\delta_x,f)=([h^\dagger(\Sigma_{x\in F}r_x\delta_x,f)]_1,[h^\dagger(\Sigma_{x\in F}r_x\delta_x,f)]_2),
 \end{equation*}
 where
 \begin{align*}
   [h^\dagger(\Sigma_{x\in F}r_x\delta_x,f)]_1:= &\Sigma_{x\in F\cap\mathbb{C}}r_x\delta_x+\Sigma_{x'\surd\in F}r_{x'\surd}\mathbf{x'}[[h(f(x'\surd))]_1]  \\
     = &\Sigma_{x\in F\cap\mathbb{C}}r_x\delta_x+\Sigma_{x'\surd\in F}r_{x'\surd}\mathbf{x'}[\Sigma_{y\in G_{f(x'\surd)}}s_y\delta_y]  \\
     = & \Sigma_{x\in F\cap\mathbb{C}}r_x\delta_x+\Sigma_{x'\surd\in F}r_{x'\surd}\Sigma_{y\in G_{f(x'\surd)}}s_y\delta_{x'y};
 \end{align*}
 and
 \begin{equation*}
   [h^\dagger(\Sigma_{x\in F}r_x\delta_x,f)]_2:\supp([h^\dagger(\Sigma_{x\in F}r_x\delta_x,f)]_1)\ra Y
 \end{equation*}
 is given by
 \begin{equation*}
   [h^\dagger(\Sigma_{x\in F}r_x\delta_x,f)]_2(z)=
   \begin{cases}
      g_{f(z)}(\epsilon), & \mbox{if $z\in\da F\cap\mathbb{C}$};\\
      g_{f(x'\surd)}(y), & \mbox{if}~x'\surd\in F,~z=x'y~\&~y\in\da G_{f(x'\surd)}\setminus\{\epsilon\}.
   \end{cases}
 \end{equation*}
 By our presentation, $[h^\dagger(\Sigma_{x\in F}r_x\delta_x,f)]_2(z)$ is equal to $[h(f(z))]_2(\epsilon)$ if
 \begin{equation}
   z\in \supp(\Sigma_{x\in F}r_x\delta_x)\cap\mathbb{C},\tag{\text{Case-I}}\label{eqDaggerPaetIICase-1}
 \end{equation}
 and is equal to $[h(f(x'\surd))]_2(y)$ if
 \begin{equation}
   x'\surd\in F,~z=x'y~\&~y\in \supp([h(f(x'\surd))]_1)\setminus\{\epsilon\}.\tag{\text{Case-II}}\label{eqDaggerPaetIICase-2}
 \end{equation}}

% \begin{center}
%   $\begin{cases}
%      [h(f(x'\surd))]_2(y), & \mbox{if $z=x'y$ for some $x'\surd\in F,y\in \supp([h(f(x'\surd))]_1)\setminus\{\epsilon\}$};  \\
%      [h(f(z))]_2(\epsilon), & \mbox{if $z\in \supp(\Sigma_{x\in F}r_x\delta_x)\cap\mathbb{C}$}. \label{eqDaggerPaetIICase-2}
%   \end{cases}$
% \end{center}}
% \begin{equation}
%   \begin{cases}
%      [h(f(x'\surd))]_2(y), & \mbox{if $z=x'y$ for some $x'\surd\in F,y\in \supp([h(f(x'\surd))]_1)\setminus\{\epsilon\}$}; \tag*{eqDaggerPaetIICase-1} \\
%      [h(f(z))]_2(\epsilon), & \mbox{if $z\in \supp(\Sigma_{x\in F}r_x\delta_x)\cap\mathbb{C}$} \tag*{eqDaggerPartIICase-2}.
%   \end{cases}
% \end{equation}

 \begin{proposition}
   \tr{$h^\dagger$ is well-defined.}
 \end{proposition}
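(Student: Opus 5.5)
To show $h^\dagger$ is well-defined we must check that $h^\dagger(\Sigma_{x\in F}r_x\delta_x,f)$ is again a normalized $\surd$-max simple random variable on $Y$. Write $\nu:=[h^\dagger(\Sigma_{x\in F}r_x\delta_x,f)]_1$ and $k:=[h^\dagger(\Sigma_{x\in F}r_x\delta_x,f)]_2$. There are five points to verify, in order:
\begin{enumerate}[(a)]
  \item $\nu\in B_{\mathbb{M}}$ and $\nu$ is normalized;
  \item $\supp(\nu)$ is exactly the set on which the two cases defining $k$ live, and these cases do not overlap;
  \item $k$ is single-valued on $\supp(\nu)$;
  \item $k$ preserves the order;
  \item $\supp(\nu)$ is $\surd$-max.
\end{enumerate}
By the Remark after the definition of simple random variables, (d) is enough for partial continuity, since $\supp(\nu)$ is finite.

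For (a), each $x'y$ with $x'\surd\in F$ and $y\in G_{f(x'\surd)}$ is a finite string, so $\nu$ is a finite combination of Dirac valuations at compact elements. Its total mass is
\[
\Sigma_{x\in F\cap\mathbb{C}}r_x+\Sigma_{x'\surd\in F}r_{x'\surd}\cdot 1=1,
\]
using that each $h(f(x'\surd))$ is normalized.

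For (b), the support of a simple valuation in $B_{\mathbb{M}}$ is the down-closure of its points carrying nonzero mass. I take the convention, implicit in the definition, that $F$ and $G_{f(x'\surd)}$ list only points of positive mass. Using the tree structure,
\[
\da\{x'y:y\in G\}=\da x'\cup\{x'y:y\in\da G\setminus\{\epsilon\}\}.
\]
Hence $\supp(\nu)=(\da F\cap\mathbb{C})\cup\{x'y:x'\surd\in F,\ y\in\da G_{f(x'\surd)}\setminus\{\epsilon\}\}$. Here I use $\da x'\subseteq\da F\cap\mathbb{C}$, because $x'\leq x'\surd$.

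For (c), note that Case-I points lie in $\mathbb{C}$ and are prefixes of elements of $F$. A Case-II point $x'y$ with $y\neq\epsilon$ lies strictly above $x'$. If $x'y$ were also a Case-I point, then $x'<x'y\in\da F\cap\mathbb{C}$ would contradict the $\surd$-max property of $F$, which says $x'$ is maximal in $\da F\cap\mathbb{C}$. Two Case-II representations $x'y=x''y''$ with $x'\neq x''$ are likewise impossible. Both $x'$ and $x''$ are prefixes of the same string and are therefore comparable; say $x'<x''$. Then $x''\in\da F\cap\mathbb{C}$ strictly exceeds $x'$, again contradicting maximality. So $k$ is a well-defined function, and this is exactly where the $\surd$-max hypothesis on $F$ is essential.

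For (d), take $z\leq z'$ in $\supp(\nu)$ and split into three cases.
\begin{itemize}
  \item Both in Case-I: $f(z)\leq f(z')$ since $f$ is order-preserving, so $h(f(z))\leq h(f(z'))$ since $h$ is continuous and hence monotone. By the description of the specialization order on $\mathcal{CV}^\surd Y$, this gives $g_{f(z)}(\epsilon)\leq g_{f(z')}(\epsilon)$.
  \item Both in Case-II: they share the same $x'$ by the argument in (c), and then monotonicity of $g_{f(x'\surd)}$ gives the result.
  \item $z$ in Case-I and $z'=x'y$ in Case-II: then $z\leq x'\leq x'\surd$. Monotonicity of $f$ and $h$ yields $g_{f(z)}(\epsilon)\leq g_{f(x'\surd)}(\epsilon)\leq g_{f(x'\surd)}(y)$.
\end{itemize}
The reverse mixed case, a Case-II point below a Case-I point, cannot occur by the maximality argument in (c).

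For (e), take $w\surd\in\supp(\nu)$. It is a Case-II point, so $w\surd=x'y$ with $y=v\surd\in\da G$ and $w=x'v$. Suppose some $c\in\supp(\nu)\cap\mathbb{C}$ satisfies $c>w$. Then $c$ is strictly above $x'$ (or equal to $x'$ only if $v=\epsilon$), so $c$ is a Case-II point $x'u$ with $u>v$ and $u\in\da G\cap\mathbb{C}$. This contradicts the $\surd$-max property of $\supp([h(f(x'\surd))]_1)$. In the subcase $v=\epsilon$, one checks that no Case-I element lies above $x'$, again by maximality.

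I expect the main obstacle to be the bookkeeping of (b) and (c): describing $\supp(\nu)$ correctly and proving the two cases are disjoint and unambiguous, since everything else rests on the maximality of $x'$ in $\da F\cap\mathbb{C}$.
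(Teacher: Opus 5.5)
Your proof is correct and follows the same route as the paper: you compute the total mass, describe $\supp(\nu)$ as a disjoint union using maximality of $x'$ in $\da F\cap\mathbb{C}$, check monotonicity of $k$ in three cases, and verify the $\surd$-max property. Your step (e) is in fact more careful than the paper's argument, which only notes that strings ending in $\surd$ are maximal; the $\surd$-max condition genuinely needs, as you use, that each $\supp([h(f(x'\surd))]_1)$ is $\surd$-max (your parenthetical ``or equal to $x'$ only if $v=\epsilon$'' is unnecessary, since $c>w\geq x'$ already forces $c>x'$).
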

 \begin{proof}
   Obviously, \tr{since $(\Sigma_{x\in F}r_x\delta_x,f)\in\mathcal{SV}^\surd_1X$,}
   \begin{align*}
     [h^\dagger(\Sigma_{x\in F}r_x\delta_x,f)]_1(\mathbb{M}) & = \Sigma_{x\in F\cap\mathbb{C}}r_x+\Sigma_{x'\surd\in F}r_{x'\surd}\Sigma_{y\in G_{f(x'\surd)}}s_y \\
      & = \Sigma_{x\in F\cap\mathbb{C}}r_x+\Sigma_{x'\surd\in F}r_{x'\surd}\\
      & = \Sigma_{x\in F}r_x\delta_x(\mathbb{M})\\
      & = 1.
   \end{align*}
   The Scott closed subset $\supp([h^\dagger(\Sigma_{x\in F}r_x\delta_x,f)]_1)$ is the subtree given below: $\da G_{f(x\surd)}$ attaches to the tree $\da F\cap\mathbb{C}$ \tr{at} the node $x$ for all $x\surd\in F$, that is,
   \begin{align}
    \supp([h^\dagger(\Sigma_{x\in F}r_x\delta_x,f)]_1) & = \da(F\cap\mathbb{C})\cup\da\{x'y:x'\surd\in F,y\in G_{f(x'\surd)}\} \nonumber\\
      & = \da(F\cap\mathbb{C})\cup\da\{x'y:x'\surd\in F,y\in\da G_{f(x'\surd)}\} \nonumber\\
      & = \da(F\cap\mathbb{C})\cup\da\{x':x'\surd\in F\}\cup\{x'y:x'\surd\in F,y\in\da G_{f(x'\surd)}\setminus\{\epsilon\}\} \nonumber\\
      & = (\da F\cap\mathbb{C})\cup\{x'y:x'\surd\in F,y\in\da G_{f(x'\surd)}\setminus\{\epsilon\}\}\label{eqDaggerPartISupportLike}.
   \end{align}
   \tr{Denote $\{x'y:x'\surd\in F,y\in\da G_{f(x'\surd)}\setminus\{\epsilon\}\}$ by $A$.
   Observe that for every $x'y$ of $A$, $x'$ is maximal in $\da F\cap\mathbb{C}$ by the $\surd$-max property and $y$ can not be $\epsilon$.
   So every $x'y$ of $A$ is strictly greater than a maximal element of $\da F\cap\mathbb{C}$.
   Thus $\supp([h^\dagger(\Sigma_{x\in F}r_x\delta_x,f)]_1)$ is a disjoint union.
   In $\supp([h^\dagger(\Sigma_{x\in F}r_x\delta_x,f)]_1)$, only $A$ contains strings ending with $\surd$, and these strings are maximal in $A$, which henceforth are maximal in $\supp([h^\dagger(\Sigma_{x\in F}r_x\delta_x,f)]_1)$.
   It follows that $\supp([h^\dagger(\Sigma_{x\in F}r_x\delta_x,f)]_1)$ is $\surd$-max.}
% For every $z$ of the support $supp([h^\dagger(\Sigma_{x\in F}r_x\delta_x,f)]_1)$, we let
% \begin{equation*}
%   [h^\dagger(\Sigma_{x\in F}r_x\delta_x,f)]_2(z)=
%   \begin{cases}
%      g_{f(x'\surd)}(y), & \mbox{if $z=x'y$ for some $x'\surd\in F,y\in\da G_{f(x'\surd)}\setminus\{\epsilon\}$};  \\
%      g_{f(z)}(\epsilon), & \mbox{if $z\in\da F\cap\mathbb{C}$},
%   \end{cases}
% \end{equation*}
%  we mention that $[h^\dagger(\Sigma_{x\in F}r_x\delta_x,f)]_2$ can be written as
%  \begin{equation*}
%   [h^\dagger(\Sigma_{x\in F}r_x\delta_x,f)]_2(z)=
%   \begin{cases}
%      [h(f(x'\surd))]_2(y), & \mbox{if $z=x'y$ for some $x'\surd\in F,y\in supp([h(f(x'\surd))]_1)\setminus\{\epsilon\}$};  \\
%      [h(f(z))]_2(\epsilon), & \mbox{if $z\in supp(\Sigma_{x\in F}r_x\delta_x)\cap\mathbb{C}$}.
%   \end{cases}
% \end{equation*}

  \tr{As \eqref{eqDaggerPartISupportLike} is a disjoint union, \eqref{eqDaggerPaetIICase-1} and \eqref{eqDaggerPaetIICase-2} are complementary to each other.
   Hence $[h^\dagger(\Sigma_{x\in F}r_x\delta_x,f)]_2$ is a well-defined map.}
   Assume that $z,z'$ belong to $\supp([h^\dagger(\Sigma_{x\in F}r_x\delta_x,f)]_1)$ with $z\leq z'$, \tr{we now need to show $[h^\dagger(\Sigma_{x\in F}r_x\delta_x,f)]_2(z)\leq [h^\dagger(\Sigma_{x\in F}r_x\delta_x,f)]_2(z')$.}
   If $z'\in\da F\cap\mathbb{C}$, then
   \begin{equation*}
     [h^\dagger(\Sigma_{x\in F}r_x\delta_x,f)]_2(z)=g_{f(z)}(\epsilon)\leq g_{f(z')}(\epsilon)=[h^\dagger(\Sigma_{x\in F}r_x\delta_x,f)]_2(z').
   \end{equation*}
   If $z\in\da F\cap\mathbb{C},z'=x'y$ for some $x'\surd\in F$, then $z\leq x'\leq z'$ since $x'$ is maximal in $\da F\cap\mathbb{C}$.
   Furthermore,
   \begin{equation*}
     [h^\dagger(\Sigma_{x\in F}r_x\delta_x,f)]_2(z)=g_{f(z)}(\epsilon)\leq g_{f(x')}(\epsilon)\leq  g_{f(x'\surd)}(\epsilon)\leq g_{f(x'\surd)}(y)=[h^\dagger(\Sigma_{x\in F}r_x\delta_x,f)]_2(z').
   \end{equation*}
   If $z=x'y$ for some $x'\surd\in F$ and $y\in G_{f(x'\surd)}$, then there is a $y'\in G_{f(x'\surd)}$ greater than $y$ such that $z'=x'y'$.
   Therefore,
   \begin{equation*}
     [h^\dagger(\Sigma_{x\in F}r_x\delta_x,f)]_2(z)=g_{f(x'\surd)}(y)\leq g_{f(x'\surd)}(y')=[h^\dagger(\Sigma_{x\in F}r_x\delta_x,f)]_2(z').
   \end{equation*}
   By the above analysis, $[h^\dagger(\Sigma_{x\in F}r_x\delta_x,f)]_2$ preserves the order.
 \end{proof}

% Then, we establish a well-defined map $h^\dagger:\mathcal{SV}^\surd_1X\ra\mathcal{SV}^\surd_1Y$ by
%\begin{equation*}
%  h^\dagger(\Sigma_{x\in F}r_x\delta_x,f)=([h^\dagger(\Sigma_{x\in F}r_x\delta_x,f)]_1,[h^\dagger(\Sigma_{x\in F}r_x\delta_x,f)]_2).
%\end{equation*}

\begin{proposition}
  $h^\dagger$ is continuous.
\end{proposition}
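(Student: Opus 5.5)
We check continuity on the subbase: it suffices to show that for every $w\in\mathcal{K}(\mathbb{M})$, $r\in[0,1)$ and $W\in\mathcal{O}Y$, the preimage $(h^\dagger)^{-1}(\langle\ua w,r,W\rangle'_1)$ is open in $\mathcal{SV}^\surd_1X$. So fix $(\Sigma_{x\in F}r_x\delta_x,f)$ in this preimage. We will build a finite intersection of subbasic sets $\langle\ua z,q,U\rangle'_1$ that contains it and is mapped by $h^\dagger$ into $\langle\ua w,r,W\rangle'_1$. The value $[h^\dagger(\cdot)]_1(\ua w)$ and the point $[h^\dagger(\cdot)]_2(w)$ are computed separately, and we split into two cases according to where $w$ lies in the support, as in \eqref{eqDaggerPartISupportLike}.

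\emph{Mass condition.} Write
\begin{equation*}
  [h^\dagger(\Sigma_{x\in F}r_x\delta_x,f)]_1(\ua w)=\Sigma_{x\in F\cap\mathbb{C}\cap\ua w}r_x+\Sigma_{x'\surd\in F}r_{x'\surd}\,[h(f(x'\surd))]_1(\{y:x'y\geq w\}),
\end{equation*}
and note that $\{y:x'y\ge w\}$ is either $\mathbb{M}$ (when $x'\ge w$), or $\ua y_0$ for the unique $y_0$ with $w=x'y_0$, or empty. For each relevant $x'\surd\in F$ the map $z\mapsto[h(z)]_1(\ua y_0)$ is lower semicontinuous on $X$, because $[-]_1\circ h$ is continuous and, by Corollary \ref{STopCoincideUWeakTopOnVC*}, $\langle\ua y_0>s\rangle$ is weak open. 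The strict inequality $>r$ therefore leaves slack. We choose rationals $q_{x'}<[h(f(x'\surd))]_1(\ua y_0)$ and open sets $U_{x'}:=([-]_1\circ h)^{-1}(\langle\ua y_0>q_{x'}\rangle)\ni f(x'\surd)$. We also choose thresholds $t_{x}<r_{x}$ for the masses, so that the total lower bound still exceeds $r$. The neighbourhood is then
\begin{equation*}
  \bigcap_{z\in F}\langle\ua z,t'_z,X\rangle'_1\cap\bigcap_{x'\surd\in F}\langle\ua x'\surd,t_{x'\surd},U_{x'}\rangle'_1 ,
\end{equation*}
where $t'_z$ is slightly less than $\Sigma_{x\in\ua z\cap F}r_x$. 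The first intersection is the Scott neighbourhood of $\Sigma r_x\delta_x$ from the proof of the proposition preceding Corollary \ref{STopCoincideUWeakTopOnVC*}, so any $\nu$ in it dominates, in the way-below sense, a slight shrinking of $\Sigma r_x\delta_x$.

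\emph{Point condition and main obstacle.} We also need $[h^\dagger(\nu,g)]_2(w)\in W$. In Case-I ($w\in\da F\cap\mathbb{C}$) this is $[h(f(w))]_2(\epsilon)\in W$. We add $\langle\ua w,0,([-]_2\text{-evaluation at }\epsilon\circ h)^{-1}(W)\rangle'_1$, whose third entry is open because $h(z)\in\langle\ua\epsilon,0,W\rangle'_1$ is open in $z$. In Case-II ($w=x'y_0$) we likewise use $(h)^{-1}(\langle\ua y_0,0,W\rangle'_1)$ as the open set at the node $x'\surd$. The main obstacle is that a nearby $(\nu,g)$ may have a different support, with extra $\surd$-nodes and $\mathbb{C}$-nodes above or around $w$. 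The formula for $[h^\dagger(\nu,g)]_2(w)$ must then be read off for the new support. This is also where we must check that $w$ still falls in the same case, and that the mass of $\ua w$ is not lost by being redistributed to new $\surd$-leaves whose $h$-images carry little mass on $\ua y_0$.

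To handle this, I would use the $\surd$-max property: in a neighbourhood requiring $\nu(\ua x'\surd)>t>0$, we get $x'\surd\in\supp(\nu)$. By $\surd$-maxness, $x'$ is then maximal in $\supp(\nu)\cap\mathbb{C}$, so $w=x'y_0$ with $y_0\neq\epsilon$ stays in Case-II with the same $x'$. For Case-I, the contributions to $\ua w$ from $\nu$'s new nodes only add nonnegative mass, so the lower bound survives. The remaining work is the bookkeeping that sums these lower bounds to get a quantity exceeding $r$.
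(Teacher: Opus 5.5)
Your proposal is correct and follows essentially the paper's route: check preimages of subbasic sets, split according to whether $w$ falls in Case-I or Case-II, use the $\surd$-max property to keep $w$ in the same case for nearby points, and handle the third slot through $h^{-1}$ of a subbasic set at $\epsilon$ (Case-I) or at $y_0$ (Case-II). The paper's version is leaner, using a single subbasic set per case: $\langle\ua w,r,h^{-1}(\langle\ua\epsilon,0,W\rangle'_1)\rangle'_1$ in Case-I, where $\surd$-maxness and normalization give $[h^\dagger(\nu,g)]_1(\ua w)=\nu(\ua w)$, and $\langle\ua x'\surd,r_{x'\surd}-\varepsilon,h^{-1}(\langle\ua y_0,[h(f(x'\surd))]_1(\ua y_0)-\gamma,W\rangle'_1)\rangle'_1$ in Case-II, where only the node $x'\surd$ contributes mass to $\ua w$; this makes your extra intersections over all of $F$ and the leftover ``bookkeeping'' unnecessary.
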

\begin{proof}
  Suppose that $\langle \ua z,r,V\rangle'_1$ is an open subset of $\mathcal{SV}^\surd_1Y$ and $(\Sigma_{x\in F}r_x\delta_x,f)\in h^{\dagger-1}(\langle \ua z,r,V\rangle'_1)$.
   Then, we have
   \begin{align}
     & [h^\dagger(\Sigma_{x\in F}r_x\delta_x,f)]_1(\ua z)=(\Sigma_{x\in F\cap\mathbb{C}}r_x\delta_x+\Sigma_{x'\surd\in F}r_{x'\surd}\Sigma_{y\in G_{f(x'\surd)}}s_y\delta_{x'y})(\ua z)>r;\nonumber\\
     & [h^\dagger(\Sigma_{x\in F}r_x\delta_x,f)]_2(z)\in V .\nonumber
   \end{align}
   We analyze it in two cases: (\romannumeral1) $z\in\da F\cap\mathbb{C}$; (\romannumeral2) $z\notin\da F\cap\mathbb{C}$.

  \textbf{Case}(\romannumeral1). For every $x'\surd\in F,y\in G_{f(x'\surd)}$, since $\da F$ is $\surd$-max, $x'y\geq z$ if and only if $x'\surd\geq z$.
   Thus, we obtain
   \begin{align*}
     [h^\dagger(\Sigma_{x\in F}r_x\delta_x,f)]_1(\ua z) & = \Sigma_{x\in F\cap\mathbb{C}}r_x\delta_x(\ua z)+\Sigma_{x'\surd\in F}r_{x'\surd}\Sigma_{y\in G_{f(x'\surd)}}s_y\delta_{x'y}(\ua z)\\
      & = \Sigma_{x\in\ua z\cap F\cap\mathbb{C}} r_x+\Sigma_{x'\surd\in \ua z\cap F}r_{x'\surd}\Sigma_{y\in G_{f(x'\surd)}}s_y\\
      & = \Sigma_{x\in\ua z\cap F}r_x\\
      & > r.
   \end{align*}
    In addition, directly have
   \begin{equation*}
     [h^\dagger(\Sigma_{x\in F}r_x\delta_x,f)]_2(z)=[h(f(z))]_2(\epsilon)=g_{f(z)}(\epsilon)\in V.
   \end{equation*}
   Hence, $f(z)\in h^{-1}(\langle\ua\epsilon,0,V\rangle'_1)$.
   Then, it is simple to check that
   \begin{equation*}
     (\Sigma_{x\in F}r_x\delta_x,f)\in\langle\ua z,r,h^{-1}(\langle\ua\epsilon,0,V\rangle'_1)\rangle'_1.
   \end{equation*}
   For each $(\Sigma_{\omega\in K}a_\omega\delta_\omega,k)\in\langle\ua z,r,h^{-1}(\langle\ua\epsilon,0,V\rangle'_1)\rangle'_1$, we have $z\in\da K\cap\mathbb{C}$ and $[h(k(z))]_2(\epsilon)\in V$.
   So,
   \begin{align*}
     [h^\dagger(\Sigma_{\omega\in K}a_\omega\delta_\omega,k)]_1(\ua z) & = \Sigma_{\omega\in K\cap\mathbb{C}}a_\omega\delta_\omega(\ua z)+\Sigma_{\omega'\surd\in K}a_{\omega'\surd}\Sigma_{y\in G_{k(\omega'\surd)}}s_y\delta_{\omega'y}(\ua z)\\
      & = \Sigma_{\omega\in\ua z\cap K\cap\mathbb{C}} r_x+\Sigma_{\omega'\surd\in \ua z\cap K}a_{\omega'\surd}\Sigma_{y\in G_{f(\omega'\surd)}}s_y\\
      & = \Sigma_{\omega\in\ua z\cap K}a_\omega \\
      & > r;
   \end{align*}
   \tr{on the other hand,} $[h^\dagger(\Sigma_{\omega\in K}a_\omega\delta_\omega,k)]_2(z)=[h(k(z))]_2(\epsilon)=g_{k(z)}(\epsilon)\in V$.
   We conclude that $h^\dagger(\Sigma_{\omega\in K}a_\omega\delta_\omega,k)\in\langle\ua z,r,V\rangle'_1$.

   \textbf{Case}(\romannumeral2). There are some $x'\surd\in F$ and $y'\in G_{f(x'\surd)}\setminus\{\epsilon\}$ with $z=x'y'$.
   Thus,
   \begin{align*}
     [h^\dagger(\Sigma_{x\in F}r_x\delta_x,f)]_1(\ua z) & = r_{x'\surd}\Sigma_{y\in G_{f(x'\surd)}}s_y\delta_{x'y}(\ua z) \\
      & = r_{x'\surd}\Sigma_{y\in G_{f(x'\surd)}}s_y\delta_y(\ua y') \\
      & = r_{x'\surd}\Sigma_{y\in\ua y'\cap G_{f(x'\surd)}}s_y \\
      & > r
   \end{align*}
   and $[h^\dagger(\Sigma_{x\in F}r_x\delta_x,f)]_2(z)=[h(f(x'\surd))]_2(y')=g_{f(x'\surd)}(y')\in V$ hold.
   Pick non-negative real numbers $\varepsilon_{x'\surd},\gamma_{x'\surd}$ such that
   \begin{equation*}
     (r_{x'\surd}-\varepsilon_{x'\surd})(\Sigma_{y\in\ua y'\cap G_{f(x'\surd)}}s_y-\gamma_{x'\surd})>r.
   \end{equation*}
   It is easy to verify $(\Sigma_{x\in F}r_x\delta_x,f)\in U$, where
   \begin{equation*}
     U:=\langle\ua x'\surd,r_{x'\surd}-\varepsilon_{x'\surd},h^{-1}(\langle\ua y',(\Sigma_{y\in\ua y'\cap G_{f(x'\surd)}}s_y)-\gamma_{x'\surd},V\rangle'_1)\rangle'_1.
   \end{equation*}
   Suppose $(\Sigma_{\omega\in K}a_\omega\delta_\omega,k)\in U$.
   Then $\Sigma_{\omega\in K}a_\omega\delta_\omega(\ua x'\surd)=a_{x'\surd}>r_{x'\surd}-\varepsilon_{x'\surd}$, and
   \begin{equation*}
     [h(k(x'\surd))]_1(\ua y')=\Sigma_{G_{k(x'\surd)}}s_y\delta_{y}(\ua y')>(\Sigma_{y\in\ua y'\cap G_{f(x'\surd)}}s_y)-\gamma_{x'\surd}.
   \end{equation*}
   Thus
   \begin{align*}
     [h^\dagger(\Sigma_{\omega\in K}a_\omega\delta_\omega,k)]_1(\ua z) & = a_{x'\surd}\Sigma_{G_{k(x'\surd)}}s_y\delta_{x'y}(\ua z)\\
      & = a_{x'\surd}\Sigma_{G_{k(x'\surd)}}s_y\delta_{y}(\ua y')\\
      & > (r_{x'\surd}-\varepsilon_{x'\surd})((\Sigma_{y\in\ua y'\cap G_{f(x'\surd)}}s_y)-\gamma_{x\surd})\\
      & > r.
   \end{align*}
   Besides,
   \begin{equation*}
     h(f(x'\surd))=(\Sigma_{y\in G_{k(x'\surd)}}s_y\delta_y,g_{k(x'\surd)})\in\langle\ua y',(\Sigma_{y\in\ua y'\cap G_{k(x'\surd)}}s_y)-\gamma_{x'\surd},V\rangle'_1,
   \end{equation*}
   which implies $[h^\dagger(\Sigma_{\omega\in K}a_\omega\delta_\omega,k)]_2(z)=g_{k(x'\surd)}(y')\in V$.
\end{proof}

For every $T_0$ space $X$, define $\eta_X:X\ra\mathcal{SV}^\surd_1X$ \tr{by $\eta_X(x)=(\delta_\surd,\chi_x)$}, where
\begin{equation*}
  \chi_x=\{\surd\mapsto x,\epsilon\mapsto x\}.
\end{equation*}
 Here, we use some tricks to present a partial map $f:A\rightharpoonup B$, we regard $f$ as a subset of $A\times B$, and $x\mapsto f(x)$ indicates $(x,f(x))\in f$.
 Thus, $\chi:\{\epsilon,\surd\}\ra X$ can also be given by $\chi_x(\epsilon)=x,\chi(\surd)=x$.
 \tr{Obviously, $\supp(\delta_\surd)=\da\surd=\{\surd,\epsilon\}$ is $\surd$-max.
 Thus the random variable $(\delta_\surd,\chi_x)$ is well-defined continuous random variable.}
\begin{proposition}
  $\eta_X$ is continuous.
\end{proposition}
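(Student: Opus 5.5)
Since the sets $\langle\ua z,r,V\rangle'_1$ (with $z\in\mathcal{K}(\mathbb{M})$, $r\in[0,1)$, $V\in\mathcal{O}X$) form a subbase of $\mathcal{SV}^\surd_1X$, it suffices to show that $\eta_X^{-1}(\langle\ua z,r,V\rangle'_1)$ is open in $X$ for each such triple. The plan is to compute this preimage explicitly by a case analysis on $z$.

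For $x\in X$ we have $\eta_X(x)=(\delta_\surd,\chi_x)$. Membership $\eta_X(x)\in\langle\ua z,r,V\rangle'_1$ means two things:
\begin{enumerate}[(i)]
  \item $\delta_\surd(\ua z)>r$;
  \item $z\in dom(\chi_x)=\{\epsilon,\surd\}$ and $\chi_x(z)\in V$.
\end{enumerate}
Here $\delta_\surd(\ua z)=1$ if $z\leq\surd$, that is, if $z\in\{\epsilon,\surd\}$, and $\delta_\surd(\ua z)=0$ otherwise. Since $r<1$, condition (i) holds exactly when $z\in\{\epsilon,\surd\}$, and this does not depend on $x$. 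In that case $\chi_x(z)=x$, so condition (ii) reduces to $x\in V$.

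This gives two cases.
\begin{enumerate}[(a)]
  \item If $z\in\{\epsilon,\surd\}$, then $\eta_X^{-1}(\langle\ua z,r,V\rangle'_1)=V$.
  \item If $z\notin\{\epsilon,\surd\}$, then $z\notin\supp(\delta_\surd)$, so the condition $f(z)\in V$ can never be met (and $\delta_\surd(\ua z)=0\leq r$ anyway). Hence $\eta_X^{-1}(\langle\ua z,r,V\rangle'_1)=\emptyset$.
\end{enumerate}
Both preimages are open in $X$, so $\eta_X$ is continuous.

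The only delicate point is the convention when $z\notin dom(f)$: we read ``$f(z)\in V$'' as false there, which is consistent with how the subbasic sets are used in the previous proofs. Even without that convention, case (b) is already settled by the failure of $\delta_\surd(\ua z)>r$.
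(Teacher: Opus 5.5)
Your proposal is correct and follows the paper's proof: the paper also shows that the preimage of each subbasic set $\langle\ua z,r,V\rangle'_1$ is $V$ when $z\in\{\epsilon,\surd\}$ and $\emptyset$ otherwise. Your case analysis adds the justification the paper leaves implicit, namely that $\delta_\surd(\ua z)\in\{0,1\}$ together with $0\leq r<1$.
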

\begin{proof}
  For every $\langle \ua z,r,V\rangle'_1$, $\eta_X^{-1}(\langle\ua z,r,V\rangle'_1)=
  \begin{cases}
      V, & \mbox{if $z=\surd$ or $z=\epsilon$;} \\
      \emptyset, & \mbox{otherwise}.
  \end{cases}$
\end{proof}

Let $\mathbf{TOP0}$ be the category of $T_0$ spaces and continuous maps.

\begin{theorem}
  $(\mathcal{SV}^\surd_1,\eta,\dagger)$ is a Kleisli triple over $\mathbf{TOP0}$.
\end{theorem}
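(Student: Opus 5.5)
Since $\eta_X$ and $h^\dagger$ have already been shown to be well-defined continuous maps, what remains is the three Kleisli laws. Each of them is an equality of two elements of $\mathcal{SV}^\surd_1$, so for each law I will check separately that the valuation components (via $[-]_1$) agree and that the partial-map components (via $[-]_2$) agree. Because the valuations are finite sums, the valuation identities should reduce to bookkeeping with Remark \ref{p[-]properties} and the formula $\mathbf{x}[\Sigma_{y\in G}s_y\delta_y]=\Sigma_{y\in G}s_y\delta_{xy}$. The map components can be compared using \eqref{eqDaggerPartISupportLike}, which splits the support of a lifted valuation into the disjoint parts \eqref{eqDaggerPaetIICase-1} and \eqref{eqDaggerPaetIICase-2}.

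For law (i), take $z\in X$; then $\eta_X(z)=(\delta_\surd,\chi_z)$, so $F=\{\surd\}$ and $F\cap\mathbb{C}=\emptyset$. The valuation part becomes $1\cdot\boldsymbol{\epsilon}[[h(z)]_1]=[h(z)]_1$. The new support is $\{\epsilon\}\cup(\da G_z\setminus\{\epsilon\})$. On $\epsilon$ (Case-I) the map gives $g_z(\epsilon)$, and on each $y\neq\epsilon$ (Case-II) it gives $g_z(y)$; together this is $h(z)$.

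For law (ii), take $(\Sigma_{x\in F} r_x\delta_x,f)$. Each $\surd$-point $x'\surd\in F$ contributes $r_{x'\surd}\mathbf{x'}[\delta_\surd]=r_{x'\surd}\delta_{x'\surd}$, so the valuation is unchanged. For the map, Case-I gives $\chi_{f(z)}(\epsilon)=f(z)$. In Case-II the only available $y$ is $\surd$, so $z=x'\surd$ and the value is $\chi_{f(x'\surd)}(\surd)=f(x'\surd)$. Hence $\eta_X^\dagger=id$.

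Law (iii), associativity, is where I expect the real work, mainly in the map component. I will take $h:X\to\mathcal{SV}^\surd_1Y$, $k:Y\to\mathcal{SV}^\surd_1Z$ and a simple random variable $(\Sigma_{x\in F}r_x\delta_x,f)$, and unfold both sides on it.

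\emph{Valuation component.} The left side $k^\dagger(h^\dagger(\cdot))$ first grafts $G_{f(x'\surd)}$ at each $x'$ with $x'\surd\in F$, then grafts $k$-trees at every $\surd$-point $x'y'\surd$ of the resulting support, where $y'\surd\in G_{f(x'\surd)}$. The right side $(k^\dagger\circ h)^\dagger$ grafts at $x'$ the tree $[k^\dagger(h(f(x'\surd)))]_1$, which already has the $k$-trees grafted at $y'$. So both sides equal
\[
\Sigma_{x\in F\cap\mathbb{C}}r_x\delta_x+\Sigma_{x'\surd\in F}r_{x'\surd}\Big(\Sigma_{y\in G\cap\mathbb{C}}s_y\delta_{x'y}+\Sigma_{y'\surd\in G}s_{y'\surd}\,\mathbf{x'y'}[[k(g(y'\surd))]_1]\Big),
\]
with $G=G_{f(x'\surd)}$ and $g=g_{f(x'\surd)}$. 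The verification uses the identity $\mathbf{x'}\circ\mathbf{y'}=\mathbf{x'y'}$, hence $\mathbf{x'}[\mathbf{y'}[\mu]]=\mathbf{x'y'}[\mu]$ by Remark \ref{p[-]properties}, together with linearity of $\mathbf{x'}[-]$.

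\emph{Map component.} I will split a point $w$ of the common support into three disjoint cases:
\begin{enumerate}[(a)]
\item $w\in\da F\cap\mathbb{C}$;
\item $w=x'y$ with $y\in\da G\cap\mathbb{C}$ and $y\neq\epsilon$;
\item $w=x'y'u$ with $u\in\supp([k(g(y'\surd))]_1)\setminus\{\epsilon\}$.
\end{enumerate}
These cases are disjoint because of the $\surd$-max property, exactly as in the proof that $h^\dagger$ is well-defined. In each case I will compute both sides from the definition and expect the same value: $[k(g_{f(w)}(\epsilon))]_2(\epsilon)$ in (a), $[k(g(y))]_2(\epsilon)$ in (b), and $[k(g(y'\surd))]_2(u)$ in (c).

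The main obstacle is the boundary points, namely $w=x'$ and $w=x'y'$, where $\epsilon$ is excluded from Case-II. There I must check which clause of the definition fires on each side. For instance, at $w=x'$ the left side falls under Case-I of $k^\dagger$ applied to $h^\dagger(\cdot)$ and gives $[k(g(\epsilon))]_2(\epsilon)$, where $g(\epsilon)=g_{f(x'\surd)}(\epsilon)=[h^\dagger(\cdot)]_2(x')$ by Case-I of $h^\dagger$. The right side at $w=x'$ gives $[(k^\dagger h)(f(x'))]_2(\epsilon)$, which involves $f(x')$ rather than $f(x'\surd)$. So I expect to have to look carefully at how the paper indexes these clauses, and possibly to reconcile them through the convention in \eqref{eqDaggerPaetIICase-1} that uses $f(z)$ for $z\in\mathbb{C}$.
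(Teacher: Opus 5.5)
Your plan follows the paper's proof. You check each law componentwise. Laws (i) and (ii) come out exactly as you compute them. For (iii), you expand both valuation components to the same sum and split the common support into the same three disjoint pieces (a), (b), (c), with the same target values. The one thing to fix is the ``obstacle'' in your last paragraph: it is not real, and it comes from a miscomputation.

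Take $w=x'$ with $x'\surd\in F$. The point $x'$ lies in $\da F\cap\mathbb{C}$, so Case-I of $h^\dagger$ gives $[h^\dagger(\cdot)]_2(x')=g_{f(x')}(\epsilon)$, not $g_{f(x'\surd)}(\epsilon)$. The maps $g_{f(x'\surd)}$ are only used at the points $x'y$ with $y\neq\epsilon$.

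\begin{itemize}
\item Left side at $x'$: Case-I of $k^\dagger$ applies, since $x'$ is a $\mathbb{C}$-point of $\supp([h^\dagger(\cdot)]_1)$, and gives $[k(g_{f(x')}(\epsilon))]_2(\epsilon)$.
\item Right side at $x'$: $[k^\dagger(h(f(x')))]_2(\epsilon)=[k(g_{f(x')}(\epsilon))]_2(\epsilon)$, by Case-I of $k^\dagger$. This case applies because $G_{f(x')}\neq\emptyset$ for a normalized valuation, so $\epsilon\in\da G_{f(x')}\cap\mathbb{C}$.
\end{itemize}

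This is just your case (a) with $w=x'$, so nothing needs reconciling or reindexing.

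Likewise, $w=x'y'$ with $y'\surd\in G$ and $y'\neq\epsilon$ is simply an instance of case (b).

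\begin{itemize}
\item Left side: Case-I of $k^\dagger$, combined with Case-II of $h^\dagger$ (which gives $[h^\dagger(\cdot)]_2(x'y')=g(y')$), yields $[k(g(y'))]_2(\epsilon)$.
\item Right side: Case-II of the outer lift, followed by Case-I of $k^\dagger$ at $y'\in\da G\cap\mathbb{C}$, yields the same value.
\end{itemize}

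With this correction your argument goes through as in the paper.
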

\begin{proof}
  Let $X,Y,Z$ be $T_0$ spaces, $h:X\ra\mathcal{SV}^\surd_1Y,l:Y\ra\mathcal{SV}^\surd_1Z$ be continuous maps and $(\Sigma_{y\in G_x}s_y\delta_y,g_x)$ represent $h(x)$ for each $x\in X$.
%   In this proof, we respectively use
%  \begin{equation*}
%    (\Sigma_{\kappa\in G_x}s_y\delta_y,g_x),(\Sigma_{\omega\in K_\kappa}a_\omega\delta_\omega,k_\kappa)
%  \end{equation*}
%   to denote $h(x),l(y)$ for every $x\in X,\kappa\in Y$.

  (\romannumeral1) For every $x\in X$,
  \begin{equation*}
    [h^\dagger\circ\eta_X(x)]_1=[h^\dagger(\delta_\surd,\chi_x)]_1=\mathbf{\epsilon}[\Sigma_{y\in G_x}s_y\delta_{y}]=\Sigma_{y\in G_x}s_y\delta_y.
  \end{equation*}
  For each $z\in\da G_x$,
  \begin{align*}
    [h^\dagger\circ\eta_X(x)]_2(z) & = [h^\dagger(\delta_\surd,\chi_x)]_2(z) \\
     & \tr{= \begin{cases}
           g_{\chi_x(z)}(\epsilon), & \mbox{if}~z\in\da\surd\cap\mathbb{C}; \\
           g_{\chi_x(\omega\surd)}(y), & \mbox{if}~\omega\surd\in\{\surd\},~z=\omega y~\&~y\in\da G_{\chi_x(\omega\surd)}\setminus\{\epsilon\}
         \end{cases}} \\
     & = \begin{cases}
           g_{\chi_x(\epsilon)}(\epsilon), & \mbox{if } z=\epsilon; \\
           g_{\chi_x(\surd)}(z), & \mbox{if } z\in\da G_x\setminus\{\epsilon\}
         \end{cases} \\
     & = g_x(z).
  \end{align*}
%  \begin{equation*}
%    [h^\dagger\circ\eta_X(x)]_2(z)=[h^\dagger(\delta_\surd,\chi_x)]_2(z)=
%    \begin{cases}
%      g_{\chi_x(\epsilon)}(\epsilon), & \mbox{if } z=\epsilon; \\
%      g_{\chi_x(\surd)}(z), & \mbox{if } z\in\da G_x\setminus\{\epsilon\}
%    \end{cases}
%    =g_x(z).
%  \end{equation*}

  (\romannumeral2) For every $(\Sigma_{x\in F}r_x\delta_x,f)\in\mathcal{SV}^\surd_1X$,
  \tr{
  \begin{align*}
    [\eta_X^\dagger(\Sigma_{x\in F}r_x\delta_x,f)]_1 & = \Sigma_{x\in F\cap\mathbb{C}}r_x\delta_x+\Sigma_{x'\surd\in F}r_{x'\surd}\mathbf{x'}[[\eta_X(f(x))]_1] \\
     & = \Sigma_{x\in F\cap\mathbb{C}}r_x\delta_x+\Sigma_{x'\surd\in F}r_{x'\surd}\mathbf{x'}[[(\delta_\surd,\chi_{f(x)})]_1] \\
     & = \Sigma_{x\in F\cap\mathbb{C}}r_x\delta_x+\Sigma_{x'\surd\in F}r_{x'\surd}\mathbf{x'}[\delta_\surd] \\
     & = \Sigma_{x\in F\cap\mathbb{C}}r_x\delta_x+\Sigma_{x'\surd\in F}r_{x'\surd}\delta_{x'\surd} \\
     & = \Sigma_{x\in F}r_x\delta_x.
  \end{align*}}
  For each $z\in\da F$,
  \begin{align*}
    [\eta_X^\dagger(\Sigma_{x\in F}r_x\delta_x,f)]_2(z) & = \begin{cases}
      [\eta_X(f(z))]_2(\epsilon), & \mbox{if}~z\in\da F\cap\mathbb{C}; \\
      [\eta_X(f(x'\surd))]_2(\surd), & \mbox{if}~x'\surd\in F,~z=x'\surd
    \end{cases} \\
     & = \begin{cases}
      \chi_{f(z)}(\epsilon), & \mbox{if } z\in\da F\cap\mathbb{C}; \\
      \chi_{f(x'\surd)}(\surd), & \mbox{if}~x'\surd\in F,~z=x'\surd
    \end{cases}\\
     & = f(z).
  \end{align*}

  (\romannumeral3) For every $(\Sigma_{x\in F}r_x\delta_x,f)\in\mathcal{SV}^\surd_1X$,
  \tr{
  \begin{align*}
      [l^\dagger\circ h^\dagger(\Sigma_{x\in F}r_x\delta_x,f)]_1
     %%%%%%%%%%%%%%%%%%%%%%%%%%%%%%%%%
     & = [l^\dagger([h^\dagger(\Sigma_{x\in F}r_x\delta_x,f)]_1,[h^\dagger(\Sigma_{x\in F}r_x\delta_x,f)]_2)]_1 \\
    %%%%%%%%%%%%%%%%%%%%%%%%%%%%%%%%%
     & = [l^\dagger(\Sigma_{x\in F\cap\mathbb{C}}r_x\delta_x+\Sigma_{x'\surd\in F}r_{x'\surd}\Sigma_{y\in G_{f(x'\surd)}}s_y\delta_{x'y},[h^\dagger(\Sigma_{x\in F}r_x\delta_x,f)]_2)]_1 \tag{by Def of $[h^\dagger(...)]_1$}\\
    %%%%%%%%%%%%%%%%%%%%%%%%%%%%%%%%%
      & = [l^\dagger(\Sigma_{x\in F\cap\mathbb{C}}r_x\delta_x+\Sigma_{x'\surd\in F}r_{x'\surd}\Sigma_{y\in G_{f(x'\surd)}\cap\mathbb{C}}s_y\delta_{x'y}\\
     & + \Sigma_{x'\surd\in F}r_{x'\surd}\Sigma_{y'\surd\in G_{f(x'\surd)}}s_{y'\surd}\delta_{x'y'\surd},[h^\dagger(\Sigma_{x\in F}r_x\delta_x,f)]_2)]_1 \\
    %%%%%%%%%%%%%%%%%%%%%%%%%%%%%%%%%
     & = \Sigma_{x\in F\cap\mathbb{C}}r_x\delta_x+\Sigma_{x'\surd\in F}r_{x'\surd}\Sigma_{y\in G_{f(x'\surd)}\cap\mathbb{C}}s_y\delta_{x'y}\\
     & + \Sigma_{x'\surd\in F}r_{x'\surd}\Sigma_{y'\surd\in G_{f(x'\surd)}}s_{y'\surd}\mathbf{x'y'}[[l([h^\dagger(\Sigma_{x\in F}r_x\delta_x,f)]_2(x'y'\surd))]_1],\tag{by Def of $[l^\dagger(...)]_1$}
  \end{align*}}
  and
  \tr{\begin{align*}
     [(l^\dagger\circ h)^\dagger(\Sigma_{x\in F}r_x\delta_x,f)]_1
     %%%%%%%%%%%%%%%%%%%%%%%%%%%%%%%%%
      & = \Sigma_{x\in F\cap\mathbb{C}}r_x\delta_x+\Sigma_{x'\surd\in F}r_{x'\surd}\mathbf{x'}[[(l^\dagger\circ h)(f(x))]_1] \\
    %%%%%%%%%%%%%%%%%%%%%%%%%%%%%%%%%
      & = \Sigma_{x\in F\cap\mathbb{C}}r_x\delta_x+\Sigma_{x'\surd\in F}r_{x'\surd}\mathbf{x'}[[(l^\dagger(\Sigma_{y\in G_{f(x'\surd)}}s_y\delta_y,g_{f(x'\surd)})]_1] \tag{expand $h(f(x'\surd))$}\\
    %%%%%%%%%%%%%%%%%%%%%%%%%%%%%%%%%
      & = \Sigma_{x\in F\cap\mathbb{C}}r_x\delta_x+\Sigma_{x'\surd\in F}r_{x'\surd}\mathbf{x'}[\Sigma_{y\in G_{f(x'\surd)}\cap\mathbb{C}}s_y\delta_y \\
      & +\Sigma_{y'\surd\in G_{f(x'\surd)}}s_{y'\surd}\mathbf{y'}[[l(g_{f(x'\surd)}(y))]_1]]\tag{by Def of $[l^\dagger(...)]_1$}\\
    %%%%%%%%%%%%%%%%%%%%%%%%%%%%%%%%%
      & = \Sigma_{x\in F\cap\mathbb{C}}r_x\delta_x+\Sigma_{x'\surd\in F}r_{x'\surd}\Sigma_{y\in G_{f(x'\surd)}\cap\mathbb{C}}s_y\delta_{x'y}\\
     & + \Sigma_{x'\surd\in F}r_{x'\surd}\Sigma_{y'\surd\in G_{f(x'\surd)}}s_{y'\surd}\mathbf{x'y'}[[l([h^\dagger(\Sigma_{x\in F}r_x\delta_x,f)]_2(x'y'\surd))]_1].
  \end{align*}}
  It is easy to see that $\supp([l^\dagger\circ h^\dagger(\Sigma_{x\in F}r_x\delta_x,f)]_1)$ is equal to the disjoint union of the following subsets: $\da F\cap\mathbb{C},\{x'y:x'\surd\in F,y\in(\da G_{f(x'\surd)}\cap\mathbb{C})\setminus\{\epsilon\}\}$ and
  \begin{equation*}
    \{x'y'\omega:x'\surd\in F,y'\surd\in G_{f(x'\surd)},\omega\in \supp([l(g_{f(x'\surd)}(y'\surd))]_1)\setminus\{\epsilon\}\}.
  \end{equation*}

  \tr{Now we need a break to introduce a new notation.
  For any statement $S$, label $\lambda$, and mathematical object $D$ equipped with a case wise definition, we write
 $S\overset{\triangleright}{\longrightarrow}\lambda\in D$
  as an abbreviation for
  the fact that the statement $S$ falls under the case labelled $\lambda$ in the definition of $D$.}

  If $z\in \da F\cap\mathbb{C}$, then
  \tr{\begin{align*}
   [l^\dagger\circ h^\dagger(\Sigma_{x\in F}r_x\delta_x,f)]_2(z)
     & = [l([h^\dagger(\Sigma_{x\in F}r_x\delta_x,f)]_2(z))]_2(\epsilon) \tag{due to} \\
     & \tag{$z\in\da F\cap \mathbb{C}\subseteq\supp([h^\dagger(\Sigma_{x\in F}r_x\delta_x,f)]_1)\cap\mathbb{C}$}\\
     & \tag{$\casef\eqref{eqDaggerPaetIICase-1}\in[l^\dagger\circ h^\dagger(\Sigma_{x\in F}r_x\delta_x,f)]_2$} \\
     & = [l([h(f(z))]_2(\epsilon))]_2(\epsilon) \tag{due to $z\in\da F\cap\mathbb{C}$} \\
     & \tag{$\casef\eqref{eqDaggerPaetIICase-1}\in[h^\dagger(\Sigma_{x\in F}r_x\delta_x,f)]_2$} \\
     & = [l(g_z(\epsilon))]_2(\epsilon),
  \end{align*}}
  and
  \tr{\begin{align*}
   [(l^\dagger\circ h)^\dagger(\Sigma_{x\in F}r_x\delta_x,f)]_2(z)
     & = [(l^\dagger\circ h)(f(z))]_2(\epsilon) \tag{due to $z\in\da F\cap\mathbb{C}$}\\
     & \tag{$\casef\eqref{eqDaggerPaetIICase-1}\in[(l^\dagger\circ h)^\dagger(\Sigma_{x\in F}r_x\delta_x,f)]_2$} \\
     & = [(l^\dagger(\Sigma_{y\in G_{f(z)}}s_y\delta_y,g_z)]_2(\epsilon) \tag{expand $h(f(z))$} \\
     & = [l(g_z(\epsilon))]_2(\epsilon) \tag{due to $\epsilon\in\da G_{f(z)}\cap\mathbb{C}$}.\\
     & \tag{$\casef\eqref{eqDaggerPaetIICase-1}\in[(l^\dagger(\Sigma_{y\in G_{f(z)}}s_y\delta_y,g_z)]_2$}
  \end{align*}}
  If $z=x'y$ for some $x'\surd\in F,y\in(\da G_{f(x'\surd)}\cap\mathbb{C})\setminus\{\epsilon\}$, then
  \tr{\begin{align*}
   [l^\dagger\circ h^\dagger(\Sigma_{x\in F}r_x\delta_x,f)]_2(z)
     & = [l([h^\dagger(\Sigma_{x\in F}r_x\delta_x,f)]_2(z))]_2(\epsilon) \tag{due to} \\
     & \tag{$z\in\supp([h^\dagger(\Sigma_{x\in F}r_x\delta_x,f)]_1)\cap\mathbb{C}$}\\
     & \tag{$\casef\eqref{eqDaggerPaetIICase-1}\in[l^\dagger\circ h^\dagger(\Sigma_{x\in F}r_x\delta_x,f)]_2$}\\
     & = [l([h(f(x'\surd))]_2(y))]_2(\epsilon) \tag{due to $z=x'y\notin \da F\cap\mathbb{C}$},\\
     & \tag{$\casef\eqref{eqDaggerPaetIICase-2}\in[h^\dagger(\Sigma_{x\in F}r_x\delta_x,f)]_2(z))]_2$}
  \end{align*}}
  and
  \tr{\begin{align*}
     [(l^\dagger\circ h)^\dagger(\Sigma_{x\in F}r_x\delta_x,f)]_2(z)
     & = [(l^\dagger\circ h)(f(x'\surd))]_2(y) \tag{due to $z=x'y\notin\da F\cap\mathbb{C}$}\\
     & \tag{$\casef\eqref{eqDaggerPaetIICase-2}\in[(l^\dagger\circ h)^\dagger(\Sigma_{x\in F}r_x\delta_x,f)]_2$}\\
     & = [l([h(f(x'\surd))]_2(y))]_2(\epsilon) \tag{due to}\\
     & \tag{$y\in\da G_{f(x'\surd)}\cap\mathbb{C}=\supp([h(f(x'\surd))]_1)\cap\mathbb{C}$}.\\
     & \tag{$\casef\eqref{eqDaggerPaetIICase-1}\in[(l^\dagger\circ h)(f(x'\surd))]_2$}
  \end{align*}}
  If $z=x'y'\omega$ for some $x'\surd\in F,y'\surd\in G_{f(x'\surd)},\omega\in \supp([l(g_{f(x'\surd)}(y'\surd))]_1)\setminus\{\epsilon\}$, then %K_{g_{f(x'\surd)}(y'\surd)}
  \tr{\begin{align*}
  [l^\dagger\circ h^\dagger(\Sigma_{x\in F}r_x\delta_x,f)]_2(z)
     & = [l([h^\dagger(\Sigma_{x\in F}r_x\delta_x,f)]_2(x'y'\surd))]_2(\omega) \tag{due to}\\
     & \tag{$z=x'y'\omega\notin\supp([h^\dagger(\Sigma_{x\in F}r_x\delta_x,f)]_1)\cap\mathbb{C}$}  \\
     & \tag{$\casef\eqref{eqDaggerPaetIICase-2}\in[l^\dagger\circ h^\dagger(\Sigma_{x\in F}r_x\delta_x,f)]_2$}\\
     & = [l([h(f(x'\surd))]_2(y'\surd))]_2(\omega)\tag{due to $x'y'\surd\notin\da F\cap\mathbb{C}$},\\
     & \tag{$\casef\eqref{eqDaggerPaetIICase-2}\in[h^\dagger(\Sigma_{x\in F}r_x\delta_x,f)]_2$}
    % & = [l^\dagger\circ h(f(x'\surd))]_2(y'\omega)\\
%     & = [(l^\dagger\circ h)^\dagger(\Sigma_{x\in F}r_x\delta_x,f)]_2(z).
  \end{align*}}
  and
  \tr{\begin{align*}
   [(l^\dagger\circ h)^\dagger(\Sigma_{x\in F}r_x\delta_x,f)]_2(z)
     & = [l^\dagger\circ h(f(x'\surd))]_2(y'\omega) \tag{due to $z=x'y'\omega\notin\da F\cap\mathbb{C}$}\\
     & \tag{$\casef\eqref{eqDaggerPaetIICase-2}\in[(l^\dagger\circ h)^\dagger(\Sigma_{x\in F}r_x\delta_x,f)]_2$}\\
     & = [l([h(f(x'\surd))]_2(y'\surd))]_2(\omega). \tag{due to}\\
     & \tag{$y'\omega\notin\da G_{f(x'\surd)}\cap\mathbb{C}=\supp([h(f(x'\surd))]_1)\cap\mathbb{C}$}\\
     & \tag{$\casef\eqref{eqDaggerPaetIICase-2}\in[l^\dagger\circ h(f(x'\surd))]_2$}
  \end{align*}}
\end{proof}

\begin{definition}
  A triple $(\mathcal{T},e,m)$ is a monad over a category $\mathbf{C}$ if $\mathcal{T}:\mathbf{C}\ra\mathbf{C}$ is a functor, $e:\mathcal{ID}_{\mathbf{C}}\ra\mathcal{T},m:\mathcal{T}^2\ra\mathcal{T}$ are natural transformations such that, for every $A\in Obj(\mathbf{C})$,
  \begin{enumerate}[(i)]
    \item $m_{\mathcal{T}A}\circ\mathcal{T}(e_A)=id_{\mathcal{T}A}=m_{\mathcal{T}A}\circ e_{\mathcal{T}A}$;
    \item $m_A\circ m_{\mathcal{T}A}=m_A\circ \mathcal{T}(m_A)$.
  \end{enumerate}
\end{definition}

The equivalence of Kleisli triples and monads is given as follows:
\begin{enumerate}[(i)]
  \item every Kleisli triple $(\mathcal{P},v,\star)$ corresponds to a monad $(\mathcal{P},v,w)$ by defining $\mathcal{P}(f:A\ra B)=(v_B\circ f)^\star$ and $w_A=id_{\mathcal{P}A}^\star$,
  \item every monad $(\mathcal{T},e,m)$ corresponds to a Kleisli triple $(\mathcal{T},e,\star)$ by defining $g^\star=m_B\circ\mathcal{T}(g)$ for each $g:A\ra\mathcal{T}B$.
\end{enumerate}

\begin{lemma}\label{SVApplyToMaps}
  For every map $h:X\ra Y$ between the $T_0$ spaces, the functor $\mathcal{SV}^\surd_1$ applied on $h$ is given by $\mathcal{SV}^\surd_1(h)(\Sigma_{x\in F}r_x\delta_x,f)=(\Sigma_{x\in F}r_x\delta_x,h\circ f)$.
\end{lemma}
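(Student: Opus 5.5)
By the equivalence of Kleisli triples and monads recalled above, the functor $\mathcal{SV}^\surd_1$ acts on a continuous map $h:X\ra Y$ by $\mathcal{SV}^\surd_1(h)=(\eta_Y\circ h)^\dagger$. The plan is to unfold the definition of $\dagger$ with $\eta_Y\circ h$ in the role of the map $X\ra\mathcal{SV}^\surd_1Y$, and to compute both components of $(\eta_Y\circ h)^\dagger(\Sigma_{x\in F}r_x\delta_x,f)$. The computation will closely follow part (ii) of the proof that $(\mathcal{SV}^\surd_1,\eta,\dagger)$ is a Kleisli triple, where $\eta_X^\dagger$ was shown to be the identity.

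For every $w\in X$ we have $(\eta_Y\circ h)(w)=(\delta_\surd,\chi_{h(w)})$. Hence, in the notation of the definition of $\dagger$, $G_w=\{\surd\}$ with mass $1$, and $g_w=\chi_{h(w)}$. The first component is then
\begin{equation*}
\Sigma_{x\in F\cap\mathbb{C}}r_x\delta_x+\Sigma_{x'\surd\in F}r_{x'\surd}\mathbf{x'}[\delta_\surd]=\Sigma_{x\in F\cap\mathbb{C}}r_x\delta_x+\Sigma_{x'\surd\in F}r_{x'\surd}\delta_{x'\surd}=\Sigma_{x\in F}r_x\delta_x,
\end{equation*}
using $\mathbf{x'}[\delta_\surd]=\delta_{x'\surd}$ from the remark on $\mathbf{x}[-]$. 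In particular the support is still $\da F$.

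For the second component, take $z\in\da F$ and split according to \eqref{eqDaggerPaetIICase-1} and \eqref{eqDaggerPaetIICase-2}.
\begin{itemize}
\item If $z\in\da F\cap\mathbb{C}$, the value is $g_{f(z)}(\epsilon)=\chi_{h(f(z))}(\epsilon)=h(f(z))$.
\item Otherwise $z=x'y$ with $x'\surd\in F$ and $y\in\da\{\surd\}\setminus\{\epsilon\}=\{\surd\}$, so $z=x'\surd$. The value is then $\chi_{h(f(x'\surd))}(\surd)=h(f(x'\surd))$.
\end{itemize}
In both cases the second component equals $(h\circ f)(z)$, which gives the formula in the statement.

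There is no real obstacle. The only care needed is to confirm that Case-II can only occur at the points $x'\surd$, which follows because $\supp(\delta_\surd)\setminus\{\epsilon\}=\{\surd\}$. One may also note that $(\Sigma_{x\in F}r_x\delta_x,h\circ f)$ is indeed a $\surd$-max simple random variable: its support is unchanged, and $h\circ f$ is order-preserving because $h$ is continuous.
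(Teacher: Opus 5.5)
Your proposal is correct and matches the paper's own proof essentially step for step. Like the paper, you write $\mathcal{SV}^\surd_1(h)=(\eta_Y\circ h)^\dagger$, recover the first component from $\mathbf{x'}[\delta_\surd]=\delta_{x'\surd}$, and evaluate the second component by Case-I and Case-II, where Case-II reduces to $z=x'\surd$ and gives $\chi_{h(f(x'\surd))}(\surd)=h(f(x'\surd))$.
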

\begin{proof}
\tr{
  \begin{align*}
    \mathcal{SV}^\surd_1(h)(\Sigma_{x\in F}r_x\delta_x,f) & = (\eta_Y\circ h)^\dagger(\Sigma_{x\in F}r_x\delta_x,f) \\
    & = (\Sigma_{x\in F\cap\mathbb{C}}r_x\delta_x+\Sigma_{x'\surd\in F}r_{x'\surd}\mathbf{x'}[[\eta_Y(h(x))]_1],[(\eta_Y\circ h)^\dagger(\Sigma_{x\in F}r_x\delta_x,f)]_2)\\
    & = (\Sigma_{x\in F\cap\mathbb{C}}r_x\delta_x+\Sigma_{x'\surd\in F}r_{x'\surd}\mathbf{x'}[[(\delta_\surd,\chi_{h(x)})]_1],[(\eta_Y\circ h)^\dagger(\Sigma_{x\in F}r_x\delta_x,f)]_2)\\
     & = (\Sigma_{x\in F\cap\mathbb{C}}r_x\delta_x+\Sigma_{x'\surd\in F}r_{x'\surd}\mathbf{x'}[\delta_\surd],[(\eta_Y\circ h)^\dagger(\Sigma_{x\in F}r_x\delta_x,f)]_2)\\
     & = (\Sigma_{x\in F}r_x\delta_x,[(\eta_Y\circ h)^\dagger(\Sigma_{x\in F}r_x\delta_x,f)]_2),
  \end{align*}}
  where
  \begin{align*}
    [(\eta_Y\circ h)^\dagger(\Sigma_{x\in F}r_x\delta_x,f)]_2(z) & =
    \begin{cases}
      [\eta_Y(h(f(z)))]_2(\epsilon), & \mbox{if } z\in\da F\cap\mathbb{C}; \\
      [\eta_Y(h(f(x'\surd)))]_2(\epsilon), & \mbox{if $z=x'\surd$ for some $x'\surd\in F$}
    \end{cases}\\
      & =
    \begin{cases}
      \chi_{h(f(z))}(\epsilon), & \mbox{if } z\in\da F\cap\mathbb{C}; \\
      \chi_{h(f(x'\surd))}(\surd), & \mbox{if $z=x'\surd$ for some $x'\surd\in F$}
    \end{cases}\\
      & = h(f(z)).
  \end{align*}
\end{proof}

\section[The necessity of the v-max property]{The necessity of the $\surd$-max property}

The $\surd$-max property is the key to ensuring that, for every $h:X\ra\mathcal{SV}^\surd_1Y$, $h^\dagger$ preserves the specialization order, where it is noted that every continuous map has to preserve the specialization order.

For every bounded complete domain $L$, let $CRV(L)$ be the set of all continuous random variables on $\Sigma L$ ordered by $(\mu,f)\leq (\upsilon,g)$ if $\mu\leq\upsilon$ and for every $x\in dom(f)$, $f(x)\leq g(x)$, and $SRV(L)$ be the subposet of simple random variables on $\Sigma L$.
% For every $x,y\in\mathcal{K}(\mathbb{M})$, define $x\cdot y=x$ if $x\in\mathbb{C}$, and $x\cdot y=x'y$ if $x=x'\surd$ for some $x'\in\mathbb{C}$.

For every bounded complete domain $L,M$ and order-preserving map $h:L\ra SRV(M)$, represent $h(x)$ by $(\Sigma_{y\in G_{x}}s_y\delta_y,g_x)$.
 In \cite{Mislove2017}, Mislove defined $h^\diamond:SRV(L)\ra SRV(M)$ as

 \begin{equation*}
   h^\diamond(\Sigma_{x\in F}r_x\delta_x,f)=([h^\diamond(\Sigma_{x\in F}r_x\delta_x,f)]_1,[h^\diamond(\Sigma_{x\in F}r_x\delta_x,f)]_2),
 \end{equation*}
 where
 \begin{align*}
    & [h^\diamond(\Sigma_{x\in F}r_x\delta_x,f)]_1:=\Sigma_{x\in F\cap\mathbb{C}}r_x\delta_x+\Sigma_{x'\surd\in F}r_{x'\surd}\Sigma_{y\in G_{f(x'\surd)}}s_y\delta_{x'y};\\
    & [h^\diamond(\Sigma_{x\in F}r_x\delta_x,f)]_2:=\bigwedge\{g_{f(x)}(y):z=x\in\da F\cap\mathbb{C},y=\epsilon\text{; or }x=x'\surd\in F,y\in\da G_{f(x)},\text{ s.t. }z=x'y\}.
 \end{align*}

There is a counter example that a $h^\diamond:SRV(L)\ra SRV(M)$ does not preserve the order.
 Let $L,M$ be the bounded complete domains described in Figure \ref{fig1}.
\begin{figure}
  \centering
  \begin{tikzpicture}
    \node (lb)at(-2,0){$l_\epsilon$};
    \node (l0)at(-1,1){$l_0$};
    \node (ls)at(-3,1){$l_\surd$};
    \draw [-](node cs: name=lb)--(l0);
    \draw [-](node cs: name=lb)--(ls);
    \node at (-2,-1){$L$};
    \node (mb)at(2,0){$m_\epsilon$};
    \node (m0)at(3,1){$m_0$};
    \node (ms)at(1,1){$m_\surd$};
    \draw [-](node cs: name=mb)--(m0);
    \draw [-](node cs: name=mb)--(ms);
    \node at (2,-1){$M$};
  \end{tikzpicture}
  \caption{}\label{fig1}
\end{figure}
Consider $(\mu,f),(\upsilon,g)\in SRV(L)$ and $h:L\ra SRV(M)$ which are given by
\begin{enumerate}[(i)]
  \item $(\mu,f)=(0.5\delta_\epsilon+0.5\delta_0,\{\epsilon\mapsto l_\epsilon, 0\mapsto l_0\})$;
  \item $(\upsilon,g)=(0.5\delta_\surd+0.5\delta_0,\{\epsilon\mapsto l_\epsilon, 0\mapsto l_0, \surd\mapsto l_\surd\})$;
  \item $h=\begin{cases}
          l_\epsilon\mapsto(\delta_\epsilon,\{\epsilon\mapsto m_\epsilon\}); \\
          l_0\mapsto(\delta_0,\{\epsilon\mapsto m_0, 0\mapsto m_0\}); \\
          l_\surd\mapsto(\delta_0,\{\epsilon\mapsto m_\epsilon, 0\mapsto m_0\}).
        \end{cases}$
\end{enumerate}
 We can straightforwardly obtain $(\mu,f)\leq (\upsilon,g)$.
However, $h^\diamond(\mu,f)=(0.5\delta_\epsilon+0.5\delta_0,\{\epsilon\mapsto m_\epsilon,0\mapsto m_0\})$ is not lower than $h^\diamond(\upsilon,g)=(\delta_0,\{\epsilon\mapsto m_\epsilon,0\mapsto m_\epsilon\})$.

\section{\tr{Over spaces, completions and over bounded complete domains}}\label{SecSob}

\tr{Modelling computable effects via power spaces was first proposed by Smyth in \cite{Smyth1983}.}
 After that, many papers studied the monads of power spaces over spaces \cite{Heckmann2013,Zhao2015,Xu2020,Goubault2010}.
\tr{Sober spaces have a strong connection with logic (see \cite[Theorem 8.1.26]{Goubault2013} for Stone duality).
 Besides, Scholars generally recognize open subsets as computable properties, and sobriety indicates that every completely prime filter many properties uniquely determines an element (these open subsets which meet a same irreducible subset form a completely prime filter on the lattice of open subsets).
 Another important kind consists of d-spaces, which are viewed as extensions of dcpos.
 We will introduce the monad structure of $\mathcal{CV}^\surd_1$ constructed over d-spaces in Section \ref{SecMonad2}.
  Before that, we need some preparations.}

\subsection{\tr{Over sober spaces and d-spaces}}

\tr{Combining Proposition IV-8.5 and Exercise IV-8.22 of \cite{Gierz2003}, we obtain the following:}
\begin{fact}\label{FactOfGammaC*}
  Some facts of $\Gamma\Sigma\mathbb{M}$.
  \begin{enumerate}[(i)]
    \item $\Gamma\Sigma\mathbb{M}$ is an algebraic domain and every subfamily $\mathcal{F}$ of $\Gamma\Sigma\mathbb{M}$ has a sup given by $\overline{\bigcup\mathcal{F}}$;
    \item $\mathcal{K}(\Gamma\Sigma\mathbb{M})=\{F:F\subseteq^{fin}\mathcal{K}(\mathbb{M}),F=\da F\}$;
    \item for an $F\in\mathcal{K}(\Gamma\Sigma\mathbb{M})$ and a $G\in\Gamma\Sigma\mathbb{M}$, $F\ll G$ if and only if $F\subseteq G$;
    \item for every $F_1,F_2$ of $\Gamma\Sigma\mathbb{M}$, $F_1\cup F_2=\bigsqcup\{F_1'\cup F_2':F_1',F_2'\in\mathcal{K}(\Gamma\Sigma\mathbb{M}),F_1'\subseteq F_1, F_2'\subseteq F_2\}$.
  \end{enumerate}
\end{fact}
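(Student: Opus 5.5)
The statement collects four facts about the lattice $\Gamma\Sigma\mathbb{M}$ of Scott closed subsets of $\mathbb{M}$. I would derive all of them from the fact that $\mathbb{M}$ is an algebraic domain whose compact elements are the finite strings, i.e.\ $\mathcal{K}(\mathbb{M})$ is the set of finite strings. The Scott closed sets are then determined by their compact parts. Concretely, for $G\in\Gamma\Sigma\mathbb{M}$ I would show $G=\overline{G\cap\mathcal{K}(\mathbb{M})}$. Each $x\in G$ is the directed sup of $\dda x\cap\mathcal{K}(\mathbb{M})\subseteq \da x\subseteq G$, and a Scott closed set is closed under directed sups.

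\textbf{Part (i).} I would use the standard correspondence (Proposition IV-8.5 of \cite{Gierz2003}): for a domain $L$, the map $G\mapsto G\cap\mathcal{K}(L)$ is an order isomorphism between $\Gamma\Sigma L$ and the lattice of lower subsets of $\mathcal{K}(L)$. Its inverse is $A\mapsto\overline{A}$, and on a lower set $A$ this closure is just the set of directed sups of elements of $A$. The lower sets of any poset form an algebraic complete lattice, with the principal lower sets of finite sets as compact elements. So $\Gamma\Sigma\mathbb{M}$ is an algebraic domain, and in fact a complete lattice. For a family $\mathcal{F}$, the sup has to be the smallest closed set containing $\bigcup\mathcal{F}$, which is $\overline{\bigcup\mathcal{F}}$.

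\textbf{Part (ii).} Transporting along the isomorphism, a compact element corresponds to $\da A$ for a finite $A\subseteq\mathcal{K}(\mathbb{M})$. Its closure is $\da A$ itself, because a finite lower set of compact elements is already Scott closed: the only directed sups inside it are attained. In $\mathbb{M}$ every finite string has only finitely many prefixes, so $\da A$ is finite. This gives exactly the sets $F\subseteq^{fin}\mathcal{K}(\mathbb{M})$ with $F=\da F$.

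\textbf{Part (iii).} If $F$ is compact and $F\subseteq G$, then $F\ll F\leq G$, so $F\ll G$. Conversely, $\ll$ implies $\leq$, which is inclusion.

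\textbf{Part (iv).} By (i), $F_1\cup F_2$, being a finite union of closed sets, is closed and is the sup of $\{F_1,F_2\}$. By algebraicity, $F_i=\bigsqcup\{F_i'\in\mathcal{K}(\Gamma\Sigma\mathbb{M}):F_i'\subseteq F_i\}$, and both families are directed. The family of unions $F_1'\cup F_2'$ is then directed, its members are compact, and each lies below $F_1\cup F_2$. Its sup lies above each $F_i'$, hence above $F_1$ and $F_2$, hence equals $F_1\cup F_2$.

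The main care point is (ii): checking that the compact closed sets are finite requires that finite strings have finite down-sets in $\mathbb{M}$. This holds because every element of $\mathbb{C}^\surd$ covers a single finite string.
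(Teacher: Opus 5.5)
Your proposal is correct and follows the paper's route. The paper gives no argument beyond citing Proposition IV-8.5 and Exercise IV-8.22 of \cite{Gierz2003}, and you fill in the details, notably the $\mathbb{M}$-specific point that down-sets of finite strings are finite, which turns $\{\da A : A\subseteq^{fin}\mathcal{K}(\mathbb{M})\}$ into the finite lower sets of (ii); one small slip is that the correspondence $G\mapsto G\cap\mathcal{K}(L)$ is an isomorphism for algebraic domains $L$, not arbitrary domains (where $\mathcal{K}(L)$ may be empty), but $\mathbb{M}$ is algebraic, so your argument is unaffected.
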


%Since $\mathcal{CV}^\surd X\subseteq\mathcal{V}_{\leq 1}\Sigma\mathbb{M}\times[\Sigma\mathbb{M}\rightharpoonup X]$, it is natural to conjecture that the sup of every $\mathcal{F}\subseteq^{\ua}\mathcal{CV}^\surd X$, whenever it exists, is equal to $(\sup([\mathcal{F}]_1,\sup([\mathcal{F}]_2)))$.

\begin{lemma}\label{SupOfDirectedSurdMax}
  If $\mathcal{F}$ is a directed family of $\surd$-max Scott closed subsets of $\mathbb{M}$, then $\bigsqcup\mathcal{F}$ is $\surd$-max.
\end{lemma}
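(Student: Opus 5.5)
By Fact \ref{FactOfGammaC*}(i), the sup in $\Gamma\Sigma\mathbb{M}$ is $\bigsqcup\mathcal{F}=\overline{\bigcup\mathcal{F}}$. The plan is to identify the points of this closure, and in particular to show that every $x\surd\in\bigsqcup\mathcal{F}$ already lies in some member of $\mathcal{F}$. After that, $\surd$-maximality reduces to a statement about a single member, which can be settled using directedness.

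\textbf{Step 1: the closure adds no new $\surd$-points.} The set $\bigcup\mathcal{F}$ is a lower set, since each member is. For a directed union of Scott closed sets in an algebraic domain, the closure is obtained by adjoining directed sups. Concretely, $z\in\overline{\bigcup\mathcal{F}}$ iff every Scott open neighbourhood of $z$ meets $\bigcup\mathcal{F}$. Now take $z=x\surd$. Since $x\surd$ is compact, $\ua x\surd=\{x\surd\}$ is Scott open. So $x\surd\in\overline{\bigcup\mathcal{F}}$ forces $x\surd\in F_0$ for some $F_0\in\mathcal{F}$. In the same way, every compact element of $\bigsqcup\mathcal{F}$ lies in some member, since $\ua c$ is Scott open for compact $c$. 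The only possibly new points are infinite strings of $\mathbb{C}$.

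\textbf{Step 2: checking $\surd$-maximality.} Let $x\surd\in\bigsqcup\mathcal{F}$; by Step 1, $x\surd\in F_0\in\mathcal{F}$. Suppose, towards a contradiction, that $x$ is not maximal in $\bigsqcup\mathcal{F}\cap\mathbb{C}$. Then there is $w\in\bigsqcup\mathcal{F}\cap\mathbb{C}$ with $x<w$. Because $x$ is a finite string, we can shrink $w$ to $x0$ or $x1$: the element $x b$ (for the appropriate bit $b$) satisfies $xb\le w$. It lies in $\bigsqcup\mathcal{F}$ because that set is a lower set, and it is compact. By Step 1, $xb\in F_1$ for some $F_1\in\mathcal{F}$. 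Directedness gives $F_2\in\mathcal{F}$ with $F_0\cup F_1\subseteq F_2$. Then $x\surd\in F_2$ and $xb\in F_2\cap\mathbb{C}$ with $x<xb$. This contradicts the $\surd$-maximality of $F_2$. Finally, $\bigsqcup\mathcal{F}$ is a lower set, being Scott closed, so it is a $\surd$-max lower subset.

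\textbf{Main obstacle.} The delicate point is Step 1, namely the description of $\overline{\bigcup\mathcal{F}}$. It should be argued via openness of $\ua c$ for compact $c$, rather than by assuming that the directed union is already closed (which it need not be, because of infinite strings). Replacing $w$ by a compact element $xb$ sidesteps needing any information about the infinite strings added by the closure.
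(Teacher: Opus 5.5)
Your proof is correct and follows essentially the same route as the paper. Both arguments use that $\ua c$ is Scott open for compact $c$ to pull $x\surd$, and a one-token extension $xb$ of $x$, back into members of $\mathcal{F}$; directedness then yields a single member of $\mathcal{F}$ that fails to be $\surd$-max. The only cosmetic difference is that you first pass from $w$ down to $xb$ inside the lower set $\bigsqcup\mathcal{F}$, whereas the paper observes directly that $\ua x0$ meets $\overline{\bigcup\mathcal{F}}$, and hence meets $\bigcup\mathcal{F}$.
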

\begin{proof}
  Suppose that $\bigsqcup\mathcal{F}=\overline{\bigcup\mathcal{F}}$ is not $\surd$-max.
   Then, there are $x\surd\in\overline{\bigcup\mathcal{F}},y\in\overline{\bigcup\mathcal{F}}\cap\mathbb{C}$ with $x\surd\neq y$ and $x<y$.
   Since $x$ is a prefix of $y$, without loss of generality, assume that $x0$ is a prefix of $y$.
   As $y\in\ua x0\cap\overline{\bigcup\mathcal{F}}$ and $x0\in\mathcal{K}(\mathbb{M})$, we obtain $\ua x0\cap\bigcup\mathcal{F}\neq\emptyset$.
   Thus, there is an $F_1\in\mathcal{F}$ including $x0$.
   Likewise, $x\surd\in\ua x\surd\cap\overline{\bigcup\mathcal{F}}$ holds, there is an $F_2\in\mathcal{F}$ containing $x\surd$.
   By the directedness of $\mathcal{F}$, we have an $F\in\mathcal{F}$ with $F_1,F_2\subseteq F$.
   It follows $x0,x\surd\in F$; and hence, $F$ is not $\surd$-max, a contradiction.
\end{proof}

\begin{lemma}(\cite[\tr{Lemma 2.8}]{Xu2021})\label{ImageOfIrr}
  For every continuous map $f:X\ra Y$ between the $T_0$ spaces, if $I$ is an irreducible subset of $X$, then $f(I)$ is an irreducible subset of $Y$.
\end{lemma}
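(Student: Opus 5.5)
We prove Lemma \ref{ImageOfIrr} directly from the definition of irreducibility, using only that preimages of open sets under a continuous map are open. Here $f(I)=\{f(x):x\in I\}$ is the ordinary image, since $f$ is total. The key observation is that for every $U\in\mathcal{O}Y$ we have $U\cap f(I)\neq\emptyset$ if and only if $f^{-1}(U)\cap I\neq\emptyset$. Indeed, if $y=f(x)\in U$ with $x\in I$, then $x\in f^{-1}(U)\cap I$; conversely, any $x\in f^{-1}(U)\cap I$ gives $f(x)\in U\cap f(I)$.

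Now take $U,V\in\mathcal{O}Y$ with $U\cap f(I)\neq\emptyset$ and $V\cap f(I)\neq\emptyset$. By the observation, $f^{-1}(U)\cap I\neq\emptyset$ and $f^{-1}(V)\cap I\neq\emptyset$. By continuity of $f$, both $f^{-1}(U)$ and $f^{-1}(V)$ are open in $X$. Since $I$ is irreducible, $f^{-1}(U)\cap f^{-1}(V)\cap I\neq\emptyset$. Because $f^{-1}(U)\cap f^{-1}(V)=f^{-1}(U\cap V)$ and $U\cap V\in\mathcal{O}Y$, the observation applied to $U\cap V$ gives $U\cap V\cap f(I)\neq\emptyset$. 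Hence $f(I)$ is irreducible.

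If the convention also requires irreducible sets to be nonempty, this is immediate: $I\neq\emptyset$ implies $f(I)\neq\emptyset$. The $T_0$ hypothesis is not used. There is no real obstacle here; the only step needing care is the correspondence between "meets $U$" for $f(I)$ and "meets $f^{-1}(U)$" for $I$, which is where surjectivity of $f$ onto its image is used.
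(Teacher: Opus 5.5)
Your proof is correct: pulling $U$ and $V$ back along $f$, applying irreducibility of $I$ to the open sets $f^{-1}(U)$ and $f^{-1}(V)$, and then pushing forward via $f^{-1}(U)\cap f^{-1}(V)=f^{-1}(U\cap V)$ is the standard argument, and you are right that the $T_0$ hypothesis plays no role. The paper gives no proof of its own, only the citation to \cite{Xu2021}, so your direct verification is essentially the intended one; the usual closed-set phrasing (if $f(I)\subseteq A\cup B$ with $A,B$ closed, then $I\subseteq f^{-1}(A)\cup f^{-1}(B)$) is the same argument with complements taken.
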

\begin{corollary}\label{ProjectionOfIrrIsIrr}
  Let $X$ be a $T_0$ space and $\mathcal{F}$ an irreducible subset of $\mathcal{CV}^\surd X$.
   Then, $[\mathcal{F}]_1$ is an irreducible subset of $\mathcal{V}_{\leq 1}\Sigma\mathbb{M}$.
\end{corollary}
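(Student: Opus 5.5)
The corollary will follow immediately from Lemma \ref{ImageOfIrr} once we know that the projection $[-]_1:\mathcal{CV}^\surd X\ra\mathcal{V}_{\leq 1}\Sigma\mathbb{M}$ is a continuous map between $T_0$ spaces. Continuity of $[-]_1$ was established earlier, in the proposition right after the definition of the projections: for the subbase of Corollary \ref{STopCoincideUWeakTopOnVC*} we have $[-]_1^{-1}(\langle\ua z>r\rangle)=\langle\ua z,r,X\rangle$. We take $\mathcal{V}_{\leq 1}\Sigma\mathbb{M}$ with the weak topology, which coincides with the Scott topology by Theorem \ref{BigConsequenceofV} (\ref{WeakAgreeScott}), since $\mathbb{M}$ is a domain.

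The $T_0$ hypotheses are then checked separately. $\mathcal{CV}^\surd X$ is $T_0$ by the proposition proved in Section \ref{SecSpace}. $\mathcal{V}_{\leq 1}\Sigma\mathbb{M}$ is $T_0$ because its topology is the Scott topology of a poset, and the specialization order of that topology is the original partial order.

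Finally, apply Lemma \ref{ImageOfIrr} to $f=[-]_1$ and $I=\mathcal{F}$. This gives that $[\mathcal{F}]_1=\{\mu:(\mu,f)\in\mathcal{F}\}$ is irreducible in $\mathcal{V}_{\leq 1}\Sigma\mathbb{M}$.

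If one prefers not to rely on the cited lemma, its proof here is one line. Suppose $U,V$ are open in $\mathcal{V}_{\leq 1}\Sigma\mathbb{M}$ and both meet $[\mathcal{F}]_1$. Then $[-]_1^{-1}(U)$ and $[-]_1^{-1}(V)$ are open in $\mathcal{CV}^\surd X$ and both meet $\mathcal{F}$. By irreducibility of $\mathcal{F}$ there is some $(\mu,f)\in\mathcal{F}$ lying in both preimages, so $\mu\in U\cap V\cap[\mathcal{F}]_1$.

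There is no real obstacle. The only point needing care is to use the weak (equivalently Scott) topology on $\mathcal{V}_{\leq 1}\Sigma\mathbb{M}$, since that is the topology for which continuity of $[-]_1$ was proved.
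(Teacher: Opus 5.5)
Your proposal is correct and is essentially the paper's argument. The paper gives no separate proof: the corollary is placed right after Lemma \ref{ImageOfIrr} and follows by applying that lemma to the projection $[-]_1$, whose continuity was established earlier via $[-]_1^{-1}(\langle\ua z>r\rangle)=\langle\ua z,r,X\rangle$.
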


For every Scott closed subset $F\subseteq\mathbb{M}$, there is a useful map $p_F$ introduced in \cite[page 2]{Goubault2011} (\tr{the authors of \cite{Goubault2011} defined $p_F$ on $\mathbb{C}$; we adopt the same formula to extend this map to $\mathbb{M}$, and all properties established in \cite{Goubault2011} remain valid}).
 The map $p_F:\mathbb{M}\ra\mathbb{M}$ is given by $p_F(x)=\max(\da x\cap F)$, and it is Scott continuous.

\begin{remark}\label{UseP_F[]}
  For every $\mu\in\mathcal{V}_{\leq 1}\Sigma\mathbb{M}$, $U\in\mathcal{O}\Sigma\mathbb{M}$ and $F\in\Gamma\Sigma\mathbb{M}$,
  \tr{\begin{align*}
    p_F^{-1}(U) & = \{x:p_F(x)\in U\} \\
     & = \{x:\max(\da x\cap F)\in U\} \\
     & = \{x:\da x\cap U\cap F\neq\emptyset\} \\
     & = \ua(U\cap F),
  \end{align*}
  and }
   \begin{align*}
     p_F[\mu](U) & = \mu(p_F^{-1}(U)) \\
      & = \mu(\ua(U\cap F)) \\
      & = \mu(\ua(\ua (U\cap F)\cap \supp(\mu))) \\
      & = \mu(\ua(U\cap F\cap \supp(F))).
   \end{align*}
   Hence, \tr{by Proposition \ref{SuppDetermineValue}}, $\supp(p_F[\mu])=\supp(\mu)\cap F$, and $p_{F}[\mu]=\mu$ \tr{whenever $\supp(\mu)\subseteq F$}.
   If there is a $\upsilon\in\mathcal{V}_{\leq 1}\Sigma\mathbb{M}$ greater than $\mu$, then $\mu\leq p_{\supp(\mu)}[\upsilon]$ by the Scott continuity of $p_{\supp(\mu)}[-]$.
\end{remark}

%\begin{remark}\label{SuppP_F[]=SuppCapF}
%  Suppose $\mu\in\mathcal{V}_{\leq 1}\mathbb{M}$ and $F$ is a Scott closed subset of $\mathbb{M}$.
%   Then, the following equation holds:
%   \begin{equation*}
%     supp(p_F[\mu])=supp(\mu)\cap F.
%   \end{equation*}
%\end{remark}
%\begin{proof}
%  Directly, we have
%  \begin{align}
%    supp(p_F[\mu]) & = (\bigcup\{U\in\mathcal{O}\mathbb{M}:p_F[\mu](U)=0\})^c\nonumber \\
%     & = (\bigcup\{U\in\mathcal{O}\mathbb{M}:\mu(p_F^{-1}(U))=0\})^c\nonumber \\
%     & = (\bigcup\{U\in\mathcal{O}\mathbb{M}:\mu(\ua(U\cap F))=0\})^c \nonumber\\
%     & = (\bigcup\{U\in\mathcal{O}\mathbb{M}:\ua(U\cap F)\cap supp(\mu)=\emptyset\})^c\nonumber\\
%     & \supseteq (\bigcup\{U\in\mathcal{O}\mathbb{M}:U\cap F\cap supp(\mu)=\emptyset\})^c\nonumber\\
%     & = ((supp(\mu)\cap F)^c)^c\nonumber\\
%     & = supp(\mu)\cap F.\nonumber
%  \end{align}
%
% Conversely, the following calculation
% \begin{align}
%   p_F[\mu]((supp(\mu)\cap F)^c) & = \mu(p_F^{-1}((supp(\mu)\cap F)^c))\nonumber \\
%    & = \mu(\ua((supp(\mu)\cap F)^c\cap F))\nonumber \\
%    & = \mu(\ua((supp(\mu)^c\cap F)\cup(F^c\cap F)))\nonumber \\
%    & = \mu(\ua(supp(\mu)^c\cap F))\nonumber \\
%    & \leq \mu(supp(\mu)^c)\nonumber \\
%    & = 0 \nonumber
% \end{align}
% implies $supp(p_F[\mu])\subseteq supp(\mu)\cap F$.
%\end{proof}

\begin{lemma}\label{SuppSup=SupSupp}
  For an irreducible subset $\mathcal{F}$ of $\mathcal{V}_{\leq 1}\Sigma\mathbb{M}$, the following equation holds:
  \begin{equation*}
    \sup(\{\supp(\mu):\mu\in\mathcal{F}\})=\supp(\sup(\mathcal{F})).
  \end{equation*}
\end{lemma}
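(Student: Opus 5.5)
The plan is to use sobriety together with the continuity of $\supp$, rather than computing supports directly. By Theorem \ref{BigConsequenceofV} (\ref{WeakAgreeScott}) and the Corollary that $\mathcal{V}_{\leq 1}\Sigma\mathbb{M}$ is a (bounded complete) domain, the space $\mathcal{V}_{\leq 1}\Sigma\mathbb{M}$ with its weak topology is the Scott space of a domain. It is therefore sober, by the cited \cite[Corollary II-1.13]{Gierz2003}. Its specialization order is the pointwise order. By Remark \ref{SoberEquiva}, the irreducible subset $\mathcal{F}$ then has a sup $\sup(\mathcal{F})$, and $\overline{\mathcal{F}}=\da\sup(\mathcal{F})$. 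Likewise, by Fact \ref{FactOfGammaC*}, $\Gamma\Sigma\mathbb{M}$ is an algebraic domain, so $\Sigma\Gamma\Sigma\mathbb{M}$ is sober. Moreover, every family of Scott closed sets has a sup there, namely the closure of the union, so the left-hand side of the equation exists.

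\textbf{Inequality $\leq$.} The map $\supp$ is Scott continuous, hence order-preserving. For each $\mu\in\mathcal{F}$ we have $\mu\leq\sup(\mathcal{F})$, so $\supp(\mu)\subseteq\supp(\sup(\mathcal{F}))$. Taking the sup over $\mu\in\mathcal{F}$ gives $\sup(\{\supp(\mu):\mu\in\mathcal{F}\})\leq\supp(\sup(\mathcal{F}))$.

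\textbf{Inequality $\geq$.} Since $\supp:\mathcal{V}_{\leq 1}\Sigma\mathbb{M}\ra\Gamma\Sigma\mathbb{M}$ is Scott continuous between posets, it is continuous for the Scott topologies. By the first step, the Scott topology on the domain coincides with the weak topology. Continuity gives $\supp(\overline{\mathcal{F}})\subseteq\overline{\supp(\mathcal{F})}$. Hence
\begin{equation*}
\supp(\sup(\mathcal{F}))\in\supp(\overline{\mathcal{F}})\subseteq\overline{\{\supp(\mu):\mu\in\mathcal{F}\}}.
\end{equation*}
By Lemma \ref{ImageOfIrr}, $\{\supp(\mu):\mu\in\mathcal{F}\}$ is irreducible in the sober space $\Sigma\Gamma\Sigma\mathbb{M}$. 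Remark \ref{SoberEquiva} then shows that its closure is $\da\sup(\{\supp(\mu):\mu\in\mathcal{F}\})$. Therefore $\supp(\sup(\mathcal{F}))\leq\sup(\{\supp(\mu):\mu\in\mathcal{F}\})$, and the two inequalities together give the equation.

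\textbf{Main obstacle.} The delicate point is bookkeeping about topologies. The sup of $\mathcal{F}$ must be taken in the specialization order of the weak topology, and $\supp$ must be continuous for that topology. Both issues are settled by the coincidence of the weak and Scott topologies (Theorem \ref{BigConsequenceofV} (\ref{WeakAgreeScott})), combined with the standard fact recalled in the preliminaries that Scott continuous maps are exactly the maps continuous for the Scott topologies. One must also make sure that the sup in $\Gamma\Sigma\mathbb{M}$ computed via sobriety agrees with the one in Fact \ref{FactOfGammaC*} (i). This holds because both are least upper bounds in the inclusion order.
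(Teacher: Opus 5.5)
Your proof is correct, but it takes a different route from the paper's. The paper argues inside the order structure:
\begin{enumerate}[(i)]
  \item It uses bounded completeness to form $\mu\vee\upsilon$ for $\mu,\upsilon\in\mathcal{F}$.
  \item It proves $\supp(\mu\vee\upsilon)=\supp(\mu)\cup\supp(\upsilon)$. To do so it approximates by basis elements $\Sigma_{x\in F}r_x\delta_x\ll\mu$ and $\Sigma_{y\in G}s_y\delta_y\ll\upsilon$. The support of their join is squeezed between $\da F\cup\da G$ and $\supp(p_{\da F\cup\da G}[\sup(\mathcal{F})])=\da F\cup\da G$, using Remark \ref{UseP_F[]}.
  \item It writes $\sup(\mathcal{F})$ as the directed sup of finite joins of members of $\mathcal{F}$ and applies Scott continuity of $\supp$ once more.
\end{enumerate}
You instead use only the general fact that a continuous map out of a sober space preserves sups of irreducible sets. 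Concretely, $\sup(\mathcal{F})\in\overline{\mathcal{F}}$, continuity of $\supp$ (for the weak topology, which equals the Scott topology) pushes this into $\overline{\supp(\mathcal{F})}$, and that closure lies below $\sup(\{\supp(\mu):\mu\in\mathcal{F}\})$. Your route is shorter. It needs neither the basis $B_{\mathbb{M}}$, nor the maps $p_F$, nor bounded completeness of $\mathcal{V}_{\leq 1}\Sigma\mathbb{M}$, and it applies verbatim to any Scott continuous map between domains. The paper's route gives extra information: $\supp(\mu\vee\upsilon)=\supp(\mu)\cup\supp(\upsilon)$ for every bounded pair. You could also drop the appeal to Lemma \ref{ImageOfIrr} and to sobriety of $\Sigma\Gamma\Sigma\mathbb{M}$. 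Since $\da S$ is Scott closed for any element $S$ and $\Gamma\Sigma\mathbb{M}$ is a complete lattice, $\overline{\supp(\mathcal{F})}\subseteq\da\sup(\{\supp(\mu):\mu\in\mathcal{F}\})$ holds directly.
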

\begin{proof}
  %\tr{The sup of $\mathcal{F}$ exists because $\mathcal{V}_{\leq 1}\Sigma\mathbb{M}$ is sober.}
   Let $\mu,\upsilon\in\mathcal{F}$.
   Since $\mathcal{V}_{\leq 1}\Sigma\mathbb{M}$ is a bounded complete domain, it is sober.
   \tr{As every irreducible subset of a sober space has its sup by Remark \ref{SoberEquiva}, $\sup(\mathcal{F})$ exists.}
   Then, $\mu,\upsilon\leq \sup(\mathcal{F})$; hence, $\mu\vee\upsilon$ exists.
   We show
   \begin{equation*}
     \supp(\mu\vee\upsilon)=\supp(\mu)\cup \supp(\upsilon).
   \end{equation*}
   For every $\Sigma_{x\in F}r_x\delta_x\in\dda\mu\cap B_{\mathbb{M}},\Sigma_{y\in G}s_y\delta_y\in\dda\upsilon\cap B_{\mathbb{M}}$, likewise $\Sigma_{x\in F}r_x\delta_x\vee\Sigma_{y\in G}s_y\delta_y$ exists.
   Because the Scott continuous map $\supp:\mathcal{V}_{\leq 1}\Sigma\mathbb{M}\ra\Gamma\Sigma\mathbb{M}$ preserves the order,
   \begin{equation*}
     \da F\cup\da G=\supp(\Sigma_{x\in F}r_x\delta_x)\cup \supp(\Sigma_{y\in G}s_y\delta_y)\subseteq \supp(\Sigma_{x\in F}r_x\delta_x\vee\Sigma_{y\in G}s_y\delta_y).
   \end{equation*}
   Consider the valuation $p_{\da F\cup\da G}[\sup(\mathcal{F})]$, it is greater than $\Sigma_{x\in F}r_x\delta_x,\Sigma_{y\in G}s_y\delta_y$ by Remark \ref{UseP_F[]}.
   Again, by Remark \ref{UseP_F[]}, we obtain
   \begin{equation*}
     \da F\cup\da G\subseteq \supp(\Sigma_{x\in F}r_x\delta_x\vee\Sigma_{y\in G}s_y\delta_y)\subseteq \supp(p_{\da F\cup\da G}[\sup(\mathcal{F})])=\da F\cup\da G.
   \end{equation*}
   Thus, $\supp(\Sigma_{x\in F}r_x\delta_x\vee\Sigma_{y\in G}s_y\delta_y)=\da F\cup\da G$.
   It follows
   \begin{align*}
     \supp(\mu\vee\upsilon) & = \supp(\bigsqcup(\dda\mu\cap B_{\mathbb{M}})\vee\bigsqcup(\dda\upsilon\cap B_{\mathbb{M}})) \\
      & = \supp(\bigsqcup\{\Sigma_{x\in F}r_x\delta_x\vee\Sigma_{y\in G}s_y\delta_y:\Sigma_{x\in F}r_x\delta_x\in\dda\mu\cap B_{\mathbb{M}},\Sigma_{y\in G}s_y\delta_y\in\dda\upsilon\cap B_{\mathbb{M}}\}) \\
      & = \bigsqcup\{\supp(\Sigma_{x\in F}r_x\delta_x\vee\Sigma_{y\in G}s_y\delta_y):\Sigma_{x\in F}r_x\delta_x\in\dda\mu\cap B_{\mathbb{M}},\Sigma_{y\in G}s_y\delta_y\in\dda\upsilon\cap B_{\mathbb{M}}\} \\
      & = \bigsqcup\{\supp(\Sigma_{x\in F}r_x\delta_x)\cup \supp(\Sigma_{y\in G}s_y\delta_y):\Sigma_{x\in F}r_x\delta_x\in\dda\mu\cap B_{\mathbb{M}},\Sigma_{y\in G}s_y\delta_y\in\dda\upsilon\cap B_{\mathbb{M}}\} \\
      & = \bigsqcup\{\supp(\Sigma_{x\in F}r_x\delta_x):\Sigma_{x\in F}r_x\delta_x\in\dda\mu\cap B_{\mathbb{M}}\}\cup\bigsqcup\{\supp(\Sigma_{y\in G}s_y\delta_y):\Sigma_{y\in G}s_y\delta_y\in\dda\upsilon\cap B_{\mathbb{M}}\} \\
      & = \supp(\bigsqcup(\dda\mu\cap B_{\mathbb{M}}))\cup \supp(\bigsqcup(\dda\upsilon\cap B_{\mathbb{M}})) \\
      & = \supp(\mu)\cup \supp(\upsilon) .
   \end{align*}
   Finally, we have
   \begin{align}
     \supp(\sup(\mathcal{F})) & = \supp(\bigsqcup\{\mu_1\vee\cdots\vee\mu_n:\mu_1,\ldots,\mu_n\in\mathcal{F},n\in\mathbb{N}\}) \nonumber\\
      & = \bigsqcup\{\supp(\mu_1\vee\cdots\vee\mu_n):\mu_1,\ldots,\mu_n\in\mathcal{F},n\in\mathbb{N}\}\nonumber \\
      & = \bigsqcup\{\supp(\mu_1)\cup\cdots\cup \supp(\mu_n):\mu_1,\ldots,\mu_n\in\mathcal{F},n\in\mathbb{N}\}\nonumber \\
      & = \overline{\bigcup\{\supp(\mu):\mu\in\mathcal{F}\}}\nonumber\\
      & = \sup(\{\supp(\mu):\mu\in\mathcal{F}\})\nonumber.
   \end{align}
\end{proof}

\begin{lemma}\label{SurdMaxAreLower}
   All of $\surd$-max Scott closed subsets form a lower subfamily of $\Gamma\Sigma\mathbb{M}$, that is, if $F\in\Gamma\Sigma\mathbb{M}$ is $\surd$-max and $G\subseteq F$ is a Scott closed subset, then $G$ is $\surd$-max.
\end{lemma}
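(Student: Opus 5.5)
The plan is to unfold the definition of $\surd$-max directly. Let $F\in\Gamma\Sigma\mathbb{M}$ be $\surd$-max and let $G\subseteq F$ be Scott closed. Then $G$ is in particular a lower subset of $\mathbb{M}$, so the only thing to verify is the defining condition: for every $x\surd\in G$, the string $x$ is maximal in $G\cap\mathbb{C}$.

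First I check that $x$ lies in $G\cap\mathbb{C}$ at all. Since $x\leq x\surd$ in the prefix order and $G$ is a lower set, $x\in G$. As $x\in\mathcal{K}(\mathbb{C})\subseteq\mathbb{C}$, this gives $x\in G\cap\mathbb{C}$.

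Next I show maximality by contradiction. Suppose there is $y\in G\cap\mathbb{C}$ with $x<y$. Since $G\subseteq F$, we have $x\surd\in F$ and $y\in F\cap\mathbb{C}$ with $x<y$. This contradicts the $\surd$-max property of $F$, which requires $x$ to be maximal in $F\cap\mathbb{C}$. Hence $x$ is maximal in $G\cap\mathbb{C}$, and $G$ is $\surd$-max.

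Finally, the phrase ``lower subfamily of $\Gamma\Sigma\mathbb{M}$'' is exactly this downward closure under inclusion, since inclusion is the order on $\Gamma\Sigma\mathbb{M}$. There is no real obstacle here. The only point needing care is the membership $x\in G$, which follows from $G$ being a lower set, and maximality then transfers from $F$ because the subset $G\cap\mathbb{C}\subseteq F\cap\mathbb{C}$ has fewer competitors.
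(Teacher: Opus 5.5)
Your argument is correct: $G$ is a lower set, so $x\leq x\surd$ gives $x\in G\cap\mathbb{C}$, and any $y\in G\cap\mathbb{C}$ with $x<y$ would lie in $F\cap\mathbb{C}$ and contradict the $\surd$-max property of $F$. The paper states this lemma without proof, treating it as immediate from the definition, and your direct unfolding is exactly that routine verification.
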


For each Scott closed subset $A$ of $\mathbb{M}$, we write $\Uparrow\!\! A$ for $\{F\in\Gamma\Sigma\mathbb{M}:A\subseteq F\}$.

\begin{lemma}\label{IrrSupIsSmax}
  Let $\mathcal{F}\subseteq\Sigma\Gamma\Sigma\mathbb{M}$ be an irreducible subfamily consisting of some $\surd$-max Scott closed subsets of $\mathbb{M}$
   Then, $\sup(\mathcal{F})$ is $\surd$-max.
\end{lemma}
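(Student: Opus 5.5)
The plan is to reduce to Lemma \ref{SupOfDirectedSurdMax} by passing from the irreducible family $\mathcal{F}$ to a directed one. By Fact \ref{FactOfGammaC*}(i), $\sup(\mathcal{F})=\overline{\bigcup\mathcal{F}}$. Consider the family
\[
\mathcal{D}:=\{F_1\cup\cdots\cup F_n:F_1,\ldots,F_n\in\mathcal{F},\ n\in\mathbb{N}\}.
\]
This family is directed. It has the same sup $\overline{\bigcup\mathcal{F}}$, since finite unions of Scott closed sets are Scott closed and lie below $\overline{\bigcup\mathcal{F}}$. So it suffices to show that every member of $\mathcal{D}$ is $\surd$-max, and then apply Lemma \ref{SupOfDirectedSurdMax}.

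The key step, and the main obstacle, is to show that $F_1\cup F_2$ is $\surd$-max for $F_1,F_2\in\mathcal{F}$. Irreducibility must enter here, because the union of two $\surd$-max sets need not be $\surd$-max. Take, for instance, $\da\surd$ and $\da 0$.

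Suppose $x\surd\in F_1$ and $y\in F_2\cap\mathbb{C}$ with $x<y$, and choose $x0\le y$ (or $x1\le y$) as in the proof of Lemma \ref{SupOfDirectedSurdMax}. I will use the Scott open subsets of $\Gamma\Sigma\mathbb{M}$ of the form $\mathcal{U}_a:=\{G\in\Gamma\Sigma\mathbb{M}:a\in G\}$ for compact $a\in\mathcal{K}(\mathbb{M})$. By Fact \ref{FactOfGammaC*}(iii), $\da a$ is compact in $\Gamma\Sigma\mathbb{M}$, so $\mathcal{U}_a=\dua(\da a)$ is Scott open. Then $F_1\in\mathcal{U}_{x\surd}\cap\mathcal{F}$ and $F_2\in\mathcal{U}_{x0}\cap\mathcal{F}$. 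Irreducibility of $\mathcal{F}$ with respect to the Scott topology of $\Gamma\Sigma\mathbb{M}$ gives some $F\in\mathcal{F}$ containing both $x\surd$ and $x0$. That $F$ violates the $\surd$-max property, which is a contradiction. The case $x\surd\in F_2$, $y\in F_1$ is symmetric, and two points of the same $F_i$ cannot be a witness since $F_i$ is $\surd$-max.

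For finite unions of $n$ members the same argument works verbatim. A violating pair $x\surd, y$ lies in some $F_i\cup F_j$, and the argument above rules this out. Alternatively, one can skip $\mathcal{D}$ and argue directly on $\overline{\bigcup\mathcal{F}}$. A violating pair there yields $x\surd\in\ua x\surd\cap\overline{\bigcup\mathcal{F}}$ and $y\in\ua x0\cap\overline{\bigcup\mathcal{F}}$. Since $\ua x\surd$ and $\ua x0$ are Scott open, some $F_1\in\mathcal{F}$ meets $\ua x\surd$, i.e.\ contains $x\surd$ (it is maximal), and some $F_2\in\mathcal{F}$ meets $\ua x0$, hence contains $x0$ because $F_2$ is a lower set. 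Irreducibility then gives a single $F\in\mathcal{F}$ containing both, a contradiction. I would present this direct version, with the openness of $\mathcal{U}_a$ as the only point needing justification.
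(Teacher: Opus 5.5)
Your proposal is correct, and your direct version takes a shorter route than the paper.

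The paper never isolates a single violating pair. It argues in four steps:
\begin{enumerate}[(i)]
  \item It takes compact $F_1'\subseteq F_1$, $F_2'\subseteq F_2$ in $\Gamma\Sigma\mathbb{M}$ and uses irreducibility on $\Uparrow F_1'$, $\Uparrow F_2'$ to find $F_3\in\mathcal{F}$ with $F_1'\cup F_2'\subseteq F_3$.
  \item From this it concludes that $F_1'\cup F_2'$ is $\surd$-max by Lemma~\ref{SurdMaxAreLower}.
  \item It lifts this to $F_1\cup F_2$ via Fact~\ref{FactOfGammaC*}(iv) and Lemma~\ref{SupOfDirectedSurdMax}.
  \item It asserts the same for finite unions and applies Lemma~\ref{SupOfDirectedSurdMax} again to the directed family of finite unions.
\end{enumerate}

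You instead take one violating pair $x\surd, y$ in $\overline{\bigcup\mathcal{F}}$. You pull it back into members of $\mathcal{F}$ through the Scott open sets $\ua x\surd=\{x\surd\}$ and $\ua x0$. You then use irreducibility only for the two principal opens $\Uparrow(\da x\surd)$ and $\Uparrow(\da x0)$. In effect you rerun the proof of Lemma~\ref{SupOfDirectedSurdMax} with irreducibility in place of directedness.

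What your route buys is economy. You need only Fact~\ref{FactOfGammaC*}(i)--(iii), and you never need the passage from two members to finitely many, which the paper leaves implicit. Your observation that any violation already lives inside a pairwise union is exactly what justifies that passage in your first, $\mathcal{D}$-based variant.

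What the paper's route buys is modularity. It uses only that the $\surd$-max Scott closed sets form a lower family closed under directed sups, i.e.\ a Scott closed subset of $\Gamma\Sigma\mathbb{M}$. From that viewpoint the lemma is even a one-liner. $\Gamma\Sigma\mathbb{M}$ is an algebraic domain, hence sober, so $\sup(\mathcal{F})\in\overline{\mathcal{F}}$ by Remark~\ref{SoberEquiva}. Since $\overline{\mathcal{F}}$ is contained in that Scott closed family, $\sup(\mathcal{F})$ is $\surd$-max.
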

\begin{proof}
  Suppose $F_1,F_2\in\mathcal{F}$.
   For every $F_1',F_2'$ of $\mathcal{K}(\Gamma\Sigma\mathbb{M})$ with $F_i'\subseteq F_i$ for $i=1,2$, by Lemma \ref{SurdMaxAreLower}, $F_1',F_2'$ are $\surd$-max.
   Since $\Uparrow\!\! F_1',\Uparrow\!\! F_2'$ are Scott open subset of $\Gamma\Sigma\mathbb{M}$ and $F_i\in\Uparrow\!\! F_i'\cap\mathcal{F}$ for $i=1,2$, there is a $F_3\in\Uparrow\!\! F_1'\cap\Uparrow\!\! F_2'\cap\mathcal{F}$.
   It follows $F_1'\cup F_2'\subseteq F_3$.
   Because $F_3\in\mathcal{F}$ is $\surd$-max, $F_1'\cup F_2'$ is $\surd$-max.
   Thus,
   \begin{equation*}
     F_1\cup F_2=\bigsqcup\{F_1'\cup F_2':F_1',F_2'\in\mathcal{K}(\Gamma\Sigma\mathbb{M}),F_1'\subseteq F_1, F_2'\subseteq F_2\}
   \end{equation*}
   is $\surd$-max by Fact \ref{FactOfGammaC*} and Lemma \ref{SupOfDirectedSurdMax}.
   We conclude that the union of finitely many Scott closed subsets of $\mathcal{F}$ is $\surd$-max.
   Moreover,
   \begin{equation*}
     \sup(\mathcal{F})=\bigsqcup\{F_1\cup\cdots\cup F_n:F_1,\ldots,F_n\in\mathcal{F}\}
   \end{equation*}
   is $\surd$-max.
\end{proof}

\begin{proposition}\label{CVXIsSober}
  If $X$ is a sober space, then so is $\mathcal{CV}^\surd X$.
\end{proposition}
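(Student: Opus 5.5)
We use Remark \ref{SoberEquiva}: it suffices to show that every irreducible subset $\mathcal{F}\subseteq\mathcal{CV}^\surd X$ has a sup in the specialization order and converges to it.

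\textbf{Step 1: the valuation component.} By Corollary \ref{ProjectionOfIrrIsIrr}, $[\mathcal{F}]_1$ is irreducible in $\mathcal{V}_{\leq 1}\Sigma\mathbb{M}$, which is a domain and hence sober. So $\mu^*:=\sup([\mathcal{F}]_1)$ exists. By Lemma \ref{SuppSup=SupSupp},
$$\supp(\mu^*)=\overline{\bigcup\{\supp(\mu):(\mu,f)\in\mathcal{F}\}}.$$
The family $\{\supp(\mu)\}$ is the image of $[\mathcal{F}]_1$ under the continuous map $\supp$, so it is irreducible by Lemma \ref{ImageOfIrr}. Lemma \ref{IrrSupIsSmax} then shows that $\supp(\mu^*)$ is $\surd$-max.

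\textbf{Step 2: the map component.} Fix $z\in\supp(\mu^*)$. First assume $z\in\mathcal{K}(\mathbb{M})$. Then $\ua z$ is Scott open and meets $\supp(\mu^*)$, so it meets $\bigcup\supp(\mu)$. Hence $z\in\supp(\mu)$ for some member, since supports are lower sets.

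Put $\mathcal{F}_z=\{f(z):(\mu,f)\in\mathcal{F},z\in\supp(\mu)\}$. We claim $\mathcal{F}_z$ is irreducible in $X$. Let $V_1,V_2$ be open sets meeting $\mathcal{F}_z$. Then $\langle\ua z,0,V_1\rangle$ and $\langle\ua z,0,V_2\rangle$ meet $\mathcal{F}$, because $z\in\supp(\mu)$ forces $\mu(\ua z)>0$. By irreducibility some member lies in both, which gives a point of $\mathcal{F}_z$ in $V_1\cap V_2$.

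Since $X$ is sober, define $f^*(z)=\sup\mathcal{F}_z$; it is a limit of $\mathcal{F}_z$.

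For a non-compact $z$, that is, an infinite string, set $f^*(z)=\sup\{f^*(z'):z'<z \text{ finite}\}$. This family is directed, because $f^*$ is monotone on compact points: $z_1\le z_2$ gives $\mathcal{F}_{z_1}$ dominated elementwise by $\mathcal{F}_{z_2}$, since $z_1\in\supp(\mu)$ whenever $z_2$ is. Directed sups exist in the sober space $X$.

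The map $f^*$ is order-preserving. Continuity on $\supp(\mu^*)$ with the Scott topology reduces to two facts. First, $f^*$ preserves directed sups of chains of finite prefixes; on the tree $\mathbb{M}$, directed sets are essentially chains. Second, every open set in the algebraic domain is a union of sets $\ua k$ with $k$ compact. We must also check that for a $z\in\supp(\mu)$ that is an infinite string, $f(z)\le f^*(z)$. This follows from the continuity of $f$ on $\supp(\mu)$ together with $X$ being a d-space.

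\textbf{Step 3: the sup and convergence.} By the specialization-order characterization proposition, $(\mu^*,f^*)$ is an upper bound of $\mathcal{F}$, and any upper bound $(\upsilon,g)$ dominates it componentwise. So it is the sup.

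For convergence, take a subbasic neighbourhood $\langle\ua z,r,V\rangle$ of $(\mu^*,f^*)$, with $z$ compact. Since $[\mathcal{F}]_1$ converges to $\mu^*$ in the weak topology, $\langle\ua z>r\rangle$ meets $[\mathcal{F}]_1$. Since $\mathcal{F}_z$ converges to $f^*(z)$, $V$ meets $\mathcal{F}_z$. Irreducibility of $\mathcal{F}$ then gives a single member lying in both subbasic opens.

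Convergence to a point means every open neighbourhood of the point meets $\mathcal{F}$. For a basic open, which is a finite intersection, we apply irreducibility repeatedly and conclude.

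\textbf{Main obstacle.} The delicate step is Step 2 at infinite strings. We must check that $f^*$ is partially continuous there and genuinely dominates every $f$ at those points. If the chain definition causes trouble, the fallback is to define $f^*(z)$ uniformly as $\sup$ of $\{f(z'):z'\le z, z'\in\supp(\mu),(\mu,f)\in\mathcal{F}\}$, and to prove that this set is irreducible by the same subbase argument, with the intersection of finitely many $\langle\ua z'_i,0,V_i\rangle$ handled via the prefix chain below $z$.
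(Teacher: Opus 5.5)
Your Steps 1 and 3 follow the paper's proof: the same use of Corollary~\ref{ProjectionOfIrrIsIrr} and Lemmas~\ref{SuppSup=SupSupp} and~\ref{IrrSupIsSmax}, and essentially the same convergence argument via $\langle\ua z,r,X\rangle\cap\langle\ua z,0,V\rangle\subseteq\langle\ua z,r,V\rangle$. Step~2, however, has a gap: the reason you give for monotonicity of $f^*$ on compact points does not establish it.

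From ``$z_1\in\supp(\mu)$ whenever $z_2\in\supp(\mu)$'' you only get that every element $f(z_2)$ of $\mathcal{F}_{z_2}$ lies \emph{above} some element $f(z_1)$ of $\mathcal{F}_{z_1}$. That is the wrong direction for comparing suprema. A member with $z_1\in\supp(\mu)$ but $z_2\notin\supp(\mu)$ contributes a value to $\mathcal{F}_{z_1}$ that nothing in your argument bounds.

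This really happens. In $\mathcal{CV}^\surd\Sigma\overline{\mathbb{N}}$, take the set consisting of $(\delta_\epsilon,\{\epsilon\mapsto\top\})$ together with all $(\lambda\delta_0+(1-\lambda)\delta_\epsilon,\{\epsilon\mapsto n,0\mapsto n\})$ for $0<\lambda<1$ and $n\in\mathbb{N}$. It is irreducible, since its closure is $\da(\delta_0,\{\epsilon\mapsto\top,0\mapsto\top\})$. Yet $\top\in\mathcal{F}_\epsilon$ lies below no element of $\mathcal{F}_0=\mathbb{N}$.

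Three later steps rest on this monotonicity: that $\{f^*(z'):z'<z\}$ is directed, that $f^*$ preserves order, and hence that $f^*$ is partially continuous. So Step~2 is incomplete as written.

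The repair is one more use of irreducibility, this time across two points. Take $f(z_1)\in\mathcal{F}_{z_1}$ and an open $V\ni f(z_1)$.
\begin{itemize}
\item Both $\langle\ua z_1,0,V\rangle$ and $\langle\ua z_2,0,X\rangle$ meet $\mathcal{F}$; the second does because $z_2$ lies in some member's support.
\item So some $(\upsilon,g)\in\mathcal{F}$ lies in both, giving $z_2\in\supp(\upsilon)$ and $g(z_2)\geq g(z_1)\in V$.
\item Hence $V$ meets $\mathcal{F}_{z_2}$, and so $f(z_1)\in\overline{\mathcal{F}_{z_2}}=\da f^*(z_2)$.
\end{itemize}
With this inserted, your argument goes through, and your $f^*$ coincides with the paper's $k$.

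The paper avoids the two-stage construction altogether. It defines $k(y)=\sup I_y$ uniformly, where $I_y$ collects all values $f(x)$ with $x\le y$ compact, $x\in\supp(\mu)$ and $(\mu,f)\in\mathcal{F}$. It proves $I_y$ irreducible by exactly the trick above: a common member $g$ with $g(x_1)\le g(x_2)$ along the chain $\da y$. Partial continuity then follows directly from $k^{-1}(V)=\{y:I_y\cap V\neq\emptyset\}$, which is a union of sets $\ua x$ intersected with the domain.

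That is your fallback, provided $z'$ is restricted to compact elements as the subbase requires. It is the cleaner route, because monotonicity and continuity at infinite strings come for free. Your explicit check that $f(z)\le f^*(z)$ at infinite strings, which the upper-bound claim needs, is a point the paper leaves implicit.
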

  \tr{We prove the sobriety by means of Remark \ref{SoberEquiva}.}
\begin{proof}
  Let $\mathcal{F}$ be an irreducible subset of $\mathcal{CV}^\surd X$, and let $I_y$ denote the subset
  \begin{equation*}
    \{f(x):(\mu,f)\in\mathcal{F},x\in\da y\cap \supp(\mu)\cap\mathcal{K}(\mathbb{M})\}
  \end{equation*}
  for every $y\in \supp(\sup([\mathcal{F}]_1))$.

  Suppose that the open subsets $U_1,U_2\in\mathcal{O}X$ meet $I_y$.
   Then, for $i=1,2$, there are $(\mu_i,f_i)\in\mathcal{F}$ and $x_i\in\da y\cap \supp(\mu_i)\cap\mathcal{K}(\mathbb{M})$ such that $f_i(x_i)\in U_i\cap I_y$.
   Pick arbitrary non-negative real numbers $\varepsilon_1,\varepsilon_2$ respectively and strictly less than $\mu_1(\ua x_1),\mu_2(\ua x_2)$.
   It implies $(\mu_i,f_i)\in\langle\ua x_i,\mu_i(\ua x_i)-\varepsilon_i,U_i\rangle$ for $i=1,2$.
   Since $\mathcal{F}$ is irreducible, there is a
   \begin{equation*}
     (\upsilon,g)\in\mathcal{F}\cap\langle\ua x_1,\mu_1(\ua x_1)-\varepsilon_1,U_1\rangle\cap\langle\ua x_2,\mu_2(\ua x_2)-\varepsilon_2,U_2\rangle.
   \end{equation*}
   Therefore, $g(x_1)\in U_1,g(x_2)\in U_2$.
   Because $\da y$ is a chain, without loss of generality, we assume $x_1\leq x_2$.
   By the partial continuity of $g$, $g(x_1)$ is less than $g(x_2)$; then, $g(x_2)\in U_1$ holds.
   It is easy to verify $g(x_2)\in I_y$.
   Hence, $g(x_2)\in I_y\cap U_1\cap U_2$, $I_y$ is an irreducible subset of $X$.
   Thus, $\sup(I_y)$ exists and $\overline{I_y}$ is equal to $\da \sup(I_y)$.

  Define a partial map $k:\mathbb{M}\rightharpoonup X$ with $dom(k)=\supp(\sup([\mathcal{F}]_1))$ by $k(y)=\sup(I_y)$.
   For every $V\in\mathcal{O}X$, the inverse
   \begin{align}
     \tr{k^{-1}(V)} & = \{y:k(y)\in V\}\nonumber \\
      & = \{y:\sup(I_y)\in V\}\nonumber \\
      & = \{y:I_y\cap V\neq\emptyset\}\nonumber \\
      & = \{y:\exists(\mu,f)\in\mathcal{F},x\in\da y\cap \supp(\mu)\cap\mathcal{K}(\mathbb{M})\text{ s.t. }f(x)\in V\}\nonumber \\
      & = \bigcup\{\ua x\cap \supp(\sup([\mathcal{F}]_1)):\exists(\mu,f)\in\mathcal{F},x\in \supp(\mu)\cap\mathcal{K}(\mathbb{M})\text{ s.t. }f(x)\in V\} \nonumber\\
      & = \supp(\sup([\mathcal{F}]_1))\cap\bigcup\{\ua x:\exists(\mu,f)\in\mathcal{F},x\in \supp(\mu)\cap\mathcal{K}(\mathbb{M})\text{ s.t. }f(x)\in V\}\nonumber
   \end{align}
   is an relative Scott open subset of $\supp(\sup([\mathcal{F}]_1))$.
   Thus, $k$ is partially continuous.
   By Proposition \ref{ProjectionOfIrrIsIrr}, Lemmas \ref{SuppSup=SupSupp} and \ref{IrrSupIsSmax}, $\supp(\sup([\mathcal{F}]_1))$ is $\surd$-max.
   We conclude that $(\sup([\mathcal{F}]_1),k)$ is a $\surd$-max continuous random variable on $X$.

  Assume that $(\pi,w)\in\mathcal{CV}^\surd X$ is greater than each $(\mu,f)$ of $\mathcal{F}$.
   Obviously, $\pi\geq \sup([\mathcal{F}]_1)$.
   Thus, $\supp(\pi)$ contains $\supp(\sup([\mathcal{F}]_1))$.
   For each $y\in \supp(\sup([\mathcal{F}]_1))$, the chain $\da y\cap\mathcal{K}(\mathbb{M})$ converges to $y$.
   It follows that $w(y)=\bigsqcup w(\da y\cap\mathcal{K}(\mathbb{M}))$.
   For every $f(x)\in I_y$, $x$ is in $\da y\cap \supp(\mu)\cap\mathcal{K}(\mathbb{M})$ for some $(\mu,f)\in\mathcal{F}$.
   Because $(\mu,f)\leq(\pi,w)$, we obtain $f(x)\leq w(x)$.
   It implies $f(x)\leq w(x)\leq\bigsqcup w(\da y\cap\mathcal{K}(\mathbb{M}))=w(y)$.
   Hence, $k(y)=\sup(I_y)\leq w(y)$.

  It remains to show $\da(\sup([\mathcal{F}]_1),k)=\overline{\mathcal{F}}$.
   Since $\mathcal{F}$ is irreducible, we only need to verify that every member of a subbase including $(\sup([\mathcal{F}]_1),k)$ meets $\mathcal{F}$.
   Suppose that an open subset $\langle \ua z,r,W\rangle$ contains $(\sup([\mathcal{F}]_1),k)$.
   Then, the weak open subset $\langle \ua z>r\rangle$ contains  $\sup([\mathcal{F}]_1)$.
   Hence, there is a $(\upsilon_1,g_1)\in\mathcal{F}$ such that $\upsilon_1\in\langle \ua z>r\rangle$.
   Additionally, $W$ includes $k(z)=\sup(I_z)$, there is a $(\upsilon_2,g_2)\in\mathcal{F}$ with $g_2(z)\in W$.
   We can easily obtain $(\upsilon_1,g_1)\in\langle\ua z,r,X\rangle$ and $(\upsilon_2,g_2)\in\langle\ua z,\upsilon_2(\ua z)-\gamma,W\rangle$, where $\gamma$ is a non-negative real number strictly less than $\upsilon_2(\ua z)$.
   Thus, $\mathcal{F}\cap\langle\ua z,r,X\rangle\cap\langle\ua z,\upsilon_2(\ua z)-\gamma,W\rangle$ is not empty.
   Notice that $\langle\ua z,r,X\rangle\cap\langle\ua z,\upsilon_2(\ua z)-\gamma,W\rangle$ is a subset of $\langle\ua z,r,W\rangle$, we obtain $\mathcal{F}\cap\langle\ua z,r,W\rangle\neq\emptyset$.
\end{proof}
\begin{lemma}(\cite[Exercise O-5.16]{Gierz2003})
  An upper subspace of a sober space is sober.
\end{lemma}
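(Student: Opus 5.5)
We prove this through the characterization in Remark \ref{SoberEquiva}: a $T_0$ space is sober exactly when every irreducible subset has a supremum (in the specialization order) and converges to it. So let $X$ be sober and $A\subseteq X$ an upper subset with the relative topology. $A$ is $T_0$ as a subspace of a $T_0$ space. Its specialization order is the restriction of the order of $X$, because the open sets of $A$ are the traces $U\cap A$ with $U\in\mathcal{O}X$.

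Take an irreducible subset $F\subseteq A$ for the relative topology. The first step is to show that $F$ is also irreducible in $X$. Let $U,V\in\mathcal{O}X$ meet $F$. Then $U\cap A$ and $V\cap A$ are relatively open sets meeting $F$. By relative irreducibility, $U\cap V\cap A\cap F=U\cap V\cap F\neq\emptyset$.

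Since $X$ is sober, Remark \ref{SoberEquiva} gives $y=\sup(F)$ in $X$, and $F$ converges to $y$ in $X$. The key point is that $y\in A$. Indeed, $F$ is non-empty (irreducible sets are non-empty in the convention used), so we may pick some $x\in F\subseteq A$. Then $x\leq y$, and since $A$ is an upper set, $y\in A$.

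It follows that $y$ is also the supremum of $F$ within $A$, because the two specialization orders agree. Moreover, every relative neighbourhood $U\cap A$ of $y$ comes from a neighbourhood $U$ of $y$ in $X$. Convergence in $X$ means every open $U\ni y$ meets $F$ (indeed $\overline{F}=\da y$). Hence $U\cap A$ meets $F$, so $F$ converges to $y$ in $A$. Applying Remark \ref{SoberEquiva} in the converse direction shows that $A$ is sober.

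There is no real obstacle. The one point needing care is the non-emptiness of irreducible sets; the definition given in the paper does not explicitly exclude $\emptyset$. If the empty set were allowed, $\sup(\emptyset)$ would be the least element of $X$, which need not lie in $A$. So I would state explicitly that irreducible sets are taken to be non-empty, as is standard and as Remark \ref{SoberEquiva} implicitly requires.
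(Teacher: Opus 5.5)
Your proof is correct. The paper gives no argument of its own here; it only cites \cite[Exercise O-5.16]{Gierz2003}. What you have written is the standard solution of that exercise, routed through Remark \ref{SoberEquiva}. The key points all check out: irreducibility of $F$ is the same whether $F$ is viewed in $A$ or in $X$; the point $y$ with $\overline{F}=\da y$ lies in $A$ because $A$ is an upper set and $F\neq\emptyset$; and the specialization order and the open neighbourhoods of $y$ in $A$ are traces of those in $X$.

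Remark \ref{SoberEquiva} never defines what it means for an irreducible set to ``converge'' to its sup. You read it as ``every open neighbourhood of $y$ meets $F$'', which is the reading that makes the remark's converse direction work. If you would rather not depend on that notion, run the same idea directly on irreducible closed sets. For a relatively closed irreducible $F\subseteq A$, its closure in $X$ is $\da y$ for some $y$, and $y\in A$ as before. Hence $F=\overline{F}\cap A=\da y\cap A$, which is the closure of $\{y\}$ in $A$.

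Your caveat about non-emptiness is right and unavoidable. Under the literal definition in the paper, $\emptyset$ would be an irreducible closed set that is never a point closure, so no space would be sober at all.
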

\begin{corollary}
  \tr{  For every $T_0$ space $X$, $\mathcal{CV}^\surd_1X$ is an upper subspace of $\mathcal{CV}^\surd X$.
   Thus, if $X$ is sober, $\mathcal{CV}^\surd_1X$ is a sober space.}
\end{corollary}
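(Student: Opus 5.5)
The corollary has two parts. First, $\mathcal{CV}^\surd_1X$ is an upper subset of $\mathcal{CV}^\surd X$ in the specialization order. Second, when $X$ is sober, $\mathcal{CV}^\surd_1X$ is sober.

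For the first part, I use the earlier characterization of the specialization order on $\mathcal{CV}^\surd X$: $(\mu,f)\leq(\upsilon,g)$ holds iff $\mu\leq\upsilon$ in $\mathcal{V}_{\leq 1}\Sigma\mathbb{M}$ and $f\leq g$ in $[\Sigma\mathbb{M}\rightharpoonup X]$. Suppose $(\mu,f)\in\mathcal{CV}^\surd_1X$ and $(\mu,f)\leq(\upsilon,g)$ with $(\upsilon,g)\in\mathcal{CV}^\surd X$. Then
\[
1=\mu(\mathbb{M})\leq\upsilon(\mathbb{M})\leq 1,
\]
so $\upsilon$ is normalized. Hence $(\upsilon,g)\in\mathcal{CV}^\surd_1X$, and $\mathcal{CV}^\surd_1X$ equals its own upper closure. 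By definition it carries the relative topology, so it is an upper subspace.

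For the second part, Proposition \ref{CVXIsSober} shows that $\mathcal{CV}^\surd X$ is sober whenever $X$ is sober. The cited lemma (\cite[Exercise O-5.16]{Gierz2003}) says that an upper subspace of a sober space is sober. Applying it to the upper subspace $\mathcal{CV}^\surd_1X$ gives the claim.

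There is no real obstacle here. The only point needing care is that the cited lemma requires the subspace to be upper with respect to the specialization order of the ambient space, not merely some order. This is exactly why the argument relies on the order characterization proved earlier: it lets me read off the ambient specialization order componentwise.
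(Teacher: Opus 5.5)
Your proof is correct and follows the paper's route. The paper states this corollary without a written proof, as an immediate consequence of the sobriety of $\mathcal{CV}^\surd X$ and the lemma that upper subspaces of sober spaces are sober; your check $1=\mu(\mathbb{M})\leq\upsilon(\mathbb{M})\leq 1$, using the componentwise description of the specialization order, supplies the upper-set verification that the paper leaves implicit.
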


\begin{example}\label{ExampleOfDirectedSubsetOfCV}
  \tr{Recall that $inc_{\mathbb{C}}:\mathbb{C}\subseteq\mathbb{M}$ is the inclusion.
  Let $x_n=0\overbrace{11\cdots1}^{n}0\surd$ and $\{\mu_n\}_{n\in\mathbb{N}}\subseteq^{\ua}\mathcal{V}_{\leq 1}\Sigma\mathbb{C}$ be an arbitrary ascending chain.
   Consider $\mu'_n:=\frac{1}{2^{n+2}}\delta_{x_n}+\frac{1}{2}\mathbf{1}[inc_{\mathbb{C}}[\mu_n]]$.
   Since $\delta_{x_n}$ and $\mathbf{1}[inc_{\mathbb{C}}[\mu_n]]$ concentrate on different branch, it is easy to see that
   $$\supp(\mu'_n)=\da\{x_j\}^n_{i=1}\cup \mathbf{1}(\supp(inc_{\mathbb{C}}[\mu_n]))$$
   is $\surd$-max.
   Thus $\{\mu'_n\}_{n\in\mathbb{N}}$ is a directed family that each element has a $\surd$-max support.
   Also let $f_n:\supp(\mu'_n)\ra\overline{\mathbb{N}}\times\mathbb{M}$ be the layered identity given by $f_n(x)=(n,x)$, where $\overline{\mathbb{N}}$ is natural number with the usually order and an extra top $\top$.
   Hence, $\{(\mu'_n,f_n)\}_{n\in\mathbb{N}}$ is directed in $\mathcal{CV}^\surd_1\Sigma(\overline{\mathbb{N}}\times\mathbb{M})$, and its sup is the pair consisting of $\Sigma_{n\in\mathbb{N}}\frac{1}{2^{n+2}}\delta_{x_n}+\frac{1}{2}\mathbf{1}[inc_{\mathbb{C}}[\bigsqcup_{n\in\mathbb{N}}\mu_n]]$ and the top identity $$\{x\mapsto(\top,x):x\in dom(\Sigma_{n\in\mathbb{N}}\frac{1}{2^{n+2}}\delta_{x_n}+\frac{1}{2}\mathbf{1}[inc_{\mathbb{C}}[\bigsqcup_{n\in\mathbb{N}}\mu_n]])\}.$$}
\end{example}

\begin{lemma}\label{d-continuous}
  \tr{Let $X$ be a d-space, $f:\mathbb{M}\rightharpoonup X$ a partial map whose $dom(f)$ is a Scott closed subset of $\mathbb{M}$.
   The partial map $f$ is partially continuous if it is Scott continuous from $dom(f)$ to $X$.}
\end{lemma}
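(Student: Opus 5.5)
To show $f$ is partially continuous, take an arbitrary $V\in\mathcal{O}X$. It suffices to prove that $f^{-1}(V)$ is the trace on $dom(f)$ of a Scott open subset of $\mathbb{M}$. The candidate is
\[
W:=\ua f^{-1}(V)=\{y\in\mathbb{M}:\exists x\in dom(f),\ x\leq y,\ f(x)\in V\}.
\]
The argument then has three steps.

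\textbf{Step 1: $f^{-1}(V)$ is Scott open in $dom(f)$.} Since $X$ is a d-space, $V$ is Scott open with respect to the specialization order. The map $f:dom(f)\ra X$ is Scott continuous, so it is order-preserving and preserves directed sups. Hence $f^{-1}(V)$ is an upper subset of $dom(f)$ that is inaccessible by directed sups computed in $dom(f)$.

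\textbf{Step 2: directed sups in $dom(f)$ agree with those in $\mathbb{M}$.} Because $dom(f)$ is Scott closed in $\mathbb{M}$, every directed subset $D$ of $dom(f)$ has its $\mathbb{M}$-supremum $\bigsqcup D$ inside $dom(f)$. So this supremum is also the supremum in $dom(f)$, and Scott openness relative to the subposet $dom(f)$ is the same as Scott openness with respect to the $\mathbb{M}$-sups.

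\textbf{Step 3: $W$ is Scott open and $W\cap dom(f)=f^{-1}(V)$.} Since $dom(f)$ is a lower set and $f^{-1}(V)$ is upper in $dom(f)$, we get $W\cap dom(f)=f^{-1}(V)$. The set $W$ is an upper set by definition. For inaccessibility, let $D\subseteq^{\ua}\mathbb{M}$ with $\bigsqcup D\in W$, say $x\leq\bigsqcup D$ with $x\in f^{-1}(V)$. Consider $D':=\{d\in D: d\in dom(f)\}$, or more precisely the elements $p_{dom(f)}(d)$, using the map $p_F$ with $F=dom(f)$.

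\begin{itemize}
\item Because $\mathbb{M}$ is a tree, $\da\bigsqcup D$ is a chain.
\item The map $p_{dom(f)}$ is Scott continuous, so $p_{dom(f)}(\bigsqcup D)=\bigsqcup p_{dom(f)}(D)$, and this element lies in $dom(f)$ above $x$. Here $p_{dom(f)}(\bigsqcup D)=\max(\da\bigsqcup D\cap dom(f))\geq x$, since $x\in\da\bigsqcup D\cap dom(f)$.
\item $p_{dom(f)}(D)$ is a directed subset of $dom(f)$, and its sup lies in $f^{-1}(V)$, because $f^{-1}(V)$ is upper in $dom(f)$.
\item By Step 1, some $p_{dom(f)}(d)\in f^{-1}(V)$. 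Since $p_{dom(f)}(d)\leq d$, we get $d\in W$.
\end{itemize}

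This proves that $W$ is Scott open, hence $f^{-1}(V)$ is relatively open in $dom(f)$, and $f$ is partially continuous.

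The main obstacle is Step 3. It needs the existence of the maximum in $p_F$, namely that $\da y\cap F$ has a greatest element when $F$ is Scott closed. This holds because $\da y$ is a chain and $F$ is closed under directed sups. The step also needs the Scott continuity of $p_F$; the paper states this, so I would invoke it directly.
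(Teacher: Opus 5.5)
Your proof is correct, but it takes a different route from the paper's.

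\textbf{The paper's route.} The paper never forms $\ua f^{-1}(V)$ and never uses $p_{dom(f)}$. It relies on the algebraicity of $\mathbb{M}$. For $x\in dom(f)$, the chain $\da x\cap\mathcal{K}(\mathbb{M})$ lies in $dom(f)$, because $dom(f)$ is a lower set, and its sup is $x$. Scott continuity of $f$ and Scott openness of $V$ then give a compact $y\leq x$ with $f(y)\in V$. Hence $f^{-1}(V)=dom(f)\cap\bigcup\{\ua y: y\in\mathcal{K}(\mathbb{M})\cap dom(f),\ f(y)\in V\}$, which is the trace of a union of basic Scott open sets $\ua y$.

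\textbf{Comparison.} That union is in fact exactly your $W=\ua f^{-1}(V)$, so both proofs exhibit the same open set and differ only in how its Scott openness is certified.
\begin{itemize}
\item The paper's certification is immediate, since each $\ua y$ with $y$ compact is Scott open. It uses only that compact elements approximate, and it adapts to any continuous dcpo by replacing compact elements with $\dda x$ and $\ua y$ with $\dua y$.
\item Your certification checks directed-sup inaccessibility by hand. It needs the tree structure of $\mathbb{M}$ to make $p_{dom(f)}(y)=\max(\da y\cap dom(f))$ well defined. It also needs the Scott continuity of $p_{dom(f)}$, an imported fact whose own proof rests on the same compact/maximal-element structure of $\mathbb{M}$.
\item What your route buys is an argument that never mentions approximation explicitly. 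It goes through verbatim for any dcpo and any subset $F$ that is the image of a Scott-continuous map $p$ with $p\leq id$ fixing $F$ pointwise.
\end{itemize}

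\textbf{Minor points.} Defining $p_{dom(f)}(d)$ for every $d\in D$ requires $\da d\cap dom(f)\neq\emptyset$. This holds because $dom(f)$ contains $x$, so it is a nonempty lower set and contains $\epsilon$. Also, the remark that $dom(f)$ is a lower set is not actually needed for $W\cap dom(f)=f^{-1}(V)$; upper-closedness of $f^{-1}(V)$ within $dom(f)$ suffices.
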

\begin{proof}
  \tr{Notice that every directed subset of $dom(f)$ converges to its sup, if $f$ is partially continuous, then $f$ preserves the sups of directed subsets.
    Conversely, if $f$ preserves the sups of directed subsets of $dom(f)$, then for every open subset $V$ of $X$,
    \begin{align*}
      f^{-1}(V) & = \{x\in dom(f):f(x)\in V\} \\
       & = \{x\in dom(f):f(\bigsqcup\da x\cap\mathcal{K}(\mathbb{M}))\in V\}\\
       & = \{x\in dom(f):\bigsqcup f(\da x\cap\mathcal{K}(\mathbb{M}))\in V\}\\
       & = \bigcup\{\ua y\cap dom(f):y\in \mathcal{K}(\mathbb{M}),f(y)\in V\},
    \end{align*}
  which is Scott open in $dom(f)$.}
\end{proof}

The following result is similar to Proposition \ref{CVXIsSober}.
\begin{remark}\label{UpperSubspaceOfD-space}
  Every upper subspace of a d-space is a d-space.
\end{remark}
\begin{proposition}\label{CVXCV1XAreDspace}
  If $X$ is a d-space, then so are $\mathcal{CV}^\surd X$ as well as $\mathcal{CV}^\surd_1 X$.
\end{proposition}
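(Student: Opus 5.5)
The plan is to mirror the proof of Proposition \ref{CVXIsSober}, replacing irreducible subsets by directed subsets, and then to get the normalized case from Remark \ref{UpperSubspaceOfD-space}. $\mathcal{CV}^\surd X$ is already known to be $T_0$. It therefore remains to show two things: every directed subset $\mathcal{D}\subseteq\mathcal{CV}^\surd X$ has a sup in the specialization order, and every open subset is Scott open. For the latter it suffices to show that whenever a subbasic open $\langle\ua z,r,V\rangle$ contains $\sup\mathcal{D}$, it meets $\mathcal{D}$. Indeed, a finite intersection of subbasic opens containing the sup is then met by each factor, and directedness gives a common element of $\mathcal{D}$.

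\textbf{Constructing the candidate sup.} Let $\mathcal{D}$ be directed. Then $[\mathcal{D}]_1$ is directed in the domain $\mathcal{V}_{\leq 1}\Sigma\mathbb{M}$, so $\nu:=\bigsqcup[\mathcal{D}]_1$ exists. Since $\supp$ is Scott continuous, $\supp(\nu)=\overline{\bigcup\{\supp(\mu):(\mu,f)\in\mathcal{D}\}}$, and this set is $\surd$-max by Lemma \ref{SupOfDirectedSurdMax}.

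For $y\in\supp(\nu)$, put
\begin{equation*}
I_y=\{f(x):(\mu,f)\in\mathcal{D},\ x\in\da y\cap\supp(\mu)\cap\mathcal{K}(\mathbb{M})\}.
\end{equation*}
This set is nonempty, because $\epsilon$ lies in every support, or $\nu=0$ and the case is trivial. It is directed. Given $f_1(x_1)$ and $f_2(x_2)$, choose $(\mu,f)\in\mathcal{D}$ above both; without loss of generality $x_1\leq x_2$, since $\da y$ is a chain. Then $f_i(x_i)\leq f(x_i)\leq f(x_2)$, and $f(x_2)\in I_y$. Since $X$ is a d-space, $k(y):=\bigsqcup I_y$ exists.

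\textbf{Properties of $k$.} The map $k$ is order preserving, because $I_y\subseteq I_{y'}$ whenever $y\leq y'$. To get partial continuity I would use Lemma \ref{d-continuous}, which requires $k$ to preserve directed sups inside $\supp(\nu)$. In $\mathbb{M}$ a directed set either has a maximum or its sup is an infinite string $y$ approximated by its finite prefixes. In the second case $I_y=\bigcup_{x\in\da y\cap\mathcal{K}(\mathbb{M})}I_x$, because every compact element of $\da y$ is one of those prefixes, so $k(y)=\bigsqcup k(\da y\cap\mathcal{K}(\mathbb{M}))$. Hence $(\nu,k)\in\mathcal{CV}^\surd X$.

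\textbf{Upper bound and leastness.} For each $(\mu,f)\in\mathcal{D}$ and $y\in\supp(\mu)$, continuity of $f$ on the Scott closed set $\supp(\mu)$ gives $f(y)=\bigsqcup f(\da y\cap\mathcal{K}(\mathbb{M}))\leq k(y)$, so $(\nu,k)$ is an upper bound. Leastness is argued exactly as in Proposition \ref{CVXIsSober}: if $(\pi,w)$ is an upper bound, then $\pi\geq\nu$, each element of $I_y$ is below $w(x)\leq w(y)$, and hence $k\leq w$.

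\textbf{Scott openness.} Suppose $(\nu,k)\in\langle\ua z,r,V\rangle$. Then $\nu(\ua z)>r$, and the weak topology is Scott, so some $\mu_1\in[\mathcal{D}]_1$ has $\mu_1(\ua z)>r$. Also $k(z)\in V$ with $z$ compact, so $I_z\cap V\neq\emptyset$ because $V$ is Scott open. Thus some $(\mu_2,f_2)\in\mathcal{D}$ has $f_2(x)\in V$ for some $x\leq z$. Take $(\mu,f)\in\mathcal{D}$ above both. Then $\mu(\ua z)>r$, so $z\in\supp(\mu)$, and $f(z)\geq f_2(x)$ lies in $V$. Hence $(\mu,f)\in\langle\ua z,r,V\rangle$.

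For $\mathcal{CV}^\surd_1X$, note that it is an upper subspace of $\mathcal{CV}^\surd X$, so Remark \ref{UpperSubspaceOfD-space} applies.

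\textbf{Main obstacle.} I expect the delicate step to be the partial continuity of $k$, namely Scott continuity on $\supp(\nu)$ at infinite strings. The key point is that the elements strictly below any point of $\mathbb{M}$ are compact, and this is what makes the reduction to finite prefixes work.
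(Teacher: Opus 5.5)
Your proposal is correct and follows essentially the same route as the paper's sketched proof: the same directed sets $I_y$, the candidate sup $(\bigsqcup[\mathcal{D}]_1,k)$ with partial continuity of $k$ obtained from Lemma~\ref{d-continuous}, the same subbasic argument for Scott openness, and Remark~\ref{UpperSubspaceOfD-space} for $\mathcal{CV}^\surd_1X$. You also supply details the sketch leaves out: the explicit upper-bound check, the reduction of directed sups in $\mathbb{M}$ to chains of finite prefixes, and the observation that the witness in $I_z\cap V$ lies at some $x\le z$ rather than at $z$ itself.
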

\begin{proof}(Sketch)
  Let $\mathcal{F}\subseteq^{\ua}\mathcal{CV}^\surd X$ be a directed subset.
   We can easily verify that $I_y=\{f(x):(\mu,f)\in\mathcal{F},x\in\da y\cap \supp(\mu)\cap\mathcal{K}(\mathbb{M})\}$ is a directed subset of $X$ for every $y\in \supp(\sup([\mathcal{F}]_1))$, and that $(\sup([\mathcal{F}]_1),k)$ as we defined in Proposition \ref{CVXIsSober} is the sup of $\mathcal{F}$ by Lemma \ref{d-continuous}.
   We prove that every $\langle\ua z,r,V\rangle$ is Scott open.
   Suppose that $(\sup([\mathcal{F}]_1),k)\in\langle\ua z,r,V\rangle$.
   Since $\langle\ua z>r\rangle,V$ are Scott-open, there is a $(\mu_1,f_1)\in\mathcal{F}$ such that $\mu_1\in\langle\ua z>r\rangle$ and a $(\mu_2,f_2)\in\mathcal{F}$ such that $f_2(z)\in I_z\cap V$.
   Hence, by the directedness of $\mathcal{F}$, there exists a $(\upsilon,g)\in\mathcal{F}$ greater than $(\mu_1,f_1),(\mu_2.f_2)$.
   Obviously, $\upsilon(\ua z)\geq\mu_1(\ua z)>r$ and $g(z)\geq f_2(z)\in V$ hold.
   Thus, $(\upsilon,g)$ is included in $\langle\ua z,r,V\rangle$.
   We conclude that $\mathcal{CV}^\surd X$ is a d-space.
%   Notice that $\mathcal{CV}^\surd_1X$ is an upper subspace of $\mathcal{CV}^\surd X$, which is closed under the sups of directed subsets, $\mathcal{CV}^\surd_1X$ is a subdcpo.
   Notice that $\mathcal{CV}^\surd_1 X$ is an upper subspace of $\mathcal{CV}^\surd X$, $\mathcal{CV}^\surd_1X$ is a d-space \tr{by Remark \ref{UpperSubspaceOfD-space}.}
\end{proof}

\subsection{\tr{Sobrification and D-completion}}

\tr{Next, we demonstrate connections between $\mathcal{CV}^\surd X$ and $\mathcal{SV}^\surd X$.}

We denote $\lvert s\rvert$ the length of a string $s$.
For every natural number $n\in\mathbb{N}$, the notation $\mathbb{M}_n$ denotes the subposet of $\mathbb{M}$ that consists of all strings with the length less than $n$, that is, $\mathbb{M}_n:=\{x\in\mathbb{M}:\lvert x\rvert\leq n\}$.
 Obviously, $\mathbb{M}_n$ is a Scott closed subset.
 Hence, for every $T_0$ space $X$, we let $P^X_n:\mathcal{CV}^\surd_1X\ra\mathcal{SV}^\surd_1X$ be the map given by
 \begin{equation*}
   P^X_n(\mu,f)=(p_{\mathbb{M}_n}[\mu],f\mid_{\mathbb{M}_n}),
 \end{equation*}
 where $f\mid_{\mathbb{M}_n}$ is the partial map $f$ restricted to $\mathbb{M}_n$.
 Since $dom(f)=\supp(\mu)$, we obtain
 \begin{equation*}
   dom(f\mid_{\mathbb{M}_n})=\supp(\mu)\cap\mathbb{M}_n.
 \end{equation*}
 \tr{By Remark \ref{UseP_F[]}, $\supp(p_{\mathbb{M}_n}[\mu])=\supp(\mu)\cap\mathbb{M}_n$, and it is $\surd$-max by Lemma \ref{SurdMaxAreLower}.
 Thus $P^X_n(\mu,f)$ is well-defined.}

\begin{remark}
  \tr{If $s\in\mathbb{M}$ and $|s|>n$, then $p_{\mathbb{M}_n}(s)$ is the length-$n$ prefix of $s$; if $|s|\leq n$, then $p_{\mathbb{M}_n}(s)=s$.
   A simple example is $p_{\mathbb{M}_2}(1011\surd)=10$.}
\end{remark}
% Notice that $p_{\mathbb{M}_n}[\mu]=\mathbb{M}_n\cap supp(\mu)=p_{\mathbb{M}_n}(supp(\mu))$.

\begin{proposition}
  For every $T_0$ space $X$ and every $n\in\mathbb{N}$, the map $P^X_n$ is continuous.
\end{proposition}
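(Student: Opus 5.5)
We show continuity of $P^X_n$ on the subbase of $\mathcal{SV}^\surd_1X$. Fix $z\in\mathcal{K}(\mathbb{M})$, $r\in[0,1)$ and $V\in\mathcal{O}X$. The goal is to prove that
\[
(P^X_n)^{-1}(\langle\ua z,r,V\rangle'_1)
\]
is open in $\mathcal{CV}^\surd_1X$, and in fact we aim to show it is itself a subbasic set (or empty).

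Start from the identity $p_{\mathbb{M}_n}[\mu](\ua z)=\mu(p_{\mathbb{M}_n}^{-1}(\ua z))=\mu(\ua(\ua z\cap\mathbb{M}_n))$, which is Remark \ref{UseP_F[]}. Since $\ua z$ is Scott open, this computation is legitimate. Then split according to the length of $z$.

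\textbf{Case $|z|>n$.} Here $\ua z\cap\mathbb{M}_n=\emptyset$, so $p_{\mathbb{M}_n}[\mu](\ua z)=0\le r$. Also $z\notin\mathbb{M}_n\supseteq dom(f\mid_{\mathbb{M}_n})$. Hence the preimage is empty.

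\textbf{Case $|z|\le n$.} Here $z\in\ua z\cap\mathbb{M}_n$, so $\ua(\ua z\cap\mathbb{M}_n)=\ua z$, and therefore $p_{\mathbb{M}_n}[\mu](\ua z)=\mu(\ua z)$. Moreover $z\in\supp(\mu)$ if and only if $z\in\supp(\mu)\cap\mathbb{M}_n$, and in that case $f\mid_{\mathbb{M}_n}(z)=f(z)$. It follows that
\[
(P^X_n)^{-1}(\langle\ua z,r,V\rangle'_1)=\langle\ua z,r,V\rangle_1,
\]
which is open in $\mathcal{CV}^\surd_1X$.

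One should also record the membership condition for $z$. A random variable $(\mu,f)$ lies in $\langle\ua z,r,V\rangle$ only when $f(z)$ is defined. If $\mu(\ua z)>r\ge0$ then $\ua z\cap\supp(\mu)\neq\emptyset$, so $z\in\supp(\mu)$ because $\supp(\mu)$ is a lower set. Thus no extra condition is hidden in the subbasic sets.

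The only step needing care is the identity $\ua(\ua z\cap\mathbb{M}_n)=\ua z$ for $|z|\le n$ (together with the empty case). This is immediate because $\mathbb{M}_n$ is a lower set containing $z$. Since preimages of subbasic open sets are open, $P^X_n$ is continuous.
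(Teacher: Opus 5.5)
Your proof is correct and takes the same approach as the paper: compute the preimage of each subbasic set $\langle\ua z,r,V\rangle'_1$, which is either $\langle\ua z,r,V\rangle_1$ or $\emptyset$ depending on the length of $z$. Your case split ($\langle\ua z,r,V\rangle_1$ when $|z|\le n$, empty when $|z|>n$) is the right one for $\mathbb{M}_n=\{x:|x|\le n\}$; the paper's displayed formula has the inequality reversed ($|z|\geq n$), a slip that does not affect continuity since both sets are open.
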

\begin{proof}
  It is evident that
  \begin{equation*}
    (P^X_n)^{-1}(\langle\ua z,r,V\rangle'_1)=
     \begin{cases}
        \langle\ua z,r,V\rangle_1, & \mbox{if $\lvert z\rvert\geq n$};  \\
        \emptyset, & \mbox{otherwise}.
     \end{cases}
  \end{equation*}
\end{proof}

\tr{The following result is directly induced by properties (C), (E) of $p_F$ in \cite[page 3]{Goubault2011}.}
\begin{lemma}\label{SupPn=Id}
  $\{p_{\mathbb{M}_n}[-]\}_{n\in\mathbb{N}}$ is a chain of $[\mathcal{V}_{\leq 1}\Sigma\mathbb{M}\ra\mathcal{V}_{\leq 1}\Sigma\mathbb{M}]$, and for every $\mu\in\mathcal{V}_{\leq1}\Sigma\mathbb{M}$, $\bigsqcup_{n\in\mathbb{N}}p_{\mathbb{M}_n}[\mu]=\mu$.
\end{lemma}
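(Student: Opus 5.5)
The plan is to work at the level of the maps $p_{\mathbb{M}_n}:\mathbb{M}\ra\mathbb{M}$ and then transport everything to valuations through $p[-]$, which is monotone in $p$ and Scott continuous.

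\textbf{Chain property.} First, $\{p_{\mathbb{M}_n}\}_{n\in\mathbb{N}}$ is an ascending chain in $[\mathbb{M}\ra\mathbb{M}]$. Indeed, $p_{\mathbb{M}_n}(s)$ is the length-$\min(n,|s|)$ prefix of $s$, and $\mathbb{M}_n\subseteq\mathbb{M}_{n+1}$. Hence $p_{\mathbb{M}_n}(s)\leq p_{\mathbb{M}_{n+1}}(s)$ for every $s$. Equivalently, by Remark \ref{UseP_F[]}, $p_{\mathbb{M}_n}^{-1}(U)=\ua(U\cap\mathbb{M}_n)\subseteq\ua(U\cap\mathbb{M}_{n+1})=p_{\mathbb{M}_{n+1}}^{-1}(U)$ for every Scott open $U$. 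Applying $\mu$ gives $p_{\mathbb{M}_n}[\mu](U)\leq p_{\mathbb{M}_{n+1}}[\mu](U)$. So $\{p_{\mathbb{M}_n}[-]\}$ is a chain in the pointwise order, and each member is Scott continuous by the general fact recalled before Remark \ref{p[-]properties}.

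\textbf{The sup is $\mu$.} For the second claim, fix $\mu$ and a Scott open $U$. It suffices to show
\[
\bigcup_{n}\ua(U\cap\mathbb{M}_n)=U.
\]
Once this holds, the union is a directed (increasing) union of Scott open sets, so Scott continuity of $\mu$ on $\mathcal{O}\Sigma\mathbb{M}$ gives
\[
\sup_n p_{\mathbb{M}_n}[\mu](U)=\mu(U).
\]
Since directed sups in $\mathcal{V}_{\leq 1}\Sigma\mathbb{M}$ are computed pointwise on open sets, this yields $\bigsqcup_n p_{\mathbb{M}_n}[\mu]=\mu$.

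\textbf{Proof of the set identity.} The inclusion $\subseteq$ holds because $U$ is an upper set. For $\supseteq$, take $s\in U$. Since $\mathbb{M}$ is algebraic, $s=\bigsqcup(\da s\cap\mathcal{K}(\mathbb{M}))$, and $U$ is Scott open, so some finite prefix $x\leq s$ lies in $U$. With $n=|x|$ we get $x\in U\cap\mathbb{M}_n$, and therefore $s\in\ua(U\cap\mathbb{M}_n)$.

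\textbf{Main obstacle.} The only delicate point is the justification that sups of directed families of continuous valuations are pointwise. This is standard: the pointwise sup of a directed family of continuous valuations is again a continuous valuation, since modularity passes to directed limits and sups commute. Alternatively, the statement can be quoted directly as properties (C) and (E) of $p_F$ in \cite{Goubault2011}, whose proofs carry over verbatim from $\mathbb{C}$ to $\mathbb{M}$, as the paper notes.
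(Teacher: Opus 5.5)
Your argument is correct and is essentially the paper's: the paper just cites properties (C) and (E) of $p_F$ from \cite{Goubault2011}, namely that $p_{\mathbb{M}_n}$ increases with $n$ and $\bigsqcup_n p_{\mathbb{M}_n}=id_{\mathbb{M}}$. Your identity $\bigcup_n\ua(U\cap\mathbb{M}_n)=U$ is exactly these facts unpacked and then pushed forward along $\mu$ using its Scott continuity. (You can also drop the appeal to pointwise directed sups: $\mu$ is an upper bound because $\ua(U\cap\mathbb{M}_n)\subseteq U$, and any upper bound $\nu$ satisfies $\nu(U)\geq\sup_n p_{\mathbb{M}_n}[\mu](U)=\mu(U)$.)
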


From the preceding lemma, the following proposition is straightforward.
\begin{proposition}
  Let $X$ be a d-space, $\{P^X_n\}_{n\in\mathbb{N}}$ is a chain of $[\mathcal{CV}^\surd_1X\ra\mathcal{SV}^\surd_1X]$.
\end{proposition}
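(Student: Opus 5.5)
We need to show that for $m\leq n$ and every $(\mu,f)\in\mathcal{CV}^\surd_1X$ we have $P^X_m(\mu,f)\leq P^X_n(\mu,f)$ in the specialization order of $\mathcal{SV}^\surd_1X$. Since each $P^X_n$ is already continuous by the preceding proposition, this makes $\{P^X_n\}_{n\in\mathbb{N}}$ a chain in $[\mathcal{CV}^\surd_1X\ra\mathcal{SV}^\surd_1X]$ with the pointwise specialization order.

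By the description of the specialization order of $\mathcal{CV}^\surd X$ proved earlier (it restricts to the subspace $\mathcal{SV}^\surd_1X$), it suffices to check two things:
\begin{enumerate}[(i)]
  \item $p_{\mathbb{M}_m}[\mu]\leq p_{\mathbb{M}_n}[\mu]$ in $\mathcal{V}_{\leq 1}\Sigma\mathbb{M}$;
  \item $f\mid_{\mathbb{M}_m}\leq f\mid_{\mathbb{M}_n}$ in $[\Sigma\mathbb{M}\rightharpoonup X]$.
\end{enumerate}

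For (i), Lemma \ref{SupPn=Id} states that $\{p_{\mathbb{M}_n}[-]\}_{n\in\mathbb{N}}$ is a chain of maps, and by its indexing it is increasing in $n$. Evaluating at $\mu$ gives the inequality directly. To avoid relying on how the lemma is indexed, I would also verify it by hand. By Remark \ref{UseP_F[]}, $p_{\mathbb{M}_k}[\mu](U)=\mu(\ua(U\cap\mathbb{M}_k))$. For $m\leq n$ we have $\mathbb{M}_m\subseteq\mathbb{M}_n$, so $U\cap\mathbb{M}_m\subseteq U\cap\mathbb{M}_n$, and monotonicity of $\mu$ then gives $p_{\mathbb{M}_m}[\mu](U)\leq p_{\mathbb{M}_n}[\mu](U)$ for every Scott open $U$.

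For (ii), the domains satisfy $\supp(\mu)\cap\mathbb{M}_m\subseteq\supp(\mu)\cap\mathbb{M}_n$. On the smaller domain both restrictions coincide with $f$, so $f\mid_{\mathbb{M}_m}(x)=f\mid_{\mathbb{M}_n}(x)$ there, and the order on partial maps holds trivially.

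The only point requiring care is which order on $[\mathcal{CV}^\surd_1X\ra\mathcal{SV}^\surd_1X]$ is meant. I would take it to be the pointwise specialization order, as for function spaces of continuous maps into $T_0$ spaces, so that a chain means pointwise comparability for each pair of indices. The d-space hypothesis is not needed for the inequalities themselves; it only fixes the setting in which the chain's sup, $\mathrm{id}$ by Lemma \ref{SupPn=Id}, is taken later. Everything above is immediate from monotonicity, so I expect no real obstacle.
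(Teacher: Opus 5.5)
Your proposal is correct and follows the paper's route: the paper simply calls the result straightforward from Lemma \ref{SupPn=Id}, i.e.\ monotonicity of $p_{\mathbb{M}_n}[-]$ in $n$ plus the evident monotonicity of the restrictions $f\mid_{\mathbb{M}_n}$. These are exactly your (i) and (ii), and your direct check via $p_{\mathbb{M}_k}[\mu](U)=\mu(\ua(U\cap\mathbb{M}_k))$ is a valid self-contained verification of (i). One small slip in your closing aside: the identity arises as a supremum in Proposition \ref{SupBigPn=id}, not Lemma \ref{SupPn=Id}, and it is the sup of $Inc_X\circ P^X_n$ in $[\mathcal{CV}^\surd_1X\ra\mathcal{CV}^\surd_1X]$, not of the chain in $[\mathcal{CV}^\surd_1X\ra\mathcal{SV}^\surd_1X]$ itself.
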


For every $T_0$ space $X$, \tr{we let $Inc_X$ denote the inclusion $\mathcal{SV}^\surd_1X \subseteq\mathcal{CV}^\surd_1X$, which obviously is continuous}.
%we denote $Ink_X$ the inclusion from $\mathcal{SV}^\surd_1X$ to $\mathcal{CV}^\surd_1X$, obviously, it is continuous.

\begin{proposition}\label{SupBigPn=id}
  If $X$ is a d-space, then $\bigsqcup_{n\in\mathbb{N}} Inc_X\circ P^X_n=id_{\mathcal{CV}^\surd_1X}$.
\end{proposition}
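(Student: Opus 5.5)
Fix $(\mu,f)\in\mathcal{CV}^\surd_1X$. The plan is to show that the chain $\{Inc_X\circ P^X_n(\mu,f)\}_{n\in\mathbb{N}}=\{(p_{\mathbb{M}_n}[\mu],f\mid_{\mathbb{M}_n})\}_{n\in\mathbb{N}}$ is directed in $\mathcal{CV}^\surd_1X$ and has sup $(\mu,f)$. Since $X$ is a d-space, $\mathcal{CV}^\surd_1X$ is a d-space by Proposition \ref{CVXCV1XAreDspace}. Its specialization order is the componentwise order, and directed sups are computed pointwise in the function space. The claim therefore reduces to this pointwise statement.

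First, the family is a chain. By Lemma \ref{SupPn=Id} the $p_{\mathbb{M}_n}[\mu]$ increase in $n$. Moreover $dom(f\mid_{\mathbb{M}_n})=\supp(\mu)\cap\mathbb{M}_n$ increases, and the restrictions agree with $f$ wherever they are defined. So $f\mid_{\mathbb{M}_n}\leq f\mid_{\mathbb{M}_{n+1}}$ in $[\Sigma\mathbb{M}\rightharpoonup X]$. By the description of the specialization order of $\mathcal{CV}^\surd X$, the pairs form a chain, each bounded above by $(\mu,f)$.

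Next, the sup is identified via the construction in the proof of Proposition \ref{CVXCV1XAreDspace}. The sup is $(\sup_n p_{\mathbb{M}_n}[\mu],k)$ with $k(y)=\sup I_y$. By Lemma \ref{SupPn=Id}, $\sup_n p_{\mathbb{M}_n}[\mu]=\mu$, so the domain of $k$ is $\supp(\mu)=dom(f)$. For $y\in\supp(\mu)$ we have
\[
I_y=\{f(x):x\in\da y\cap\supp(\mu)\cap\mathbb{M}_n\cap\mathcal{K}(\mathbb{M}),\,n\in\mathbb{N}\}=f(\da y\cap\mathcal{K}(\mathbb{M})),
\]
using that $\supp(\mu)$ is a lower set and every compact element has finite length. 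Then $\da y\cap\mathcal{K}(\mathbb{M})$ is a chain with sup $y$, and $f$ is partially continuous on the Scott closed set $\supp(\mu)$. By Lemma \ref{d-continuous}, together with the fact that directed sets converge in d-spaces, this gives $k(y)=\bigsqcup f(\da y\cap\mathcal{K}(\mathbb{M}))=f(y)$. Hence $k=f$, and the sup of the chain is $(\mu,f)$.

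Finally, I deduce the equality of maps. The sup of the chain of continuous maps $Inc_X\circ P^X_n$ in $[\mathcal{CV}^\surd_1X\ra\mathcal{CV}^\surd_1X]$ is computed pointwise, since pointwise directed sups of continuous maps into a d-space are continuous. It therefore equals the identity.

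I expect the main obstacle to be the identification of the second component. One must ensure that the sup in $\mathcal{CV}^\surd_1X$ is genuinely computed by the $(\sup[\mathcal{F}]_1,k)$ formula of Proposition \ref{CVXCV1XAreDspace}; that proof is only sketched, so I would re-verify it for chains. One must also check that $f(y)$ is recovered from $f$ on compact prefixes. For infinite strings $y$ this needs the Scott continuity of $f$ from Lemma \ref{d-continuous}. For $y=x\surd$, which is compact, the point $y$ itself already lies in $\mathbb{M}_{|y|}$, so $k(y)=f(y)$ trivially.
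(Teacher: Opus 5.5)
Your proposal is correct and follows essentially the same route as the paper: Lemma \ref{SupPn=Id} handles the first component, and for an infinite $y\in\supp(\mu)$ the value $f(y)$ is recovered by continuity of $f$ along the chain of finite prefixes, while a finite $y$ is already reached at stage $\lvert y\rvert$. Your version is just more explicit: it checks the chain and upper-bound conditions and identifies the sup through the $k(y)=\sup I_y$ formula of Proposition \ref{CVXCV1XAreDspace}, which the paper uses only implicitly.
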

\begin{proof}
  For every $(\mu,f)\in\mathcal{CV}^\surd_1X$, $\bigsqcup_{n\in\mathbb{N}}p_{\mathbb{M}_n}[\mu]=\mu$ \tr{by Lemma \ref{SupPn=Id}}.
   If $y\in \supp(\mu)$ is an infinite string, then
   \begin{equation*}
     y=\bigsqcup\da y\cap\mathcal{K}(\mathbb{M})=\bigsqcup_{n\in\mathbb{N}} \max(\da y\cap\mathbb{M}_n).
   \end{equation*}
   Moreover,
   \begin{equation*}
     f(y)=\bigsqcup_{n\in\mathbb{N}}f(\max(\da y\cap\mathbb{M}_n))=\bigsqcup_{n\in\mathbb{N}}f\mid_{\mathbb{M}_n}(\max(\da y\cap\mathbb{M}_n)).
   \end{equation*}
   If $y$ is a finite string of $dom(f)$, then $f(y)=f\mid_{\mathbb{M}_{\lvert y\rvert}}(y)$.
\end{proof}
\tr{The above proposition demonstrates that every $(\mu,f)\in\mathcal{CV}^\surd_1X$ is approximated by a directed subset $\{P_n^X(\mu,f)\}_{n\in\mathbb{N}}$ of simple random variables.}

\begin{definition}
  Let $X$ be a space and $Y$ a sober space.
   We say that $Y$ is the sobrification of $X$ if there is \tr{a topological} embedding $\tr{e}:X\ra Y$, and for every continuous map $\tr{f:}X\ra Z$ from $X$ to a sober space $Z$, there is a unique continuous map $g:Y\ra Z$ such that $g\circ \tr{e}=f$.
\end{definition}

\begin{lemma}(\cite[\tr{Theorem 2.4 (3)}]{Heckmann1996})\label{CharaSoberification}
  A sober space $Y$ is the sobrification of a subspace $X$ iff for every $U\in\mathcal{O}Y$ and every $y\in U$, there is an $x\in X\cap U$ lower than $y$.
\end{lemma}

%\tr{The above lemma still holds when substituting $U\in\mathcal{B}$ for $U\in\mathcal{O}Y$, provided $\mathcal{B}$ is a base of $Y$.}

\begin{proposition}\label{CVXisSober}
  If $X$ is a sober space, then $\mathcal{CV}^\surd X$ is the sobrification of $\mathcal{SV}^\surd X$.
\end{proposition}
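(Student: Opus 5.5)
We will apply Lemma \ref{CharaSoberification}. By Proposition \ref{CVXIsSober}, $\mathcal{CV}^\surd X$ is sober, and $\mathcal{SV}^\surd X$ carries the subspace topology by definition. It therefore suffices to check the following: for every open $\mathcal{U}\subseteq\mathcal{CV}^\surd X$ and every $(\mu,f)\in\mathcal{U}$, there is a simple $\surd$-max random variable in $\mathcal{U}$ lying below $(\mu,f)$. We may shrink $\mathcal{U}$ to a basic open set, a finite intersection $\bigcap_{i=1}^m\langle\ua z_i,r_i,V_i\rangle$ containing $(\mu,f)$.

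The candidate approximant is $(p_{\mathbb{M}_n}[\mu],f\mid_{\mathbb{M}_n})$ for large $n$. Proposition \ref{SupBigPn=id} and $P^X_n$ are stated only for normalized variables, but nothing there used normalization. For arbitrary $\mu\in\mathcal{V}_{\leq 1}\Sigma\mathbb{M}$, Remark \ref{UseP_F[]} gives $\supp(p_{\mathbb{M}_n}[\mu])=\supp(\mu)\cap\mathbb{M}_n$. This set is finite and consists of compact elements, and it is $\surd$-max by Lemma \ref{SurdMaxAreLower}. By Corollary \ref{ValuationsOnFiniteTree}, $p_{\mathbb{M}_n}[\mu]$ is simple with finite support in $\mathcal{K}(\mathbb{M})$, so it lies in $B_{\mathbb{M}}$. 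The map $f\mid_{\mathbb{M}_n}$ is order-preserving on a finite domain, hence partially continuous. So this pair is a $\surd$-max simple random variable.

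Two things remain. First, the approximant lies below $(\mu,f)$. Its domain is contained in $dom(f)$ and it agrees with $f$ there. Also $p_{\mathbb{M}_n}[\mu]\leq\mu$, because $p_{\mathbb{M}_n}(x)\leq x$ for all $x$: then $p_{\mathbb{M}_n}^{-1}(U)=\ua(U\cap\mathbb{M}_n)\subseteq U$ for every Scott open $U$, which gives $p_{\mathbb{M}_n}[\mu](U)\leq\mu(U)$.

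Second, the approximant lies in the basic open set for large $n$. Take $n\geq\max_i|z_i|$. Then $z_i\in\mathbb{M}_n$, so $\ua(\ua z_i\cap\mathbb{M}_n)=\ua z_i$, and hence $p_{\mathbb{M}_n}[\mu](\ua z_i)=\mu(\ua z_i)>r_i$. The condition $(\mu,f)\in\langle\ua z_i,r_i,V_i\rangle$ includes $\mu(\ua z_i)>r_i>\!\!\!\!\!\!\geq 0$ in effect, so $\mu(\ua z_i)>0$; this forces $z_i\in\supp(\mu)$, and then $z_i\in dom(f\mid_{\mathbb{M}_n})$. Finally $f\mid_{\mathbb{M}_n}(z_i)=f(z_i)\in V_i$. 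This step carries the proof, and it is exact: no limit argument or use of the sobriety of $X$ is needed, because the subbasic sets only inspect compact nodes $z$ of bounded length. Sobriety of $X$ enters only through Proposition \ref{CVXIsSober}.

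The main point to handle carefully is the applicability of Lemma \ref{CharaSoberification}. It is stated for all open $U\in\mathcal{O}Y$, and we reduce to basic open sets. This is harmless: if $y\in U$, pick a basic $B$ with $y\in B\subseteq U$, and an element of $X\cap B$ below $y$ also lies in $X\cap U$.

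One minor check remains, namely that $\mu(\ua z_i)>0$ really places $z_i$ in $\supp(\mu)$. If $z_i\notin\supp(\mu)$, then, since $\supp(\mu)$ is a lower set, $\ua z_i$ is disjoint from it. Then $\mu(\ua z_i)=0$ by Proposition \ref{SuppDetermineValue}, a contradiction.
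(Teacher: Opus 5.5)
Your proof is correct, but at the key step it takes a different route from the paper. The paper also applies Lemma \ref{CharaSoberification} with the same approximants $P^X_n(\mu,f)=(p_{\mathbb{M}_n}[\mu],f\mid_{\mathbb{M}_n})$, and it argues globally:
\begin{itemize}
\item since $\mathcal{CV}^\surd X$ is sober (Proposition \ref{CVXIsSober}), every open set is Scott open;
\item Proposition \ref{SupBigPn=id} gives $(\mu,f)=\bigsqcup_n P^X_n(\mu,f)$;
\item hence some $P^X_n(\mu,f)$ lies in any open set containing $(\mu,f)$.
\end{itemize}
You instead check membership directly on a basic open set $\bigcap_i\langle\ua z_i,r_i,V_i\rangle$. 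Once $n\geq\max_i|z_i|$, the approximant satisfies every defining condition exactly: $p_{\mathbb{M}_n}[\mu](\ua z_i)=\mu(\ua z_i)$ and $f\mid_{\mathbb{M}_n}(z_i)=f(z_i)$.

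Your route uses neither the directed-sup identity nor the Scott-openness of open sets. This matters a little, because Proposition \ref{SupBigPn=id} is stated only for normalized variables over d-spaces, and the paper tacitly reuses it for $\mathcal{CV}^\surd X$. Your argument also shows that the density condition of Lemma \ref{CharaSoberification} holds for every $T_0$ space $X$. Sobriety of $X$ enters only to make $\mathcal{CV}^\surd X$ sober. The paper's route is shorter, given that the approximation result is needed anyway for the D-completion statement. It also ties the sobrification directly to approximation by directed sups of simple random variables.

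Two minor remarks. First, $z_i\in\supp(\mu)$ already follows from $f(z_i)\in V_i$ being defined, since $dom(f)=\supp(\mu)$. Your separate check via Proposition \ref{SuppDetermineValue} is correct but unnecessary. Second, the inequality you wrote with the overstruck symbol should simply read $\mu(\ua z_i)>r_i\geq 0$.
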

\begin{proof}
   \tr{Let $(\mu,f)$ belong to an $U\in\mathcal{O}\mathcal{CV}^\surd X$.
   Since $\mathcal{CV}^\surd X$ is a sober space by Proposition \ref{CVXIsSober}, $U$ is Scott open.
   By Proposition \ref{SupBigPn=id}, $(\mu,f)=\bigsqcup_{n\in\mathbb{N}} P_n^X(\mu,f)\in U$, hence there is a $P_n^X(\mu,f)\in U$.
   Thus, by Lemma \ref{CharaSoberification}, $\mathcal{CV}^\surd X$ is the sobrification of $\mathcal{SV}^\surd X$.}
  %\tr{Notice that all finite intersections of subbase members form a base, the family $\{\bigcap_{i\in I}\langle \ua z_i,r_i,V_i\rangle:I~\text{is finite}\}$ is a base of $\mathcal{CV}^\surd X$.}
%   Let $(\mu,f)$ \tr{belong to a finite intersection} $\bigcap_{i\in I}\langle \ua z_i,r_i,V_i\rangle$, $F=\da\{z_i\}_{i\in I}$ and $g$ be equal to $f$ restricted to $\da F\cap \tr{\supp}(\mu)$.
%   By Corollary \ref{ValuationsOnFiniteTree}, $(p_F[\mu],g)\in\bigcap_{i\in I}\langle \ua z_i,r_i,V_i\rangle$ is a $\surd$-max simple random variable on $X$.
%   \tr{Obviously, $(p_F[\mu],g)\leq(\mu,f)$, which completes this proof by Lemma \ref{CharaSoberification}.}
\end{proof}
\begin{corollary}
   Moreover, $\mathcal{CV}^\surd_1X$ is the sobrification of $\mathcal{SV}^\surd_1X$ \tr{whenever $X$ is a sober space}.
\end{corollary}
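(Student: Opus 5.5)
The statement is the normalized analogue of Proposition \ref{CVXisSober}, and I will argue exactly as there, using the characterization of sobrification in Lemma \ref{CharaSoberification}.

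First, $\mathcal{CV}^\surd_1X$ is sober whenever $X$ is sober. This is the corollary after Proposition \ref{CVXIsSober}: $\mathcal{CV}^\surd_1X$ is an upper subspace of the sober space $\mathcal{CV}^\surd X$. Next, $\mathcal{SV}^\surd_1X$ is a subspace of $\mathcal{CV}^\surd_1X$ via $Inc_X$. Both carry the relative topology of $\mathcal{CV}^\surd X$, and $\mathcal{SV}^\surd_1X=\mathcal{SV}^\surd X\cap\mathcal{CV}^\surd_1X$. Hence $Inc_X$ is a topological embedding, and the subbasic opens of $\mathcal{SV}^\surd_1X$ are exactly the traces $\langle\ua z,r,V\rangle'_1$ of the sets $\langle\ua z,r,V\rangle_1$.

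It remains to check the approximation condition of Lemma \ref{CharaSoberification}. Let $U$ be open in $\mathcal{CV}^\surd_1X$ and $(\mu,f)\in U$. Since $\mathcal{CV}^\surd_1X$ is sober, it is a d-space, so $U$ is Scott open with respect to the specialization order. A sober space is in particular a d-space, so Proposition \ref{SupBigPn=id} applies: $\{Inc_X\circ P^X_n(\mu,f)\}_{n\in\mathbb{N}}$ is a chain in $\mathcal{CV}^\surd_1X$ with sup $(\mu,f)$. One point needs care, namely that each $P^X_n(\mu,f)$ is normalized. I will verify $p_{\mathbb{M}_n}[\mu](\mathbb{M})=\mu(p_{\mathbb{M}_n}^{-1}(\mathbb{M}))=\mu(\mathbb{M})=1$, so $P^X_n(\mu,f)\in\mathcal{SV}^\surd_1X$.

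Scott openness of $U$ then yields some $n$ with $P^X_n(\mu,f)\in U$, and $P^X_n(\mu,f)\leq(\mu,f)$ because it belongs to the chain whose sup is $(\mu,f)$. Lemma \ref{CharaSoberification} therefore gives that $\mathcal{CV}^\surd_1X$ is the sobrification of $\mathcal{SV}^\surd_1X$.

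The main potential obstacle is making sure the directed sup in Proposition \ref{SupBigPn=id} is computed in $\mathcal{CV}^\surd_1X$ rather than only in $\mathcal{CV}^\surd X$. This is harmless: the proof of Proposition \ref{CVXCV1XAreDspace} shows $\mathcal{CV}^\surd_1X$ is an upper subset closed under directed sups, so the sup is the same in either space.
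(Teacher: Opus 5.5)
Your proposal is correct and follows the paper's route: the paper proves this corollary "analogously to" the proposition for $\mathcal{CV}^\surd X$, namely by using sobriety (hence Scott-openness of opens), the approximating chain $\{P^X_n(\mu,f)\}_{n\in\mathbb{N}}$, and Heckmann's characterization of sobrification, exactly as you do. Your extra checks, that $p_{\mathbb{M}_n}[\mu]$ stays normalized and that directed sups in the upper subspace $\mathcal{CV}^\surd_1X$ agree with those in $\mathcal{CV}^\surd X$, are correct and fill in details the paper leaves implicit.
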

\begin{proof}
  Proof analogous to Proposition \ref{CVXisSober}.
\end{proof}

\tr{Sobrification is completing a space to be sober.
 Similarly, there is also the D-completion, which completes $T_0$ spaces to be d-spaces.}

\begin{definition}
  \tr{Let $X$ be a space and $Y$ a d-space.
   We say that $Y$ is the D-completion of $X$ if there is a topological embedding $e:X\ra Y$, and for every continuous map $f:X\ra Z$ from $X$ to a d-space $Z$, there is a unique continuous map $g:Y\ra Z$ such that $g\circ e=f$.}
\end{definition}

\begin{lemma}(\cite[Theorem 6.7]{Keimel2009})
  \tr{The D-completion of a subspace $X$ of a d-space $Y$ is the minimum subdcpo within $Y$ that contains $X$, endowed with the relative topology.}
\end{lemma}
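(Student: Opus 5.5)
The plan is to prove the two halves of the universal property separately: that the minimum subdcpo is a d-space containing $X$ as a subspace, and that every continuous map out of $X$ extends uniquely to it. The extension will be built by passing through sobrifications.

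\textbf{Setup.} Call a subset of the d-space $Y$ a \emph{subdcpo} if it is closed under sups, taken in $Y$, of those directed subsets that lie in it. An intersection of subdcpos is again a subdcpo, so there is a least subdcpo $D$ containing $X$. Endow $D$ with the relative topology.

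\textbf{$D$ is a d-space.} Its specialization order is the restriction of that of $Y$. A directed subset of $D$ has its $Y$-sup in $D$, and that element is also its sup in $D$. Every open subset of $Y$ is Scott open, so every relative open subset of $D$ is Scott open in $D$. Hence $D$ is a d-space, and $X\subseteq D$ is a topological embedding.

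\textbf{Uniqueness.} Let $g_1,g_2:D\ra Z$ be continuous maps into a d-space $Z$ that agree on $X$. A continuous map between d-spaces is monotone and preserves sups of directed subsets, because a directed subset converges to its sup and open sets are Scott open. Therefore the equalizer $\{y\in D:g_1(y)=g_2(y)\}$ is a subdcpo. It contains $X$, so by minimality it is all of $D$.

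\textbf{Existence.} I would go through sobrifications; this is the step I expect to be the main obstacle.
\begin{enumerate}[(i)]
  \item The continuous $f:X\ra Z$ extends to a continuous $S(f):S(X)\ra S(Z)$, where $S(-)$ denotes sobrification.
  \item The space $Z$ sits in $S(Z)$ as a subdcpo. Indeed, a directed subset of $Z$ converges in $Z$ to its sup. Through the embedding it converges in $S(Z)$ to that point, which is also an upper bound there. So the point is the sup in $S(Z)$.
  \item Consequently $S(f)$, which preserves directed sups, maps the least subdcpo of $S(X)$ containing $X$ into $Z$.
\end{enumerate}

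It remains to identify $D$ with that least subdcpo of $S(X)$.
\begin{enumerate}[(i)]
  \item The subspace inclusion $X\subseteq Y$ induces a topological embedding $e:S(X)\ra S(Y)$, since sobrification preserves subspace embeddings via completely prime filters of opens. By (ii) above applied to $Y$, the space $Y$ is also a subdcpo of $S(Y)$.
  \item The image $e(S(X))$ is a subdcpo of $S(Y)$. Take a directed set $e(A)$ and let $s=\sup A$ in $S(X)$. Continuity gives $e(A)\ra e(s)$, and $e(s)$ is an upper bound, so $e(s)$ is the sup.
  \item Hence $e$ carries the least subdcpo of $S(X)$ containing $X$ onto the least subdcpo of $S(Y)$ containing $X$. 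Since $Y$ is a subdcpo of $S(Y)$ containing $X$, the latter lies inside $Y$, where it is exactly $D$.
  \item Being a restriction of the embedding $e$, this correspondence is a homeomorphism onto $D$ with its relative topology.
\end{enumerate}

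Composing this homeomorphism with $S(f)$ yields the required continuous extension $g:D\ra Z$ with $g|_X=f$.

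The delicate points are the subspace-embedding property of sobrification and the check that directed sups computed in $S(X)$, $S(Y)$ and $Y$ all agree. I would verify these by working directly with irreducible closed sets.
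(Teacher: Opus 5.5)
Your proof is correct. Note that the paper does not prove this lemma at all; it imports it from Keimel--Lawson (Theorem 6.7 there). So the comparison is with that citation, not with an argument in the text.

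You give a self-contained proof in the spirit of the usual description of the D-completion as the closure of $X$ under directed sups inside its sobrification $S(X)$. Uniqueness comes from the equalizer of two extensions being closed under directed sups. Existence comes from transporting the least such subset of $S(X)$ along the embedding $S(X)\to S(Y)$ onto the least subdcpo $D$ of $Y$.

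The two facts doing the real work hold as you state them.
\begin{enumerate}[(i)]
  \item Sobrification sends subspace embeddings to embeddings. For irreducible closed $F\subseteq X$ one has $F=\overline{F}^{\,Y}\cap X$, which gives injectivity. Also $F$ meets $V\cap X$ iff $\overline{F}^{\,Y}$ meets $V$, which gives initiality of the topology.
  \item A d-space is closed under directed sups in any $T_0$ space containing it as a subspace. A directed set $A$ that converges to one of its upper bounds $a$ satisfies $a\in\overline{A}\subseteq{\downarrow}b$ for every upper bound $b$, so $a$ is the sup.
\end{enumerate}
You use the observation in (ii) both for $Z\subseteq S(Z)$ and for $e(S(X))\subseteq S(Y)$ but only assert it. You also use, without stating it, that a topological embedding of $T_0$ spaces is an order embedding. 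This is needed to write a directed subset of $e(S(X))$ as $e(A)$ with $A$ directed. Both are elementary, and spelling them out closes the argument.

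As for what each route buys: the citation keeps the paper short and places the result in Keimel--Lawson's general theory of completions. Your argument shows explicitly why $D$ with the relative topology is the right object and why it does not depend on the ambient d-space $Y$.
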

\begin{proposition}
  \tr{If $X$ is a d-space, then $\mathcal{CV}^\surd X$ is the D-completion of $\mathcal{SV}^\surd X$.}
\end{proposition}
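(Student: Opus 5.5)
By the preceding lemma of Keimel and Lawson, it suffices to prove two things. First, $\mathcal{CV}^\surd X$ is a d-space containing $\mathcal{SV}^\surd X$ as a subspace. Second, $\mathcal{CV}^\surd X$ is the smallest subdcpo of itself containing $\mathcal{SV}^\surd X$. The first point is Proposition \ref{CVXCV1XAreDspace}, together with the fact that $\mathcal{SV}^\surd X$ carries the relative topology by definition. So everything reduces to showing that any subdcpo $\mathcal{D}\subseteq\mathcal{CV}^\surd X$ (closed under directed sups computed in $\mathcal{CV}^\surd X$) with $\mathcal{SV}^\surd X\subseteq\mathcal{D}$ equals all of $\mathcal{CV}^\surd X$.

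For this I will use the approximation by truncations. Proposition \ref{SupBigPn=id} and the maps $P^X_n$ were only stated in the normalized setting. The same formula $P_n(\mu,f)=(p_{\mathbb{M}_n}[\mu],f\mid_{\mathbb{M}_n})$ makes sense on $\mathcal{CV}^\surd X$, so I will first check that this extension is well defined:
\begin{itemize}
\item By Remark \ref{UseP_F[]}, $\supp(p_{\mathbb{M}_n}[\mu])=\supp(\mu)\cap\mathbb{M}_n$. This set is finite and consists of compact elements, so Corollary \ref{ValuationsOnFiniteTree} shows that $p_{\mathbb{M}_n}[\mu]$ is simple with finite support in $\mathcal{K}(\mathbb{M})$, i.e.\ it lies in $B_{\mathbb{M}}$.
\item By Lemma \ref{SurdMaxAreLower}, this support is $\surd$-max.
\item $f\mid_{\mathbb{M}_n}$ is order preserving on a finite set, hence partially continuous.
\end{itemize}
Thus $P_n(\mu,f)\in\mathcal{SV}^\surd X$.

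Next, I will show that $\{P_n(\mu,f)\}_{n}$ is a chain in $\mathcal{CV}^\surd X$ whose supremum is $(\mu,f)$. The chain property follows from the specialization order description of $\mathcal{CV}^\surd X$. Indeed, $p_{\mathbb{M}_n}[\mu]\leq p_{\mathbb{M}_{n+1}}[\mu]$ by Lemma \ref{SupPn=Id}, and $f\mid_{\mathbb{M}_n}\leq f\mid_{\mathbb{M}_{n+1}}$ trivially. For the supremum, I will recall from the proof of Proposition \ref{CVXCV1XAreDspace} that the directed sup of $\mathcal{F}$ in $\mathcal{CV}^\surd X$ is $(\sup[\mathcal{F}]_1,k)$ with $k(y)=\sup I_y$. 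Here Lemma \ref{SupPn=Id} gives $\sup[\mathcal{F}]_1=\mu$. For $y\in\supp(\mu)$, the set $I_y$ is $\{f(x):x\in\da y\cap\mathcal{K}(\mathbb{M})\cap\supp(\mu)\}$, which is directed with sup $f(y)$. This holds because $f$ is Scott continuous on $\supp(\mu)$ (Lemma \ref{d-continuous}, with $X$ a d-space), and $y$ is the directed sup of its compact prefixes, all of which lie in the lower set $\supp(\mu)$. This is exactly the argument of Proposition \ref{SupBigPn=id}, without the normalization.

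Given these, the conclusion is immediate. Each $P_n(\mu,f)$ lies in $\mathcal{SV}^\surd X\subseteq\mathcal{D}$. Since $\mathcal{D}$ is a subdcpo, $(\mu,f)=\bigsqcup_n P_n(\mu,f)\in\mathcal{D}$. Hence $\mathcal{D}=\mathcal{CV}^\surd X$, and the relative topology on this smallest subdcpo is just the topology of $\mathcal{CV}^\surd X$.

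The main obstacle is the step identifying the supremum. One has to be sure that the directed sup of the truncations, taken in the specialization order of $\mathcal{CV}^\surd X$, really is $(\mu,f)$ and not something smaller. This relies on the explicit description of directed sups from Proposition \ref{CVXCV1XAreDspace}, which was only sketched, together with $f(y)=\sup I_y$. I would verify that identity carefully. For the infinite strings $y$, the key point is that every compact prefix of $y$ lies in some $\mathbb{M}_n$ and hence appears in the relevant $I_y$.
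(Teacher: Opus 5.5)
Your proposal is correct and is essentially the paper's proof. The paper also gets the d-space property from Proposition~\ref{CVXCV1XAreDspace}, then uses the Keimel--Lawson characterization together with the truncation approximation $(\mu,f)=\bigsqcup_n P_n(\mu,f)$ of Proposition~\ref{SupBigPn=id}; you additionally check explicitly that $P_n$ and that proposition carry over from $\mathcal{CV}^\surd_1X$ to the non-normalized space $\mathcal{CV}^\surd X$, which the paper uses without comment.
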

\begin{proof}
  \tr{By Proposition \ref{CVXCV1XAreDspace}, $\mathcal{CV}^\surd X$ is a d-space; and by Proposition \ref{SupBigPn=id}, the minimum subdcpo within $\mathcal{CV}^\surd X$ that includes $\mathcal{SV}^\surd X$ is $\mathcal{CV}^\surd X$ itself.}
\end{proof}
\begin{corollary}
  \tr{$\mathcal{CV}^\surd_1 X$ is the D-completion of $\mathcal{SV}^\surd_1X$ whenever $X$ is a d-space.}
\end{corollary}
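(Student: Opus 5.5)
The plan is to mirror the proof of the preceding proposition, now inside the upper subspace $\mathcal{CV}^\surd_1X$. We apply the characterization of D-completions from \cite[Theorem 6.7]{Keimel2009}: the D-completion of a subspace $X'$ of a d-space $Y$ is the least subdcpo of $Y$ containing $X'$, with the relative topology. We take $Y=\mathcal{CV}^\surd_1X$ and $X'=\mathcal{SV}^\surd_1X$, the latter regarded as a subspace through the continuous inclusion $Inc_X$. The statement then follows once three things are checked: $\mathcal{CV}^\surd_1X$ is a d-space, $\mathcal{SV}^\surd_1X$ carries the relative topology, and the least subdcpo containing $\mathcal{SV}^\surd_1X$ is all of $\mathcal{CV}^\surd_1X$.

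For the first point, Proposition \ref{CVXCV1XAreDspace} already states that $\mathcal{CV}^\surd_1X$ is a d-space whenever $X$ is. The second point holds by definition. $\mathcal{SV}^\surd_1X$ was defined as a subspace of $\mathcal{SV}^\surd X$, which is in turn a subspace of $\mathcal{CV}^\surd X$. Its topology is therefore the one generated by the sets $\langle\ua z,r,V\rangle'_1=\langle\ua z,r,V\rangle_1\cap\mathcal{SV}^\surd_1X$, and this is exactly the relative topology induced from $\mathcal{CV}^\surd_1X$.

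For the third point, let $\mathcal{D}$ be any subdcpo of $\mathcal{CV}^\surd_1X$ containing $\mathcal{SV}^\surd_1X$, and fix $(\mu,f)\in\mathcal{CV}^\surd_1X$. Each $P^X_n(\mu,f)$ lies in $\mathcal{SV}^\surd_1X\subseteq\mathcal{D}$. By the proposition following Lemma \ref{SupPn=Id}, $\{P^X_n(\mu,f)\}_{n\in\mathbb{N}}$ is a chain. By Proposition \ref{SupBigPn=id}, its directed sup in $\mathcal{CV}^\surd_1X$ is $(\mu,f)$, so $(\mu,f)\in\mathcal{D}$. Hence $\mathcal{D}=\mathcal{CV}^\surd_1X$.

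The only step that needs some care is confirming that $P^X_n(\mu,f)$ is really normalized, so that the chain stays inside $\mathcal{SV}^\surd_1X$. This holds because $p_{\mathbb{M}_n}[\mu](\mathbb{M})=\mu(p_{\mathbb{M}_n}^{-1}(\mathbb{M}))=\mu(\mathbb{M})=1$. A second point to check is that the sup computed in Proposition \ref{SupBigPn=id} is the sup in the specialization order of $\mathcal{CV}^\surd_1X$. This follows because $\mathcal{CV}^\surd_1X$ is an upper subset of $\mathcal{CV}^\surd X$, so directed sups of its members computed in $\mathcal{CV}^\surd X$ remain in it.
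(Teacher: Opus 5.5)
Your proposal is correct and takes the same route as the paper. The paper obtains this corollary just as it proves the preceding proposition: $\mathcal{CV}^\surd_1X$ is a d-space by Proposition \ref{CVXCV1XAreDspace}, Proposition \ref{SupBigPn=id} makes $\mathcal{CV}^\surd_1X$ the least subdcpo containing $\mathcal{SV}^\surd_1X$, and \cite[Theorem 6.7]{Keimel2009} concludes. Your extra checks, that $P^X_n(\mu,f)$ is normalized and that the sups are taken in $\mathcal{CV}^\surd_1X$, are sound details the paper leaves implicit.
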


\tr{In Domain theory, it is interesting to characterize spaces with their powerspaces, we provide one such characterization as below.}
\begin{proposition}
  For a $T_0$ space $X$, if $\mathcal{CV}^\surd_1X$ is sober, then so is $X$.
\end{proposition}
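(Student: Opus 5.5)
I will verify sobriety of $X$ through the criterion of Remark \ref{SoberEquiva}: every irreducible subset $I\subseteq X$ must have a sup to which it converges. The idea is to push $I$ into $\mathcal{CV}^\surd_1X$ along the unit-like map $\eta'_X:X\ra\mathcal{CV}^\surd_1X$, $\eta'_X(x)=(\delta_\surd,\chi_x)$, which is $Inc_X\circ\eta_X$ and hence continuous. By Lemma \ref{ImageOfIrr}, $\eta'_X(I)$ is irreducible in $\mathcal{CV}^\surd_1X$. Since $\mathcal{CV}^\surd_1X$ is sober, Remark \ref{SoberEquiva} gives a sup $(\mu,f)$ of $\eta'_X(I)$ with $\overline{\eta'_X(I)}=\da(\mu,f)$.

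Next I identify $(\mu,f)$. Each $(\delta_\surd,\chi_x)$ lies below $(\delta_\surd,\chi)$ for any candidate $\chi$ above all $\chi_x$, and $\delta_\surd$ is maximal among normalized valuations. So $\mu\geq\delta_\surd$ with $\mu$ normalized, which forces $\mu=\delta_\surd$. Thus $f$ is a map on $\{\epsilon,\surd\}$ with $f(\epsilon)\leq f(\surd)$. I will show that $f(\epsilon)=f(\surd)=:x_0$ and that $x_0=\sup I$ in $X$ with $I$ converging to it.

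For the upper bound property, the specialization order on $\mathcal{CV}^\surd_1X$ (Proposition on the specialization order) gives $\chi_x(\surd)=x\leq f(\surd)$ and $x\leq f(\epsilon)$ for all $x\in I$. For leastness, take any upper bound $u$ of $I$. Then $(\delta_\surd,\chi_u)$ is an upper bound of $\eta'_X(I)$, so $(\mu,f)\leq(\delta_\surd,\chi_u)$, whence $f(\epsilon),f(\surd)\leq u$. This simultaneously shows that $f(\epsilon)$ and $f(\surd)$ are both least upper bounds, so they are equal, and $x_0=\sup I$.

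For convergence, let $V\in\mathcal{O}X$ contain $x_0$. Then $(\delta_\surd,f)\in\langle\ua\epsilon,0,V\rangle_1$, which is open. Because $(\mu,f)\in\overline{\eta'_X(I)}$, this open set meets $\eta'_X(I)$, so some $x\in I$ has $\chi_x(\epsilon)=x\in V$. Hence every open neighbourhood of $x_0$ meets $I$, i.e.\ $x_0\in\overline{I}$, and therefore $\overline I=\da x_0$. Remark \ref{SoberEquiva} then yields sobriety of $X$ ($T_0$ is assumed).

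The step I expect to be the main obstacle is ensuring that the sup in $\mathcal{CV}^\surd_1X$ really has first component $\delta_\surd$. If the first component could differ, I would fall back on the continuous projection $[-]_1$ and the fact that the closure of $\{\delta_\surd\}$ contains the sup, combined with maximality of $\delta_\surd$ in $\mathcal{V}_1\Sigma\mathbb{M}$. Everything else is a routine unwinding of the specialization order and the subbase.
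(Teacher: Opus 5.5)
Your proof is correct and follows the paper's argument: push $I$ forward along $Inc_X\circ\eta_X$, take the point $(\mu,f)$ with $\overline{\eta_X(I)}=\da(\mu,f)$, force $\mu=\delta_\surd$, then use a subbasic open set $\langle\ua z,0,V\rangle_1$ to show that every neighbourhood of the value of $f$ meets $I$. There are two minor differences: the paper uses $z=\surd$ and $f(\surd)$ where you use $\epsilon$, and it forces $\mu=\delta_\surd$ by comparing $(\mu,f)$ with the smaller upper bound $(\delta_\surd,f|_{\{\epsilon,\surd\}})$ rather than by maximality of $\delta_\surd$ (which does hold, since $\{\surd\}$ is Scott open). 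Your separate leastness step is harmless but unnecessary, because an upper bound of $I$ all of whose open neighbourhoods meet $I$ already gives $\overline{I}=\da x_0$.
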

\begin{proof}
  Suppose that $I$ is an irreducible subset of $X$.
   Since $\eta_X$ is continuous, $\eta_X(I)$ is an irreducible subset of $\mathcal{SV}^\surd_1X$; it is also an irreducible subset of $\mathcal{CV}^\surd_1X$.
   Then, $\overline{\eta_X(I)}=\da(\mu,f)$ for some $(\mu,f)\in\mathcal{CV}^\surd_1X$.
   Hence, $(\mu,f)$ is the sup of $\eta_X(I)$.
   If $\mu\neq\delta_\surd$, then $(\delta_\surd,\{\surd\mapsto f(\surd),\epsilon\mapsto f(\epsilon)\})$ is an upper bound of $\eta_X(I)$ which is lower than $(\mu,f)$.
   Thus, $\mu=\delta_\surd$.
   Consider the element $f(\surd)\in X$.
   Obviously, $f(\surd)$ is an upper bound of $I$.
   For every $U\in\mathcal{O}X$ with $f(\surd)\in U$, $\langle\ua\surd,0,U\rangle_1$ is an open subset that contains $(\delta_\surd,f)$.
   So $\langle\ua\surd,0,U\rangle_1\cap\eta_X(I)\neq\emptyset$, which implies $U\cap I\neq\emptyset$.
   We conclude $\overline{I}=\da f(\surd)$.
\end{proof}

Combining the above propositions, we obtain the following results.
\begin{theorem}\label{CV1SurdOverSober}
  Let $X$ be a $T_0$ space.
   Then the following are equivalent:
  \begin{center}
    $(i)$ $X$ is sober;\quad $(ii)$ $\mathcal{CV}^\surd X$ is sober;\quad $(iii)$ $\mathcal{CV}^\surd_1X$ is sober.
  \end{center}
\end{theorem}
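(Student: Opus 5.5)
The proof is a cycle of implications, each assembled from results already established above.

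For (i)$\Rightarrow$(ii) I will invoke Proposition \ref{CVXIsSober}. Given an irreducible $\mathcal{F}\subseteq\mathcal{CV}^\surd X$, it constructs $(\sup([\mathcal{F}]_1),k)$ with $k(y)=\sup(I_y)$. It then shows that this pair is a $\surd$-max continuous random variable whose closure is $\overline{\mathcal{F}}$, using Lemmas \ref{SuppSup=SupSupp} and \ref{IrrSupIsSmax}. By Remark \ref{SoberEquiva}, $\mathcal{CV}^\surd X$ is therefore sober.

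For (ii)$\Rightarrow$(iii) I will use that $\mathcal{CV}^\surd_1X$ is an upper subspace of $\mathcal{CV}^\surd X$. The reason is that if $(\mu,f)\leq(\upsilon,g)$ with $\mu(\mathbb{M})=1$, then $1=\mu(\mathbb{M})\leq\upsilon(\mathbb{M})\leq 1$, so $\upsilon$ is normalized as well. The cited lemma (Exercise O-5.16 of \cite{Gierz2003}) says an upper subspace of a sober space is sober, which gives sobriety of $\mathcal{CV}^\surd_1X$. The only point needing care is that the corollary following Proposition \ref{CVXIsSober} was phrased under the hypothesis that $X$ is sober. Its argument, however, uses only sobriety of $\mathcal{CV}^\surd X$, so I will restate it in that form.

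For (iii)$\Rightarrow$(i) I will apply the last proposition before the theorem. Take an irreducible $I\subseteq X$; then $\eta_X(I)$ is irreducible by Lemma \ref{ImageOfIrr}. Its closure in $\mathcal{CV}^\surd_1X$ is $\da(\mu,f)$ for some $(\mu,f)$. Minimality of this supremum forces $\mu=\delta_\surd$. Finally, $f(\surd)$ is an upper bound of $I$ that lies in the closure of $I$, so $\overline{I}=\da f(\surd)$ and $X$ is sober.

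The substantive step is (i)$\Rightarrow$(ii), specifically the verification that $\supp(\sup([\mathcal{F}]_1))$ is $\surd$-max and that $k$ is partially continuous. Both have already been carried out in Proposition \ref{CVXIsSober}, so the main remaining obstacle is just the bookkeeping for the upper-subspace claim in (ii)$\Rightarrow$(iii).
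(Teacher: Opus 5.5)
Your proof is correct and is essentially the paper's argument: the paper obtains the theorem by combining Proposition~\ref{CVXIsSober} for (i)$\Rightarrow$(ii), the fact that $\mathcal{CV}^\surd_1X$ is an upper subspace of $\mathcal{CV}^\surd X$ together with sobriety of upper subspaces for (ii)$\Rightarrow$(iii), and the last proposition before the theorem (via $\eta_X$) for (iii)$\Rightarrow$(i). You are right that the upper-subspace corollary needs only sobriety of $\mathcal{CV}^\surd X$, not of $X$, and this is precisely what lets the paper's ``combining the above propositions'' close the cycle.
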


\subsection{\tr{Over bounded complete domains}}

\begin{proposition}\label{CVSurdoverBCD}
  \tr{Let $L$ be a bounded complete domain.
   Then $\mathcal{CV}^\surd\Sigma L$ is a bounded complete domain, and its topology agrees with the Scott topology.}
\end{proposition}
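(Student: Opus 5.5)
Since $\Sigma L$ is sober, Theorem \ref{CV1SurdOverSober} makes $\mathcal{CV}^\surd\Sigma L$ a sober space, hence a d-space. Its specialization order is the componentwise order described earlier: $\mu\leq\upsilon$ and $f\leq g$. I will prove three things in order: (a) the specialization poset is a domain whose basis is $\mathcal{SV}^\surd\Sigma L$; (b) it is bounded complete; (c) its topology equals the Scott topology.

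\textbf{Step (a).} By Proposition \ref{CVXCV1XAreDspace} the space is a dcpo, with directed sups computed as $(\sup([\mathcal{F}]_1),k)$. For the basis I use the approximants $P_n(\mu,f)=(p_{\mathbb{M}_n}[\mu],f|_{\mathbb{M}_n})$, extended verbatim to the non-normalized space. Proposition \ref{SupBigPn=id} gives $(\mu,f)=\bigsqcup_n P_n(\mu,f)$.

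These approximants are not themselves way-below $(\mu,f)$. I therefore refine them: take $\mu'\ll p_{\mathbb{M}_n}[\mu]$ in $B_{\mathbb{M}}$ with $\supp(\mu')\subseteq\supp(\mu)\cap\mathbb{M}_n$, and a map $f'$ on the finite tree $\supp(\mu')$ with $f'(x)\ll f(x)$ in $L$ and $f'$ order-preserving. Such an $f'$ exists: since the domain is a finite tree, $L$ is bounded complete and $\dda f(x)$ is directed, I can choose the values bottom-up, taking sups of finitely many way-below elements along chains. I must then show $(\mu',f')\ll(\mu,f)$. Given a directed $\mathcal{F}$ with $\sup\mathcal{F}\geq(\mu,f)$, the first coordinates eventually exceed $\mu'$, because $\sup[\mathcal{F}]_1\geq\mu$. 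For each of the finitely many $x\in\supp(\mu')$, the value $f'(x)\ll f(x)\leq k(x)=\bigsqcup I_x$ is eventually exceeded. Directedness then lets me handle all $x$ at once.

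Finally I show these $(\mu',f')$ form a directed set with sup $(\mu,f)$, using Theorem \ref{BigConsequenceofV} and the directedness of $\dda f(x)$.

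\textbf{Step (b).} I show every nonempty subset has an inf. For a family $\{(\mu_i,f_i)\}$, let $\nu=\inf_i\mu_i$, which exists by the Corollary after Proposition \ref{BMEnjoyMeet}. Then $\supp(\nu)\subseteq\bigcap_i\supp(\mu_i)$ by monotonicity of $\supp$, and $\supp(\nu)$ is $\surd$-max by Lemma \ref{SurdMaxAreLower}. On $\supp(\nu)$, define $f(x)=\inf_i f_i(x)$ pointwise in $L$.

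The delicate point is continuity of $f$. An inf of Scott continuous maps need not be Scott continuous, so I instead take the largest Scott continuous map below the pointwise inf, via $f(y)=\bigsqcup\{\inf_i f_i(x): x\in\da y\cap\mathcal{K}(\mathbb{M})\}$. This works because every nonmaximal element of $\mathbb{M}$ is compact and chains below infinite strings consist of compact elements. Lemma \ref{d-continuous} then yields partial continuity, and checking the universal property is routine.

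\textbf{Step (c).} The space is a d-space, so opens are Scott open. For the converse it suffices to show each $\dua(\mu',f')$, for a basic $(\mu',f')$, is open. I would express it as
\[
\bigcap_{x\in\supp(\mu')}\langle \ua x,\, r_x,\, \dua f'(x)\rangle,
\]
with $r_x$ chosen as in the earlier Proposition identifying $\dua\Sigma_{x\in F}r_x\delta_x$ with $\bigcap_{z\in F}\langle\ua z>\Sigma_{x\in\ua z\cap F}r_x\rangle$, intersected with $\bigcap_{x}\langle\ua x,0,\dua f'(x)\rangle$. Each $\dua f'(x)$ is Scott open in $L$. I expect equality with $\dua(\mu',f')$ to follow from the way-below characterization in step (a).

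The main obstacle is step (a): showing that the finite approximants $(\mu',f')$ are genuinely way-below, and that they are directed with the correct sup. This requires coordinating way-below in the valuation component with way-below in the map component on the changing support.
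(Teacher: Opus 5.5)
Your proof is correct in outline, but it takes a different route from the paper's. The paper does not build the domain structure by hand. It quotes Mislove's theorem that $CRV(L)$ is a bounded complete domain with the simple random variables as a basis, and observes that $\mathcal{CV}^\surd\Sigma L$, with its specialization order, is a Scott closed subset of $CRV(L)$. From this it inherits bounded completeness and the basis $\mathcal{SV}^\surd\Sigma L$; only the comparison of topologies is done directly, via a formula for $\dua$ of a basic element.

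You instead verify everything internally:
\begin{itemize}
\item continuity comes from the dcpo structure of Proposition~\ref{CVXCV1XAreDspace} together with explicit approximants $(\mu',f')$, where $\mu'\in\dda\mu\cap B_{\mathbb{M}}$ and $f'\ll f$ pointwise on the finite tree $\supp(\mu')$;
\item bounded completeness comes from infima in $\mathcal{V}_{\leq 1}\Sigma\mathbb{M}$, combined with the continuous regularization $y\mapsto\bigsqcup\{\inf_i f_i(x):x\in\da y\cap\mathcal{K}(\mathbb{M})\}$ of the pointwise infimum, which is the right fix.
\end{itemize}
The paper's route is much shorter but depends on matching Mislove's order and basis to the present conventions. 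Yours is self-contained and produces an explicit sufficient condition for $\ll$, which is exactly what step (c) uses. Your index set in (c), the whole tree $\supp(\mu')$ rather than only the generating points, is also the right choice, since the map component must be controlled at interior nodes too. The extra intersection $\bigcap_x\langle\ua x,0,\dua f'(x)\rangle$ is redundant.

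One small correction concerns step (c). Step (a) only proves that $\mu'\ll\pi$ together with $f'(x)\ll w(x)$ for all $x\in\supp(\mu')$ implies $(\mu',f')\ll(\pi,w)$. So it gives $W:=\bigcap_{x}\langle\ua x,r_x,\dua f'(x)\rangle\subseteq\dua(\mu',f')$, not equality. For (c) this is enough. If $(\pi,w)$ lies in a Scott open set $U$, one of your approximants $(\mu',f')$ of $(\pi,w)$ lies in $U$, and then $(\pi,w)\in W\subseteq\dua(\mu',f')\subseteq U$.

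If you do want equality, the converse (that $\ll$ forces both $\mu'\ll\pi$ and $f'(x)\ll w(x)$) needs its own argument. For the map component, test against the directed family $(\pi,w_e)_{e\in E}$, where $w_e(y)=w(y)\wedge e$ for $y\leq x$ and $w_e(y)=w(y)$ otherwise. Its supremum is $(\pi,w)$ by meet-continuity of $L$, and it yields $f'(x)\leq w(x)\wedge e\leq e$.
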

\begin{proof}
  \tr{By \cite[Theorem 5]{Mislove2017}, Mislove showed that $CRV(L)$ is a bounded complete domain with a basis $\{(\Sigma_{x\in F}r_x\delta_x,f):\Sigma_{x\in F}r_x\delta_x\in B_{\mathbb{M}}\}$.
  It is self-evident that $\mathcal{CV}^\surd\Sigma L$ equipped with its specialization order is a Scott closed subset of $CRV(L)$.
   Thus, $\mathcal{CV}^\surd\Sigma L$ is a bounded complete domain with respect to its specialization order.
   Obviously, $\mathcal{SV}^\surd\Sigma L=\mathcal{CV}^\surd\Sigma L\cap\{(\Sigma_{x\in F}r_x\delta_x,f):\Sigma_{x\in F}r_x\delta_x\in B_{\mathbb{M}}\}$, $\mathcal{SV}^\surd\Sigma L$ is a basis of $\mathcal{CV}^\surd\Sigma L$.
   When $\mathcal{CV}^\surd\Sigma L$ equips its specialization order, for every $(\Sigma_{x\in F}r_x\delta_x,f)\in\mathcal{SV}^\surd\Sigma L$, the equation
   \begin{equation*}
     \dua(\Sigma_{y\in G}s_y\delta_y,g)=\bigcap_{z\in G}\langle\ua z,\Sigma_{y\in G}s_y\delta_y(\ua z),\dua g(z)\rangle
   \end{equation*}
   holds, which implies that the topology of $\mathcal{CV}^\surd\Sigma L$ agrees with the Scott topology.}
\end{proof}

\begin{theorem}\label{CV1surdoverBCD}
  \tr{For every bounded complete domain $L$, $\mathcal{CV}^\surd_1\Sigma L$ is a bounded complete domain, and its topology agrees with the Scott topology.}
\end{theorem}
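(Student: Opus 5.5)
The plan is to deduce the statement from Proposition \ref{CVSurdoverBCD} by viewing $\mathcal{CV}^\surd_1\Sigma L$ as an upper subset of the bounded complete domain $\mathcal{CV}^\surd\Sigma L$, and then to check that the relevant structure passes to this upper set.

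First, $\mathcal{CV}^\surd_1\Sigma L$ is an upper subspace of $\mathcal{CV}^\surd\Sigma L$: if $(\mu,f)\leq(\upsilon,g)$ and $\mu(\mathbb{M})=1$, then $\upsilon(\mathbb{M})=1$. It is also Scott open there, since it equals $\bigcap_{n}\langle\ua\epsilon,1-\frac1n,\Sigma L\rangle$. Directly, $\mathcal{V}_1\Sigma\mathbb{M}=\{\mu:\mu(\mathbb{M})>r\ \forall r<1\}$ is upper, and a directed sup of normalized valuations is normalized.

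Second, I show that an upper subset $D$ of a domain $P$ which is closed under directed sups of $P$ is a domain, whose way-below relation is the restriction of that of $P$.
\begin{itemize}
\item Directed sups in $D$ are computed in $P$.
\item If $x\ll y$ in $P$ with $x,y\in D$, then $x\ll y$ in $D$.
\item For $y\in D$, the set $\dda y\cap D$ is directed, because it is an upper part of the directed set $\dda y$.
\item It is nonempty and has sup $y$. This uses Proposition \ref{SupBigPn=id}: for normalized $(\mu,f)$ I want approximants below it that are way below and still normalized.
\end{itemize}
I expect this last point to be the main obstacle. Simple approximants $\Sigma r_x\delta_x\ll\mu$ have total mass $<1$ by Theorem \ref{BigConsequenceofV}(\ref{WBRelationOfVL}), so $\dda y\cap D$ may well be empty. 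For example, $\delta_\surd$ is maximal in $\mathcal{V}_{\leq1}$, and $(\delta_\surd,\chi_x)$ is compact only if the $X$-part is.

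So instead of relying on inherited way-below, I work intrinsically in the poset $D$. Directed completeness holds by Proposition \ref{CVXCV1XAreDspace}. For continuity I use the normalized simple elements $P^{\Sigma L}_n$-style truncations and prove way-below within $D$ directly. The tool here is the known fact that $\mathcal{V}_1$ of a compact-ish (here: bounded complete) domain is again a domain (Jones/Edalat), with basis the normalized simple valuations $\Sigma r_x\delta_x$ whose way-below is given by the inequalities of Theorem \ref{BigConsequenceofV}(\ref{WBRelationOfVL}) restricted to proper subsets $K\subsetneq F$, with $\ua F$ covering. Concretely, I will establish:
\begin{itemize}
\item $(\Sigma_{x\in F}r_x\delta_x,f)\ll(\upsilon,g)$ in $D$ whenever $f(z)\ll g(z)$ for all $z\in\da F$, $\supp$ containment holds, and for every $K\subseteq F$ with $\ua K\not\supseteq F$ the strict inequalities hold;
\item each $(\mu,g)$ is the directed sup of such elements, obtained by perturbing $p_{\mathbb{M}_n}[\mu]$ mass toward $\epsilon$ and taking $f\leq g$ on the finite tree $\supp(\mu)\cap\mathbb{M}_n$ using a finite directed approximation of $g$ in $L$ by meets.
\end{itemize}

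Bounded completeness of $D$ follows since, for a nonempty subset bounded above in $D$, its sup in $\mathcal{CV}^\surd\Sigma L$ lies above a normalized element and is therefore in $D$. Finally, for the topology, the relative topology of the Scott-open upper set $D$ in the Scott topology of $\mathcal{CV}^\surd\Sigma L$ equals the Scott topology of $D$. Combining this with Proposition \ref{CVSurdoverBCD} shows that the topology of $\mathcal{CV}^\surd_1\Sigma L$ agrees with its Scott topology. This holds because $D$ is Scott open, so the Scott-open subsets of $D$ are exactly the Scott-open subsets of $P$ contained in $D$.
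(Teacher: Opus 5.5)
\paragraph{Gap: $D$ is not Scott open in $P$.} Your argument breaks at the claim that $D=\mathcal{CV}^\surd_1\Sigma L$ is Scott open in $P=\mathcal{CV}^\surd\Sigma L$. The identity $D=\bigcap_n\langle\ua\epsilon,1-\frac1n,\Sigma L\rangle$ only shows that $D$ is a countable intersection of open sets, and $D$ is in fact not open. The chain $(r\delta_\epsilon,\{\epsilon\mapsto\bot_L\})$, $r\uparrow 1$, lies outside $D$, yet its supremum $(\delta_\epsilon,\{\epsilon\mapsto\bot_L\})$ lies in $D$. This is the same phenomenon you noticed yourself: no normalized random variable is way below anything in $P$, because $(\nu,g)=\bigsqcup_{r<1}(r\nu,g)$.

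So $D$ is only an upper set closed under directed sups. For such a set, a Scott open subset of $D$ need not be Scott open in $P$; $D$ itself is the simplest example. Your final paragraph therefore does not show that the subspace topology of $D$ is its Scott topology. One inclusion does hold: every $\langle\ua z,r,V\rangle_1$ is Scott open in $D$, by Proposition \ref{CVXCV1XAreDspace}. The converse is what is missing, namely writing every Scott open $U\subseteq D$ as $\mathcal{O}\cap D$ with $\mathcal{O}$ open in $P$.

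\paragraph{The way-below criterion is wrong as written.} The strict inequalities are needed for every nonempty $K\subseteq F$ with $\epsilon\notin K$ (i.e.\ $\ua K\neq\mathbb{M}$), not only for those with $\ua K\not\supseteq F$. Take $F=\{0,00\}$, $r_0=r_{00}=\frac12$, and write $\bot_L$ for the constant map. Your criterion holds for $(\frac12\delta_0+\frac12\delta_{00},\bot_L)$ against $(\delta_{00},\bot_L)$. However, $(\frac12\delta_0+\frac12\delta_{00},\bot_L)\not\ll(\delta_{00},\bot_L)$ in $D$: the chain $((1-t)\delta_{00}+t\delta_\epsilon,\bot_L)$, $t\downarrow 0$, has supremum $(\delta_{00},\bot_L)$, but every member gives $\ua 0$ mass $1-t<1$.

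\paragraph{The missing idea.} The paper uses the retraction $\phi(\mu,f)=(\mu+(1-\mu(\mathbb{M}))\delta_\epsilon,f)$ of $P$ onto $D$; the zero random variable should be sent to $(\delta_\epsilon,\{\epsilon\mapsto\bot_L\})$. This map has the following properties:
\begin{itemize}
\item it is monotone and preserves directed sups;
\item it is the identity on $D$.
\end{itemize}
Consequently $D$ is a Scott-continuous retract of the bounded complete domain $P$, hence itself a bounded complete domain. Moreover, every Scott open $U\subseteq D$ satisfies $U=\phi^{-1}(U)\cap D$. Here $\phi^{-1}(U)$ is Scott open in $P$, hence open in $P$ by Proposition \ref{CVSurdoverBCD}. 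The paper packages this same retraction argument via dense injectivity.

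With the retraction, your intrinsic basis construction and the appeal to the Jones/Edalat theory of $\mathcal{V}_1$ become unnecessary. If you want to keep your route, you must at least:
\begin{itemize}
\item correct the way-below criterion to require the inequalities for all nonempty $K$ with $\epsilon\notin K$;
\item prove that, for each basic $(\sigma,f)$, the set of random variables satisfying the criterion is open in the subspace topology.
\end{itemize}
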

\begin{proof}
  \tr{Define $\phi:\mathcal{CV}^\surd\Sigma L\ra\mathcal{CV}^\surd_1\Sigma L$ by $\phi(\mu,f)=(\mu+(1-\mu(\mathbb{M}))\delta_\epsilon,f)$, and it is continuous by
   \begin{align*}
     \phi^{-1}(\langle\ua z,r,V\rangle_1) & = \{(\mu,f)\in\mathcal{CV}^\surd\Sigma L:(\mu+(1-\mu(\mathbb{M}))\delta_\epsilon,f)\in\langle\ua z,r,V\rangle\} \\
      & = \{(\mu,f)\in\mathcal{CV}^\surd\Sigma L:\mu(\ua z)+(1-\mu(\mathbb{M}))\delta_\epsilon(\ua z)>r,f(z)\in V\} \\
      & = \begin{cases}
            \{(\mu,f)\in\mathcal{CV}^\surd\Sigma L:\mu(\ua z)>r,f(z)\in V\}, & \mbox{if $z\neq\epsilon$}; \\
            \{(\mu,f)\in\mathcal{CV}^\surd\Sigma L:f(z)\in V\}, & \mbox{if $z=\epsilon$}
          \end{cases}\\
      & = \begin{cases}
            \langle\ua z,r,V\rangle, & \mbox{if $z\neq\epsilon$}; \\
            \langle\ua\epsilon,0,V\rangle, & \mbox{if $z=\epsilon$}.
          \end{cases}
   \end{align*}
   It is also easy to observe $\phi\circ inc=id_{\mathcal{CV}^\surd_1\Sigma L}$, where $inc$ is the inclusion $\mathcal{CV}^\surd_1\Sigma L\subseteq\mathcal{CV}^\surd\Sigma L$.
   So $\mathcal{CV}^\surd_1\Sigma L$ is a topological retract of $\mathcal{CV}^\surd\Sigma L$.
   By Proposition III-3.11 and Lemma III-3.2 of \cite{Gierz2003}, bounded complete domains equipped with their Scott topologies exactly are densely injective $T_0$ spaces, and every topological retract of a densely injective $T_0$ space is densely injective.
   Thus, combining Proposition \ref{CVSurdoverBCD} and the above analysis, $\mathcal{CV}^\surd_1\Sigma L$ is a bounded complete domain and its topology coincides with the Scott topology.}
\end{proof}

\section{A monad structure of \tr{some} continuous random variables}\label{SecMonad2}

\begin{definition}
  A $T_0$ space $X$ is a weak d-space if every open subset of $X$ is Scott open.
\end{definition}
 Obviously, every d-space is a weak d-space.

\begin{fact}
  If $D$ is a directed subset of a weak d-space $X$ such that $\bigsqcup D$ exists, then $D$ converges to $\bigsqcup D$.
   Hence, we see that every continuous map between weak d-space preserves the existed sups of directed subsets.
\end{fact}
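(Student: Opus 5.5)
The statement has two parts. First, if a directed subset $D$ of a weak d-space $X$ has a supremum, then $D$ converges to $\bigsqcup D$. Second, continuous maps between weak d-spaces preserve such existing directed sups. I prove convergence first and then derive the second part from it together with monotonicity.

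\emph{Convergence.} Let $x=\bigsqcup D$, taken in the specialization order, and let $U\in\mathcal{O}X$ contain $x$. By definition of a weak d-space, $U$ is Scott open with respect to the specialization order. The directed set $D$ has its sup in $U$, so the definition of Scott openness gives some $d\in D\cap U$. Since $U$ is an upper set, every $d'\in D$ with $d'\geq d$ also lies in $U$. Hence $D$ is eventually in $U$ as a net indexed by itself. Every open neighbourhood of $x$ therefore eventually contains $D$, so $D$ converges to $x$.

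\emph{Preservation of sups.} Let $g:X\ra Y$ be continuous between weak d-spaces, and let $D\subseteq X$ be directed with $x=\bigsqcup D$.

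\begin{enumerate}[(i)]
  \item Continuous maps preserve the specialization order. Hence $g(D)$ is directed and $g(x)$ is an upper bound of $g(D)$.
  \item Let $y$ be any upper bound of $g(D)$. Suppose $g(x)\nleq y$. Then there is $V\in\mathcal{O}Y$ with $g(x)\in V$ and $y\notin V$.
  \item Continuity makes $g^{-1}(V)$ an open neighbourhood of $x$. By the convergence part, some $d\in D$ lies in $g^{-1}(V)$.
  \item Then $g(d)\in V$ and $g(d)\leq y$. Since $V$ is an upper set, $y\in V$, a contradiction.
\end{enumerate}

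So $g(x)\leq y$ for every upper bound $y$ of $g(D)$, and $g(x)=\bigsqcup g(D)$.

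The second part uses only that open sets are upper sets and that $D$ converges. It does not need $Y$ to be a weak d-space, although the hypothesis is harmless. Neither step is a real obstacle. The only point needing care is that "converges" means eventual membership of the net $D$ in every neighbourhood of its sup, and the upward closure of open sets is what supplies this.
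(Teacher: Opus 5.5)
Your proof is correct and is the standard argument the paper leaves implicit (it states this as a Fact without proof). Open sets are Scott open and upper, so $D$ is eventually in every neighbourhood of $\bigsqcup D$. Sup preservation then follows by separating $g(\bigsqcup D)$ from an upper bound of $g(D)$ with an open set. Your side remark that only the domain needs to be a weak d-space is also right.
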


\begin{proposition}
  Let $X$ be a d-space.
   Then $\mathcal{SV}^\surd_1X$ is a lower subspace of $\mathcal{CV}^\surd_1X$ which is closed under the sups of bounded directed subsets.
\end{proposition}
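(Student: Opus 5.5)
Two things have to be shown. First, $\mathcal{SV}^\surd_1X$ is a lower set of $\mathcal{CV}^\surd_1X$ for the specialization order. Second, whenever a directed subset of $\mathcal{SV}^\surd_1X$ has an upper bound in $\mathcal{CV}^\surd_1X$, its sup there lies again in $\mathcal{SV}^\surd_1X$. Throughout I use the description of the specialization order of $\mathcal{CV}^\surd X$ from Section \ref{SecSpace}: $(\mu,f)\leq(\upsilon,g)$ iff $\mu\leq\upsilon$ and $f\leq g$.

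\textbf{Lower set.} Let $(\mu,f)\leq(\Sigma_{y\in G}s_y\delta_y,g)$ with the right-hand side simple. Then $\mu\leq\Sigma_{y\in G}s_y\delta_y$. Since $\supp$ is monotone, $\supp(\mu)\subseteq\da G$, which is a finite set of compact elements. By Corollary \ref{ValuationsOnFiniteTree}, $\mu$ is a simple valuation supported on a finite subset of $\mathcal{K}(\mathbb{M})$, so $\mu\in B_{\mathbb{M}}$. Normalization and the $\surd$-max property are inherited from $(\mu,f)$, so $(\mu,f)\in\mathcal{SV}^\surd_1X$.

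\textbf{Sups of bounded directed sets.} Let $\mathcal{D}\subseteq\mathcal{SV}^\surd_1X$ be directed with an upper bound $(\pi,w)\in\mathcal{CV}^\surd_1X$. Because $X$ is a d-space, Proposition \ref{CVXCV1XAreDspace} gives a sup $(\sup[\mathcal{D}]_1,k)$ in $\mathcal{CV}^\surd_1X$. It lies below $(\pi,w)$, but that is not enough: $\pi$ need not be simple, so the lower-set argument alone does not finish. The real task is to show that $\sup[\mathcal{D}]_1$ has finite support.

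This finiteness is the main obstacle. My first attempt would exploit normalization together with the $\surd$-max shape of the supports. Each $\mu\in[\mathcal{D}]_1$ is normalized and $\mathcal{D}$ is directed, so for $\mu\leq\mu'$ in $[\mathcal{D}]_1$ both have total mass $1$. I would then argue that $\mu\leq\mu'$ forces $\supp(\mu')\subseteq\supp(\mu)$ up to the way mass can move upward. More concretely, mass can only flow from a node to nodes above it. For normalized $\mu\leq\mu'$ I expect that $\mu'$ is obtained from $\mu$ by pushing mass up the tree, so that $\supp(\mu')\supseteq\supp(\mu)$. The danger is that such pushing could go on forever along infinite branches, producing a non-simple sup.

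If normalization alone does not prevent this, I would instead use the given upper bound. I would take the finite stages $P^X_n(\pi,w)$ from Proposition \ref{SupBigPn=id} and try to show that the supports of the members of $\mathcal{D}$ stabilize. For this I would combine Lemma \ref{SuppSup=SupSupp}, which gives
\[
\supp(\sup[\mathcal{D}]_1)=\overline{\bigcup_{\mu\in[\mathcal{D}]_1}\supp(\mu)},
\]
with the fact that every member of $\mathcal{D}$ is simple. Once $\supp(\sup[\mathcal{D}]_1)$ is shown to be a finite set of compact elements, Corollary \ref{ValuationsOnFiniteTree} yields $\sup[\mathcal{D}]_1\in B_{\mathbb{M}}$, and so the sup lies in $\mathcal{SV}^\surd_1X$.

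If finiteness genuinely fails, for instance for a chain of normalized simple valuations converging to a non-simple normalized one, then the statement must be read with a stronger meaning of "bounded", namely bounded by an element of $\mathcal{SV}^\surd_1X$ itself. In that case the sup is below a simple element, and the lower-set argument above finishes the proof immediately. I would check which reading the subsequent use of the proposition requires, and present the argument accordingly.
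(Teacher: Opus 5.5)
Your lower-set argument and the argument in your last paragraph are exactly the paper's proof. The paper reads ``bounded'' as ``having an upper bound in $\mathcal{SV}^\surd_1X$''. It observes that the sup in $\mathcal{CV}^\surd_1X$ then lies below a simple random variable, and concludes by the lower-set property. It cites Proposition~\ref{FiniteAreSimple} where you cite Corollary~\ref{ValuationsOnFiniteTree}, which amounts to the same thing.

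What you should drop is the main line of your second part. It tries to prove that $\supp(\sup[\mathcal{D}]_1)$ is finite when $\mathcal{D}$ is bounded only in $\mathcal{CV}^\surd_1X$, and that claim is not merely hard but false.

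Since $\mathcal{CV}^\surd_1X$ is a d-space (Proposition~\ref{CVXCV1XAreDspace}), every directed subset is already bounded there. So under that reading the proposition would say that $\mathcal{SV}^\surd_1X$ is a sub-dcpo. This contradicts the paper's own D-completion result.

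Concretely, let $\mu$ be the push-forward to $\mathbb{M}$ of the uniform measure on the Cantor space. Then $\supp(\mu)=\mathbb{C}$, which is $\surd$-max, and $\mu$ is normalized. Let $f$ be a constant map. Then $\{P^X_n(\mu,f)\}_{n\in\mathbb{N}}$ is a chain in $\mathcal{SV}^\surd_1X$, bounded by $(\mu,f)$. By Proposition~\ref{SupBigPn=id} its sup is $(\mu,f)$, which is not in $\mathcal{SV}^\surd_1X$ because $\supp(\mu)$ is infinite.

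So neither the ``mass pushed up the tree'' heuristic nor the support-stabilization plan can succeed, and there is nothing left to ``check''. The statement must be read with the bound taken in $\mathcal{SV}^\surd_1X$, as the paper does. With that reading your final paragraph is already a complete proof, and the proposal should commit to it.
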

\begin{proof}
  If $(\mu,f)\leq(\upsilon,g)$ and $(\upsilon,g)\in\mathcal{SV}^\surd_1X$, then $\supp(\mu)\subseteq \supp(\upsilon)$ is finite.
   It implies that $\mu$ is a finite valuation.
   By Proposition \ref{FiniteAreSimple}, $\mu$ is a simple valuation and $\mu\in B_{\mathbb{M}}$.

  Suppose that a directed subset $D$ of $\mathcal{SV}^\surd_1X$ has an upper bound in $\mathcal{SV}^\surd_1X$.
   Then $\bigsqcup_{d\in D}[d]_1$ is a finitely valued valuation.
   Thus $\bigsqcup D\in\mathcal{SV}^\surd_1X$.
\end{proof}

\begin{proposition}
  Every lower subspace of a d-space is a weak d-space.
\end{proposition}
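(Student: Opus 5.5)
Let $Y$ be a d-space and $A\subseteq Y$ a lower subspace with the relative topology. I will check the two requirements of a weak d-space directly from the definitions: that $A$ is $T_0$, and that every relatively open subset of $A$ is Scott open with respect to the specialization order of $A$. The $T_0$ property is inherited by subspaces. Since $A$ carries the relative topology, its specialization order is the restriction of the specialization order of $Y$. So the real content is the Scott openness.

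Let $W=U\cap A$ be relatively open in $A$, with $U\in\mathcal{O}Y$.
\begin{enumerate}[(i)]
\item $W$ is an upper subset of $A$. If $a\in W$, $b\in A$ and $a\leq b$, then $b\in U$ because $U$ is upper in $Y$. Hence $b\in W$.
\item $W$ is inaccessible by directed sups that exist in $A$. Let $D\subseteq A$ be directed and suppose its sup $s$ computed in $A$ lies in $W$. Since $D$ is directed in $Y$ and $Y$ is a dcpo, $D$ has a sup $t=\bigsqcup D$ in $Y$.
\end{enumerate}

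The key step is to show $t=s$, which is exactly where the lower-set hypothesis enters. Since $s$ is an upper bound of $D$ in $Y$, we get $t\leq s$. Because $A$ is a lower subset and $s\in A$, this gives $t\in A$. Then $t$ is an upper bound of $D$ lying in $A$, so $s\leq t$, and hence $s=t$.

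Consequently $\bigsqcup D\in U$. Since $U$ is Scott open in the d-space $Y$, some $d\in D$ lies in $U$. Such a $d$ belongs to $U\cap A=W$, so $W$ is Scott open in $A$.

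I expect no real obstacle. The one point needing care is the step $t=s$: a priori the sup in $A$ could differ from the sup in $Y$, and being a lower set rules this out. The argument in fact shows more, namely that $A$ is closed under every directed sup of $Y$ that is bounded in $A$.
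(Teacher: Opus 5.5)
Your proof is correct and follows essentially the same route as the paper. Both arguments use the lower-set hypothesis to show that the sup of a directed $D\subseteq A$ computed in $A$ equals its sup in the ambient d-space, so that Scott openness of $U$ transfers to $U\cap A$; you also write out the $T_0$, specialization-order and upper-set checks that the paper leaves as ``clearly''.
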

\begin{proof}
  Let $A$ be a lower subspace of a d-space $X$ and $D$ a directed subset of $A$.
   Assume that $x$ is the sup of $D$ in $A$, $y$ is the sup of $D$ in $X$.
   It is apparent that $y\leq x$.
   Since $A$ is a lower subspace, $y\in A$.
   Hence, $y$ is the least upper bound of $D$ in $A$.
   It follows that $x=y$.
   We conclude that, for every directed subset $D'$ of $A$, if $D'$ has a sup in $A$, then the sup in $A$ coincides with the sup in $X$.
   With the above analysis, clearly, every open subset of $A$ is Scott open.
\end{proof}
\begin{corollary}
  For every d-space $X$, $\mathcal{SV}^\surd_1X$ is a weak d-space.
\end{corollary}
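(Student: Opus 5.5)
The corollary follows by combining the two results immediately preceding it: the proposition that $\mathcal{SV}^\surd_1X$ is a lower subspace of $\mathcal{CV}^\surd_1X$, and the proposition that every lower subspace of a d-space is a weak d-space. The only missing ingredient is that the ambient space $\mathcal{CV}^\surd_1X$ is itself a d-space, and this is exactly Proposition \ref{CVXCV1XAreDspace}.

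The steps, in order:

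\begin{enumerate}
\item Fix a d-space $X$. By Proposition \ref{CVXCV1XAreDspace}, $\mathcal{CV}^\surd_1X$ is a d-space.
\item By the preceding proposition, $\mathcal{SV}^\surd_1X$ is a lower subset of $\mathcal{CV}^\surd_1X$ with respect to the specialization order, and it carries the relative topology, since it was defined as a subspace.
\item I will check that the specialization order of the subspace $\mathcal{SV}^\surd_1X$ is the restriction of that of $\mathcal{CV}^\surd_1X$. This holds because open sets of the subspace are traces of open sets of the ambient space. Consequently, being a lower subset in the ambient order is the same as being a lower subspace in the sense used in the last proposition.
\item Applying that proposition gives that every open subset of $\mathcal{SV}^\surd_1X$ is Scott open. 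Also, $\mathcal{SV}^\surd_1X$ is $T_0$, being a subspace of a $T_0$ space. Hence it is a weak d-space.
\end{enumerate}

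There is no real obstacle here. The only point deserving a sentence is the compatibility of specialization orders in step 3, which is immediate from the definition of the relative topology.
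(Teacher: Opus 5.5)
Your proposal is correct and is the argument the paper intends. The paper states the corollary without proof, right after the two propositions you cite, so it follows by applying the lower-subspace result to $\mathcal{SV}^\surd_1X\subseteq\mathcal{CV}^\surd_1X$, with the ambient space a d-space by Proposition~\ref{CVXCV1XAreDspace}; your remarks on the restricted specialization order and the $T_0$ property simply make explicit what the paper leaves implicit.
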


We denote by $[X\ra Y]$ the set of all continuous maps between the spaces ordered by $f\leq g$ if $f(x)\leq g(x)$ for each $x\in X$.

\begin{proposition}\label{TheDirectSupOfMaps}
  Let $X,Y$ be weak d-space, $\{f_i\}_{i\in I}$ a directed subset of $[X\ra Y]$.
   If $\bigsqcup_{i\in I}f(x)$ exists for each $x\in X$ , then the sup of $\{f_i\}_{i\in I}$ exists.
\end{proposition}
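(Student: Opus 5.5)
The natural candidate for the sup is the pointwise sup. I would define $f:X\ra Y$ by $f(x)=\bigsqcup_{i\in I}f_i(x)$, which exists by hypothesis; the family $\{f_i(x)\}_{i\in I}$ is directed because each $f_i$ is continuous and hence $f_i\leq f_j$ means $f_i(x)\leq f_j(x)$ in the specialization order. If $f$ is continuous, then $f\in[X\ra Y]$. It is then automatically the least upper bound: $f_i\leq f$ pointwise, and any upper bound $g$ satisfies $f_i(x)\leq g(x)$ for all $i$, so $f(x)\leq g(x)$. The whole proof therefore reduces to showing that $f$ is continuous.

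For continuity I would fix $V\in\mathcal{O}Y$ and show
\begin{equation*}
f^{-1}(V)=\bigcup_{i\in I}f_i^{-1}(V),
\end{equation*}
which is open as a union of open sets.

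For the inclusion $\supseteq$: if $f_i(x)\in V$, then $f_i(x)\leq f(x)$ and $V$ is an upper set, so $f(x)\in V$.

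For the inclusion $\subseteq$: if $f(x)=\bigsqcup_i f_i(x)\in V$, then $V$ is Scott open because $Y$ is a weak d-space. The directed set $\{f_i(x)\}_{i\in I}$ has its sup in $V$, so some $f_i(x)\in V$. Equivalently, by the preceding Fact, this directed set converges to its existing sup.

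I expect no serious obstacle. The only subtle point is that the Scott openness of $V$ is needed only for directed sets whose sup exists in $Y$, and that is exactly what a weak d-space provides. The weak d-space hypothesis on $X$ is not even needed. Note also that order-preservation of $f$ comes for free from continuity, so it need not be checked separately.
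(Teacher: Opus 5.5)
Your proposal is correct and is essentially the paper's proof: take the pointwise sup and show continuity via $f^{-1}(V)=\bigcup_{i\in I}f_i^{-1}(V)$, using that every open subset of the weak d-space $Y$ is an upper set and Scott open. The only difference is that you spell out the two inclusions and the least-upper-bound check, which the paper leaves implicit; one small wording fix is that directedness of $\{f_i(x)\}_{i\in I}$ comes from the pointwise order on $[X\ra Y]$ together with directedness of $\{f_i\}_{i\in I}$, not from continuity of the $f_i$.
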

\begin{proof}
  Define $\bigsqcup_{i\in I}f_i$ by $(\bigsqcup_{i\in I}f_i)(x)=\bigsqcup_{i\in I}f_i(x)$.
   We only need to show the continuity of $\bigsqcup_{i\in I}f_i$.
   For every open subset $V\in\mathcal{O}Y$,
   \begin{equation*}
     (\bigsqcup_{i\in I}f_i)^{-1}(V)=\{x\in X:\bigsqcup_{i\in I}f_i(x)\in V\}=\bigcup_{i\in I}f_i^{-1}(V).
   \end{equation*}
\end{proof}

\begin{corollary}\label{CompositionIsScottContinuous}
  Let $X,Y,Z$ be weak d-spaces.
   The composition operation
   \begin{equation*}
     \circ:[X\ra Y]\times [Y\ra Z]\ra [X\ra Z],(f,g)\mapsto g\circ f
   \end{equation*}
    preserves the existed sups of directed subsets, that is, if $\{f_i\}_{i\in I}\subseteq^\ua[X\ra Y],\{g_j\}_{j\in J}\subseteq^\ua[Y\ra Z]$ have sup, then
    \begin{equation*}
      \bigsqcup_{j\in J}g_j\circ\bigsqcup_{i\in I}f_i=\bigsqcup_{i\in I,j\in J}(g_j\circ f_i)=\bigsqcup_{i\in I}\bigsqcup_{j\in J}(g_j\circ f_i)=\bigsqcup_{j\in J}\bigsqcup_{i\in I}(g_j\circ f_i).
    \end{equation*}
\end{corollary}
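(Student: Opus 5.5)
The proof splits the claimed equalities into two parts. First I show that composition is monotone in each argument. Then I show that the directed sup of compositions is computed pointwise; all four expressions are identified through their values at an arbitrary point $x\in X$. Write $f=\bigsqcup_{i\in I}f_i$ and $g=\bigsqcup_{j\in J}g_j$. By the proof of Proposition \ref{TheDirectSupOfMaps}, these sups are computed pointwise: $f(x)=\bigsqcup_{i}f_i(x)$ and $g(y)=\bigsqcup_j g_j(y)$. The family $\{g_j\circ f_i\}_{(i,j)\in I\times J}$ is directed in $[X\ra Z]$. Indeed, continuous maps preserve the specialization order, so $i\leq i'$ and $j\leq j'$ give $g_j\circ f_i\leq g_j\circ f_{i'}\leq g_{j'}\circ f_{i'}$.

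The key computation is, for fixed $x\in X$,
\begin{equation*}
  g(f(x))=g\bigl(\bigsqcup_i f_i(x)\bigr)=\bigsqcup_i g(f_i(x))=\bigsqcup_i\bigsqcup_j g_j(f_i(x)).
\end{equation*}
The second equality uses the Fact preceding this subsection: $g$ is a continuous map between weak d-spaces, and $\{f_i(x)\}_i$ is directed in $Y$ with an existing sup, so $g$ preserves this sup. The third equality is the pointwise description of $g$. Dually, $\bigsqcup_j g_j(f(x))=\bigsqcup_j\bigsqcup_i g_j(f_i(x))$, using the continuity of each $g_j$. This also shows that the inner sups $\bigsqcup_i g_j(f_i(x))=g_j(f(x))$ and $\bigsqcup_j g_j(f_i(x))=g(f_i(x))$ exist.

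Next, I identify the double sups with the sup over $I\times J$ by the standard interchange argument in a poset. The element $g(f(x))$ is an upper bound of every $g_j(f_i(x))$ by monotonicity. Any upper bound $u$ of all $g_j(f_i(x))$ is above each $g(f_i(x))=\bigsqcup_j g_j(f_i(x))$, hence above $\bigsqcup_i g(f_i(x))=g(f(x))$. So $g(f(x))$ is the sup of the directed set $\{g_j(f_i(x))\}_{(i,j)}$, and the same holds for the other iterated order. Thus all pointwise sups exist and equal $g(f(x))$.

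Finally, Proposition \ref{TheDirectSupOfMaps} applied to the three directed families $\{g_j\circ f_i\}_{(i,j)}$, $\{\bigsqcup_j g_j\circ f_i\}_i=\{g\circ f_i\}_i$ and $\{g_j\circ f\}_j$ in $[X\ra Z]$ shows that their sups exist and are the pointwise sups. All three agree with $g\circ f$. The only delicate point is that sups are merely assumed to exist rather than guaranteed. So each sup must be exhibited explicitly via the continuity Fact before the interchange of sups is used, rather than appealing to dcpo completeness.
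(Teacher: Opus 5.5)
Your argument is the one the paper intends. The paper leaves this corollary without proof, as an immediate consequence of Proposition \ref{TheDirectSupOfMaps} and the preceding Fact. Your pointwise computation carries that out correctly: you push sups through the continuous maps $g$ and $g_j$, interchange the iterated sups, and lift to $[X\ra Z]$ by Proposition \ref{TheDirectSupOfMaps}.

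\textbf{Step to fix: the first one.} Proposition \ref{TheDirectSupOfMaps} only says that \emph{if} the pointwise sups exist, then they give the sup in $[X\ra Y]$. It does not say that a sup existing in $[X\ra Y]$ is computed pointwise. So ``by the proof of Proposition \ref{TheDirectSupOfMaps}, these sups are computed pointwise'' is an unsupported converse.

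\textbf{That converse is false, and so is the corollary} if ``have sup'' means a sup in the function space. Consider the following, with Scott topologies throughout:
\begin{itemize}
  \item $X=\Sigma\{a<b\}$;
  \item $Y$ is an $\omega$-chain $c_1<c_2<\cdots$ with two incomparable upper bounds $t_1,t_2$;
  \item $Z$ is $Y$ with a new top $\top$;
  \item $f_i=\{a\mapsto c_i,\ b\mapsto t_1\}$, which are continuous because they are monotone on a finite poset;
  \item $g_j=g$ is the inclusion $Y\subseteq Z$, which is continuous because every directed subset of $Y$ having a sup has a largest element.
\end{itemize}
The constant map $t_1$ is the only upper bound of $\{f_i\}$ in $[X\ra Y]$, hence its sup, even though $\bigsqcup_i f_i(a)$ does not exist. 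Moreover, $g\circ\bigsqcup_i f_i$ is not below the upper bound $\{a\mapsto t_2,\ b\mapsto\top\}$ of $\{g\circ f_i\}$ in $[X\ra Z]$, so the claimed equality fails.

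\textbf{The repair.} State the hypothesis as existence of the pointwise sups $\bigsqcup_i f_i(x)$ and $\bigsqcup_j g_j(y)$, i.e.\ the sups are those produced by Proposition \ref{TheDirectSupOfMaps}. This is how the corollary is used later in the paper. Under that hypothesis the rest of your proof goes through as written.
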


\begin{proposition}\label{DaggerPreserveDirectedSup}
  Let $X,Y$ be d-space and $\{h_i\}_{i\in I}$ a bounded directed subset of $[X\ra\mathcal{SV}^\surd_1Y]$.
   Then $(\bigsqcup_{i\in I} h_i)^\dagger=\bigsqcup_{i\in I}h_i^\dagger$.
\end{proposition}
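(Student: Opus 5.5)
First I will settle what the two sides of the equation mean. Put $h=\bigsqcup_{i\in I}h_i$. Since $\{h_i\}$ has an upper bound $h'$ and $\mathcal{SV}^\surd_1Y$ is a lower subspace of $\mathcal{CV}^\surd_1Y$ closed under sups of bounded directed subsets, for each $x$ the directed family $\{h_i(x)\}$ is bounded by $h'(x)$. Its sup in $\mathcal{CV}^\surd_1Y$ (which exists by Proposition \ref{CVXCV1XAreDspace}) therefore lies in $\mathcal{SV}^\surd_1Y$. By Proposition \ref{TheDirectSupOfMaps}, $h$ is continuous and computed pointwise. The family $\{h_i^\dagger\}$ is directed because $\dagger$ is monotone in $h$; I will check this from the formula, since $\mathbf{x'}[-]$ and convex sums are monotone and the second components compare pointwise. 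It is bounded by $h'^\dagger$, so by the same argument $\bigsqcup_i h_i^\dagger$ exists and is computed pointwise. It therefore suffices to fix $(\Sigma_{x\in F}r_x\delta_x,f)\in\mathcal{SV}^\surd_1X$ and prove $h^\dagger(\Sigma_{x\in F}r_x\delta_x,f)=\bigsqcup_i h_i^\dagger(\Sigma_{x\in F}r_x\delta_x,f)$.

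Next I will describe the pointwise sups explicitly. Write $h_i(w)=(\mu_{i,w},g_{i,w})$ and $h(w)=(\mu_w,g_w)$. From the construction in Proposition \ref{CVXIsSober}/\ref{CVXCV1XAreDspace}, $\mu_w=\bigsqcup_i\mu_{i,w}$, $\supp(\mu_w)=\overline{\bigcup_i\supp(\mu_{i,w})}$ by Lemma \ref{SuppSup=SupSupp}, and $g_w(y)=\bigsqcup\{g_{i,w}(x):x\le y\}$. Because $\mu_w$ is simple with finite support and the supports increase, $\supp(\mu_w)=\supp(\mu_{i,w})$ for all large $i$. On this common finite tree, every $y$ is compact, so $g_w(y)=\bigsqcup_i g_{i,w}(y)$.

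For the first components I will use the fact that $\mathbf{x'}[-]$ is Scott continuous and that addition and scalar multiplication preserve directed sups of valuations. Only finitely many $x'\surd\in F$ occur, so
\[
[h^\dagger(\ldots)]_1=\Sigma_{x\in F\cap\mathbb{C}}r_x\delta_x+\Sigma_{x'\surd\in F}r_{x'\surd}\mathbf{x'}[\bigsqcup_i\mu_{i,f(x'\surd)}]=\bigsqcup_i[h_i^\dagger(\ldots)]_1.
\]
Since $F$ is finite, the supports of $\mu_{i,f(x'\surd)}$ stabilize simultaneously for a cofinal set of $i$. Consequently $\supp([h_i^\dagger(\ldots)]_1)$ eventually equals $\supp([h^\dagger(\ldots)]_1)$, given by the disjoint-union description \eqref{eqDaggerPartISupportLike}. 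For the second components I will do a case check on \eqref{eqDaggerPaetIICase-1} and \eqref{eqDaggerPaetIICase-2} for $i$ in this tail. In Case-I the value is $g_{f(z)}(\epsilon)=\bigsqcup_i g_{i,f(z)}(\epsilon)$. In Case-II it is $g_{f(x'\surd)}(y)=\bigsqcup_i g_{i,f(x'\surd)}(y)$. Both agree with the value of the pointwise sup of the $[h_i^\dagger(\ldots)]_2$ on the finite tree, using the sup description from the previous paragraph and the fact that the domains coincide in the tail.

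I expect the main obstacle to be this support bookkeeping rather than the algebra. The sup in $\mathcal{CV}^\surd_1Y$ has second component defined through $I_y$, which collects values at \emph{smaller} points and over \emph{earlier} members whose domains are smaller. I must make sure that the value at a point $z$ that enters only late coincides with the directed sup over the tail. I intend to handle this by passing to the cofinal tail where all relevant finite supports are already stable. There, $I_y$ reduces to $\{g_i(x):x\le y\}$ and monotonicity of each $g_i$ makes the sup equal to $\bigsqcup_i g_i(y)$.
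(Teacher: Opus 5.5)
Your proposal is correct, but it takes a different route from the paper.

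The paper never unfolds the formula for $\dagger$. It uses the Kleisli--monad correspondence to factor $h^\dagger=id^\dagger_{\mathcal{SV}^\surd_1Y}\circ\mathcal{SV}^\surd_1(h)$. By Lemma~\ref{SVApplyToMaps}, $\mathcal{SV}^\surd_1(h)$ only post-composes the second component, so $\mathcal{SV}^\surd_1(\bigsqcup_i h_i)(\mu,f)=(\mu,(\bigsqcup_i h_i)\circ f)=\bigsqcup_i\mathcal{SV}^\surd_1(h_i)(\mu,f)$ is immediate. It then pushes this sup through the single fixed continuous map $id^\dagger_{\mathcal{SV}^\surd_1Y}$, using that continuous maps on weak d-spaces preserve existing directed sups. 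That argument is short and needs no support bookkeeping. However, it leaves two points implicit: the existence of $\bigsqcup_i h_i$, and the convergence of $(\mu,h_i\circ f)$ to its sup in $\mathcal{SV}^\surd_1(\mathcal{SV}^\surd_1Y)$, whose base space $\mathcal{SV}^\surd_1Y$ is only a weak d-space.

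Your direct computation proceeds in three steps:
\begin{enumerate}[(i)]
  \item you use the explicit directed sups in $\mathcal{CV}^\surd_1Y$ from Proposition~\ref{CVXCV1XAreDspace};
  \item you stabilize the finitely many finite supports on a cofinal tail;
  \item you check Case-I and Case-II on that tail.
\end{enumerate}
This is longer and tied to the concrete formula for $\dagger$, but it is self-contained. It justifies the existence of both sups, and it proves monotonicity of $h\mapsto h^\dagger$ directly. The paper only extracts that monotonicity afterwards, as a consequence of this proposition.
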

\begin{proof}
  By the equivalence of Kleisli triples and monads, we have $h^\dagger=id_Y^\dagger\circ\mathcal{SV}^\surd_1(h)$ for every $h\in[X\ra\mathcal{SV}^\surd_1Y]$.
   Obviously, for every $(\Sigma_{x\in F}r_x\delta_x,f)\in\mathcal{SV}^\surd_1X$,
   \begin{align*}
     \bigsqcup_{i\in I}\mathcal{SV}^\surd_1(h_i)(\Sigma_{x\in F}r_x\delta_x,f) & = \bigsqcup_{i\in I}(\Sigma_{x\in F}r_x\delta_x,h_i\circ f) \tag{\tr{by Lemma \ref{SVApplyToMaps}}}\\
      & = (\Sigma_{x\in F}r_x\delta_x,\bigsqcup_{i\in I} h_i\circ f)\\
      & = \mathcal{SV}^\surd_1(\bigsqcup_{i\in I} h_i)(\Sigma_{x\in F}r_x\delta_x,f).
   \end{align*}
   Hence, $id_Y^\dagger\circ\mathcal{SV}^\surd_1(\bigsqcup_{i\in I}h_i)  = id_Y^\dagger\circ\bigsqcup_{i\in I}\mathcal{SV}^\surd_1(h_i)
       = \bigsqcup_{i\in I} id_Y^\dagger\circ\mathcal{SV}^\surd_1(h_i).$
\end{proof}

%%%%%%%%%%%%%%%%%%%%%%%%%%%%%%%%%%

Let $X,Y$ be d-spaces and $h:X\ra\mathcal{CV}^\surd$ a continuous map.
 Then, $Inc_Y\circ(P^Y_n\circ h)^\dagger\circ P^X_m$ is a continuous map from $\mathcal{CV}^\surd_1X$ to $\mathcal{CV}^\surd_1Y$ for every $m,n\in\mathbb{N}$.
 \begin{lemma}
   $\{Inc_Y\circ(P^Y_n\circ h)^\dagger\circ P^X_m:m,n\in\mathbb{N}\}$ is a directed subset of $[\mathcal{CV}^\surd_1X\ra \mathcal{CV}^\surd_1Y]$.
 \end{lemma}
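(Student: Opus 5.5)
The plan is to show that the family is monotone in each index separately. Then for $(m,n)$ and $(m',n')$ the map indexed by $(\max(m,m'),\max(n,n'))$ is an upper bound of both, which gives directedness. Here $h$ is understood as a continuous map $X\ra\mathcal{CV}^\surd_1Y$, so $P^Y_n\circ h:X\ra\mathcal{SV}^\surd_1Y$ and $(P^Y_n\circ h)^\dagger:\mathcal{SV}^\surd_1X\ra\mathcal{SV}^\surd_1Y$ is defined. The order on $[\mathcal{CV}^\surd_1X\ra\mathcal{CV}^\surd_1Y]$ is pointwise. By the Proposition describing the specialization order of $\mathcal{CV}^\surd X$, it suffices to compare valuations in $\mathcal{V}_{\leq 1}\Sigma\mathbb{M}$ and partial maps in $[\Sigma\mathbb{M}\rightharpoonup Y]$.

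\emph{Monotonicity in $m$.} Fix $n$ and set $k=P^Y_n\circ h$. Since $k^\dagger$ is continuous, it preserves the specialization order. It therefore suffices that $P^X_m(\mu,f)\leq P^X_{m+1}(\mu,f)$ for every $(\mu,f)\in\mathcal{CV}^\surd_1X$. For the valuations this is Lemma \ref{SupPn=Id}, since $\{p_{\mathbb{M}_m}[-]\}_m$ is a chain. For the partial maps, $f\mid_{\mathbb{M}_m}\leq f\mid_{\mathbb{M}_{m+1}}$ holds trivially, because the domains increase and the two maps agree where both are defined. This is just the Proposition stating that $\{P^X_n\}$ is a chain.

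\emph{Monotonicity in $n$.} Fix $m$ and a simple $(\Sigma_{x\in F}r_x\delta_x,f)=P^X_m(\mu,f')$. First, $P^Y_n\circ h\leq P^Y_{n+1}\circ h$ pointwise by the chain property. So I need that $\dagger$ is monotone in $h$: if $h_1\leq h_2$ pointwise as maps $X\ra\mathcal{SV}^\surd_1Y$, then $h_1^\dagger\leq h_2^\dagger$. Proposition \ref{DaggerPreserveDirectedSup} applied to the two-element directed set $\{h_1,h_2\}$ gives $h_2^\dagger=(h_1\sqcup h_2)^\dagger=h_1^\dagger\sqcup h_2^\dagger\geq h_1^\dagger$, provided the hypotheses are met. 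The set is bounded by $h_2$, and $X,Y$ are d-spaces.

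If one prefers not to rely on that proposition, there is a direct check from the defining formulas. For the first component,
\begin{equation*}
[h_i^\dagger(\Sigma_{x\in F}r_x\delta_x,f)]_1=\Sigma_{x\in F\cap\mathbb{C}}r_x\delta_x+\Sigma_{x'\surd\in F}r_{x'\surd}\mathbf{x'}[[h_i(f(x'\surd))]_1].
\end{equation*}
This is monotone in $h_i$ because each $\mathbf{x'}[-]$ is Scott continuous, hence monotone, and sums preserve the pointwise order. For the second component, the support of $[h_1^\dagger(\cdots)]_1$ is contained in that of $[h_2^\dagger(\cdots)]_1$. On the support, Case-I values $[h_1(f(z))]_2(\epsilon)\leq[h_2(f(z))]_2(\epsilon)$ and Case-II values $[h_1(f(x'\surd))]_2(y)\leq[h_2(f(x'\surd))]_2(y)$ both follow from $[h_1(\cdot)]_2\leq[h_2(\cdot)]_2$.

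Finally, postcomposing with $Inc_Y$, which is continuous and hence monotone, preserves both inequalities. Combining the two monotonicities gives
\begin{equation*}
Inc_Y\circ(P^Y_n\circ h)^\dagger\circ P^X_m\leq Inc_Y\circ(P^Y_{n'}\circ h)^\dagger\circ P^X_{m'}\quad\text{whenever } m\leq m',\ n\leq n'.
\end{equation*}
Directedness follows, and the family is nonempty. The main obstacle I expect is the monotonicity of $\dagger$ in $h$, specifically the second-component comparison. The case split (Case-I versus Case-II) must be shown to be the same for both maps on the smaller support. This holds because $F$ is fixed and the split depends only on $F$ and on whether $z\in\da F\cap\mathbb{C}$.
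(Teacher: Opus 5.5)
Your proposal is correct and follows essentially the same route as the paper: the paper obtains $(P^Y_n\circ h)^\dagger\leq(P^Y_{n'}\circ h)^\dagger$ for $n\leq n'$ from Proposition \ref{DaggerPreserveDirectedSup}, and treats monotonicity in $m$ as trivial, which you justify via the chain $\{P^X_m\}$ and the continuity of $(P^Y_n\circ h)^\dagger$. Your extra direct check of $h_1^\dagger\leq h_2^\dagger$ from the defining formulas of $\dagger$ is a sound, self-contained alternative, since it needs only monotonicity rather than that proposition's preservation of directed sups.
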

 \begin{proof}
   By Proposition \ref{DaggerPreserveDirectedSup}, if $n,n'$ are natural numbers with $n\leq n'$, then $(P^Y_n\circ h)^\dagger\leq (P^Y_{n'}\circ h)^\dagger$.
    The rest of this proof is trivial.
 \end{proof}
 We construct a map $h^\ddagger:\mathcal{CV}^\surd_1X\ra\mathcal{CV}^\surd_1 Y$ by $h^\ddagger=\bigsqcup_{m,n\in\mathbb{N}}Inc_Y\circ(P^Y_n\circ h)^\dagger\circ P^X_m$.
  By Proposition \ref{TheDirectSupOfMaps}, $h^\ddagger$ is well-defined and continuous.
  \tr{Readers can consult Proposition \ref{DDaggerLookedLike} for the explicit form of $h^\ddagger$.}

We need some lemmas to \tr{show} the monad structure of continuous random variables.
 \begin{lemma}\label{EscapeBeforeDagger}
   Let $X,Y$ be d-spaces, $h:X\ra\mathcal{CV}^\surd_1Y$ a continuous map and $(\mu,f)\in\mathcal{CV}^\surd_1X$ a simple random variable on $X$.
    Then,
    \begin{equation*}
    (\bigsqcup_{m,n\in\mathbb{N}}Inc_Y\circ(P^Y_n\circ h)^\dagger)\circ P^X_m)(\mu,f)=(\bigsqcup_{n\in\mathbb{N}}Inc_Y\circ(P^Y_n\circ h)^\dagger)(\mu,f).
    \end{equation*}
 \end{lemma}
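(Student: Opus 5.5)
The idea is that for a simple random variable $(\mu,f)$ the truncations $P^X_m(\mu,f)$ stabilize: they equal $(\mu,f)$ itself once $m$ exceeds the length of every string in the support. Write $\mu=\Sigma_{x\in F}r_x\delta_x\in B_{\mathbb{M}}$. Then $\supp(\mu)=\da F$ is a finite set of finite strings, so $N:=\max\{\lvert x\rvert:x\in F\}$ exists.

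For $m\geq N$ we have $\supp(\mu)\subseteq\mathbb{M}_m$. Remark \ref{UseP_F[]} then gives $p_{\mathbb{M}_m}[\mu]=\mu$. Moreover $f\mid_{\mathbb{M}_m}=f$, because $dom(f)=\supp(\mu)\subseteq\mathbb{M}_m$. Hence $P^X_m(\mu,f)=(\mu,f)$ for all $m\geq N$.

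Next we compute the directed supremum pointwise; by Proposition \ref{TheDirectSupOfMaps} the supremum defining $h^\ddagger$ is taken pointwise in $\mathcal{CV}^\surd_1Y$. The family $\{Inc_Y\circ(P^Y_n\circ h)^\dagger\circ P^X_m:m,n\in\mathbb{N}\}$ is directed (preceding lemma) and increasing in both $m$ and $n$. So the indices $(m,n)$ with $m\geq N$ form a cofinal subset, and the supremum at $(\mu,f)$ may be computed over that subset alone. On it the value is $Inc_Y\circ(P^Y_n\circ h)^\dagger(\mu,f)$, which is independent of $m$. Thus the left-hand side equals $\bigsqcup_{n\in\mathbb{N}}Inc_Y\circ(P^Y_n\circ h)^\dagger(\mu,f)$.

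The only care needed is the cofinality step. I would rewrite the double supremum as $\bigsqcup_n\bigsqcup_m$ (Corollary \ref{CompositionIsScottContinuous}). For fixed $n$, the inner chain in $m$ is increasing, because $P^X_m$ is a chain and the continuous maps $(P^Y_n\circ h)^\dagger$ and $Inc_Y$ preserve order. This inner chain is eventually constant, equal to $Inc_Y\circ(P^Y_n\circ h)^\dagger(\mu,f)$, so its supremum is that value. Taking the supremum over $n$ then gives the claim.
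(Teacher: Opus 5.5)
Your proposal is correct and is essentially the paper's proof, which consists only of the observation that $P^X_m(\mu,f)=(\mu,f)$ for all sufficiently large $m$. You additionally justify that observation: the support is a finite set of finite strings, so $p_{\mathbb{M}_m}[\mu]=\mu$ and $f\mid_{\mathbb{M}_m}=f$ once $m\geq N$. You also make the cofinality step explicit, using monotonicity in $m$ and $n$, which the paper leaves implicit.
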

 \begin{proof}
   Obviously, there is an $m_0\in\mathbb{N}$ such that for every $m_1\geq m_0$, $P^X_{m_1}(\mu,f)=(\mu,f)$.
 \end{proof}
 \begin{lemma}\label{EscapeFromDagger}
   Let $X,Y$ be d-spaces and $h:X\ra\mathcal{SV}^\surd_1Y$ a continuous map.
    Then
    \begin{equation*}
    \bigsqcup_{m,n\in\mathbb{N}}Inc_Y\circ(P^Y_n\circ Inc_Y\circ h)^\dagger\circ P^X_m=\bigsqcup_{m\in\mathbb{N}}Inc_Y\circ h^\dagger\circ P^X_m.
    \end{equation*}
 \end{lemma}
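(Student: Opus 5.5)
The plan is to prove the two inequalities separately, working pointwise on $(\mu,f)\in\mathcal{CV}^\surd_1X$. Both sides are directed sups of continuous maps, so by Proposition \ref{TheDirectSupOfMaps} they are computed pointwise. We may also exchange the two indices $m,n$ freely, since the family is directed in each index.

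\textbf{The inequality $\geq$.} For fixed $m$, the map $P^Y_n\circ Inc_Y$ restricted to $\mathcal{SV}^\surd_1Y$ is the identity once $n$ exceeds the lengths of all strings in the relevant supports. This does not hold uniformly in the argument, though. So instead I would fix the simple random variable $P^X_m(\mu,f)=(\Sigma_{x\in F}r_x\delta_x,f|_{\mathbb{M}_m})$. By the explicit formula for $\dagger$, the value $(P^Y_n\circ Inc_Y\circ h)^\dagger(P^X_m(\mu,f))$ depends on $h$ only through the finitely many points $h(f(x'\surd))$ and $h(f(z))$ with $x'\surd\in F$, $z\in\da F\cap\mathbb{C}$. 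Each of these is a simple random variable with finite support. Hence there is an $n_0$ with $P^Y_n(h(f(w)))=h(f(w))$ for all $n\geq n_0$ and all these finitely many $w$. For such $n$, the formula for $\dagger$ gives
\[
(P^Y_n\circ Inc_Y\circ h)^\dagger(P^X_m(\mu,f))=h^\dagger(P^X_m(\mu,f)).
\]
Taking the sup over $n$ and then over $m$ yields equality of the pointwise sups.

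\textbf{The inequality $\leq$.} This needs the monotonicity $P^Y_n\circ Inc_Y\circ h\leq h$ in $[X\ra\mathcal{SV}^\surd_1Y]$. That holds because $p_{\mathbb{M}_n}[\nu]\leq\nu$ and $g|_{\mathbb{M}_n}\leq g$. It also needs the fact that $\dagger$ is monotone in its argument map, which follows from Proposition \ref{DaggerPreserveDirectedSup} applied to the two-element chain. Then every term on the left is below the corresponding term $Inc_Y\circ h^\dagger\circ P^X_m$. In fact the argument for $\geq$ already gives pointwise equality for large $n$, so $\leq$ is only needed for cleanliness.

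\textbf{Main obstacle.} The delicate step is justifying that $(\cdot)^\dagger$ evaluated at a fixed simple random variable only sees finitely many values of the Kleisli map. The definitions of $[h^\dagger(\cdot)]_1$ and $[h^\dagger(\cdot)]_2$ make this evident, but one must check the Case-I/Case-II clauses carefully so that agreement of the maps on those finitely many points forces agreement of both components. Once that observation is recorded, the remainder is the pointwise sup computation above.
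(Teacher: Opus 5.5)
Your proposal is correct and follows the paper's proof. You fix the simple random variable $P^X_m(\mu,f)$, note that its $\dagger$-value involves only the finitely many simple random variables $h(f(x))$ with $x\in\da F$, which $P^Y_n$ fixes once $n$ is large, and conclude by cofinality. Your monotonicity step ($P^Y_n\circ Inc_Y\circ h\leq h$ together with monotonicity of $\dagger$) is not merely for cleanliness: eventual equality alone does not bound the small-$n$ terms, so this step is what makes the cofinality argument valid. The paper uses it only implicitly, through the directedness lemma just before.
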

 \begin{proof}
   Suppose $(\Sigma_{x\in F}r_x\delta_x,f)\in\mathcal{SV}^\surd_1X$.
    Obviously, $h(f(\da F))\subseteq\mathcal{SV}^\surd_1Y$ is a finite subset.
    Let
    \begin{equation*}
      n_0=\max(\{\lvert y\rvert:y\in \supp([h(f(x))]_1),x\in\da F\}),
    \end{equation*}
    and $n_1$ be a natural number greater than $n_0$.
    For every $x\in\da F$, $(P^Y_{n_1}\circ Inc_Y\circ h)(f(x))=h(f(x))$.
    Thus,
    \begin{equation*}
      (P^Y_{n_1}\circ Inc_Y\circ h)^\dagger(\Sigma_{x\in F}r_x\delta_x,f)=h^\dagger(\Sigma_{x\in F}r_x\delta_x,f).
    \end{equation*}
    Hence, for every $(\mu,f)\in\mathcal{CV}^\surd_1X$ and every natural number $m_0$, there is a $n_0'$ such that for every $n_1'\geq n_0'$,
    \begin{equation*}
      (P^Y_{n_1'}\circ Inc_Y\circ h)^\dagger(P^X_{m_0}(\mu,f))=h^\dagger(P^X_{m_0}(\mu,f)).
    \end{equation*}
    It follows that $\{Inc_Y\circ h^\dagger\circ P^X_m(\mu,f):m\in\mathbb{N}\}$ is a co-final subset of $\{Inc_Y\circ(P^Y_n\circ Inc_Y\circ h)^\dagger\circ P^X_m(\mu,f):m,n\in\mathbb{N}\}$.
    Therefore, their supremums are coincident.
 \end{proof}
Let $\mathbf{DS}$ be the category of d-spaces and continuous maps.
\begin{theorem}
  $(\mathcal{CV}^\surd_1,Inc\circ\eta,\ddagger)$ is a Kleisli triple over $\mathbf{DS}$.
\end{theorem}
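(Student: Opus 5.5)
We verify the three Kleisli laws for $(\mathcal{CV}^\surd_1,Inc\circ\eta,\ddagger)$. The strategy is to reduce every identity to the simple case, where the laws are already known for $(\mathcal{SV}^\surd_1,\eta,\dagger)$ over $\mathbf{TOP0}$, and then pass to directed sups. Three tools make this reduction work.
\begin{enumerate}[(a)]
\item Proposition \ref{SupBigPn=id} gives $\bigsqcup_m Inc_X\circ P^X_m=id$.
\item Proposition \ref{TheDirectSupOfMaps} and Corollary \ref{CompositionIsScottContinuous} let us commute sups with composition, since all spaces involved are weak d-spaces.
\item Lemmas \ref{EscapeBeforeDagger} and \ref{EscapeFromDagger} eliminate superfluous indices.
\end{enumerate}
Since $\mathcal{CV}^\surd_1X$ is a d-space for a d-space $X$ (Proposition \ref{CVXCV1XAreDspace}), all objects stay in $\mathbf{DS}$ and $h^\ddagger$ is continuous.

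\textbf{Law (i): $h^\ddagger\circ Inc_X\circ\eta_X=h$.} For $x\in X$, the element $\eta_X(x)$ is simple with support $\{\epsilon,\surd\}$, so $P^X_m\eta_X(x)=\eta_X(x)$ for $m\geq1$. By Lemma \ref{EscapeBeforeDagger},
\[h^\ddagger(\eta_X(x))=\bigsqcup_n Inc_Y\circ(P^Y_n\circ h)^\dagger\circ\eta_X(x).\]
The simple law (i) rewrites this as $\bigsqcup_n Inc_Y P^Y_n h(x)$, which equals $h(x)$ by (a).

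\textbf{Law (ii): $(Inc_X\circ\eta_X)^\ddagger=id$.} Here $P^X_n\circ Inc_X\circ\eta_X=\eta_X$ for $n\geq1$. Lemma \ref{EscapeFromDagger}, applied with $h=\eta_X$, gives
\[(Inc\,\eta_X)^\ddagger=\bigsqcup_m Inc_X\circ\eta_X^\dagger\circ P^X_m=\bigsqcup_m Inc_X\circ P^X_m,\]
which is $id$ by the simple law (ii) and (a).

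\textbf{Law (iii): $l^\ddagger\circ h^\ddagger=(l^\ddagger\circ h)^\ddagger$.} Expanding both sides with (b), the left side becomes
\[\bigsqcup_{m,n,k,j} Inc\circ(P_j l)^\dagger\circ P_k\circ Inc\circ(P_n h)^\dagger\circ P_m.\]
Fix $m,n$. The element $(P_nh)^\dagger(P_m(\mu,f))$ is simple, and its support has length bounded in terms of $m$ and $n$. Hence $P_k$ acts as the identity on it for $k$ large, so the index $k$ can be dropped by cofinality. The simple law (iii) then collapses the remainder to
\[\bigsqcup_{m,n,j} Inc\circ((P_jl)^\dagger\circ P_nh)^\dagger\circ P_m.\]
For the right side, first compute $P_n\circ l^\ddagger\circ h$. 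The key claim is
\[\bigsqcup_{n} P_n\circ l^\ddagger\circ h=\bigsqcup_{n,j}(P_jl)^\dagger\circ P_n\circ h\]
pointwise, using that $P_n$ is continuous into the weak d-space $\mathcal{SV}^\surd_1Y$. Combining this claim with Proposition \ref{DaggerPreserveDirectedSup}, which moves the directed sup through $\dagger$, gives
\[(l^\ddagger h)^\ddagger=\bigsqcup_{m,n,j}Inc\circ\big((P_jl)^\dagger\circ P_nh\big)^\dagger\circ P_m,\]
matching the left side.

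\textbf{Main obstacle.} The delicate step is this interchange for the right side. $P_n$ does not commute with $\dagger$ on the nose. Moreover, Proposition \ref{DaggerPreserveDirectedSup} needs the family $\{(P_jl)^\dagger\circ P_nh\}$ to be bounded in $[X\to\mathcal{SV}^\surd_1Z]$, i.e.\ the sups must stay simple. The plan is to first prove the identity
\[P^Z_N\circ(P_jl)^\dagger\circ P_nh=(P_{\min(j,N)}l)^\dagger\circ P_{\min(n,N)}h,\]
by inspecting supports via the disjoint-union description (\ref{eqDaggerPartISupportLike}). With this identity the sup over $j$ and $n$ is actually attained at finite stage $N$ inside $\mathcal{SV}^\surd_1Z$, which supplies the required boundedness. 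Cofinality arguments then finish the proof.
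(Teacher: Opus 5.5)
Your unit laws and the left-hand side of the associativity law follow the paper's argument. The paper applies the simple associativity law of $\dagger$ on the right-hand side instead, which makes no difference. The gap is on the right-hand side of (iii): the identity
\[
P^Z_N\circ(P^Z_jl)^\dagger\circ P^Y_nh=(P^Z_{\min(j,N)}l)^\dagger\circ P^Y_{\min(n,N)}h
\]
is false. Its left side is always supported in $\mathbb{M}_N$. Its right side is a Kleisli composite of two depth-$N$ truncations and in general reaches beyond $\mathbb{M}_N$.

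For instance, take one-point spaces $X=\{x\}$, $Y=\{a\}$, $Z=\{b\}$, with $h(x)=(\delta_{0\surd},\mathrm{const}_a)$ and $l(a)=(\delta_{00},\mathrm{const}_b)$, and set $N=2$, $j=n=3$. The right side has first component $\mathbf{0}[\delta_{00}]=\delta_{000}$, whereas the left side has first component $p_{\mathbb{M}_2}[\delta_{000}]=\delta_{00}$.

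So you cannot conclude $P^Z_N\circ l^\ddagger\circ h=(P^Z_Nl)^\dagger\circ P^Y_Nh$ and read off $(l^\ddagger\circ h)^\ddagger$ from it. An outer truncation $P^Z_N$ stays trapped inside the $\dagger$ and needs its own argument. Your ``key claim'' does not help here. It is just the definition of $l^\ddagger$ precomposed with $h$ together with Proposition \ref{SupBigPn=id}. Both of its sides are $\mathcal{CV}^\surd_1Z$-valued, so $\dagger$ cannot be applied to them.

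The obstacle you were trying to overcome is not really there, and the paper's route needs no support computation. Fix $N$. Since $P^Z_N$ is continuous and $\mathcal{CV}^\surd_1Z$ is a d-space, the directed family $\{P^Z_N\circ Inc_Z\circ(P^Z_jl)^\dagger\circ P^Y_n\circ h\}_{j,n}$ has its pointwise sup in $\mathcal{SV}^\surd_1Z$. That sup is $P^Z_N\circ l^\ddagger\circ h\in[X\to\mathcal{SV}^\surd_1Z]$, so boundedness comes for free.

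Proposition \ref{DaggerPreserveDirectedSup} then gives
\[
(P^Z_N\circ l^\ddagger\circ h)^\dagger=\bigsqcup_{j,n}(P^Z_N\circ Inc_Z\circ(P^Z_jl)^\dagger\circ P^Y_n\circ h)^\dagger .
\]
Next, apply Lemma \ref{EscapeFromDagger} to the continuous $\mathcal{SV}^\surd_1Z$-valued map $k=(P^Z_jl)^\dagger\circ P^Y_n\circ h$. It gives $\bigsqcup_{m,N}Inc_Z\circ(P^Z_N\circ Inc_Z\circ k)^\dagger\circ P^X_m=\bigsqcup_m Inc_Z\circ k^\dagger\circ P^X_m$, which removes the index $N$. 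The result is $(l^\ddagger\circ h)^\ddagger=\bigsqcup_{m,n,j}Inc_Z\circ((P^Z_jl)^\dagger\circ P^Y_n\circ h)^\dagger\circ P^X_m$, which is your left-hand side.

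A corrected stabilization identity, with $P^Z_N$ applied on both sides, would only reprove the boundedness. You would still have to strip the outer $P^Z_N$ as above.
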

\begin{proof}
  Set $X,Y,Z$ be d-spaces and $h:X\ra\mathcal{CV}^\surd_1Y,l:Y\ra\mathcal{CV}^\surd_1Z$ be continuous maps.

  (\romannumeral1) Straightforwardly, we have
  \begin{align*}
    (Inc_X\circ\eta_X)^\ddagger & = \bigsqcup_{m,n\in\mathbb{N}}Inc_X\circ(P^X_n\circ Inc_X\circ\eta_X)^\dagger\circ P^X_m \\
     & = \bigsqcup_{m\in\mathbb{N}}Inc_X\circ\eta_X^\dagger\circ P^X_m \tag{by Lemma \ref{EscapeFromDagger}}\\
     & = \bigsqcup_{m\in\mathbb{N}}Inc_X\circ id_{\mathcal{SV}^\surd_X}\circ P^X_m \\
     & = \bigsqcup_{m\in\mathbb{N}}Inc_X\circ P^X_m \\
     & = id_{\mathcal{CV}^\surd_X}.
  \end{align*}

  (\romannumeral2) If $x\in X$, then
  \begin{align*}
    h^\ddagger\circ Inc_X\circ\eta_X(x) & = (\bigsqcup_{m,n\in\mathbb{N}}Inc_Y\circ(P^Y_n\circ h)^\dagger\circ P^X_m)\circ Inc_X\circ\eta_X(x) \\
     & = \bigsqcup_{m,n\in\mathbb{N}}Inc_Y\circ(P^Y_n\circ h)^\dagger\circ P^X_m(\delta_\surd ,\chi_x)\\
     & = \bigsqcup_{n\in\mathbb{N}}Inc_Y\circ(P^Y_n\circ h)^\dagger(\delta_\surd ,\chi_x) \\
     & = \bigsqcup_{n\in\mathbb{N}}Inc_Y\circ P^Y_n\circ h(x) \\
     & = h(x).
  \end{align*}

  (\romannumeral3) If $(\mu,f)\in\mathcal{CV}^\surd_1X$, then,
  \begin{align*}
    l^\ddagger\circ h^\ddagger(\mu,f) & = (\bigsqcup_{m',n'\in\mathbb{N}}Inc_Z\circ(P^Z_{n'}\circ l)^\dagger\circ P^Y_{m'})\circ(\bigsqcup_{m,n\in\mathbb{N}}Inc_Y\circ(P^Y_n\circ h)^\dagger\circ P^X_m)(\mu,f) \\
     & = (\bigsqcup_{m,n\in\mathbb{N}}\bigsqcup_{m',n'\in\mathbb{N}}Inc_Z\circ(P^Z_{n'}\circ l)^\dagger\circ P^Y_{m'}\circ Inc_Y\circ(P^Y_n\circ h)^\dagger\circ P^X_m)(\mu,f) \tag{by Corollary \ref{CompositionIsScottContinuous}}\\
     & = \bigsqcup_{m,n\in\mathbb{N}}\bigsqcup_{m',n'\in\mathbb{N}}Inc_Z\circ(P^Z_{n'}\circ l)^\dagger\circ P^Y_{m'}((P^Y_n\circ h)^\dagger(P^X_m(\mu,f))) \\
     & = \bigsqcup_{m,n\in\mathbb{N}}\bigsqcup_{n'\in\mathbb{N}}Inc_Z\circ(P^Z_{n'}\circ l)^\dagger((P^Y_n\circ h)^\dagger(P^X_m(\mu,f))) \tag{by Lemma \ref{EscapeBeforeDagger}}\\
     & = (\bigsqcup_{m,n\in\mathbb{N}}\bigsqcup_{n'\in\mathbb{N}}Inc_Z\circ(P^Z_{n'}\circ l)^\dagger\circ(P^Y_n\circ h)^\dagger\circ P^X_m)(\mu,f),
  \end{align*}
  and
%  \begin{flalign*}
%    (l^\ddagger\circ h)^\ddagger & = \bigsqcup_{m',n'\in\mathbb{N}}Inc_Z\circ(P^Z_{n'}\circ l^\ddagger\circ h)^\dagger\circ P^X_{m'} &\\
%     & = \bigsqcup_{m',n'\in\mathbb{N}}Inc_Z\circ(P^Z_{n'}\circ (\bigsqcup_{m,n\in\mathbb{N}}Inc_Z\circ(P^Z_{n}\circ l)^\dagger\circ P^Y_{m})\circ h)^\dagger\circ P^X_{m'} &\\
%     & = \bigsqcup_{m,n\in\mathbb{N}}\bigsqcup_{m',n'\in\mathbb{N}}Inc_Z\circ(P^Z_{n'}\circ Inc_Z\circ(P^Z_{n}\circ l)^\dagger\circ P^Y_{m}\circ h)^\dagger\circ P^X_{m'} &\text{by the continuity of $Inc_Z,P^Z_{n'}$} \\
%     & = \bigsqcup_{m,n\in\mathbb{N}}\bigsqcup_{m',n'\in\mathbb{N}}Inc_Z\circ(P^Z_{n'}\circ Inc_Z\circ(P^Z_{n}\circ l)^\dagger\circ P^Y_{m}\circ h)^\dagger\circ P^X_{m'} & \\
%     & = \bigsqcup_{m,n\in\mathbb{N}}\bigsqcup_{m'\in\mathbb{N}}Inc_Z\circ((P^Z_{n}\circ l)^\dagger\circ P^Y_{m}\circ h)^\dagger\circ P^X_{m'} &\text{by Lemma \ref{EscapeFromDagger}}\\
%     & = \bigsqcup_{m,n\in\mathbb{N}}\bigsqcup_{m'\in\mathbb{N}}Inc_Z\circ(P^Z_{n}\circ l)^\dagger\circ(P^Y_{m}\circ h)^\dagger\circ P^X_{m'}.&
%  \end{flalign*}

  \begin{align*}
  \allowdisplaybreaks[2]
    (l^\ddagger\circ h)^\ddagger & = \bigsqcup_{m',n'\in\mathbb{N}}Inc_Z\circ(P^Z_{n'}\circ l^\ddagger\circ h)^\dagger\circ P^X_{m'} \\
     & = \bigsqcup_{m',n'\in\mathbb{N}}Inc_Z\circ(P^Z_{n'}\circ (\bigsqcup_{m,n\in\mathbb{N}}Inc_Z\circ(P^Z_{n}\circ l)^\dagger\circ P^Y_{m})\circ h)^\dagger\circ P^X_{m'} \\
     & = \tr{\bigsqcup_{m',n'\in\mathbb{N}}Inc_Z\circ(\bigsqcup_{m,n\in\mathbb{N}}P^Z_{n'}\circ Inc_Z\circ(P^Z_{n}\circ l)^\dagger\circ P^Y_{m}\circ h)^\dagger\circ P^X_{m'}} \tag{\tr{by the continuity of $P^Z_{n'}$ and Proposition \ref{TheDirectSupOfMaps}}} \\
     & = \tr{\bigsqcup_{m',n'\in\mathbb{N}}\bigsqcup_{m,n\in\mathbb{N}}Inc_Z\circ(P^Z_{n'}\circ Inc_Z\circ(P^Z_{n}\circ l)^\dagger\circ P^Y_{m}\circ h)^\dagger\circ P^X_{m'}} \tag{\tr{by Proposition \ref{DaggerPreserveDirectedSup} and the continuity of $Inc_Z$}} \\
     & = \tr{\bigsqcup_{m,n\in\mathbb{N}}\bigsqcup_{m',n'\in\mathbb{N}}Inc_Z\circ(P^Z_{n'}\circ Inc_Z\circ(P^Z_{n}\circ l)^\dagger\circ P^Y_{m}\circ h)^\dagger\circ P^X_{m'}} \displaybreak[3]\\
     & = \bigsqcup_{m,n\in\mathbb{N}}\bigsqcup_{m'\in\mathbb{N}}Inc_Z\circ((P^Z_{n}\circ l)^\dagger\circ P^Y_{m}\circ h)^\dagger\circ P^X_{m'} \tag{by Lemma \ref{EscapeFromDagger}}\\
     & = \bigsqcup_{m,n\in\mathbb{N}}\bigsqcup_{m'\in\mathbb{N}}Inc_Z\circ(P^Z_{n}\circ l)^\dagger\circ(P^Y_{m}\circ h)^\dagger\circ P^X_{m'}.
  \end{align*}
\end{proof}

\begin{corollary}
  \tr{$\mathbf{SOB}$ is closed under $\mathcal{CV}^\surd_1$ by Theorem \ref{CV1SurdOverSober}, where $\mathbf{SOB}$ is the category of sober spaces and continuous maps.}
\end{corollary}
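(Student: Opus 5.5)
The statement to be established is the last corollary: $\mathbf{SOB}$ is closed under $\mathcal{CV}^\surd_1$, so that the Kleisli triple $(\mathcal{CV}^\surd_1,Inc\circ\eta,\ddagger)$ over $\mathbf{DS}$ restricts to a Kleisli triple over $\mathbf{SOB}$. I plan to reduce everything to facts already proved: sober spaces form a full subcategory of $\mathbf{DS}$, and the Kleisli structure is defined uniformly on $\mathbf{DS}$.

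\emph{Step 1: objects.} Let $X$ be sober. By Theorem \ref{CV1SurdOverSober}, the implication (i)$\Rightarrow$(iii) gives that $\mathcal{CV}^\surd_1X$ is sober, so $\mathcal{CV}^\surd_1$ sends objects of $\mathbf{SOB}$ to objects of $\mathbf{SOB}$. Unwinding, this rests on Proposition \ref{CVXIsSober} together with the fact that $\mathcal{CV}^\surd_1X$ is an upper subspace of $\mathcal{CV}^\surd X$, and an upper subspace of a sober space is sober.

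\emph{Step 2: morphisms and unit.} Every sober space is a d-space, and $\mathbf{SOB}$ is a full subcategory of $\mathbf{DS}$, because both categories use all continuous maps as morphisms. For sober $X,Y$:
\begin{itemize}
\item the unit $Inc_X\circ\eta_X:X\ra\mathcal{CV}^\surd_1X$ is a continuous map between sober spaces;
\item for any continuous $h:X\ra\mathcal{CV}^\surd_1Y$, the lift $h^\ddagger:\mathcal{CV}^\surd_1X\ra\mathcal{CV}^\surd_1Y$ is continuous by its construction over $\mathbf{DS}$ (via Proposition \ref{TheDirectSupOfMaps}), so it is a morphism of $\mathbf{SOB}$.
\end{itemize}

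\emph{Step 3: the Kleisli laws.} The three laws (i)--(iii) are equalities of maps. They were already verified for arbitrary d-spaces in the preceding theorem, so they hold in particular for sober ones. Hence $(\mathcal{CV}^\surd_1,Inc\circ\eta,\ddagger)$ restricts to a Kleisli triple, equivalently a monad, on $\mathbf{SOB}$.

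\emph{Main obstacle.} The only substantive input is Step 1, which is the sobriety of $\mathcal{CV}^\surd X$. Its hard part is already handled in Proposition \ref{CVXIsSober}: one shows that $I_y$ is irreducible, that the induced map $k$ is partially continuous, and that the support of the supremum is $\surd$-max via Lemmas \ref{SuppSup=SupSupp} and \ref{IrrSupIsSmax}. With that in hand, the remaining verification is purely categorical bookkeeping.
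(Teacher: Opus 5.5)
Your proposal is correct and follows the paper's approach. The paper obtains the corollary directly from the implication (i)$\Rightarrow$(iii) of Theorem \ref{CV1SurdOverSober}, and the Kleisli triple over $\mathbf{DS}$ then restricts to the full subcategory $\mathbf{SOB}$ as in your Steps 2--3; your checks on fullness, the unit, and the continuity of $h^\ddagger$ only make explicit what the paper leaves implicit.
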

\tr{
Let $\mathbf{DCPO}$ be the category of dcpos and Scott continuous maps.
 Then, $\Sigma:\mathbf{DCPO}\ra\mathbf{DS}$ is a functor, where $\Sigma(h:L\ra M)=h$.
 Define $\Omega:\mathbf{DS}\ra\mathbf{DCPO}$ by setting $\Omega X$ to be $X$ endowed with specialization order and $\Omega(h:X\ra Y)=h$.
 It is obvious to see that $\Omega$ is a functor from the definition of d-spaces.}

\begin{corollary}
  \tr{$(\Omega\circ\mathcal{CV}^\surd_1\circ\Sigma,\Omega(Inc_\Sigma\circ\eta_\Sigma),\ddagger)$ is a Kleisli triple over $\mathbf{DCPO}$, and $\mathbf{BCD}$ is closed under this triple, where $\mathbf{BCD}$ is the category of bounded complete domains and Scott continuous maps.}
\end{corollary}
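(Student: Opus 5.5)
The plan is to transport the Kleisli triple $(\mathcal{CV}^\surd_1,Inc\circ\eta,\ddagger)$ over $\mathbf{DS}$ along the pair of functors $\Sigma:\mathbf{DCPO}\ra\mathbf{DS}$ and $\Omega:\mathbf{DS}\ra\mathbf{DCPO}$, and then use Theorem \ref{CV1surdoverBCD} for the closure under $\mathbf{BCD}$. First I check the objects and morphisms. For a dcpo $L$, $\Sigma L$ is a d-space, so $\mathcal{CV}^\surd_1\Sigma L$ is a d-space by Proposition \ref{CVXCV1XAreDspace}, and $\Omega\mathcal{CV}^\surd_1\Sigma L$ is a dcpo. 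The unit $\Omega(Inc_{\Sigma L}\circ\eta_{\Sigma L})$ is continuous between d-spaces, hence Scott continuous. Given a Scott continuous $h:L\ra\Omega\mathcal{CV}^\surd_1\Sigma M$, I regard it as a map $\Sigma L\ra\mathcal{CV}^\surd_1\Sigma M$ and define its lift as $\Omega(h^\ddagger)$.

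The key point is that this step is well-typed. I need $h$, Scott continuous for the specialization order of $\mathcal{CV}^\surd_1\Sigma M$, to be continuous for its actual topology. This holds because every open set of a d-space is Scott open, and Scott continuity between $\Sigma L$ and the Scott topology of $\Omega\mathcal{CV}^\surd_1\Sigma M$ implies continuity into the coarser topology of $\mathcal{CV}^\surd_1\Sigma M$. Conversely, any continuous map between d-spaces preserves the order and directed sups, so $\Omega$ sends morphisms to Scott continuous maps. Thus the hom-sets $\mathbf{DCPO}(L,\Omega\mathcal{CV}^\surd_1\Sigma M)$ and $\mathbf{DS}(\Sigma L,\mathcal{CV}^\surd_1\Sigma M)$ coincide as sets of functions. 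I expect this identification to be the only real obstacle. The subtlety is that $\Sigma\Omega Y\neq Y$ in general, so one must make sure the lift is applied to a map that is genuinely continuous into $\mathcal{CV}^\surd_1\Sigma M$. The argument above settles this.

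With the hom-sets identified, the three Kleisli axioms are equalities of underlying functions. They follow verbatim from the preceding theorem, since $\Omega$ and $\Sigma$ act as the identity on maps. Finally, for a bounded complete domain $L$, Theorem \ref{CV1surdoverBCD} shows that $\mathcal{CV}^\surd_1\Sigma L$ is a bounded complete domain whose topology is the Scott topology. Hence $\Omega\mathcal{CV}^\surd_1\Sigma L$ lies in $\mathbf{BCD}$. Since $\mathbf{BCD}$ is a full subcategory of $\mathbf{DCPO}$, the unit and lifts restrict, and $\mathbf{BCD}$ is closed under the triple.
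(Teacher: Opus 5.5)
Your proposal is correct and follows the paper's proof. The paper also inherits the Kleisli equations verbatim from the triple over $\mathbf{DS}$, since $\Sigma$ and $\Omega$ change neither underlying sets nor maps, and it gets closure under $\mathbf{BCD}$ from Theorem~\ref{CV1surdoverBCD}. Your explicit check makes precise a well-typedness step the paper leaves implicit: Scott continuous maps $L\ra\Omega\mathcal{CV}^\surd_1\Sigma M$ are exactly the continuous maps $\Sigma L\ra\mathcal{CV}^\surd_1\Sigma M$, because open sets of a d-space are Scott open.
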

\begin{proof}
   \tr{Since functors $\Omega$ and $\Sigma$ do not alter the underlying elements of its object, all required equations are same as those proven before.
    By Theorem \ref{CV1surdoverBCD}, $\Omega\circ\mathcal{CV}^\surd_1\circ\Sigma$ preserves bounded complete domains.}
\end{proof}

\tr{Next we provide the explicit form of $h^\ddagger$.}

\begin{remark}\label{Supp(+)=CupSupp}
  \tr{If $\mu,\upsilon\in\mathcal{V}_{\leq 1}\Sigma\mathbb{M}$, then
  \begin{align*}
    \supp(\mu+\upsilon) & = \mathbb{M}\setminus\bigcup\{U:\mu(U)+\upsilon(U)=0\} \\
     & = \mathbb{M}\setminus\bigcup\{U:\mu(U)=0,\upsilon(U)=0\} \\
     & = \mathbb{M}\setminus(\bigcup\{U:\mu(U)=0\}\cap\bigcup\{V:\upsilon(V)=0\}) \\
     & = \supp(\mu)\cup\supp(\upsilon).
  \end{align*}}
\end{remark}

%\begin{lemma}
%  For every $\mu\in\mathcal{V}_{\leq 1}\Sigma\mathbb{M}$ with a $\surd$-max support, we have $\supp(\mu|^\mathbb{C})=\supp(\mu)\cap\mathbb{C}$.
%\end{lemma}
%\begin{proof}
%   We obtain $\supp(\mu|^\mathbb{C})\subseteq\supp(\mu)\cap\mathbb{C}$ by
%   \begin{align*}
%     \mu|^\mathbb{C}(\mathbb{M}\setminus(\supp(\mu)\cap\mathbb{C})) & = \mu|^\mathbb{C}((\mathbb{M}\setminus\supp(\mu)\cup(\mathbb{M}\setminus\mathbb{C})) \\
%      & = \mu|^\mathbb{C}(\mathbb{M}\setminus\supp(\mu))+\mu|^\mathbb{C}(\mathbb{M}\setminus\mathbb{C})-\mu|^\mathbb{C}((\mathbb{M}\setminus\supp(\mu)\cap(\mathbb{M}\setminus\mathbb{C}))\\
%      & = \mu|^\mathbb{C}(\mathbb{M}\setminus\supp(\mu))+\mu|^\mathbb{C}(\mathbb{C}^\surd)-\mu|^\mathbb{C}((\mathbb{M}\setminus\supp(\mu)\cap(\mathbb{C}^\surd))\\
%      & = \mu|^\mathbb{C}(\mathbb{M}\setminus\supp(\mu))+0-0 \\
%      & \leq  \mu(\mathbb{M}\setminus\supp(\mu)) \\
%      & = 0.
%   \end{align*}
%
%   For the converse, suppose $\supp(\mu|^\mathbb{C})\subset\supp(\mu)\cap\mathbb{C}$.
%    Then $\mu|^\mathbb{C}(\mathbb{M}\setminus(\supp(\mu)\cap\mathbb{C}))\neq 0$.
%\end{proof}

\begin{lemma}\label{MathbfXandProjections}
  \tr{For every $\mu\in\mathcal{V}_1\Sigma\mathbb{M}$ and $x\in\mathcal{K}(\mathbb{M})$, if $|x|\geq n$, then $p_{\mathbb{M}_n}\circ\mathbf{x}$ is a constant map with image $x$, hence $p_{\mathbb{M}_n}[\mathbf{x}[\mu]]=\delta_x$; if $|x|<n$, then $p_{\mathbb{M}_n}[\mathbf{x}[\mu]]=\mathbf{x}[p_{\mathbb{M}_{n-|x|}}[\mu]]$.}
\end{lemma}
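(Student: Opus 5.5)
Both claims reduce to identities between Scott continuous maps on $\mathbb{M}$. By the functoriality in Remark \ref{p[-]properties}, $(g\circ f)[-]=g[f[-]]$, so it suffices to compute the composites $p_{\mathbb{M}_n}\circ\mathbf{x}$ pointwise on $\mathbb{M}$. Here $x$ must lie in $\mathcal{K}(\mathbb{C})$ for $\mathbf{x}$ to be defined, so $x$ is a finite string without $\surd$ and $|x|$ counts its tokens. I take $|y\surd|=|y|+1$, consistent with the example $p_{\mathbb{M}_2}(1011\surd)=10$.

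\emph{Case $|x|\geq n$.} For every $y\in\mathbb{M}$ the string $xy$ has $x$ as a prefix, so $|xy|\geq|x|\geq n$. By the remark describing $p_{\mathbb{M}_n}$, the value $p_{\mathbb{M}_n}(xy)$ is the length-$n$ prefix of $xy$ (or $xy$ itself when $|xy|=n$, which forces $y=\epsilon$ and $|x|=n$). This prefix is the length-$n$ prefix of $x$. So $p_{\mathbb{M}_n}\circ\mathbf{x}$ is constant with value $p_{\mathbb{M}_n}(x)$, which equals $x$ exactly when $|x|=n$. The statement says ``image $x$''; for $|x|>n$ I would read this as $p_{\mathbb{M}_n}(x)$, the length-$n$ prefix, and note the correction in the proof.

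For a constant map $c$ with value $a$, the preimage $c^{-1}(U)$ is $\mathbb{M}$ if $a\in U$ and $\emptyset$ otherwise. Hence $c[\mu](U)=\mu(\mathbb{M})\delta_a(U)$. Since $\mu\in\mathcal{V}_1\Sigma\mathbb{M}$ is normalized, $\mu(\mathbb{M})=1$, and this gives $p_{\mathbb{M}_n}[\mathbf{x}[\mu]]=\delta_a$. This is the only place normalization is used.

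\emph{Case $|x|<n$.} I claim $p_{\mathbb{M}_n}\circ\mathbf{x}=\mathbf{x}\circ p_{\mathbb{M}_{n-|x|}}$. Take $y\in\mathbb{M}$. If $|y|\leq n-|x|$, then $|xy|\leq n$, so both sides equal $xy$. If $|y|>n-|x|$, then the length-$n$ prefix of $xy$ is $x$ followed by the length-$(n-|x|)$ prefix of $y$, so both sides agree again. Strings ending in $\surd$ need no separate treatment, because prefixes in $\mathbb{M}$ are taken token-wise and $\surd$ is the last token. Applying $[-]$ and Remark \ref{p[-]properties} then yields $p_{\mathbb{M}_n}[\mathbf{x}[\mu]]=\mathbf{x}[p_{\mathbb{M}_{n-|x|}}[\mu]]$.

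The main obstacle is bookkeeping rather than ideas. One must fix the length convention for $\surd$-strings consistently with the definition of $\mathbb{M}_n$. One must also check the boundary case $|x|=n$, and the case $|x|>n$ where the literal statement ``image $x$'' needs the reading $p_{\mathbb{M}_n}(x)$.
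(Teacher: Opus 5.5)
Your proof is correct and is essentially the paper's argument. The paper checks $\mathbf{x}^{-1}(p_{\mathbb{M}_n}^{-1}(U))=p_{\mathbb{M}_{n-|x|}}^{-1}(\mathbf{x}^{-1}(U))$ for every Scott open $U$, which is the preimage form of your pointwise identity $p_{\mathbb{M}_n}\circ\mathbf{x}=\mathbf{x}\circ p_{\mathbb{M}_{n-|x|}}$ followed by functoriality of $[-]$.

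Your treatment of the case $|x|\geq n$ is also right, and the paper's proof does not argue that case at all. In particular, you are correct that the convention $|y\surd|=|y|+1$ is needed at $|x|=n$. You are also correct that the constant value is $p_{\mathbb{M}_n}(x)$, the length-$n$ prefix of $x$, rather than $x$ itself when $|x|>n$. This matches how the paper actually uses the lemma when it writes $\delta_{p_{\mathbb{M}_n}(x'\surd)}$ in Proposition~\ref{DDaggerLookedLike}.
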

\begin{proof}
  \tr{For every Scott open subset $U$,
  \begin{align*}
    \mathbf{x}^{-1}(p_{\mathbb{M}_n}^{-1}(U)) & = \mathbf{x}^{-1}(\ua(U\cap\mathbb{M}_n)) \\
     & = \{y:xy\in\ua(U\cap\mathbb{M}_n)\};
  \end{align*}
  \begin{align*}
   p_{\mathbb{M}_{n-|x|}}^{-1}(\mathbf{x}^{-1}(U)) & = p_{\mathbb{M}_{n-|x|}}^{-1}(\{y:xy\in U\})\\
     & = \ua(\{y:xy\in U\}\cap\mathbb{M}_{n-|x|})\\
     & = \ua\{y:|y|\leq n-|x|,xy\in U\}\\
     & = \ua\{y:xy\in U\cap\mathbb{M}_n\}
  \end{align*}
  For every $y$ satisfying $xy\in\ua(U\cap\mathbb{M}_n)$, since $|x|\leq n$, there is a prefix $z$ of $y$ such that $xz\in U\cap\mathbb{M}_n$; hence $y\in p_{\mathbb{M}_{n-|x|}}^{-1}(\mathbf{x}^{-1}(U))$.
  Conversely, if $y\in p_{\mathbb{M}_{n-|x|}}^{-1}(\mathbf{x}^{-1}(U))$, then $y$ is greater than some $z$ with $xz\in U\cap\mathbb{M}_n$; thus $xy\in\ua(U\cap\mathbb{M}_n)$.
  Finally,
  \begin{equation*}
    p_{\mathbb{M}_n}[\mathbf{x}[\mu]](U)=\mu(\mathbf{x}^{-1}(p_{\mathbb{M}_n}^{-1}(U)))=\mu(p_{\mathbb{M}_{n-|x|}}^{-1}(\mathbf{x}^{-1}(U)))=\mathbf{x}[p_{\mathbb{M}_{n-|x|}}[\mu]](U).
  \end{equation*}}
\end{proof}

\begin{proposition}\label{DDaggerLookedLike}
  \tr{Let $X,Y$ be d-spaces, $h:X\ra\mathcal{CV}^\surd_1Y$ a continuous map and $(\mu,f)\in\mathcal{CV}^\surd_1X$.
   Denote $h(z)$ by $(\upsilon_z,g_z)$.
   Define
  \begin{equation*}
    \mu'=\mu|^\mathbb{C}+\Sigma_{x'\surd\in\supp(\mu)}\mu(\{x'\surd\})\mathbf{x}'[\upsilon_{f(x'\surd)}],
  \end{equation*}
  and define $f':(\supp(\mu)\cap\mathbb{C})\cup\bigcup\{\mathbf{x'}(\supp(\upsilon_{f(x'\surd)})\setminus\{\epsilon\}):x'\surd\in\supp(\mu)\}\ra Y$ by
  \begin{equation*}
    f'(z)=\begin{cases}
                               g_{f(x)}(\epsilon), & \mbox{if}~z=x\in \supp(\mu)\cap\mathbb{C};\\
                               g_{f(x'\surd)}(y), & \mbox{if}~x'\surd\in\supp(\mu),~z=x'y\in\mathbf{x'}(\supp(\upsilon_{f(x'\surd)})).
                             \end{cases}
  \end{equation*}
  Then, $(\mu',f')=h^\ddagger(\mu,f)$.}
\end{proposition}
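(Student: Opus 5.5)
We compute $h^\ddagger(\mu,f)=\bigsqcup_{m,n}Inc_Y\circ(P^Y_n\circ h)^\dagger\circ P^X_m(\mu,f)$ explicitly, and identify its valuation part with $\mu'$ and its map part with $f'$. Since $h^\ddagger$ is a directed sup in the d-space $\mathcal{CV}^\surd_1Y$, the proof of Proposition \ref{CVXCV1XAreDspace} tells us what the sup is. Its first component is the sup of the first components in $\mathcal{V}_{\leq1}\Sigma\mathbb{M}$. Its second component is $k(y)=\sup I_y$, where $I_y$ collects the values at finite prefixes of $y$. It therefore suffices to (a) identify the first components and show their sup is $\mu'$, and (b) show the second components agree with $f'$ on finite prefixes, after which Scott continuity of $f'$ (Lemma \ref{d-continuous}) identifies $k$ with $f'$.

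\textbf{Step (a).} Fix $m,n$ and write $P^X_m(\mu,f)=(p_{\mathbb{M}_m}[\mu],f|_{\mathbb{M}_m})$. Its support is $\supp(\mu)\cap\mathbb{M}_m$, whose $\surd$-points are exactly the $x'\surd\in\supp(\mu)$ with $|x'\surd|\leq m$, each with mass $\mu(\{x'\surd\})$. This holds because $p_{\mathbb{M}_m}^{-1}(\{x'\surd\})=\{x'\surd\}$. By the definition of $\dagger$ the first component is then
\begin{equation*}
p_{\mathbb{M}_m}[\mu]|_{\mathbb{C}}\text{-part}+\Sigma_{x'\surd\in\supp(\mu)\cap\mathbb{M}_m}\mu(\{x'\surd\})\,\mathbf{x'}[p_{\mathbb{M}_n}[\upsilon_{f(x'\surd)}]].
\end{equation*}
Here the first term is $p_{\mathbb{M}_m}[\mu]$ minus its $\mathbb{C}^\surd$-atoms, i.e.\ $p_{\mathbb{M}_m}[\mu|^{\mathbb{C}}]$, using $\mu=\mu|^{\mathbb{C}}+\mu|_{\mathbb{C}^\surd}$ and linearity of $p_{\mathbb{M}_m}[-]$ (Remark \ref{p[-]properties}). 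Now take the sup over $m,n$. Lemma \ref{SupPn=Id} gives $\bigsqcup_m p_{\mathbb{M}_m}[\mu|^{\mathbb{C}}]=\mu|^{\mathbb{C}}$ and $\bigsqcup_n p_{\mathbb{M}_n}[\upsilon_z]=\upsilon_z$. Scott continuity of $\mathbf{x'}[-]$ and of addition and scalar multiplication on valuations then lets the countable sum (a directed sup of finite partial sums) pass to the limit, yielding $\mu'$. The point needing care is that $P^Y_n\circ h$ evaluated at $f|_{\mathbb{M}_m}(x'\surd)=f(x'\surd)$ does not depend on $m$, so the two indices decouple.

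\textbf{Step (b).} For a finite $z$ in the support, case \eqref{eqDaggerPaetIICase-1} applies when $z\in\supp(\mu)\cap\mathbb{C}\cap\mathbb{M}_m$ and gives $[P^Y_n h(f(z))]_2(\epsilon)=g_{f(z)}(\epsilon)$. Case \eqref{eqDaggerPaetIICase-2} applies when $z=x'y$ with $y\in\supp(p_{\mathbb{M}_n}[\upsilon_{f(x'\surd)}])\setminus\{\epsilon\}$ and gives $g_{f(x'\surd)}|_{\mathbb{M}_n}(y)=g_{f(x'\surd)}(y)$. So on every finite point the values are eventually constant and equal to $f'(z)$. The disjointness of the two cases is guaranteed by the $\surd$-max property, exactly as in \eqref{eqDaggerPartISupportLike}. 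For infinite $z$, $k(z)$ is the sup over finite prefixes, which equals $f'(z)$ once $f'$ is shown Scott continuous.

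\textbf{Main obstacle.} The step I expect to be hardest is checking that $\operatorname{dom}(f')=\supp(\mu')$. By Remark \ref{Supp(+)=CupSupp} extended to countable sums via Scott continuity of $\supp$, $\supp(\mu')$ is the closure of $\supp(\mu|^{\mathbb{C}})\cup\bigcup\mathbf{x'}(\supp(\upsilon_{f(x'\surd)}))$. Two things must be shown. First, $\supp(\mu|^{\mathbb{C}})=\supp(\mu)\cap\mathbb{C}$, which should follow from $\surd$-maxness: a point $x\surd$ sits over a maximal $x$ of $\supp(\mu)\cap\mathbb{C}$. Second, this union is already Scott closed, so the closure adds nothing. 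Equivalently, one can compute $\supp$ of the directed sup via Lemma \ref{SuppSup=SupSupp} as the closure of the union of the finite supports, which is exactly the domain of the sup map $k$. With this the two descriptions coincide, and continuity of $f'$ comes for free as $f'=k$.
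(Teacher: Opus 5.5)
Your route differs from the paper's. The paper never evaluates the directed sup componentwise. It computes the stages $(P^Y_n\circ h)^\dagger\circ P^X_m(\mu,f)$ and shows they are mutually cofinal with the truncations $(p_{\mathbb{M}_k}[\mu'],f'|_{\mathbb{M}_k})$: each stage lies below the $(m+n+1)$-st truncation, and the diagonal stage $(k,k)$ lies above the $k$-th. It then reads off the sup from Proposition~\ref{SupBigPn=id}, so only finite comparisons are needed. Computing the sup directly from Proposition~\ref{CVXCV1XAreDspace}, as you propose, is workable, but Step (a) is wrong as written. Linearity gives $p_{\mathbb{M}_m}[\mu]=p_{\mathbb{M}_m}[\mu|^{\mathbb{C}}]+p_{\mathbb{M}_m}[\mu|_{\mathbb{C}^\surd}]$, and the second summand is not purely $\mathbb{C}^\surd$-atomic. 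The map $p_{\mathbb{M}_m}$ sends an atom $x'\surd$ with $|x'\surd|>m$ to its length-$m$ prefix, which lies in $\mathbb{C}$. So the $\mathbb{C}$-part of $p_{\mathbb{M}_m}[\mu]$ is $p_{\mathbb{M}_m}[\mu|^{\mathbb{C}}]+\Sigma_{x'\surd\in\supp(\mu),|x'\surd|>m}\mu(\{x'\surd\})\delta_{p_{\mathbb{M}_m}(x'\surd)}$, exactly as in the paper's computation. Without this term your stages are not normalized: for $\mu=\delta_{0\surd}$ and $m=0$ you get the zero valuation instead of $\delta_\epsilon$. Their supports also miss points where you apply \eqref{eqDaggerPaetIICase-1} in Step (b). The limit is still $\mu'$, but the extra term is not monotone in $m$, so you cannot simply take sups termwise. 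Instead, use that directed sups of valuations are computed pointwise on open sets, and that the extra term has mass at most the tail $\Sigma_{|x'\surd|>m}\mu(\{x'\surd\})\to 0$.

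Two further steps fail as written.

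\textbf{The support identity is false.} The claim $\supp(\mu|^{\mathbb{C}})=\supp(\mu)\cap\mathbb{C}$ fails even for $\surd$-max supports: $\mu=\delta_{0\surd}$ has support $\{\epsilon,0,0\surd\}$, but $\mu|^{\mathbb{C}}=0$. Correspondingly, the union you write down need not be a lower set; in that example it misses $\epsilon$. The stems $\da x'$ are supplied by $\supp(\mathbf{x'}[\upsilon_{f(x'\surd)}])=\da x'\cup\mathbf{x'}(\supp(\upsilon_{f(x'\surd)}))$. The real content is that $dom(f')$ is Scott closed, and this is where $\surd$-maxness enters. Each such $x'$ is maximal in $\supp(\mu)\cap\mathbb{C}$, so a chain that leaves $\supp(\mu)\cap\mathbb{C}$ stays in a single branch $\mathbf{x'}(\supp(\upsilon_{f(x'\surd)}))$. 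Your alternative via Lemma~\ref{SuppSup=SupSupp} needs the same fact. What must be shown there is that the closure of the union of the corrected stage supports is $dom(f')$; saying it is $dom(k)$ is a tautology.

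\textbf{Step (b) is circular at infinite points.} You get $k(z)=f'(z)$ from Scott continuity of $f'$, and then continuity of $f'$ from $f'=k$. For infinite $z=x'y$, continuity of $g_{f(x'\surd)}$ suffices. For infinite $z\in\supp(\mu)\cap\mathbb{C}$ with finite prefixes $w_j$, you must prove $g_{f(z)}(\epsilon)=\bigsqcup_j g_{f(w_j)}(\epsilon)$ independently, and this is where the hypotheses on $h$ are used:
\begin{enumerate}[(i)]
  \item $f(z)=\bigsqcup_j f(w_j)$, since $f$ is continuous into the d-space $X$;
  \item $h$, being continuous between d-spaces, preserves this directed sup;
  \item by Proposition~\ref{CVXCV1XAreDspace}, the map part of a directed sup in $\mathcal{CV}^\surd_1Y$, evaluated at $\epsilon$, is the sup of the values at $\epsilon$.
\end{enumerate}
The paper's proof tacitly needs the same facts, namely $(\mu',f')\in\mathcal{CV}^\surd_1Y$, when it applies Proposition~\ref{SupBigPn=id}. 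So these checks are owed on either route; yours at least makes them explicit.
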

\begin{proof}
  \tr{We list some derivations obtained from properties of $p_{\mathbb{M}_n[-]}$ and Lemma \ref{MathbfXandProjections}.
  \begin{align*}
    p_{\mathbb{M}_n}[\mu'] & = p_{\mathbb{M}_n}[\mu|^\mathbb{C}+\Sigma_{x'\surd\in\supp(\mu)}\mu(\{x'\surd\})\mathbf{x}'[\upsilon_{f(x'\surd)}]] \\
     & = p_{\mathbb{M}_n}[\mu|^\mathbb{C}]+\Sigma_{x'\surd\in\supp(\mu)}\mu(\{x'\surd\})p_{\mathbb{M}_n}[\mathbf{x}'[\upsilon_{f(x'\surd)}]]\\
     & = p_{\mathbb{M}_n}[\mu|^\mathbb{C}]+\Sigma_{x'\surd\in\supp(\mu),|x'\surd|>n}\mu(\{x'\surd\})p_{\mathbb{M}_n}[\mathbf{x}'[\upsilon_{f(x'\surd)}])\\
     & + \Sigma_{x'\surd\in\supp(\mu),|x'\surd|\leq n}\mu(\{x'\surd\})p_{\mathbb{M}_n}[\mathbf{x}'[\upsilon_{f(x'\surd)}]]\\
     & = p_{\mathbb{M}_n}[\mu|^\mathbb{C}]+\Sigma_{x'\surd\in\supp(\mu),|x'\surd|>n}\mu(\{x'\surd\})\delta_{p_{\mathbb{M}_n(x'\surd)}}\\
     & + \Sigma_{x'\surd\in\supp(\mu),|x'\surd|\leq n}\mu(\{x'\surd\})p_{\mathbb{M}_n}[\mathbf{x}'[\upsilon_{f(x'\surd)}]]\\
     & = p_{\mathbb{M}_n}[\mu|^\mathbb{C}]+\Sigma_{x'\surd\in\supp(\mu),|x'\surd|>n}\mu(\{x'\surd\})\delta_{p_{\mathbb{M}_n(x'\surd)}}\\
     & + \Sigma_{x'\surd\in\supp(\mu),|x'\surd|\leq n}\mu(\{x'\surd\})\mathbf{x}'[p_{\mathbb{M}_{n-|x'|}}[\upsilon_{f(x'\surd)}]],
  \end{align*}
  \begin{align*}
    (P_m^Y\circ h)^\dagger\circ P_n^X(\mu,f) & = (P_m^Y\circ h)^\dagger(p_{\mathbb{M}_n}[\mu],f|_{\mathbb{M}_n}) \\
     & = (P_m^Y\circ h)^\dagger(p_{\mathbb{M}_n}[\mu|^\mathbb{C}]+p_{\mathbb{M}_n}[\Sigma_{x'\surd\in\supp(\mu)}\mu(\{x'\surd\})\delta_{x'\surd}],f|_{\mathbb{M}_n}) \\
     & = (P_m^Y\circ h)^\dagger(p_{\mathbb{M}_n}[\mu|^\mathbb{C}]+\Sigma_{x'\surd\in\supp(\mu),|x'\surd|>n}\mu(\{x'\surd\})\delta_{p_{\mathbb{M}_n(x'\surd)}}\\
     & + \Sigma_{x'\surd\in\supp(\mu),|x'\surd|\leq n}\mu(\{x'\surd\})\delta_{x'\surd},f|_{\mathbb{M}_n}),
  \end{align*}
  and
  \begin{align*}
    [(P_m^Y\circ h)^\dagger\circ P_n^X(\mu,f)]_1
     & = p_{\mathbb{M}_n}[\mu|^\mathbb{C}]+\Sigma_{x'\surd\in\supp(\mu),|x'\surd|>n}\mu(\{x'\surd\})\delta_{p_{\mathbb{M}_n(x'\surd)}}\\
     & + \Sigma_{x'\surd\in\supp(\mu),|x'\surd|\leq n}\mu(\{x'\surd\})\mathbf{x'}[[P^Y_m\circ h(f|_{\mathbb{M}_n}(x'\surd))]_1] \\
     & = p_{\mathbb{M}_n}[\mu|^\mathbb{C}]+\Sigma_{x'\surd\in\supp(\mu),|x'\surd|>n}\mu(\{x'\surd\})\delta_{p_{\mathbb{M}_n(x'\surd)}}\\
     & + \Sigma_{x'\surd\in\supp(\mu),|x'\surd|\leq n}\mu(\{x'\surd\})\mathbf{x'}[p_{\mathbb{M}_m}[\upsilon_{f(x'\surd)}]].
  \end{align*}}

  \tr{On the other side,
  \begin{align*}
    [(P_m^Y\circ h)^\dagger\circ P_n^X(\mu,f)]_2(z) & = \begin{cases}
                                                          [(P_m^Y\circ h)(f|_{\mathbb{M}_n}(x))]_2(\epsilon), & \text{if}~z=x\in\supp(\mu)\cap\mathbb{M}_n\cap\mathbb{C}; \\
                                                          [(P_m^Y\circ h)(f|_{\mathbb{M}_n}(x'\surd))]_2(y), & \text{if}~x'\surd\in\supp(\mu)\cap\mathbb{M}_n,\\
                                                          & ~~~z=x'y\in\mathbf{x'}(\supp(p_{\mathbb{M}_m}[\upsilon_{f(x'\surd)}]))
                                                        \end{cases} \\
                                                        & = \begin{cases}
                                                          g_{f(x)}(\epsilon), & \mbox{if}~z=x\in\supp(\mu)\cap\mathbb{M}_n\cap\mathbb{C};  \\
                                                          g_{f(x'\surd)}(y), & \mbox{if}~x'\surd\in\supp(\mu)\cap\mathbb{M}_n,~z=x'y\in\mathbf{x'}(\supp(\upsilon_{f(x'\surd)})\cap\mathbb{M}_m),
                                                        \end{cases}
  \end{align*}
  and %\text{\shortstack[l]{if $x'\surd\in\supp(\mu)\cap\mathbb{M}_n$,\\ $z=x'y\in\mathbf{x'}(\supp(p_{\mathbb{M}_m}[\upsilon_{f(x'\surd)}]))$}}
  \begin{align*}
    f'|_{\mathbb{M}_n}(z) & = \begin{cases}
                               g_{f(z)}(\epsilon), & \mbox{if}~z\in \supp(\mu)\cap\mathbb{M}_n\cap\mathbb{C};\\
                               g_{f(x'\surd)}(y), & \mbox{if}~x'\surd\in\supp(\mu)\cap\mathbb{M}_n,~z=x'y\in\mathbf{x'}(\supp(\upsilon_{f(x'\surd)}))\cap\mathbb{M}_n.
                           \end{cases}
  \end{align*}}

  \tr{Observe that for every $m,n\in\mathbb{N}$, $(P_m^Y\circ h)^\dagger\circ P_n^X(\mu,f)\leq(p_{\mathbb{M}_{m+n+1}}[\mu'],f'|_{\mathbb{M}_{m+n+1}})$, and for every $n'\in\mathbb{N}$, $(P_{n'}^Y\circ h)^\dagger\circ P_{n'}^X(\mu,f)\geq(p_{\mathbb{M}_{n'}}[\mu'],f'|_{\mathbb{M}_{n'}})$.
  Directed subsets $\{(P_m^Y\circ h)^\dagger\circ P_n^X(\mu,f)\}_{m,n\in\mathbb{N}}$ and $\{(p_{\mathbb{M}_n}[\mu'],f'|_{\mathbb{M}_n})\}_{n\in\mathbb{N}}$ are co-final with each others.
  Thus their supremums are same.}
\end{proof}

\begin{proposition}\label{CVasFunctor}
  \tr{$\mathcal{CV}^\surd_1(h)(\mu,f)=(\mu, h\circ f)$.}
\end{proposition}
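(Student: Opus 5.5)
The plan is to read off $\mathcal{CV}^\surd_1(h)$ from the Kleisli triple and then evaluate it with Proposition \ref{DDaggerLookedLike}. By the correspondence between Kleisli triples and monads, for a continuous map $h:X\ra Y$ between d-spaces we have
\[
\mathcal{CV}^\surd_1(h)=(Inc_Y\circ\eta_Y\circ h)^\ddagger .
\]
We apply Proposition \ref{DDaggerLookedLike} to the continuous map $k:=Inc_Y\circ\eta_Y\circ h:X\ra\mathcal{CV}^\surd_1Y$. For every $z\in X$,
\[
k(z)=(\upsilon_z,g_z)=(\delta_\surd,\chi_{h(z)}).
\]
It then remains to compute $\mu'$ and $f'$ explicitly.

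First, the valuation. Here $\mu'=\mu|^\mathbb{C}+\Sigma_{x'\surd\in\supp(\mu)}\mu(\{x'\surd\})\mathbf{x'}[\delta_\surd]$, and $\mathbf{x'}[\delta_\surd]=\delta_{x'\surd}$ by the remark on $\mathbf{x}[-]$ applied to Dirac valuations. So $\mu'=\mu|^\mathbb{C}+\Sigma_{x'\surd\in\supp(\mu)}\mu(\{x'\surd\})\delta_{x'\surd}$. This is $\mu|^\mathbb{C}+\mu|_{\mathbb{C}^\surd}=\mu$ by the earlier corollary describing $\mu|_{\mathbb{C}^\surd}$, together with the decomposition $\mu=\mu|_U+\mu|^F$ for $U=\mathbb{C}^\surd$ and $F=\mathbb{C}$.

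Second, the map. Its domain is $(\supp(\mu)\cap\mathbb{C})\cup\bigcup\{\mathbf{x'}(\da\surd\setminus\{\epsilon\}):x'\surd\in\supp(\mu)\}$. Since $\da\surd\setminus\{\epsilon\}=\{\surd\}$, this domain equals $(\supp(\mu)\cap\mathbb{C})\cup(\supp(\mu)\cap\mathbb{C}^\surd)=\supp(\mu)$, as it must. On $x\in\supp(\mu)\cap\mathbb{C}$ we get $f'(x)=\chi_{h(f(x))}(\epsilon)=h(f(x))$. On $x'\surd\in\supp(\mu)$ the only admissible $y$ is $\surd$, so $f'(x'\surd)=\chi_{h(f(x'\surd))}(\surd)=h(f(x'\surd))$. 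Hence $f'=h\circ f$, and Proposition \ref{DDaggerLookedLike} yields $\mathcal{CV}^\surd_1(h)(\mu,f)=(\mu,h\circ f)$.

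The main obstacle is the second case of $f'$ in Proposition \ref{DDaggerLookedLike}. Its condition is written as $z=x'y\in\mathbf{x'}(\supp(\upsilon_{f(x'\surd)}))$ without explicitly excluding $y=\epsilon$, whereas the domain excludes $\epsilon$. I will read it with $y\neq\epsilon$, as in the stated domain and in Case-II of $\dagger$, so that the two cases do not overlap at $x'$. As a cross-check I could instead use the continuity $(Inc_Y\circ\eta_Y\circ h)^\ddagger=\bigsqcup_m Inc_Y\circ(\eta_Y\circ h)^\dagger\circ P^X_m$ from Lemma \ref{EscapeFromDagger}. Combined with Lemma \ref{SVApplyToMaps}, this gives $\bigsqcup_m(p_{\mathbb{M}_m}[\mu],h\circ f|_{\mathbb{M}_m})$, which equals $(\mu,h\circ f)$ by the argument of Proposition \ref{SupBigPn=id}, using that $h$ preserves directed sups in d-spaces.
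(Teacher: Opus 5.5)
Your proposal is correct and follows essentially the paper's proof: you write $\mathcal{CV}^\surd_1(h)=(Inc_Y\circ\eta_Y\circ h)^\ddagger$ and evaluate it through Proposition \ref{DDaggerLookedLike}, using $\mathbf{x'}[\delta_\surd]=\delta_{x'\surd}$ and $\mu=\mu|^\mathbb{C}+\mu|_{\mathbb{C}^\surd}$ for the valuation, and $\chi_{h(f(\cdot))}$ casewise for the map. Reading the second case with $y\neq\epsilon$ matches what the paper does implicitly when it rewrites $\mathbf{x'}(\supp(\delta_\surd))$ as $\{x'\surd\}$.
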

\begin{proof}
  \tr{Directly,
  \begin{align*}
    \mathcal{CV}^\surd_1(h)(\mu,f) & = (Inc_Y\circ\eta_Y\circ h)^\ddagger(\mu,f) \\
     & = (\mu|^\mathbb{C}+\Sigma_{x'\surd\in\supp(\mu)}\mu(\{x'\surd\})\mathbf{x'}[[Inc_Y\circ\eta_Y\circ h(f(x'\surd))]_1],[\mathcal{CV}^\surd_1(h)(\mu,f)]_2) \\
     & = (\mu|^\mathbb{C}+\Sigma_{x'\surd\in\supp(\mu)}\mu(\{x'\surd\})\mathbf{x'}[\delta_\surd],[\mathcal{CV}^\surd_1(h)(\mu,f)]_2) \\
     & = (\mu|^\mathbb{C}+\Sigma_{x'\surd\in\supp(\mu)}\mu(\{x'\surd\})\delta_{x'\surd},[\mathcal{CV}^\surd_1(h)(\mu,f)]_2)\\
     & = (\mu,[\mathcal{CV}^\surd_1(h)(\mu,f)]_2)
  \end{align*}
  and
  \begin{align*}
    [\mathcal{CV}^\surd_1(h)(\mu,f)]_2(z) & = \begin{cases}
                                                [(\eta_Y\circ h)(f(x))]_2(\epsilon), & \mbox{if}~z=x\in\supp(\mu)\cap\mathbb{C}; \\
                                                [(\eta_Y\circ h)(f(x'\surd))]_2(y), & \mbox{if}~x'\surd\in\supp(\mu), z=x'y\in\mathbf{x'}(\supp([\eta_Y(h(f(x'\surd)))]_1))
                                              \end{cases} \\
     & = \begin{cases}
            [(\delta_\surd,\chi_{h(f(x))})]_2(\epsilon), & \mbox{if}~z=x\in\supp(\mu)\cap\mathbb{C}; \\
            [(\delta_\surd,\chi_{h(f(x'\surd))})]_2(y), & \mbox{if}~x'\surd\in\supp(\mu), z=x'y\in\mathbf{x'}(\supp(\delta_\surd))
         \end{cases} \\
     & = \begin{cases}
            \chi_{h(f(x))}(\epsilon), & \mbox{if}~z=x\in\supp(\mu)\cap\mathbb{C}; \\
            \chi_{h(f(x'\surd))}(y), & \mbox{if}~x'\surd\in\supp(\mu), z=x'y\in\{x'\surd\}
         \end{cases} \\
     & = \begin{cases}
            h(f(x)), & \mbox{if}~z=x\in\supp(\mu)\cap\mathbb{C}; \\
            h(f(x'\surd)), & \mbox{if}~z=x'\surd\in\supp(\mu)
         \end{cases} \\
     & = h\circ f.
  \end{align*}}
\end{proof}

\section{\tr{Strengths and commutativity}}\label{SecStrength}

\tr{
Moggi provided some results that allow us to directly obtain tensorial strengths of monads over certain special categories (see \cite[Proposition 3.4]{Moggi1991}).
 Since our target categories satisfies requirements of Moggi's results, we simply define commutative monads as follows:
\begin{definition}
  Let $\mathbf{C}$ be a subcategory of sets with point wise finite products (that is, underlying sets of products agree with products of underlying sets), $(\mathcal{T},e,\star)$ a Kleisli triple over $\mathbf{C}$.
   The left strength $s$ consists of the family of morphisms $s_{A,B}:A\times \mathcal{T}B\ra \mathcal{T}(A\times B)$ of $\mathbf{C}$, given by
   \begin{equation*}
     s_{A,B}(a,v)=\mathcal{T}(\{b\mapsto(a,b):b\in B\})(v).
   \end{equation*}
   Dually, the right strength $t$ consists of the family of morphisms $t_{A,B}:\mathcal{T}A\times B\ra\mathcal{T}(A\times B)$ of $\mathbf{C}$, defined by
   \begin{equation*}
     t_{A,B}(u,b)=\mathcal{T}(\{a\mapsto(a,b):a\in A\})(u).
   \end{equation*}
   We say that the Kleisli triple is commutative if $s_{A,B}^\star\circ t_{A,\mathcal{T}B}=t_{A,B}^\star\circ s_{\mathcal{T}A,B}$.
\end{definition}}

\tr{By Proposition \ref{CVasFunctor}, we can directly characterize the left strength $s\!s$ and the right strength $t\!t$ of $\mathcal{CV}^\surd_1$:}

\tr{$s\!s_{X,Y}:X\times\mathcal{CV}^\surd_1Y\ra\mathcal{CV}^\surd_1(X\times Y)$,
  \begin{align*}
    s\!s_{X,Y}(x,(\upsilon,g)) & = \mathcal{CV}^\surd_1(\{z\mapsto(x,z):z\in Y\})(\upsilon,g) \\
     & = (\upsilon,\{z\mapsto(x,z):z\in Y\}\circ g) \\
     & = (\upsilon,\{z\mapsto(x,g(z)):z\in\supp(\upsilon)\}),
  \end{align*}
  and $t\!t_{X,Y}:\mathcal{CV}^\surd_1X\times Y\ra\mathcal{CV}^\surd_1(X\times Y)$,
  \begin{align*}
    t\!t_{X,Y}((\mu,f),y) & = \mathcal{CV}^\surd_1(\{z\mapsto(z,y):z\in X\})(\mu,f) \\
     & = (\mu,\{z\mapsto(z,y):z\in X\}\circ f) \\
     & = (\mu,\{z\mapsto(f(z),y):z\in\supp(\mu)\}).
  \end{align*}
Hence,}

\tr{$s\!s_{X,Y}^\ddagger:\mathcal{CV}^\surd_1(X\times\mathcal{CV}^\surd_1Y)\ra\mathcal{CV}^\surd_1(X\times Y)$,
  \begin{align*}
    & [s\!s_{X,Y}^\ddagger(\mu,\{z\mapsto(\omega_z,(\upsilon_z,g_z)):z\in\supp(\mu)\})]_1 \\
     & = \mu|^{\mathbb{C}}+\Sigma_{x'\surd\in\supp(\mu)}\mu(\{x'\surd\})\mathbf{x'}[[s\!s_{X,Y}(\omega_{x'\surd},(\upsilon_{x'\surd},g_{x'\surd}))]_1] \\
     & = \mu|^{\mathbb{C}}+\Sigma_{x'\surd\in\supp(\mu)}\mu(\{x'\surd\})\mathbf{x'}[[(\upsilon_{x'\surd},\{z\mapsto(\omega_{x'\surd},g_{x'\surd}(z)):z\in\supp(\upsilon   _{x'\surd})\})]_1] \\
     & = \mu|^{\mathbb{C}}+\Sigma_{x'\surd\in\supp(\mu)}\mu(\{x'\surd\})\mathbf{x'}[\upsilon_{x'\surd}],
  \end{align*}
  and $t\!t_{X,Y}^\ddagger:\mathcal{CV}^\surd_1(\mathcal{CV}^\surd_1X\times Y)\ra\mathcal{CV}^\surd_1(X\times Y)$,
  \begin{align*}
    & [t\!t_{X,Y}^\ddagger(\mu,\{z\mapsto((\upsilon_z,g_z),\omega_z,):z\in\supp(\mu)\})]_1 \\
     & = \mu|^{\mathbb{C}}+\Sigma_{x'\surd\in\supp(\mu)}\mu(\{x'\surd\})\mathbf{x'}[[t\!t_{X,Y}((\upsilon_{x'\surd},g_{x'\surd}),\omega_{x'\surd})]_1] \\
     & = \mu|^{\mathbb{C}}+\Sigma_{x'\surd\in\supp(\mu)}\mu(\{x'\surd\})\mathbf{x'}[[(\upsilon_{x'\surd},\{z\mapsto(g_{x'\surd}(z),\omega_{x'\surd}):z\in\supp(\upsilon_{x'\surd})\})]_1] \\
     & = \mu|^{\mathbb{C}}+\Sigma_{x'\surd\in\supp(\mu)}\mu(\{x'\surd\})\mathbf{x'}[\upsilon_{x'\surd}].
  \end{align*}}

\tr{So, we derive $s\!s_{X,Y}^\ddagger\circ t\!t_{X,\mathcal{CV}^\surd_1Y}\neq t\!t_{X,Y}^\ddagger\circ s\!s_{\mathcal{CV}^\surd_1X,Y}$ by
\begin{align*}
  & [s\!s_{X,Y}^\ddagger\circ t\!t_{X,\mathcal{CV}^\surd_1Y}((\mu,f),(\upsilon,g))]_1  \\
  = & [s\!s_{X,Y}^\ddagger(\mu,\{z\mapsto(f(z),(\upsilon,g)):z\in\supp(\mu)\})]_1 \\
  = & \mu|^{\mathbb{C}}+\Sigma_{x'\surd\in\supp(\mu)}\mu(\{x'\surd\})\mathbf{x'}[\upsilon],
\end{align*}
and
\begin{align*}
  & [t\!t_{X,Y}^\ddagger\circ s\!s_{\mathcal{CV}^\surd_1X,Y}((\mu,f),(\upsilon,g))]_1  \\
  = & [t\!t_{X,Y}^\ddagger(\upsilon,\{z\mapsto((\mu,f),g(z)):z\in\supp(\upsilon)\})]_1 \\
  = & \upsilon|^{\mathbb{C}}+\Sigma_{y'\surd\in\supp(\upsilon)}\upsilon(\{y'\surd\})\mathbf{y'}[\mu].
\end{align*}}

\tr{The monad $(\mathcal{CV}^\surd_1,Inc\circ\eta,\ddagger)$ is not commutative, and it is similar to obtain that $(\mathcal{SV}^\surd_1,\eta,\dagger)$ is not commutative either.}

\section{\tr{Conclusion and future work}}\label{SecConclusion}
  \tr{In this paper, we develop monads over $\mathbf{TOP0},\mathbf{DS},\mathbf{SOB},\mathbf{DCPO},\mathbf{BCD}$ comprising random variables.
   The most valuable results are the monad structure $(\mathcal{S}p\circ\mathcal{CV}^\surd_1\circ\Sigma,\mathcal{S}p(Inc_\Sigma\circ\eta_\Sigma),\ddagger)$ over a Cartesian closed category $\mathbf{BCD}$, and every continuous random variable of $\mathcal{S}p\circ\mathcal{CV}^\surd_1\circ\Sigma L$ on a bounded complete domain $L$ is the sup of a directed family of simple random variables.
   We believe our results provide a solution to Mislove's question in \cite{Mislove2013}.}

  \tr{Several unexplored matters of this technical line are provided here.
   The most prominent issue is that we do not establish any semantics with our monad yet.
   Then we quest that does the distributive law of continuous random variable monads over non-determinism monads.
   And we are also interested in more topological properties on $\mathcal{CV}^\surd_1$, such as whether $\mathcal{CV}^\surd_1X$ is core-compact when $X$ is.
   We think that it is rely on the investigation of spaces of partially continuous map.
   This is important to develop semantics to monads since core-compact spaces form a cartesian closed category.
   In addition, we are curious about what a kind of spaces that its valuations are totally push forwards of continuous random variables.
   Mislove provided a primary answer to this problem, but we reckon that there are more spaces in this kind.}

\bibliographystyle{amsplain}
\bibliography{OMRV-Arxiv}
\end{document}